\theoremstyle{remark}
\newtheorem{remark}{Remark}
   \def\ltx@label#1{\cref@label{#1}}
   \def\label@in@display@noarg#1{\cref@old@label@in@display{#1}}
\def\label@in@mmeasure@noarg#1{%
    \begingroup%
      \measuring@false%
      \cref@old@label@in@display{#1}
    \endgroup}%
\begin{document}

\title{A Complementary Approach to Incorrectness Typing}

\author{Celia Mengyue Li}
\email{celia.li@bristol.ac.uk}
\affiliation{%
  \institution{University of Bristol}
  \country{UK}
}

\author{Sophie Pull}
\email{sophie.pull@bristol.ac.uk}
\affiliation{%
  \institution{University of Bristol}
  \country{UK}
}

\author{Steven Ramsay}
\email{steven.ramsay@bristol.ac.uk}
\affiliation{%
  \institution{University of Bristol}
  \country{UK}
}


\begin{abstract}
  We introduce a new two-sided type system for verifying the correctness and incorrectness of functional programs with atoms and pattern matching.  A key idea in the work is that types should range over sets of normal forms, rather than sets of values, and this allows us to define a complement operator on types that acts as a negation on typing formulas.  We show that the complement allows us to derive a wide range of refutation principles within the system, including the type-theoretic analogue of co-implication, and we use them to certify that a number of Erlang-like programs go wrong\del{, including one that is beyond the power of the state-of-the-art static analysis Dialyzer}.  An expressive axiomatisation of the complement operator via subtyping is shown decidable, and the type system as a whole is shown to be not only sound, but also complete for normal forms.
\end{abstract}

\begin{CCSXML}
  <ccs2012>
     <concept>
         <concept_id>10003752.10003790.10002990</concept_id>
         <concept_desc>Theory of computation~Logic and verification</concept_desc>
         <concept_significance>300</concept_significance>
         </concept>
     <concept>
         <concept_id>10003752.10003790.10011740</concept_id>
         <concept_desc>Theory of computation~Type theory</concept_desc>
         <concept_significance>500</concept_significance>
         </concept>
     <concept>
         <concept_id>10003752.10010124.10010138</concept_id>
         <concept_desc>Theory of computation~Program reasoning</concept_desc>
         <concept_significance>500</concept_significance>
         </concept>
   </ccs2012>
\end{CCSXML}
  
\ccsdesc[300]{Theory of computation~Logic and verification}
\ccsdesc[500]{Theory of computation~Type theory}
\ccsdesc[500]{Theory of computation~Program reasoning}

\keywords{type systems, refutation, higher-order program verification, incorrectness}


\maketitle
\section{Introduction}\label{sec:intro}

Since at least Milner's influential 1978 paper \cite{milner-jcss1978}, an important role of type systems in programming languages has been to provide a formal mechanism for reasoning statically about how programs will behave at runtime.  In this role, types $A$ are understood as a syntax for describing properties of (the behaviour of) program expressions $M$, and a derivation of the typing $M : A$ amounts to a proof that $M$ satisfies $A$.

It is often said that a typical purpose of such reasoning systems is the detection of runtime errors \cite{dunfield-krishnaswami-acmcs2021,pitts-types-notes}.  However, in a traditional type system, the means by which this benefit is realised is somewhat indirect, because the type system is designed only to certify the absence of such errors.  When a program is not well typed, it may be that it harbours a runtime error, but it could also be that the program simply lies beyond reasoning power of the type system.

To be able to detect runtime errors, we need a system that can reason about programs that go wrong.  However, such a capacity is incompatible with the prevailing approach to type systems for programming languages.  Typically, the soundness of the type system implies that \emph{well-typed programs cannot go wrong}, and since it is also typical that we can only construct type derivations for well-typed programs, it follows that we cannot derive properties of a program that goes wrong.  Such programs simply don't exist as far as the type system is concerned.

An alternative approach was proposed by Ramsay and Walpole in their POPL'24 paper \cite{ramsay-walpole-popl2024}.  Their \emph{two-sided type systems} allow for an extended form of typing judgement, whose shape is generally as follows:
\[
  M_1:A_1,\,\ldots,\,M_k:A_k \types N_1:B_1,\,\ldots,\,N_m:B_m
\]
Two-sided type systems lift the restriction that typing formulas $M:A$ on the left of the turnstile must be variable typings (i.e. $M$ must be a variable), and the restriction that there must be exactly one typing formula on the right-hand side.  However, the meaning of such a judgement smoothly generalises that of the traditional typing judgement: if \emph{all} of the $M_i$ reduce to values of type $A_i$, then \emph{some} $N_i$ is sure to reduce to a value of type $B_i$ (or diverge in the process).

In addition to the usual kinds of typing judgements, which appear as a special case, two-sided judgements can express that certain conditions are \emph{necessary} for proper evaluation.  For example, consider a typical definition of exponentiation $x^y$ as a fixed point involving multiplication, subtraction and an equality test (defined only) on integers\footnote{In this paper we write the integer literal corresponding to $n$ as $\pn{n}$}: $\abbv{power} \coloneqq \fixtm{f}{\abs{xy}{\iftm{y=\pn{0}}{\pn{1}}{x * (f\,x\,(y-\pn{1}))}}}$.  In the system of this paper, it is both true and provable that:
\[
  \abbv{power}\,x\,y : \intty \types y:\intty
\]
In other words, \emph{if} the term $\abbv{power}\,x\,y$ reduces to a value, \emph{then} the second input, $y$, \emph{must} have been an integer.  By contrast, however, the following judgement, is \emph{not} true:
\[
  \abbv{power}\,x\,y : \intty \types x : \intty
\]
It states that, for the power function to produce an integer value, the first input must have been an integer: \emph{if} $\abbv{power}\,x\,y$ evaluates to a integer, \emph{then} $x$ is an integer.
A quick perusal of the definition of \abbv{power}, however, suffices to convince oneself that it is not \emph{necessary}; the term $\abbv{power}\,(\abs{z}{z})\,\pn{0}$ will evaluate to a integer value without complaint.

Two-sided type systems sidestep the problem of $\types M:A$ being derivable only for $M$ that cannot go wrong by allowing to prove a kind of negation $M:A \types\ $.  The meaning of this judgement is: if $M$ evaluates to a value of type $A$ then false, or simply $M$ cannot evaluate to a value of type $A$.  With a type $\okty$ representing all values, one can prove, for example:
\begin{added}
  \[
    \abbv{power}\,\pn{0}\,(\abs{z}{z}) : \okty \types
  \]
\end{added}
Since $\okty$ is the type of \emph{all} values, a proof of this judgement is a certificate that exponentiation of \begin{added}zero and the identity function\end{added}, \emph{in that order}, either diverges or is sure to go wrong.

However, the original development of two-sided type systems in \cite{ramsay-walpole-popl2024} had some significant weaknesses, which we shall summarise here, and discuss in detail in the main body of the paper.  First, although the form of judgment is as above, most of the metatheory was developed only under the \emph{single-subject restriction} in which at most one of the $M_1,\ldots,M_k,N_1,\ldots,N_m$ is not a variable.  Second, although the two-sided setup naturally lends itself to a notion of complement on types, it was accommodated at the expense of \begin{added}a weaker\end{added} soundness guarantee.  Third, the systems of that work are formed from a large number of primitive rules, some overlapping, and with many not having a clear justification for their inclusion in the theory.

\subsection{A Two-Sided System with Complement on Normal Forms}

In this paper, we present a new two-sided type system which both overcomes these weaknesses and is an interesting type system in its own right.  As well as certifying that programs cannot go wrong (as usual), the system can also prove that programs cannot go right.  This latter capacity gives it a similar power to O'Hearn's Incorrectness Logic for first-order imperative programs \cite{ohearn-popl2019}, and Lindahl and Sagonas' Success Types for Erlang \cite{lindahl-sagonas-ppdp2006} (though neither of these have the former capacity).  Inspired by this connection, we use our system to certify the defectiveness of a number of textbook Erlang examples due to \cite{hebert2013}, including the following program, which \begin{added}is a textbook example used to demonstrate the limitations of Dialyzer's type inference, and which our system can prove to be incorrect\end{added}:

\begin{lstlisting}[xleftmargin=.2\textwidth]
  let money = $\abs{xy}{(\atom{give},(x,y))}$ in
  let item = $\lambda x.$ match x with                        
                              $\left|\begin{array}{lcl}(\atom{count},\,(\atom{give},\,(y,\,z))) &\mapsto& y;\\(\atom{account},\,(\atom{give},\,(y,\,z))) &\mapsto& z; \quad \mathsf{in}\end{array}\right.$
  let op = $\abs{xy}{x + y}$ in
  let tup = $\moneyTm\,\pn{5}\,\atom{you}$ in
        op (item ($\atom{count}$, tup)) (item ($\atom{account}$, tup))
\end{lstlisting}


\begin{added}
  The reasoning encoded in the typing derivation is roughly as follows. The expression in the innermost body of the let, $\someOp\,(\itemTm\,(\atom{count},\,\mathit{tup}))\,(\itemTm,\,(\atom{account},\,\mathit{tup}))$ fails to reduce to a value whenever \emph{all} of the following are true:
  \begin{enumerate}[(i)]
    \item $\someOp$ is a function that requires that its second argument evaluates to an integer
    \item and, $\itemTm$ is a function that, to return an integer, requires an input that \emph{does not} reduce to a value of shape $(\atom{account},\,(\atom{give},\,(P,\,Q)))$ where $Q$ is not an integer
    \item and, $\textit{tup}$ \emph{does} reduce to a value of shape $(\atom{give},\,(P,\,Q))$ where $Q$ is not an integer.
  \end{enumerate}
  Since a typing can be derived for each of $\someOp$, $\itemTm$ and $\textit{tup}$ certifying that they \emph{do} satisfy (i), (ii) and (iii), we must conclude that the innermost body of the let, and therefore the program as a whole does not reduce to a value.
\end{added}

A key new idea in our work is that types should not be limited to describing sets of values, but instead should describe sets of normal forms more generally.  To motivate this idea, let us make a little more precise what is meant by \emph{going wrong}.  In general, each type system has its own notion of what it means to go wrong, because type systems have been designed to guarantee various different notions of safety, such as the classical type safety of Milner, memory safety, thread safety and so on.  However, it seems reasonable to consider safety properties in general, that is, that programs do not reach some kind of \emph{bad state}.

An elegant way to define bad states in a functional programming languages is as those program expressions that are neither values nor that can make a reduction step: we will call them \emph{stuck terms}.  In the language we study, we will consider both a classical notion of type safety, so that e.g. applying a number to a function, $\pn{3}\,(\abs{x}{x})$, is a stuck term, and we will consider pattern-matching safety, so that, e.g. failing to match \begin{added}every\end{added} branch in a case analysis expression, \begin{added}$\matchtm{(\atom{a},\,\pn{3})}{(\atom{b},\,x) \mapsto x}$\end{added}, is also stuck.

In the traditional approach to safety via type systems, one would establish a soundness theorem to ensure that any typable program $M:A$ cannot reduce to a stuck term.  In other words, if typable program $M$ normalises then its normal form must be a value of the type $A$.  This is therefore consistent with the usual semantic interpretation of $A$ as a certain set of values (modulo divergence).  Since, for incorrectness, we want to reason about programs that go wrong, it follows that we should allow types that number among their inhabitants also the stuck terms.  Since stuck terms and values, together, constitute all possible normal forms, we see the type system as a formal system for reasoning about how terms will reduce to normal form, in general, and types simply as sets of normal forms.

The interpretation of types as sets of normal forms allows us to axiomatise a \begin{added}complement\end{added} operator on types, the type $\CompPure{A}$ is interpreted as all normal forms that are not in the type $A$, \emph{without} \begin{added}weakening\end{added} our soundness guarantees.  Moreover, alongside a reasonably expressive subtyping relation, it allows us to form the type system from a small number of well-motivated rules, and yet induce an extremely rich theory -- we show that a wide variety of interesting typing principles are derivable, including \begin{added}analogues of\end{added} all the key refutational rules of \cite{ramsay-walpole-popl2024}.

In fact, complement with respect to normal forms arises naturally in any consideration of typing for incorrectness, and not only in the two-sided setting.  If we want to be able to assign types to terms that go wrong, then we must be able to describe, within the type system, the conditions under which terms of different shapes should be considered stuck.  For example, an integer addition $M + N$ is stuck if $M$ is stuck, or if $M$ is some value other than an integer.  In other words, $M+N$ is stuck whenever $M$ is some \emph{normal form} that is \emph{not} in the type $\intty$ of integer values, that is, when $M:\CompPure{\intty}$ is derivable.  Similarly, the second projection $\pi_2(M)$ of some term $M$ is stuck whenever $M$ is either itself stuck or when $M$ is some value other than a pair value.  In other words, when $M$ is some \emph{normal form} that is \emph{not} in the type of all pair values, or $M:\CompPure{\pairvalty}$ for short.  Function applications $M\,N$ are stuck whenever $M$ is a normal form \emph{not} in the type of all functions, and so on.  In fact, we are able to show that not only is our type system sound, but it admits a complete classification of normal forms: every normal form can be proven either to be a value or to be a stuck \begin{added}term\end{added}.  By contrast, most type systems in the literature do not satisfy even the value part of this property, that is, it is typical that there are values that are not well typed.

\paragraph{Contributions} In sum, our contributions are as follows:
\begin{enumerate}[(i)]
  \item We define a two-sided type system for a functional programming language with atoms and pattern matching.  The system exploits two-sidedness to give a natural and expressive rule for typing pattern-matching.  Unlike \cite{ramsay-walpole-popl2024}, the system is formed from a small number of well-motivated rules.
  \item We show that the refutational power of the system is, nevertheless, excellent, by deriving (analogues of) the key refutation typing rules from the original work (Theorem~\ref{thm:left-rules-derivable}).  Moreover, we show the potential of the idea by certifying that a sequence of textbook Erlang programs due to \citet{hebert2013} all go wrong\del{, including one that cannot be certified by the state-of-the-art Dialyzer system}.
  \item We introduce the coarrow type $A \coto B$ as a type-theoretic counterpart to coimplication (as featured in co- and bi-intuitionistic logic \cite{rauszer-stlog1974,crolard-tcs2001,tranchini-jal2017}), and show that it is the appropriate conceptual tool with which to reason refutationally about functions.
  \item We show soundness of the subtype relation via a natural set-theoretic model (Theorem~\ref{thm:subtyping-soundness}), and we show its decidability by describing an equivalent alternative characterisation (Theorem~\ref{thm:subtyping-equivalence}), whose decision problem can be reduced to bottom-up proof search (Theorem~\ref{thm:subtyping-decidability}).
  \item We show that the system is equivalent to a one-sided system with multiple conclusions (Theorem~\ref{thm:one-two-equivalence}), that is, in every judgement $\Gamma \types \Delta$, the assumptions $\Gamma$ may be taken to comprise only typings of shape $x:A$.  By introducing a new technical device for working with non-single-subject $\Delta$, we argue progress and preservation for the one-sided system, and thus obtain soundness of the original \begin{added}system\end{added} (Theorem~\ref{thm:progress-preservation}).
\end{enumerate}

\paragraph{Outline} The rest of the paper is structured as follows.  We introduce the programming language and its semantics in Section~\ref{sec:prog-lang}; the types, subtyping relation and type system in Section~\ref{sec:two-sided-system}.  We then derive the key principles of refutation and put them to work certifying H\'ebert's Erlang examples in Section~\ref{sec:refutation}.  This leads into a discussion of how to express necessity in conjunction with curried functions and the introduction of the coarrow as counterpart to coimplication in Section~\ref{sec:coto}.  We then develop the metatheory of subtyping in Section~\ref{sec:subtyping-props} and of the type system in Section~\ref{sec:type-system-metatheory}.  We conclude with a discussion of related work in Section~\ref{sec:related}.

\section{A Functional Programming Language}\label{sec:prog-lang}
We define a functional language with integers, pairing, recursive functions, Erlang-style atoms and matching.

\begin{definition}[Terms and Patterns]\label{def:terms}
    Assume a denumerable set of term variables $x$, $y$, $z$ and a denumerable set of atoms $\atom{a}$, $\atom{b}$, $\atom{c}$ including the distinguished atoms $\atom{true}$ and $\atom{false}$.
    The patterns of the language, typically $p$, $q$ are:
    \[
        p,q \Coloneqq x \mid \atom{a} \mid (p, q)\\
    \]
    Patterns are required to be \emph{linear} in the sense that each variable occurs in $p$ at most once.

    The \emph{terms} of the language, typically $M$, $N$, $P$, $Q$, are defined by the following grammar. Here, $\pn{n}$ denotes the numeral for integer $n$, the symbol $\binOpPrim{}{}$ represents any reasonable choice of binary operations on integers, e.g. $+$, $-$, $*$, and $\relOpPrim{}{}$ any reasonable choice of binary relations on integers.
    \[
        M,N \Coloneqq x \mid \atom{a} \mid \pn{n} \mid (M,N) \mid \begin{added}\abs{x}{\;\!\!M}\end{added} \mid M\;\!N \mid \fixtm{x}{\;\!\!M} \mid {M}\mathord{\otimes}{N} \mid M\mathord{\Join}N \mid \matchtm{M}{\mid_{i=1}^k \;\!\!p_i \!\mapsto\! N_i} \\
    \]
    The patterns $p_1,\,\ldots,\,p_k$ in a match expression are required to be \emph{disjoint}, that is, a given term will match at most one.
    We consider terms identified up to renaming of bound variables and adopt the usual conventions regarding their treatment.  The set of free variables of a term $M$ is written $\fv(M)$.  A term with no free variables is said to be \emph{closed}.

    A \emph{term substitution}, typically $\sigma$, $\tau$, is \begin{added}a\end{added} finite map from term variables to terms.  We write the application of a substitution $\sigma$ to a term $M$ postfix, as in $M\sigma$.  We use $\dom(\sigma)$ for the domain of the map $\sigma$.  
    We write concrete substitutions explicitly as a list of maplets, as in $[M_1/x_1,\ldots,M_n/x_n]$.

    Finally, the \emph{values}, typically $V$, $W$, are \begin{added}those \emph{closed} terms that can be\end{added} described by the following:
    \[
        V,\,W \Coloneqq \atom{a} \mid \pn{n} \mid (V,\,W) \mid \abs{x}{M}
    \]
\end{definition}

\vspace{-.1cm}
Most of the language is standard, we mention only that: atoms $\atom{a}$ are literals without internal structure, they may only be compared via pattern matching, and the pattern match expression $\matchtm{M}{\mid_{i=1}^k p_i \mapsto N_i}$ scrutinises the term $M$, and if $M$ matches one of the (disjoint) patterns $p_i$ then it will evaluate as $N_i$.

\paragraph{Abbreviations}
For $i\in[1,2]$ we write $\pi_i(M)$ as an abbreviation for $\matchtm{M}{(x_1,\,x_2) \mapsto x_i}$.  We write $\divtm$ for $\fixtm{x}{x}$.  We write $\iftm{M}{N}{P}$ as an abbreviation for $\matchtm{M}{\atom{true} \mapsto N \mid \atom{false} \mapsto P}$.    

The language reduces according to a call-by-name strategy\footnote{This matters little for the development, but we were interested to explore CBN as \cite{ramsay-walpole-popl2024} anticipates some difficulty when changing the meaning of typing formulas on the left of the turnstile to accommodate the fact that they may not evaluate.  However, we observed that the system is already sound for call-by-name without any change to the semantics, the crux of which is contained in Remark~\ref{rmk:cbn-funs}}.  However, in the interests of keeping the semantics of pattern matching simple, and a little closer to that of Erlang, we require that pairing and matching are strict.  We reduce the left-component of a pair first, and only if it reduces to a value do we reduce the second component.


\begin{definition}[Reduction]
    \label{def:reduction}
    The \emph{evaluation contexts} are defined by the following grammar:
    \[
        \arraycolsep=1.4pt
        \begin{array}{rcl}
            \Ctx & \Coloneqq & \Hole \mid \Ctx\,N \mid (\Ctx, N) \mid (V, \Ctx) \mid \opPrim{\Ctx}{N} \mid \opPrim{\pn{n}}{\Ctx} \mid \relOpPrim{\Ctx}{N} \mid \relOpPrim{\pn{n}}{\Ctx} \mid \matchtm{\Ctx}{\mid_{i=1}^k p_i \mapsto N_i}
        \end{array}
    \]
    Given a context $\Ctx$, we write $\Ctx[M]$ for the term obtained by replacing the hole $\Hole$ by $M$.  The \emph{one-step reduction relation}, $M \ped N$, is the binary relation on terms obtained as the closure of the following schema under evaluation contexts.  The terms on the left-hand side of the schema are said to be \emph{redexes}.  We write $\mathord{\rhd}^{\!*}$ for the reflexive transitive closure.
    \[
        \arraycolsep=1.4pt
        \begin{array}{rclcrcl}
            (\abs{x}{M})\,N                           & \ped & M[N/x] &\phantom{woooooo}&  \fixtm{x}{M}                            & \ped & M[\fixtm{x}{M}/x]                                \\[2mm]
            \pi_{i}(V_1,V_2)                          & \ped & V_{i} \quad\;\;\, (i\in \{1, 2\}) &&  \opPrim{\pn{n}}{\pn{m}}                   & \ped & \pn{n \otimes m}             \\[2mm]
            \matchtm{V}{\mid_{i=1}^k p_i \mapsto N_i}  & \ped &  \begin{added}N_j\end{added}\sigma \quad (V \eqsyntac \begin{added}p_j\end{added}\sigma) && \relOpPrim{\pn{n}}{\pn{m}} & \ped & \mbox{\textquotesingle}{(\relOpPrim{n}{m})}
        \end{array}
    \]
\end{definition}

\begin{definition}[Normal forms, values, stuck terms and going wrong]
    A term that cannot make a step is said to be in \emph{normal form}, and we reserve the letter $U$ as a metavariable for those terms that are in normal form.  \emph{Closed} normal forms can be partitioned into two disjoint sets: the values, which we have already introduced, and the stuck terms.
    A \begin{added}closed\end{added} term is said to be a \emph{stuck term}, for which we use the metavariables $S$ and $T$, just if it is a normal form but \emph{not} a value.  A term is said to \emph{go wrong} just if it reduces to a stuck term, and to \emph{evaluate} just if it reduces to a value.
\end{definition}

\begin{lemma}[Stuck Terms]
    The stuck terms may equivalently be described \begin{added}as those closed terms that can be described\end{added} by the following grammar:
    \[
        \begin{array}{rcll}
            S,\,T & \Coloneqq & (S,\,M) \mid (V,\,S)                                                                         &                                                                                                           \\
                  & \mid      & \binOpPrim{U}{M} \mid \binOpPrim{\pn{n}}{U} \mid \relOpPrim{U}{M} \mid \relOpPrim{\pn{n}}{U} & \text{when $U$ is not of shape $\pn{n}$}                                                                  \\
                  & \mid      & U\,M                                                                                         & \text{when $U$ is not of shape $\abs{x}{M}$}                                                              \\
                  & \mid      & \matchtm{U}{\mid_{i=1}^k p_i \mapsto N_i}                                                    & \text{when $U$ is a value and $\forall i .\,\forall \sigma.\,U \neq p_i\sigma$, or $U$ stuck}
        \end{array}
    \]
\end{lemma}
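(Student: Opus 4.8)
The plan is to establish the two inclusions between the stuck terms and the closed terms generated by the displayed grammar, both by straightforward structural inductions. Throughout I will use the basic observation that a closed term $M$ fails to be a normal form precisely when $M = \Ctx[R]$ for some evaluation context $\Ctx$ and some redex $R$, together with the fact that every redex does reduce; so to show that a closed term is a normal form it suffices to check that it is not itself a redex and that no evaluation context exposes a redex at a strict subterm. Note also that the grammar would generate open terms if the metavariables $M$, $N_i$, $U$, $V$ were allowed to range freely, which is why the closedness side condition in the statement is essential: in each inductive step the relevant subterms of a closed term are again closed, so the induction hypotheses apply.

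For the right-to-left inclusion I would show, by induction on the derivation of a closed term $S$ in the grammar, that $S$ is a closed normal form that is not a value. That $S$ is not a value is immediate by inspection of each production, using the induction hypothesis for the pair productions $(S,M)$ and $(V,S)$ (a pair is a value only when both components are). Normality is verified production by production: in each case the evaluation contexts that could expose a redex at a subterm are ruled out because that subterm is already a normal form --- by the induction hypothesis in the case of $S$, by definition in the case of $U$, and again by the induction hypothesis for the match production whose side condition is ``$U$ stuck'' --- while the side conditions are exactly what prevents the whole term from being a redex ($U$ not a numeral blocks a $\binOpPrim{}{}$-redex, $U$ not an abstraction blocks a $\beta$-redex, and $U$ matching no pattern, or $U$ not even being a value, blocks the match). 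The one case needing care is $(S,M)$: here I must appeal to left-strictness of pairing, so that no reduction inside $M$ is possible while the first component $S$ is not a value.

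For the left-to-right inclusion I would show, by induction on the structure of a closed term $S$, that if $S$ is a normal form that is not a value then $S$ is generated by the grammar. The cases $S = x$ (impossible, $S$ is closed), $S = \atom{a}$, $S = \pn{n}$ and $S = \abs{x}{M}$ are vacuous since those terms are values, and $S = \fixtm{x}{M}$ is vacuous since every such term is a redex, hence not normal. In each remaining case $S$ has a distinguished immediate subterm sitting in the leftmost evaluation position; since $S$ is a normal form that subterm is again a closed normal form, and I then split on whether it is a value and, if so, on its shape. For $S = (M,N)$: if $M$ is not a value it is stuck, so $S = (S',N)$ by the induction hypothesis; if $M$ is a value then (by normality and strictness) $N$ is a normal form, and $N$ is not a value because $S$ is not, so $S = (V,S')$ by the induction hypothesis. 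For $S = M\,N$, $S = \binOpPrim{M}{N}$, $S = \relOpPrim{M}{N}$ and $S = \matchtm{M}{\cdots}$ the reasoning is analogous: normality forces $M$ (and, for the binary-operator forms, possibly the second operand) to be a normal form, and the fact that $S$ is not itself a redex yields exactly the side condition of the matching production --- $M$ is not an abstraction; $M$ is not a numeral, or it is a numeral and the second operand is not; $M$ is a value matching no pattern, or $M$ is stuck.

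I do not expect a genuine obstacle: the content is entirely in keeping the evaluation-context bookkeeping honest and in not overlooking that $\fixtm{x}{M}$ is always reducible, that pairing is strict on the left (so a stuck left component freezes the whole pair regardless of the right component), and that the closedness and normality of the relevant subterms must be re-derived at each inductive step so that the induction hypotheses are available.
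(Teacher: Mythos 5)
Your proof is correct and is essentially the argument the paper intends: the paper dispatches this lemma (and its normal-form analogue) with a one-line ``by structural inspection of terms and analysis of the reduction relation,'' and your two structural inductions, with the careful handling of left-strict pairing, call-by-name application, and the always-reducible $\fixtm{x}{M}$, are exactly the details that inspection amounts to.
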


Observe the use of negation in the above definition: it is this that will later make the complement operator on types a particularly useful conceptual tool in reasoning about terms that go wrong.  For example, we will be able to encode the side condition on stuck terms of shape $U\,M$ by the typing formula $U : \CompPure{\funty}$.  Note also that, in stuck terms of shape $(S,\,M)$, the term $M$ in the second component is arbitrary -- it may not even be a normal form.  For example, $(\pn{1}\,(\abs{x}{x}),\,\divtm)$ is stuck, because pairs are evaluated eagerly from left-to-right, reduction stops once the first component becomes stuck.

\section{Two-Sided System}\label{sec:two-sided-system}


We start by defining the language of types and associated pattern type  substitutions.

\begin{definition}[Types]\label{def:simple-types}\label{def:type-meanings}
  The types of the system, typically $A$, $B$, $C$, are given by the following:
  \[
    \begin{array}{rcl}
      A, B & \Coloneqq & \topty \mid \okty \mid \atomty{} \mid \intty{} \mid \atom{a} \mid (A,\,B) \mid A \to B \mid \bigvee_{i=1}^k A_i \mid \CompPure{A}
    \end{array}
  \]
  We define a \emph{pattern type substitution} as a finite mapping, typically $\theta$, from \emph{term} variables to types.  Like term substitutions, we will often write pattern type substitutions concretely using the notation $[A_1/x_1,\ldots,A_n/x_n]$, and apply them postfix.

  The types can be assigned meanings as sets of closed normal forms.  \begin{added}Making precise this meaning is useful for intuition, and it will be used formally in the statement and proof of the soundness of subtyping, Theorem~\ref{thm:subtyping-soundness}\end{added}.  For each type $A$, we define the set of closed normal forms $\mng{A}$ and the set of closed terms $\mng*{A}$, by mutual induction on $A$:
  \[
    \begin{array}{rclcrcl}
      \mng*{A}      & = & \multicolumn{5}{l}{ \{\,M \mid \text{if $M$ reduces to a normal form $\begin{added}U\end{added}$, then $\begin{added}U\end{added} \in \mng{A}$}\,\}}                                                                                                                                                                                                                                                                    \\[2mm]
      \mng{\topty}  & = & \{\,\begin{added}U\end{added} \mid \text{$\begin{added}U\end{added}$ is a closed normal form}\,\}
                    &   & \mng{\intty}                                                                                                                                                                                                                                                                                                                   & = & \{\, \pn{n} \mid n \in \mathbb{Z} \,\}                                             \\
      \mng{\okty}   & = & \{\,V \mid \text{$V$ is a value}\,\}
                    &   & \mng{\textstyle\bigvee_{i=1}^k \begin{added}A_i\end{added}}                                                                                                                                                                                                                                                                    & = & \textstyle\bigcup_{i=1}^k \mng{\begin{added}A_i\end{added}}                        \\
      \mng{\atomty} & = & \{\,\atom{a} \mid \text{$\atom{a}$ is an atom} \,\}
                    &   & \mng{\CompPure{A}}                                                                                                                                                                                                                                                                                                             & = & \mng{\topty} \setminus \mng{A}                                                     \\

      \mng{A \to B} & = & \{\,\abs{x}{M} \mid \forall N \in \mng*{A}.\:M[N/x] \in \mng*{B} \,\}                                                                                                                                                                                                                                                          &   & \mng{\atom{a}}                                              & = & \{\,\atom{a}\,\} \\
      \mng{(A,\,B)} & = & \multicolumn{5}{l}{\{\, (\begin{added}V\end{added},\,\begin{added}U\end{added}) \mid \begin{added}V\end{added} \in \mng{A} \cap \mng{\okty},\, \begin{added}U\end{added} \in \mng{B} \,\} \cup \{ (\begin{added}S\end{added},\,N) \mid \begin{added}S\end{added} \in \mng{A} \cap \mng{\CompPure{\okty}},\,N \in \mng*{B}\,\}}
    \end{array}
  \]
\end{definition}

As can be seen from the semantics of types, the idea is that each type represents a certain set of normal forms.  $\topty$ is the type \begin{added}of\end{added} all normal forms, $\okty$ is the type of all values, $\atomty$ the type of all atoms and $\intty$ the type of all integers.

We use the syntax $(A,\,B)$ for the type of pairs whose components are of type $A$ and $B$ respectively, rather than, say $A \times B$.  This is convenient for our later discussion of the typing of pattern matching, because it means that we can discuss a term instance $p\sigma$ of a pattern $p$, and its type $p\theta$ using the same notation and a \emph{pattern type substitution}.  For example, we will be able to prove that $\types (x,y)[(\abs{z}{z})/x,\,\pn{2}/y] : (x,y)[(\intty \to \intty)/x,\,\intty/y]$.   Note that the type $(A,\,B)$ does not necessarily contain only values -- it can be that an element of this type is a stuck term whenever either $A$ or $B$ may contain a stuck term.  In particular, one can see from the definition that the right summand contains all stuck terms of shape $(\begin{added}S\end{added},\,N)$ where $\begin{added}S\end{added}$ is itself stuck and $N$ is arbitrary (may not even be a normal form).  Also, it may be that $(A,\,B)$ is empty in the case that $A$ \begin{deleted}\del{or $B$}\end{deleted} is empty (e.g. when $A = \CompPure{\topty}$).

As usual, the type $A \to B$ represents those functions that guarantee to map a given term of type $A$ into a normal form of type $B$, or diverge in the process.  Note that we use the term form of the semantics $\mng*{A}$ for both argument and result since our calling convention is call-by-name (but see also Remark~\ref{rmk:cbn-funs}).  Note that the type $A \to B$ is \emph{never} empty, since it always contains $\abs{x}{\abbv{div}}$ -- a function that inevitably diverges certainly satisfies the condition on membership.

The semantics of types justifies the following further abbreviations:
\begin{center}
  \begin{minipage}{.24\textwidth}
    \begin{align*}
      \botty  & \coloneqq \CompPure{\topty} \\
      \pairty & \coloneqq (\topty,\,\topty)
    \end{align*}
  \end{minipage}
  \begin{minipage}{.24\textwidth}
    \begin{align*}
      \pairvalty & \coloneqq (\okty,\,\okty)   \\
      \funty     & \coloneqq \botty \to \topty
    \end{align*}
  \end{minipage}
  \begin{minipage}{.24\textwidth}
    \begin{align*}
      A \onlyto B                     & \coloneqq \CompPure{A} \to \CompPure{B}                         \\
      \textstyle\bigwedge_{i=1}^k A_i & \coloneqq \CompPure{(\textstyle\bigvee_{i=1}^k \CompPure{A_i})}
    \end{align*}
  \end{minipage}
  \begin{minipage}{.24\textwidth}
    \begin{align*}
      \boolty & \coloneqq \atom{true} \vee \atom{false} \\
              &
    \end{align*}
  \end{minipage}\vspace{8pt}
\end{center}

\noindent
Notice that $\pairty$ is the type of all normal forms constructed using pairing, and thus includes those pairings that describe stuck terms.  By contrast, $\pairvalty$ is the type of all pair values (pairs whose components are again values).  Note also that we are able to construct a type whose interpretation is exactly the set of all functions, $\funty$, simply as an abbreviation.  This would be difficult if our types were restricted to sets of values only, since a function $\abs{x}{M}$, whose $M$ immediately goes wrong, would not live inside any type of shape $A \to B$.

\begin{remark}\label{rmk:cbn-funs}
  The type $A \onlyto B$, pronounced $A$ \emph{only to} $B$, which was introduced in \cite{ramsay-walpole-popl2024}, is  defined simply as an abbreviation for $\CompPure{A} \to \CompPure{B}$.  This makes sense because the defining condition for $A \onlyto B$ is ``if $M[N/x]$ reduces to a normal form in $\mng{B}$ then $N \in \mng*{A}$'' and this is the contrapositive of the condition ``if $N \notin \mng*{A}$ then $M[N/x] \in \mng*{\CompPure{B}}$''.  Unfolding the definitions, this latter condition is equivalent to ``if $N$ reduces to a normal form in $\mng{\CompPure{A}}$ then $M[N/x] \in \mng*{\CompPure{B}}$''.  Finally, note that, a basic property of reduction in our language is that any abstraction $\abs{x}{M}$ will satisfy this condition iff it satisfies the condition ``if $N \in \mng*{\CompPure{A}}$ then $M[N/x] \in \mng*{\CompPure{B}}$''.  To see this, suppose $\abs{x}{M}$, on input $N$ that reduces to an $\CompPure{A}$ normal form,  either diverges or returns a $\CompPure{B}$ normal form.  Then $\abs{x}{M}$, on input $N$ which \emph{either diverges or} reduces to an $\CompPure{A}$ normal form, must still in the end either diverge or return a $\CompPure{B}$ normal form.
\end{remark}

\subsection{Subtype Relation}
\label{subsec:subtype-relation}

A key component of the type system is a subtype relation, \begin{added}$A \subtype B$\end{added}, which axiomatises some key properties of complement.  \begin{added}Later, in Theorem~\ref{thm:subtyping-soundness}, we make precise the idea that the meaning of such an assertion is $\mng{A} \subseteq \mng{B}$\end{added}.

\begin{figure}
  \input{figures/simple-subtyping.tex}
  \caption{Subtype relation on types\label{fig:simple-subtyping}}
\end{figure}

\begin{definition}[Subtyping Relations]\label{def:simple-subtyping}
  The subtyping relation is defined inductively by the rules in \cref{fig:simple-subtyping}.  The relation $A \distype B$ holds just if $A \neq B$ and, \emph{either} $A$ and $B$ are both atoms \emph{or} $A,B \in \{\intty,\pairty,\funty,\atomty\}$.
  Define \emph{subtype equivalence}, written $A \tyeq B$, just if $A \subtype B$ and $B \subtype A$.  Since the subtype relation is, by definition, reflexive and transitive, it is immediate that $\sim$ is an equivalence on types.
\end{definition}

The rules \rlnm{Refl}, \rlnm{Trans}, \rlnm{Pair}, \rlnm{Fun} and \rlnm{Atom} are standard. The rule \rlnm{Top} introduces a universal type, and when combined with \rlnm{CompL}, provides the dual notation of the bottom type, $\botty$, which is a subtype of any other type.

The rule \rlnm{Ok} allows us, \begin{added}together with \rlnm{UnionR} and \rlnm{Trans}\end{added}, to derive that each of $\intty, \pairvalty, \funty$ and $\atomty$ \begin{added}is a subtype\end{added} of $\okty$ i.e any \begin{added}normal form\end{added} which is of these types \begin{added}is a value\end{added}. The rule \rlnm{Disj} expresses the disjointness of certain types using the complement operator, \begin{added}for example $\intty$, $\pairty$, $\funty$, $\atomty$, as well as distinct atoms $\atom{a}, \atom{b}$ and so on, are provably disjoint from one another\end{added} \del{and using complement asserts that each type is \emph{not} contained in the other}. This is extended in \del{\rlnm{CompL} and} \rlnm{CompR} - if $A$ is disjoint from $B$, then their complements must be contained in the other.  These two rules ensure that our complement operator behaves like an order-reversing involution, up to equivalence.

In the rule \rlnm{PairC}, we have $\okty$ instead of $\topty$ in the second pair to mirror the asymmetric definition of $\mng{(A, B)}$. It is not necessarily true that $(\topty, \CompPure{B}) \subtyping \CompPure{(A, B)}$ since, if it was, we would be able to derive for example \begin{added}$(\pn{3}\ (\abs{x}{x}),\,\divtm) \in \mng{(\topty,\,\CompPure{\intty})} \subseteq \mng{\CompPure{(\CompPure{\okty},\,\intty)}}$, but we know that $(\pn{3}\ (\abs{x}{x}),\,\divtm) \in \mng{(\CompPure{\okty},\,\intty)}$\end{added}.

The rules \rlnm{UnionL} and \rlnm{UnionR} follow naturally from a set theoretic perspective. They explain the union type as a least upper bound on types.



\subsection{Typing Relation}

We view the type system as a formal mechanism for establishing the truth of properties of the form ``term $M$ either diverges, or reduces to a normal form of the type $A$'', written $M:A$.

\begin{definition}[Typing Formula]\label{def:typing-formula}
  A \emph{typing formula} (or simply \emph{typing}) is an expression of the form $M : A$, where $M$ is a term, and $A$ is a type.  In particular, if $M$ is a term variable, i.e. the typing formula has the form $x:A$, we call this a \emph{variable typing}.  The free variables of a typing formula $M:A$ are just $\fv(M)$.  We extend the notion to sets of formulas, writing $\fv(\Delta)$ for $\textstyle\bigcup_{M:A \in \Delta} \fv(M)$.
\end{definition}



The type system allows for proving sequents of the form $\Gamma \types \Delta$ with meaning, \begin{added}informally\end{added}: if all the formulas in $\Gamma$ are true then some formula in $\Delta$ is true\footnote{\begin{added}We will not formalise this interpretation directly, and instead soundness will be given via progress and preservation theorems.\end{added}}.

\begin{definition}[Typing Judgement and Derivability]
  A \emph{typing judgement} is an assertion of the form $\Gamma \types \Delta$, where $\Gamma$ and $\Delta$ are possibly empty, finite sets of typing formulas.  A type judgement is said to be \emph{derivable} just if it is possible to construct a proof tree concluding the judgement using the rules of Figure~\ref{fig:simple-two-sided-rules}.  Excepting the \rlnm{Var} rule, the conclusion of each rule distinguishes a single typing, which is called the \emph{principal formula} of the rule.

  The side condition on the \rlnm{Match} rule requires that $\mathsf{PatSubst}(p,\,\theta)$, which holds just if $\theta$ is a pattern type substitution with $\dom(\theta) = \fv(p)$ and, for all types $A$ in the range of $\theta$, $A \subtype \okty$.
\end{definition}

\begin{figure}
  \input{figures/simple-two-sided.tex}
  \caption{Two-Sided Type System.}\label{fig:simple-two-sided-rules}
\end{figure}

Two-sided type systems are construed as a kind of sequent calculus for typing formulas.  That is, the rules of the proof system can be divided into those that explain how to \emph{verify} a formula of a certain shape on the right-hand side of the sequent, and those that explain how to \emph{refute} a formula of a certain shape on the \begin{added}left\end{added}.

Since the right rules explain how to verify a typing, they are just the familiar rules of a traditional type system, but generalised slightly to allow for an arbitrary context $\Delta$ on the right (though one could instead consider an intuitionistic sequent calculus and drop this generalisation).  Therefore we will not explain \del{further} the standard rules \rlnm{Top}, \rlnm{Num}, \rlnm{Atom}, \rlnm{SubR}, \rlnm{BinOp}, \rlnm{RelOp}, \rlnm{Fix}, \rlnm{Pair}, \rlnm{Abs} and \rlnm{App}.

\paragraph{The Match Rule}
In the \rlnm{Match} rule, we abuse notation by considering a pattern-type substitution $\theta$, which is formally a finite map from term variables to types, also as a set of typings $\{x:A \mid \theta(x) = A\}$.  The rule allows free choice of a pattern substitution $\theta$, so long as it satisfies the side condition $\mathsf{PatSubst}(p,\,\theta)$.  This requires that any type $\theta(x)$ chosen as the type of a free variable $x$ of pattern $p$ provably does not contain any stuck term.  This is essential for soundness, because otherwise it could be that the scrutinised term $M$ could go wrong even when $\types M:\bigvee_{i=1}^k p_i\theta_i$\begin{added}, for example the judgement $\types \matchtm{(1,2)\ 3}{x \mapsto 1} : \intty$ would be provable\end{added}.

\begin{figure}
  \begin{mdframed}
    \begin{mathpar}
      \infer[WeakL]{
        \Gamma \types \Delta
      }
      {
        \Gamma,\,M:A \types \Delta
      }
      \and
      \infer[WeakR]{
        \Gamma \types \Delta
      }
      {
        \Gamma \types M:A,\,\Delta
      }
      \and
      \infer[Prj$i$]{
        \Gamma \types M : (A_1,\,A_2),\,\Delta
      }
      {
        \Gamma \types \pi_i(M) : \begin{added}A_i\end{added},\,\Delta
      }\;A_1,\,A_2 \subtype \okty
      \and
      \infer[PrjE$i$]{
        \Gamma,\, M: \mathsf{PairVal} \types \Delta
      }
      {
        \Gamma,\, \pi_i(M) : \okty \types \Delta
      }
      \\
      \infer[If]{
        \Gamma \types M : \boolty,\,\Delta
        \and
        \Gamma \types N : A,\,\Delta
        \and
        \Gamma \types P : A,\,\Delta
      }
      {
        \Gamma \types \iftm{M}{N}{P} : A,\,\Delta
      }
      \and
      \infer[IfE]{
        \Gamma,\,M : \boolty \types \Delta
      }
      {
        \Gamma,\,\iftm{M}{N}{P} : \okty \types \Delta
      }
    \end{mathpar}
  \end{mdframed}
  \caption{Admissible rules for weakening, projections and conditionals.}\label{fig:admissible-positive}
\end{figure}

To understand the mechanics of the \rlnm{Match} rule, let us first observe that all the rules of Figure~\ref{fig:admissible-positive} are admissible.
\begin{added}Combining \rlnm{Match} and \rlnm{WeakL}\end{added}, it follows that the less expressive, but perhaps more familiar, one-sided rule for matching is also admissible.
\begin{mathpar}
  \infer[Match$_0$]{
  \Gamma \types M : \textstyle\bigvee_{i=1}^k \theta_i p_i\\
  \Gamma,\,\theta_i \types N_i : A \;(\forall i \in [1,k])
  }
  {
  \Gamma \types \matchtm{M}{|_{i=1}^k p_i \mapsto N_i} : A
  }\;
  \begin{array}{l}
    \forall i \in [1,k].\:\mathsf{PatSubst}(p_i,\,\theta_i) \\
    \quad \wedge\ \fv(p_i) \cap \fv(\Gamma,\Delta) = \varnothing
  \end{array}
\end{mathpar}
This simpler match rule is sufficient when every case of the match produces a conclusion of the desired type $A$, such as the following instance with $\theta_1 = \varnothing$, $\theta_2 = [\intty/\del{n}\add{y}]$ and $\Gamma = \{x:(\atom{data},\,\intty)\}$:
\begin{mathpar}
  \infer*[left=Match$_0$]{
    \infer*[left=SubR]{
      \infer*[left=Var]{ }{
        \Gamma \types x : (\atom{data},\,\intty)
      }
    }
    {
      \Gamma \types x : \atom{err} \vee (\atom{data},\,\intty)
    }
    \\
    \infer*[left=Num]{ }
    {
      \Gamma \types \del{0}\add{\pn{0}} : \intty
    }
    \\
    \infer*[left=Var]{ }
    {
      \Gamma,\,\del{n}\add{y}:\intty \types \del{n}\add{y}:\intty
    }
  }
  {
    \Gamma \types \matchtm{x}{\atom{err} \mapsto \del{0}\add{\pn{0}} \mid (\atom{data},\del{n}\add{y}) \mapsto \del{n}\add{y} } : \intty
  }
\end{mathpar}

However, this simpler form of rule is insufficient when there is some match case that produces a result of an incompatible type, yet the case does not apply.  For example, if we were to modify the above program so that errors are propagated:
\[
  \matchtm{x}{\atom{err} \mapsto \atom{err} \mid (\atom{data},\del{n}\add{y}) \mapsto \del{n}\add{y} }
\]
\begin{added}Here, we can use the extra left assumption of the second premise in the \rlnm{Match} rule -- specifically, $x : \atom{err}$ -- to refute that the case ever occurred.\end{added} Writing $A$ for the type $(\atom{data},\,\intty)$, we have:
\begin{mathpar}
  \infer*{
    \infer*[left=SubR]{
      \infer*[left=Var]{ }{
        \Gamma \types x : A
      }
    }
    {
      \Gamma \types x : \atom{err} \vee A
    }
    \\
    \infer*[left=SubL]{
      \infer*[Left=CompL]{
        \infer*[Left=Var]{ }{
          \Gamma \types x : A,\,\atom{err}:\intty
        }
      }
      {
        \Gamma,\, x:\CompPure{A} \types \atom{err}:\intty
      }
    }
    {
      \Gamma,\, x : \atom{err} \types \atom{err}:\intty
    }
    \\
    \infer*[left=\begin{added}Var\end{added}]{ }
    {
      \Gamma,\,\del{n}\add{y}:\intty,\,x : A \types \del{n}\add{y}:\intty
    }
  }
  {
    \Gamma \types \matchtm{x}{\atom{err} \mapsto \atom{err} \mid (\atom{data},\del{n}\add{y}) \mapsto \del{n}\add{y} } : \intty
  }
\end{mathpar}

In the previous example, \begin{added}the guard expression $x$ did not occur in the branches of the match expression.  When it does, the \rlnm{Match} rule provides the assumptions $M:p_i\theta_i$, describing the branch condition, to refine the type of the guard inside the branches\end{added}.  This is essential for typing the following innocent-looking variation, in which we assume only that the guard matches one of the patterns (and that the data is an integer in the second case), i.e. $\Gamma = \{x:\atom{err} \vee (\atom{data},\,\intty)\}$, and again writing $A$ to abbreviate $(\atom{data},\,\intty)$:
\begin{mathpar}
  \infer*{
    \infer*[left=Var]{ }{
      \Gamma \types x : \atom{err} \vee A
    }
    \\
    \infer*[left=Num]{ }
    {
      \Gamma,\,x : \atom{err} \types \del{0}\add{\pn{0}}:\intty
    }
    \\
    \infer*[left=Prj2]{
      \infer*[left=Var]{ }{
        \Gamma,\,\del{n}\add{y}:\intty,\,x : A \types x : A
      }
    }
    {
      \Gamma,\,\del{n}\add{y}:\intty,\,x : A \types \pi_2(x) : \intty
    }
  }
  {
    \Gamma \types \matchtm{x}{\atom{err} \mapsto \del{0}\add{\pn{0}} \mid (\atom{data},\del{n}\add{y}) \mapsto \pi_2(x) } : \intty
  }
\end{mathpar}

\noindent
This way we can capture properties of the behaviour of matching expressions.  For example, the following are all derivable types for $M = \abs{x}{\matchtm{x}{\atom{err} \mapsto \atom{err} \mid (\atom{data},\del{n}\add{y}) \mapsto \del{n}\add{y} }}$:

\begin{center}
  \begin{minipage}{.45\linewidth}
    \[
      \begin{aligned}
        \types M & : \atom{err} \to \atom{err}         \\
        \types M & : (\atom{data},\,\intty) \to \intty
      \end{aligned}
    \]
  \end{minipage}
  \begin{minipage}{.45\linewidth}
    \[
      \begin{aligned}
        \types M & : \atom{err} \vee (\atom{data},\,\intty) \to \atom{err} \vee \intty      \\
        \types M & : \CompPure*{\atom{err} \vee (\atom{data},\,\okty)} \to \CompPure{\okty}
      \end{aligned}
    \]
  \end{minipage}
\end{center}

The bottom-right typing allows us to certainly conclude that the application of this function to a term of type $(\atom{dat},\,\intty)$, that is, where we have misspelled the atom, will not reduce to a value; e.g.
\begin{mathpar}
  \types (\abs{x}{\matchtm{x}{\atom{err} \mapsto \atom{err} \mid (\atom{data},\del{n}\add{y}) \mapsto \del{n}\add{y} }})\,(\atom{dat},\,3) : \CompPure{\okty}
\end{mathpar}
However, we will look in detail at how the system can be used for refuting that a term reduces to a value in the sequel.

\paragraph{The CompR Rule}
The \rlnm{CompR} explains that to verify $M:\CompPure{A}$, it suffices to refute $M:A$.  This is because\del{, as can be seen in our soundness result}, verifying \begin{added}$M \in \mng*{\CompPure{A}}$\end{added} amounts to showing that $M$ either diverges or reduces to a normal form of type $\CompPure{A}$, and refuting $M:A$ amounts to showing that $M$ does not reduce to a normal form of the type $A$.  Since every term either diverges or normalises, it is straightforward to see that these are the exact same property.

\begin{remark}\label{rmk:comp-comparison-with-itpde}
  A weakness of Ramsay and Walpole's work is that a complement operator was not supported by the type system.  In particular, as discussed in \cite{ramsay-walpole-popl2024}, the rule \rlnm{CompR} was not sound, that is, its conclusion does not necessarily follow from its premise.  This is because in that work, types are sets of \emph{values}, and therefore the complement of a type $A$ are those \emph{values} that are not in the type $A$.  Thus it does \emph{not} suffice to refute $M:A$ in order to verify $M:\CompPure{A}$, e.g. when $M$ goes wrong.
  Ramsay and Walpole remedied this unsoundness by considering a system with an alternative semantics, inspired by success types \cite{lindahl-sagonas-ppdp2006}.
  However, this semantics has \begin{deleted}\del{an}\end{deleted}\begin{added}a\end{added} fundamental shortcoming -- verifying a typing no longer excludes the possibility that it \del{may} may go wrong, so it is impossible to use this system to show that a given program cannot go wrong.  Like success types, one can \emph{only}  certify that a program does \emph{not} reduce to a value.
\end{remark}

\section{Refutation}\label{sec:refutation}

\begin{figure}
    \input{figures/simple-two-sided-left.tex}
    \caption{Derivable rules related to refutation.\label{fig:simple-two-sided-left}}
\end{figure}

The most interesting feature of the type system is the ability to deduce consequences of terms reducing to a normal form of a certain type and, ultimately, to refute typing formulas (by deducing absurdity as a consequence).
As expressed by the soundness of the type system (Theorem~\ref{thm:progress-preservation}), refuting a typing $M:\okty$, that is, proving a judgement $M:\okty \types\ $ with $M:\okty$ on the left-hand side and nothing on the right, gives a guarantee that $M$ will \emph{not} reduce to a value -- in other words, $M$ either diverges or it goes wrong.  Thus, refutation of type assignments provides a mechanism for finding genuine bugs -- with no false positives (assuming one counts divergence as a kind of bug).

The \emph{left} rules of the system (those for which the principal formula in the conclusion of the rule is on the left of the turnstile) explain how to refute a typing formula.  Using \rlnm{OpE}, one can directly refute that applying an arithmetic operation to an argument that is a function evaluates to a value, or, using \rlnm{AppE}, one can refute that applying a integer as if it were a function would evaluate.
\begin{mathpar}
    \infer*[left=OpE]{
        \infer*[Left=SubL]{
            \infer*[Left=CompL]{
                \infer*[Left=Abs]{
                    \infer*[Left=Top]{ }{ \pn{3} : \intty ,\,x:\botty \types x:\topty}
                }
                {
                    \pn{3} : \intty \types \abs{x}{x} : \funty
                }
            }
            {
                \pn{3} : \intty,\,(\abs{x}{x}) : \CompPure{\funty} \types
            }
        }
        {
            \pn{3} : \intty,\,(\abs{x}{x}) : \intty \types
        }
    }
    {
        \pn{3} \otimes (\abs{x}{x}) : \okty \types
    }
    \and
    \infer*[left=AppE]{
        \infer*[Left=SubL]{
            \infer*[Left=CompL]{
                \infer*[Left=Num]{ }{\types \pn{3} : \intty}
            }
            {
                \pn{3} : \CompPure{\intty} \types
            }
        }
        {
            \pn{3} : \funty \types
        }
    }
    {
        \pn{3}\,(\pn{1},\,\pn{2}) : \okty \types
    }
\end{mathpar}
Similarly, by \rlnm{MatchE} and \rlnm{PairE}, we can refute evaluation for e.g. scrutinizing a term that does not match any pattern, or trying to form a pair where one of the components does not evaluate.

However, in each case, it is the term itself that directly goes wrong.  What about if we want to show that, for example, a redex $(\abs{x}{M})\,N$ is contracted successfully, but that something in the body $M[N/x]$ will later get stuck; or that a match expression $\matchtm{\atom{a}}{\atom{a} \mapsto N \mid \atom{b} \mapsto P}$ successfully evaluates one of the branches, but that evaluation of the branch gets stuck?  For this purpose we introduce the left rules of Figure~\ref{fig:simple-two-sided-left}.  However, we will later show that these rules are inherent to the system: each is already derivable from the rules of Figure~\ref{fig:simple-two-sided-rules}.

The rule \begin{added}\rlnm{BotL}\end{added} is dual to \rlnm{Top}, and expresses that one may always refute that a term evaluates to a normal form inside the empty type.  The rules \rlnm{NumL}, \rlnm{FunL}, \rlnm{AtomL} and \rlnm{PairL1}  express the self-evident facts that one can refute that a normal form lives inside the complement of its natural type.

The \rlnm{AppL} rule gives a principle for refuting an application when the operator is, in fact, known to evaluate to a function.  It explains that, to refute that application $M\,N$ evaluates to an $A$, we can show that $M$ is a function which, to deliver an $A$, necessarily requires a $B$ as input, i.e. $M:B \onlyto A$, and then we \del{need only}\add{only need to} also show that the actual parameter $N$ does not evaluate to a $B$.  Note that, by the abbreviation, $B \onlyto A$ is $\CompPure{B} \to \CompPure{A}$ so it literally states that every input outside of $B$ will result in an output outside of $A$, which is equivalent \begin{added}to the contrapositive\end{added} (see also Remark~\ref{rmk:cbn-funs}).

The rule \rlnm{FixL} gives a principle for refuting that a fixpoint $\fixtm{x}{M}$ evaluates to an $A$.  To do so, we need only show that the argument being an $A$ is a consequence of the body evaluating to an $A$.  If the fixpoint is thought of as the limit of some chain of approximations, then the rule allows for working backwards along the chain, showing that when some better approximation is known to be outside of $A$, then the approximation preceding it was necessarily outside of $A$ too.

The rule \begin{added}\rlnm{MatchL}\end{added} gives a principle for refuting that a match expression $\matchtm{M}{\mid_{i=1}^k p_i \mapsto N_i}$ evaluates to an $A$ when control does get passed to the branches because the scrutinee matches some pattern.  Note that, for convenience, when $\theta$ is used as a set of variable typings, we use the abbreviation $\CompPure{\theta}$ for the set $\{x:\CompPure{A} \mid x:A \in \theta\}$.  To justify the refutation, we must show that $M$ will evaluate to some value matching a pattern and then, for each branch $i \in [1,k]$, we refute that the branch returned an $A$ by either:
\begin{itemize}
    \item refuting $N_i : A$, thus proving directly that this branch did not return an $A$
    \item or, refuting $M : p_i\theta$, thus proving that this branch could not have been taken
    \item or, proving $\CompPure{\theta_i}$, thus showing that (although the branch may return an $A$) a consequence of the branch returning an $A$ is that the assignment of types to the pattern variables is inconsistent with the actual matching $\theta$.
\end{itemize}
Note that, since $\CompPure{\theta}$ appears on the right of the judgement, it suffices only to show that \emph{some} pattern variable is necessarily inconsistent with the actual matching.

The rules \rlnm{PairLL} and \rlnm{PairRL} gives the familiar principle that, to refute a pair $(M_1,\,M_2)$ evaluates to a pair of an $A_1$ and an $A_2$ specifically, it suffices to refute that one component $M_i$ evaluates to an $A_i$. However, \add{to reflect the asymmetric definition of $\mng{(A,\,B)}$ (as discussed for rule \rlnm{PairC} in \cref{subsec:subtype-relation}), the \rlnm{PairRL} rule imposes an additional constraint: if the refutation is based on the second component, the first must also reduce to a value, provided it does not diverge. }\del{when justifying according to the second component. }\del{The rules \rlnm{PrjL$i$} give a principle for refuting that a projection $\pi_i(M)$ yields a normal form of some type $A$.  In this case, it suffices to show that the pairing will have an incompatible component in position $i$. }\add{This asymmetry also affects the rules \rlnm{PrjL$i$} for refuting projections; to refute that a projection $\pi_i(M)$ yields a normal form of some type $A$, we show an incompatible component at position $i$, \emph{and}, for second projection ($i=2$) specifically, we must also ensure the first component would not go wrong. }

Finally, the rule \rlnm{IfL} allows us to refute that a conditional expression reduces to a value of type $A$ by refuting that both of its branches reduce to a value of type $A$.


\begin{theorem}\label{thm:left-rules-derivable}
    All the rules of Figure~\ref{fig:simple-two-sided-left} are derivable.
\end{theorem}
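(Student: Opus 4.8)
The plan is to derive every rule of Figure~\ref{fig:simple-two-sided-left} by a single ``contrapositive'' recipe, after first setting up two derived structural rules and a handful of facts about complement in the subtype relation. Using \rlnm{CompL}, \rlnm{CompR} and \rlnm{Refl} on the subtyping side one gets $A \tyeq \CompPure{\CompPure{A}}$; using \rlnm{UnionR} together with \rlnm{CompL}/\rlnm{CompR} one gets $\bigwedge_{i}A_i \subtype A_j$ and the de Morgan laws for $\bigvee$, $\bigwedge$ and complement; and from \rlnm{Ok}, \rlnm{UnionR} and \rlnm{Trans} one gets $\intty,\pairvalty,\funty,\atomty \subtype \okty$ and hence $\CompPure{A}\wedge\okty \subtype \okty$. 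With these, \rlnm{SwapL} and \rlnm{SwapR} are immediate: for \rlnm{SwapL}, apply \rlnm{CompL} to the premise $\Gamma \types M:\CompPure{A},\Delta$ to get $\Gamma, M:\CompPure{\CompPure{A}} \types \Delta$ and then \rlnm{SubL} with $A \subtype \CompPure{\CompPure{A}}$; \rlnm{SwapR} is dual, via \rlnm{CompR} and \rlnm{SubR}. Henceforth \rlnm{SwapL}, \rlnm{SwapR} and the admissible rules of Figure~\ref{fig:admissible-positive} (notably \rlnm{WeakL}, \rlnm{WeakR}) may be used freely.

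The ``leaf'' rules \rlnm{BotL}, \rlnm{NumL}, \rlnm{FunL}, \rlnm{AtomL}, \rlnm{PairL1} all follow the same one-line pattern once the abbreviations are unfolded: verify the principal term at its natural type on the right, then push it to the left with \rlnm{CompL}. Thus \rlnm{BotL} is \rlnm{CompL} applied to \rlnm{Top} (since $\botty = \CompPure{\topty}$), \rlnm{NumL} is \rlnm{CompL} on \rlnm{Num}, \rlnm{AtomL} is \rlnm{CompL} on \rlnm{Atom}, \rlnm{FunL} is \rlnm{CompL} on the derivation $\Gamma \types \abs{x}{M}:\botty\to\topty,\Delta$ obtained by \rlnm{Abs} from \rlnm{Top} (renaming $x$ away from $\Gamma,\Delta$), and \rlnm{PairL1} is \rlnm{CompL} on \rlnm{Pair} over two copies of \rlnm{Top}.

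For the rules whose principal formula on the left is a compound term, the recipe is uniform: by \rlnm{SwapL}, reduce the goal $\Gamma,\mathcal{E}:A \types \Delta$ to $\Gamma \types \mathcal{E}:\CompPure{A},\Delta$; apply the right rule (or admissible right rule) for the head constructor of $\mathcal{E}$; then re-orient the resulting premises across the turnstile with \rlnm{CompL}/\rlnm{CompR}/\rlnm{SubL}/\rlnm{SubR} to match those of the target rule. For \rlnm{AppL}, unfolding $B\onlyto A = \CompPure{B}\to\CompPure{A}$, the \rlnm{App} rule reduces $\Gamma \types M\,N:\CompPure{A},\Delta$ to \rlnm{AppL}'s first premise and to $\Gamma \types N:\CompPure{B},\Delta$, which is \rlnm{CompR} of its second premise. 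For \rlnm{FixL}, the \rlnm{Fix} rule at type $\CompPure{A}$ reduces the goal to $\Gamma, x:\CompPure{A} \types M:\CompPure{A},\Delta$, which comes from the premise $\Gamma, M:A \types x:A,\Delta$ by \rlnm{CompL} on $x$ followed by \rlnm{CompR} on $M$. For \rlnm{PairLL} and \rlnm{PairRL}, the extra ingredient is \rlnm{PairC}: \rlnm{SubR} with \rlnm{PairC} and \rlnm{UnionR} reduces $\Gamma \types (M,N):\CompPure{(A,B)},\Delta$ to $\Gamma \types (M,N):(\CompPure{A},\topty),\Delta$ or $\Gamma \types (M,N):(\okty,\CompPure{B}),\Delta$ respectively, which \rlnm{Pair} then splits, the residual goals being discharged by \rlnm{Top} and \rlnm{CompR}. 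The conditional and projection rules \rlnm{IfL}, \rlnm{PrjL1}, \rlnm{PrjL2} specialise the match machinery (recall $\pi_i(M) = \matchtm{M}{(x_1,x_2)\mapsto x_i}$ and $\iftm{M}{N_1}{N_2}=\matchtm{M}{\atom{true}\mapsto N_1 \mid \atom{false}\mapsto N_2}$), using the primitive \rlnm{Match}, the admissible \rlnm{IfE}, and \rlnm{PairC} to handle the case where the scrutinee is stuck.

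Finally \rlnm{MatchL}, which is the technical heart. By \rlnm{SwapL} it suffices to derive $\Gamma \types \matchtm{M}{\mid_{i=1}^{k} p_i\mapsto N_i}:\CompPure{A},\Delta$, and I apply the primitive \rlnm{Match} rule with the very same $\theta_i$. Its scrutinee premise $\Gamma \types M:\bigvee_{i=1}^{k} p_i\theta_i,\Delta$ is \rlnm{WeakR} of \rlnm{MatchL}'s first premise; its branch premise $\Gamma,\theta_i,M:p_i\theta_i \types N_i:\CompPure{A},\Delta$ is obtained from \rlnm{MatchL}'s branch premise $\Gamma, M:p_i\theta_i, N_i:A \types \CompPure{\theta_i},\Delta$ by moving every typing $x:\CompPure{C}$ of $\CompPure{\theta_i}$ from the right to the left with \rlnm{SwapL} (where it reappears as $x:C$, i.e.\ as a typing of $\theta_i$) and moving $N_i:A$ from the left to the right with \rlnm{CompR}. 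The main obstacle I anticipate is precisely verifying that this transposition turns \rlnm{MatchL}'s premises exactly into \rlnm{Match}'s premises---this is what makes the left and right match rules genuinely inter-derivable---and, relatedly, checking that the asymmetric semantics of $\mng{(A,B)}$, as axiomatised by \rlnm{PairC}, is strong enough to carry \rlnm{PairLL}, \rlnm{PairRL} and the projection rules when a projection is applied to a stuck pair. The remaining side conditions (freshness of pattern variables, the exact placement of $\Delta$, and the idempotence of the set-based contexts) are routine.
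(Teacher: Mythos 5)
Your proposal is correct and follows essentially the same route as the paper: first derive \rlnm{SwapL}/\rlnm{SwapR} from \rlnm{CompL}/\rlnm{CompR} together with subsumption at $\CompPure{\CompPure{A}}$, then obtain each left rule by re-orienting the corresponding right rule of Figure~\ref{fig:simple-two-sided-rules} across the turnstile (including the \rlnm{PairC}-based treatment of \rlnm{PairLL}/\rlnm{PairRL} and the transposition of $\CompPure{\theta_i}$ and $N_i:A$ in \rlnm{MatchL}). The details you give for \rlnm{AppL}, \rlnm{FixL} and \rlnm{MatchL} coincide with the paper's derivations.
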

\begin{added}
    \begin{proof}\label{proof:left-rules-derivable}
        First observe that the swapping rules are derivable:
        \begin{mathpar}
            \infer*[left=SubL]{
                \infer*[left=CompL]{
                    \Gamma \types M:\CompPure{A} \Delta
                }
                {
                    \Gamma,\,M:\CompPure*{\CompPure{A}} \types \Delta
                }
            }
            {
                \Gamma,\,M:A \types \Delta
            }
            \and
            \infer*[left=SubR]{
                \infer*[left=CompR]{
                    \Gamma,\,M:\CompPure{A} \types \Delta
                }
                {
                    \Gamma \types M:\CompPure*{\CompPure{A}},\,\Delta
                }
            }
            {
                \Gamma \types M:A,\,\Delta
            }
        \end{mathpar}
        All remaining rules are obtained by reformulating a rule of Figure~{\ref{fig:simple-two-sided-rules}} using \rlnm{CompL}, \rlnm{CompR}, \rlnm{SwapL} and \rlnm{SwapR}, e.g.:
        \begin{description}
            \item[\rlnm{FixL}] Suppose $\Gamma,\,M:A \types x:A,\,\Delta$.  Then by \rlnm{CompL} and \rlnm{CompR}, $\Gamma,\,x:\CompPure{A} \types M:\CompPure{A},\,\Delta$.  Then $\Gamma \types \fixtm{x}{M} : \CompPure{A},\,\Delta$ by \rlnm{Fix}, and the result follows from \rlnm{SwapL}.
        \end{description}
    \end{proof}
\end{added}

\begin{added}
    \begin{remark}
        \label{rmk:absence-of-arrow-left-rule}
        Using the method from the proof of \cref{thm:left-rules-derivable}, we can derive from \rlnm{Abs} a rule for refuting the typing of an abstraction:
        \begin{mathpar}
            \infer*[left=AbsL]{
                \Gamma,\, M:\CompPure{B} \types x:\CompPure{A},\, \Delta
            }
            {
                \Gamma,\,\abs{x}{M}:\CompPure*{A \to B} \types \Delta
            }
        \end{mathpar}
        This rule is equivalent to the \rlnm{CoAbs} rule from \cref{fig:coabs-coapp}, which we will discuss further in \cref{sec:coto}.

        One might also expect a left rule for refuting formulas of shape $M:A \to B$. Semantically, to refute $M : A \to B$ requires showing there exists some $N \in \mng*{A}$ such that $M\,N \notin \mng*{B}$. This suggests a rule like:
        \begin{mathpar}
            \infer*[left=Arr-Left]{
                \Gamma \types N:A,\,\Delta \\
                \Gamma,\, M\,N:B \types \Delta
            }
            {
                \Gamma,\,{M}:A \to B \types \Delta
            }
        \end{mathpar}
        However, this rule would not be sound, since refuting $M\,N:B$ does not establish $M\,N\notin \mng*{B}$ -- it can be that $M\,N$ diverges.  Such a rule would also violate a nice property of our rules, namely that every term occurring in a premise of the rule is a subterm of a term occurring in the conclusion.
    \end{remark}
\end{added}

\begin{remark}\label{rmk:left-rules-comparison-with-itpde}
    Our type system in Figure~\ref{fig:simple-two-sided-rules} contains just one left rule for each shape of stuck term, explaining the conditions under which such a term would get stuck.  In contrast, in the two-sided systems of \cite{ramsay-walpole-popl2024}, there are a large number of primitive left rules, whose choice appears quite arbitrary and lack a convincing justification for their inclusion.  For example, the two-sided system for PCF includes a rule \rlnm{Dis} for refuting that a term $M$ evaluates to a value of some type by means of verifying that it either diverges or evaluates to a value of some disjoint type, but they choose \emph{not} to include a rule like our derived rule \rlnm{PairL1}.  The reason given is that there is much overlap between these rules, for example they can both be used to show $(\pn{1},\,\pn{2}) : \mathsf{Nat} \types$.  However, they are ultimately incomparable in expressive power, \rlnm{Dis} is applicable even when the term subject $M$ is not literally a pair, whereas \rlnm{PairL1} applies even when the subject $(P,\,Q)$ may not evaluate to a pair value, for example when $P$ or $Q$ goes wrong.  In our system, however, both rules are derivable so there is no choice to make -- the system of Figure~\ref{fig:simple-two-sided-rules} inherently contains both reasoning principles already.

    \begin{added}
        Note, there are other differences between the rules of our system and analogous rules in the system of \cite{ramsay-walpole-popl2024}, which stem from the change to allow the typing of normal forms.  For example, the rule \rlnm{IfzL2} of \cite{ramsay-walpole-popl2024} has an analogue in our \rlnm{IfL}.  However, the latter has an additional premise $\Gamma \types M:\mathsf{Bool},\,\Delta$.  Otherwise judgements such as $\mathsf{if}\ \pn{2}\ \mathsf{then}\ \pn{3}\ \mathsf{else}\ \pn{4} : \CompPure{\okty} \types$ would be derivable, and thus the system would be unsound.
    \end{added}
\end{remark}

\subsection{H\'ebert's Examples}

We will demonstrate some of these rules by examining the following sequence of three programs taken from Fred H\'ebert's introductory text on Erlang \cite{hebert2013}, where they are used to illustrate the power, and limitations, of success types and the Dialyzer for finding bugs.  Although Erlang is a call-by-value language and ours is not, it makes no essential difference to these examples since our pattern matching is strict.

For the purpose of presenting the examples, a let construct is useful.  As is well known, this can be defined as an abbreviation $\lettm{x}{M}{N}$ for the redex $(\abs{x}{N})\,M$.  Thus, it is easy to see that the following typing rules are derivable:
\begin{mathpar}
    \infer[Let]{
        \Gamma \types M : B,\,\Delta
        \and
        \Gamma,\,x:B \types N:A,\,\Delta
    }
    {
        \Gamma \types \lettm{x}{M}{N} : A,\,\Delta
    }
    \and
    \infer[LetL]{
        \Gamma,\,N:A \types x:B,\,\Delta
        \and
        \Gamma,\,M:B \types \Delta
    }
    {
        \Gamma,\,\lettm{x}{M}{N} : A \types \Delta
    }
\end{mathpar}
The first of these rules should be familiar.  The second explains how to refute that a let reduces to a normal form of type $A$.  To do so, we can first derive necessary conditions on $x$, namely that it has some type $B$, in order for the term $N$ to reduce to an $A$.  Then we need only show that the term $M$\begin{added}, whose result is bound by $x$,\end{added} \emph{doesn't} satisfy this necessary type $B$.

The first program simply defines a wrapper around addition and misapplies it to an atom.
\[
    \lettm{\someOp}{\abs{xy}{x + y}}{\someOp\,\pn{5}\,\atom{you}}
\]

First observe that, if applying $\someOp$ to $\pn{5}$ and $\atom{you}$ were to yield a value, then it must be that $\someOp$ is \emph{not} a function that \emph{necessarily} needs its second argument to be an integer.  We will express this property, that a curried function requires its second argument to be an integer, by the type $\topty \to \intty \onlyto \okty$.  We will let the reasoning by which we chose this type remain a mystery for the moment, except to say that this type represents those functions which, if they are given any first argument, either diverge or return a function that necessarily requires its argument to be an integer (to return a value).  We will return to this topic in Section~\ref{sec:coto}.  For brevity, let $\Gamma \coloneqq \{\someOp : \topty \to \intty \onlyto \okty\}$ and $\Delta \coloneqq \{\someOp : \CompPure*{\topty \to \intty \onlyto \okty}\}$.
\begin{mathpar}
    \infer*[left=AppL]{
        \infer*[Left=CompR]{
            \infer*[Left=App]{
                \infer*[Left=Var]{ }
                {
                    \Gamma \types \someOp : \topty \to \intty \onlyto \okty
                }
                \and
                \infer*[Left=Top]{ }
                {
                    \Gamma \types \pn{5} : \topty
                }
            }
            {
                \Gamma \types \someOp\,\pn{5} : \intty \onlyto \okty
            }
        }
        {
            \types \someOp{}\,\pn{5} : \intty \onlyto \okty,\,\Delta
        }
        \and
        \infer*[left=SubL]{
            \infer*[Left=AtomL]{ }{\atom{you} : \CompPure*{\atom{you}} \types \Delta}
        }
        {
            \atom{you} : \intty \types \Delta
        }
    }
    {
        \someOp{}\,\pn{5}\,\atom{you} : \okty \types \Delta
    }
\end{mathpar}

Next, observe that, by definition, $\someOp$ \emph{does} necessarily require its second argument to be an integer (in order to return a value), i.e. it is straightforward to derive: $\types \abs{xy}{x + y} : \topty \to \intty \onlyto \okty$.

Then we can put these two derivations together to refute that this program reduces to a value:
\begin{mathpar}
    \infer*[left=LetL]{
        \someOp\,\pn{5}\,\atom{you} : \okty \types \someOp : \CompPure*{\topty \to \intty \onlyto \okty}
        \and
        \infer*[left=CompL]{
            \types \abs{xy}{x + y} : \topty \to \intty \onlyto \okty
        }
        {
            \abs{xy}{x + y} : \CompPure*{\topty \to \intty \onlyto \okty} \types
        }
    }
    {
        \lettm{\someOp}{\abs{xy}{x + y}}{\someOp\,\pn{5}\,\atom{you}} : \okty \types
    }
\end{mathpar}

The second program in H\'ebert's sequence expresses the same problem more indirectly.  There are defined functions $\countTm$ and $\accountTm$ used for projecting out the numerical value of the transaction and the account identity respectively, and the transaction record is constructed using $\moneyTm$.  Then, instead of calling $\someOp$ directly, the second argument to $\someOp$ is itself a call to $\accountTm$ which, when passed the result of $\moneyTm$, returns an atom.

\begin{lstlisting}[xleftmargin=.2\textwidth]
  let money = $\abs{xy}{(\atom{give},(x,y))}$ in
  let count = $\abs{x}{\matchtm{x}{(\atom{give},(y,z)) \mapsto y}}$ in
  let account = $\abs{x}{\matchtm{x}{(\atom{give},(y,z)) \mapsto z}}$ in
  let op = $\abs{xy}{x + y}$ in
  let tup = $\moneyTm\,\pn{5}\,\atom{you}$ in
         op (count tup) (account tup)
\end{lstlisting}

\noindent
We observe that, for the body of the innermost let to reduce to a value, it \emph{cannot} be that all of the following are true:
\begin{enumerate}[(i)]
    \item $\someOp$ requires its second argument to evaluate to an integer
    \item \emph{and}, for $\accountTm$ to return an integer, its argument must have evaluated to a value of shape $(\atom{give},\,(M,\,N))$, with $N$ an integer
    \item \emph{and}, $\mathit{tup}$ is not of this shape
\end{enumerate}
In other words, if this term evaluates, then one of (i)--(iii) is false.  So let $\Delta_0$ be the following encoding of this situation:
\[
    \someOp : \CompPure*{\topty \to \intty \onlyto \okty},\,\accountTm : \CompPure*{(\atom{give},(\topty,\intty)) \onlyto \intty},\,\mathit{tup} : (\atom{give},(\topty,\intty))
\]
Then, it is straightforward to prove $\someOp\,(\countTm\,\mathit{tup})\,(\accountTm\,\mathit{tup}) : \okty \types \Delta_0$ using \rlnm{AppL}, \rlnm{App}, \rlnm{CompR} and \rlnm{Var}.
Moving outwards by one let, we find the definition of $\mathit{tup}$ and we observe that for (the definition of) $\mathit{tup}$ to evaluate to a tuple of shape $(\atom{give},\,(M,\,N))$, with $N$ an integer, it \emph{cannot} be that:
\begin{enumerate}[({iii})']
    \item For $\moneyTm$ to return a tuple of shape $(\atom{give},\,(M,\,N))$, with $N$ evaluating to an integer,  requires its second argument to be an integer.
\end{enumerate}
This is because $\moneyTm$ is not given an integer as second argument in the call in the definition of $\mathit{tup}$.
We can prove this either by working backwards using \rlnm{AppL} or we can exchange the two sides of the judgement using \rlnm{SwapL} and \rlnm{CompR},
and prove the resulting judgement as we would in a traditional type system:
\begin{mathpar}
    \infer*[left=SwapL]{
        \infer*[left=CompR]{
            \moneyTm : \topty \to \CompPure{\intty} \to \CompPure{(\atom{give},\,(\topty,\,\intty))} \types \moneyTm\,\pn{5}\,\atom{you} : \CompPure{(\atom{give},\,(\topty,\,\intty))}
        }
        {
            \types \moneyTm : \CompPure*{\topty \to \intty \onlyto (\atom{give},\,(\topty,\,\intty))}, \moneyTm\,\pn{5}\,\atom{you} : \CompPure{(\atom{give},\,(\topty,\,\intty))}
        }
    }
    {
        \moneyTm\,\pn{5}\,\atom{you} : (\atom{give},\,(\topty,\,\intty)) \types \moneyTm : \CompPure*{\topty \to \intty \onlyto (\atom{give},\,(\topty,\,\intty))}
    }
\end{mathpar}
In either case, we have that, for the innermost let to evaluate, we must have that at least one of (i), (ii) and (iii)' is false, so let us abbreviate this conclusion by $\Delta_1$:
\begin{align*}
    \someOp    & : \CompPure*{\topty \to \intty \onlyto \okty}                             \\
    \accountTm & : \CompPure*{(\atom{give},(\topty,\intty)) \onlyto \intty}                \\
    \moneyTm   & : \CompPure*{\topty \to \intty \onlyto (\atom{give},\,(\topty,\,\intty))}
\end{align*}
Then we can combine the above derivations, using \rlnm{LetL} and the weakening rule, to prove:
\begin{mathpar}
    \infer*{
        \someOp\,(\countTm\,\textit{tup})\,(\accountTm\,\textit{tup}) \begin{added}: \okty\end{added} \types \mathit{tup} : (\atom{give},(\topty,\intty)),\Delta_1
        \quad
        \moneyTm\,\pn{5}\,\atom{you} : (\atom{give},(\topty,\,\intty)) \types \Delta_1
    }
    {
        \lettm{\textit{tup}}{\moneyTm\,\pn{5}\,\atom{you}}{\someOp\,(\countTm\,\textit{tup})\,(\accountTm\,\textit{tup})} : \okty \types \Delta_1
    }
\end{mathpar}
This, in turn, becomes the left premise of \rlnm{LetL} for the let expression $\lettm{\someOp}{\abs{xy}{x + y}}{\ldots}$, and so on, and it remains only to show that the various local identifiers \emph{do} satisfy properties (i)--(iii)', thus contradicting that the program as a whole reaches a value.  Let us write $\Delta_2$ for the environment $\Delta_1 \setminus \{\someOp:\CompPure*{\topty \to \intty \onlyto \okty}\}$, and $P$ for $\lettm{\textit{tup}}{\moneyTm\,\pn{5}\,\atom{you}}{\someOp\,(\countTm\,\textit{tup})\,(\accountTm\,\textit{tup})}$, and noting that the judgement $\types \abs{xy}{x+y}: \topty \to \intty \onlyto \okty$ can be proven as above:  
\begin{mathpar}
    \infer*[left=LetL]{
        P : \okty \types \Delta_1
        \and
        \infer*[left=CompL]{
            \types \abs{xy}{x + y} : \topty \to \intty \onlyto \okty,\,\Delta_2
        }
        {
            \abs{xy}{x + y} : \CompPure*{\topty \to \intty \onlyto \okty} \types \Delta_2
        }
    }
    {
        \lettm{\someOp}{\abs{xy}{x + y}}{P} : \okty \types \Delta_2
    }
\end{mathpar}
Eventually, by this process, we can refute that the program as a whole evaluates.

The final program in H\'ebert's sequence is one \begin{added}for which Dialyzer's automatic type inference fails\end{added}, but nevertheless will crash at runtime.  It is a small modification of the former program where, instead of two distinct functions $\countTm$ and $\accountTm$, there is a single function $\itemTm$ whose behaviour is either like $\countTm$ or like $\accountTm$ depending on a switch.
\begin{lstlisting}[xleftmargin=.2\textwidth]
      item = $\lambda x.$ match x with                        
                           $\left|\begin{array}{lcl}(\atom{count},\,(\atom{give},\,(y,\,z))) &\mapsto& y;\\(\atom{account},\,(\atom{give},\,(y,\,z))) &\mapsto& z;\end{array}\right.$

\end{lstlisting}
Since the program has essentially the same structure as the previous, we may adopt the same strategy for refuting that it can reach a value.  Here we can deduce that, for the body of the innermost let to have reduced to a value, it \emph{must not} be that all of the following are true:
\begin{enumerate}[(i)]
    \item $\someOp$ is a function that requires that its second argument evaluates to an integer
    \item and, $\itemTm$ is a function that, to return an integer, requires an input that does not evaluate to a value of shape $(\atom{account},\,(\atom{give},\,(P,\,Q)))$, with $Q$ not an integer
    \item and, $\textit{tup}$ does evaluate to a value of shape $(\atom{give},\,(P,\,Q))$ with $Q$ not an integer.
\end{enumerate}
As previously, our argument proceeds by showing that, in fact, \emph{all} of (i)--(iii) do hold, contradicting the assumption that the program did evaluate.
The approach is similar to the previous two examples, but of particular interest is the change from separate $\countTm$ and $\accountTm$ to a single function $\itemTm$\del{, since this takes the example beyond the power of Dialyzer}.  However, using our sophisticated \rlnm{MatchL} rule we can prove (ii):
\begin{mathpar}
    \types \abs{x}{\matchtm{x}{\ldots}} : \CompPure{(\atom{account},\,(\atom{give},\,(\okty,\,\CompPure{\intty} \wedge \okty)))}
    \onlyto \intty
\end{mathpar}
Note: we have to use $\CompPure{\intty} \wedge \okty$ to ensure that the type $(\atom{account},\,(\atom{give},\,(\okty,\,\CompPure{\intty} \wedge \okty)))$ \begin{added}is a subtype of $\okty$, according to the side-condition on the rule\end{added}.
To do so, we assume that the body of the function evaluates to an integer and show that the argument $x$ therefore did not match the $\atom{account}$ branch without $z$ being an integer.  Of course, we could also show that it follows that $x$ did not match the $\atom{count}$ branch without $y$ being an integer, but this is not relevant to the overall argument.  By \rlnm{CompR}, it suffices to assume that $x$ \emph{does} match the $\atom{account}$ branch without $z$ being an integer, and then use \rlnm{MatchL} to refute that the match expression evaluated to an integer.
\begin{mathpar}
    x:(\atom{account},\,(\atom{give},\,(\okty,\,\CompPure{\intty} \wedge \okty))),\,\matchtm{x}{\ldots} : \intty \types
\end{mathpar}
By taking $\theta_1 = [\CompPure{\intty} \wedge \okty/y,\,\okty/z]$ and $\theta_2 = [\okty/y,\,\CompPure{\intty} \wedge \okty/z]$, this assumption on $x$, let us write $\Delta_x$ for short, ensures that we can prove that the match expression did evaluate some branch:
\begin{mathpar}
    \Delta_x \types x:(\atom{count},\,(\atom{give},\,(\CompPure{\intty} \wedge \okty,\,\okty))) \vee (\atom{account},\,(\atom{give},\,(\okty,\,\CompPure{\intty} \wedge \okty)))
\end{mathpar}
Then it is straightforward to refute that some branch returned an integer, by showing that in each of the two branches, returning an integer is inconsistent with this assignment to the free pattern variables $y$ and $z$ (i.e. that $\CompPure{\theta_i}$ follows in each case).  For example, in the second branch:
\begin{mathpar}
    \infer*[left=Sub]
    {
        \infer*[left=Var]{ }{
            x:(\atom{account},\,(\atom{give},\,(\okty,\,\CompPure{\intty} \wedge \okty))),\,z:\intty \types y:\CompPure{\okty},\,z:\intty
        }
    }
    {
        x:(\atom{account},\,(\atom{give},\,(\okty,\,\CompPure{\intty} \wedge \okty))),\,z:\intty \types y:\CompPure{\okty},\,z:\CompPure*{\CompPure{\intty} \wedge \okty}
    }
\end{mathpar}
Having derived this type for $\itemTm$, the rest of the argument follows in the same way as the two previous examples, so that overall, we can refute that the program reduces to a value.

\begin{remark}
    \begin{added}
        Dialyzer fails to certify this example automatically because it can only infer the (Erlang) type \lstinline!('account' | 'count', {'give', 5, 'you'}) -> 'you' | 5!, in which the dependence between the atom $\atom{account}$ in the input and the atom $\atom{you}$ in the output has been lost.
        However, its analysis can be guided by extending the example with the following ``overloaded type specification'', which explicitly enumerates the dependencies between the two calls and their return values, and then it can certify the bug.
        \begin{lstlisting}[language=Erlang, belowskip=-0.8 \baselineskip, aboveskip=0.4 \baselineskip]
    -spec item('account', {'give', 5, 'you'}) -> 'you';
                     ('count', {'give', 5, 'you'}) -> 5.
  \end{lstlisting}
    \end{added}
\end{remark}

\section{The CoArrow Type}\label{sec:coto}

In the discussion of H\'ebert's first example, we used the type $\topty \to \intty \onlyto \okty$ to assert the property that a curried function requires its second argument to be an integer.  This works because the type mixes the two kinds of arrows in a specific way: it is the type of those functions that, if you give them any first argument will either diverge or return a function that requires an integer in order to return \begin{added}a value\end{added}.

However, it is not clear there is a general pattern to forming assertions of the form ``is a curried function which, to return an $A$, requires that its $n^{\text{th}}$ argument is a $B$'' using the necessity arrow.  For example, \emph{neither} of the following two types express that, to return a value, the first argument of a two-argument curried function must be an integer:
\begin{description}
  \item[$\intty \onlyto \topty \onlyto \okty$.] This is the type of functions that, if they return a function of type $\topty \onlyto \okty$, their argument must have been an integer.  Hence, this property holds of any function that does not return a function, such as $\abs{x}{\pn{1}}$.
  \item[$\intty \to \topty \onlyto \okty$.]  This is the type of functions that, if they are given an integer as the first argument, guarantee to return a function, about which we have no further guarantees (if the function returns a value, then it's argument must have had type $\topty$, which is no information at all).  An example is $\abs{xy}{\pn{1}}$.
\end{description}
By unfolding the abbreviations, one can see that such a property \emph{could} be expressed by the type $\CompPure{\intty} \to \topty \to \CompPure{\okty}$.  However, it is not at all clear how to use this type with the \rlnm{AppL} rule.

The problem is that the necessity arrow does not curry well.  Apart from the difficulties illustrated above, this can be seen clearly in the proof theory.  When we type a curried abstraction using the traditional arrow, the two uses of the \rlnm{Abs} rule follow directly on one another, shown below left.
\begin{mathpar}
  \infer*[left=Abs]{
    \infer*[Left=Abs]{
      x:A,\,y:B \types M : C
    }
    {
      x:A \types \abs{y}{M} : B \to C
    }
  }
  {
    \types \abs{xy}{M} : A \to B \to C
  }
  \and
  \infer*[left=Abn]{
    \abs{y}{M} : B \onlyto C \types x : A
  }
  {
    \types \abs{xy}{M} : A \onlyto B \onlyto C
  }
\end{mathpar}
In the conclusion of the rule, the typing formula is of shape $\abs{z}{P} : A_1 \to A_2$ and occurs on the right of the judgement and in the premise we have a formula of exactly the same shape.  However, this is not the case for the abstraction rule for the necessity function type \rlnm{Abn} (which was included in \cite{ramsay-walpole-popl2024} as a primitive rule, but is easy to derive in our system), shown above \begin{added}on the\end{added} right.
This rule doesn't apply to the judgement $\abs{y}{M} : B \onlyto C \types x:A$.  Indeed, we can only proceed by subsumption on the left, which is not helpful.

\begin{figure}
  \begin{mdframed}
    \begin{mathpar}
      \infer[CoAbs]{
        \Gamma,\,M : A \types x : B,\,\Delta
      }
      {
        \Gamma,\,\abs{x}{M} : B \coto A \types \Delta
      }
      \and
      \infer[CoApp]{
        \Gamma,\,M:B \coto A \types \Delta
        \and \Gamma,\,N:B \types \Delta
      }
      {
        \Gamma,\,M\,N:A \types \Delta
      }
    \end{mathpar}
  \end{mdframed}
  \caption{Derivable abstraction and application rules \emph{a} \emph{la} bi-intuitionistic logic.}\label{fig:coabs-coapp}
\end{figure}

Although the necessity arrow $A \onlyto B$ appears natural, it is not, on its own, quite the right concept through which to properly understand the property that a function requires a certain property of its argument.  In fact, a better choice is to use the \emph{complement} of the necessity arrow.

At first this may seem strange, because the complement of the necessity arrow, $\CompPure*{A \onlyto B}$, is hardly a function type at all.  This type, viewed as a set under Definition~\ref{def:type-meanings}, contains integers and pairs and so on -- even stuck terms belong to this type!  However, \emph{refuting} that a term has this type amounts to showing that the term is a function that requires an $A$ as input if it is to return a $B$.  Moreover, we can chain this type to describe the properties of curried functions in a meaningful way.  Let us call this function type the \emph{coarrow}, or \emph{coto} for short.  Let us also introduce the abbreviation: $A \coto B \coloneqq \CompPure*{A \onlyto B}$.  Then the typing rule \rlnm{CoAbs} of Figure~\ref{fig:coabs-coapp} is derivable.
Using the coarrow, we can type curried abstractions as easily as we type them with the traditional arrow type, because of the implicit negation that surrounds the type of the returned function.

This derivation also helps to clarify the meaning of a type of shape $A \coto B \coto C$.  We can \emph{refute} that a curried function $\abs{xy}{M}$ satisfies this type whenever we can show that $M$ evaluating to a $C$ implies that \emph{either} $x$ must have been an $A$ \emph{or} $y$ must have been a $B$.  In particular, by using $\botty$, we can therefore state \emph{the negation of} properties of the form desired above.  For example:
\begin{description}
  \item[$\botty \coto \intty \coto \okty$.] This is the type of normal forms that are \emph{not} functions that, to return \begin{added}a value\end{added}, require their second argument to be an integer.  It is easy to see, by unfolding the abbreviations, that it is subtype equivalent to the complement of the type $\topty \to \intty \onlyto \okty$, which we used in H\'ebert's first example.
  \item[$\intty \coto \botty \coto \okty$.] This is the type of normal forms that are \emph{not} functions that, to return \begin{added}a value\end{added}, require their first argument to be an integer.  Unfolding the abbreviations, this type is  subtype equivalent to the complement of that we suggested above: $\CompPure{\intty} \to \topty \to \CompPure{\okty}$.
\end{description}
Refuting that a term has one of these types, therefore, verifies that it satisfies the corresponding property of functions.  Moreover, the \rlnm{CoApp} variant of \rlnm{AppL}, of Figure~\ref{fig:coabs-coapp}, is easily derivable.

Hopefully the reader can agree that the proof theory works out quite smoothly (see also the examples below), but the nature of this type is so unusual -- the `function type' $\topty \coto \botty$ doesn't contain any functions -- that one might want some independent confirmation that it is really a natural choice.  

In fact, the logical counterpart to this type is the key feature in the extension of intuitionistic logic to bi-intuitionistic logic (also called Heyting-Brouwer \cite{rauszer-stlog1974}, or subtractive logic \cite{crolard-tcs2001}).  As is well known, the usual function type $A \to B$ is related to the propositional implication, $A$ implies $B$, by the Curry-Howard Correspondence.  Propositional implication has a natural dual, which is variously called co-implication, subtraction or pseudo-difference.  Where implication can be characterised by the adjunction below left (whose forward direction corresponds to the usual \rlnm{Abs} rule), co-implication can be defined  by the below right\footnote{Some authors prefer to define coimplication rather as the converse to this relation, i.e. with the $A$ and $B$ exchanged, see the second footnote of \cite{tranchini-jal2017}.}, whose correspondence with our \rlnm{CoAbs} rule is immediately clear.
\begin{mathpar}
  A \wedge B \types C \quad\text{iff}\quad A \types B \to C \and B \coto A \types C \quad\text{iff}\quad A \types B \vee C
\end{mathpar}

\begin{example}
  The following example is a good illustration of how to use the coarrow.  Observe that we can refute that the \emph{twice} function, $\abs{fx}{f\,(f\,x)}$, can be assigned the type $(\intty \coto \intty) \coto \intty \coto \intty$.  Writing $\Delta$ for $\{x:\intty,\,f:\intty \coto \intty\}$, we have the derivation:
  \begin{mathpar}
    \infer*[left=CoAbs]{
      \infer*[Left=CoAbs]{
        \infer*[Left=CoApp]{
          \infer*[Left=Var]{ }{
            f : \intty \coto \intty \types \Delta
          }
          \and
          \infer*[left=CoApp]{
            \infer*[Left=Var]{ }{f:\intty \coto \intty \types \Delta}
            \and
            \infer*[Left=Var]{ }{x:\intty \types \Delta}
          }
          {
            f\,x : \intty \types \Delta
          }
        }
        {
          f\,(f\,x) : \intty \types \Delta
        }
      }
      {
        \abs{x}{f\,(f\,x)} : \intty \coto \intty \types f : \intty \coto \intty
      }
    }
    {
      \abs{fx}{f\,(f\,x)} : (\intty \coto \intty) \coto \intty \coto \intty \types
    }
  \end{mathpar}
  From the conclusion of this refutation, we are assured that, if the twice function returns an integer, it must be that \emph{either}:
  \begin{enumerate}[(i)]
    \item its second argument was an integer
    \item \emph{or}, its first argument was not a function that required an integer in order to return one
  \end{enumerate}
  Another way to read this implication is that, if the twice function returns an integer and its second argument was \emph{not} an integer, then it must be that its first argument didn't require an integer in order to return one.

  Moreover, it is straightforward to show that $\abs{x}{x + \pn{2}}$ is a function that, to return an integer, requires an integer as input.  That is, we can \emph{refute} that it has type $\intty \coto \intty$.  Therefore, we can prove that $(\abs{fx}{f\,(f\,x)})\,(\abs{x}{x + \pn{2}})$ really requires an integer in order to evaluate to an integer.  That is, we have the following refutation:
  \begin{mathpar}
    \infer*[left=CoApp]{
      \abs{fx}{f\,(f\,x)} : (\intty \coto \intty) \coto \intty \coto \intty \types
      \and
      \abs{x}{x + \pn{2}} : \intty \coto \intty \types
    }
    {
      (\abs{fx}{f\,(f\,x)})\,(\abs{x}{x + \pn{2}}) : \intty \coto \intty \types
    }
  \end{mathpar}
\end{example}


\section{Metatheory of Subtyping}\label{sec:subtyping-props}

\newcommand{\subtypeA}{\leq}
\begin{figure*}
  \begin{mdframed}
    \begin{mathpar}
      \infer[Refl]{ }{A \subtypeA A}
      \and
      \infer[Top]{ }{A \subtypeA \topty}
      \and
      \infer[OkR]{A \subtypeA B}{A \subtypeA \okty}\; B \in \mathcal{B}
      \and
      \infer[Atom]{ }{\atom{a} \subtypeA \atomty}
      \and
      \infer[CompRAtom1]{ }{\atom{b} \subtypeA \CompPure{\atom{a}}}\!\!\!\!\!\!a \neq b
      \and
      \infer[CompLAtoms]{ }{\CompPure{\atomty} \subtypeA \CompPure{\atom{a}}}
      \and
      \infer[CompRAtom2]{A \del{\subtype} \begin{added}\subtypeA\end{added} B}{A \del{\subtype} \begin{added}\subtypeA\end{added} \CompPure{\atom{a}}}\!\!\!\!\!\!\!\!B \in \{\intty,\pairty,\funty\}
      \and
      \infer[CompRInt]{A \subtypeA B}{A \subtypeA \CompPure{\intty}}\;B \in \{\pairty,\funty,\atomty\}
      \and
      \infer[CompRPair]{A \subtypeA B}{A \subtypeA \CompPure{(B,C)}}\;B \in \{\intty,\funty,\atomty\}
      \and
      \infer[CompRArr]{A \subtypeA B}{A \subtypeA \CompPure*{D \to E}}\;B \in \{\intty,\pairty,\atomty\}
      \and
      \infer[CompRAtoms]{A \subtypeA B}{A \subtypeA \CompPure{\atomty}}\;B \in \{\intty,\pairty,\funty\}
      \and
      \infer[CompRPairL]{A \subtypeA \CompPure{C}}{(A,B) \subtypeA \CompPure{(C,D)}}
      \and
      \infer[CompRPairR]{A \subtypeA \okty \and B \subtypeA \CompPure{D}}{(A,B) \subtypeA \CompPure{(C,D)}}
      \and
      \infer[CompRUn]{A \subtypeA \CompPure{A_i}\;(\forall i \in [1,k])}{A \subtypeA \CompPure*{\textstyle\bigvee_{i=1}^k A_i}}
      \and
      \infer[CompRC]{A \subtypeA B}{A \subtypeA \CompPure*{\CompPure{B}}}
      \and
      \infer[Pair]{A \subtypeA A' \and B \subtypeA B'}{(A,B) \subtypeA (A',B')}
      \and
      \infer[Fun]{A' \subtypeA A \and B \subtypeA B'}{A \to B \subtypeA A' \to B'}
      \and
      \infer[UnionR]{A \subtypeA B_j}{A \subtypeA \textstyle\bigvee_{i=1}^k B_i}\;j \in [1,k]
      \and
      \infer[UnionL]{B_i \subtypeA A\;(\forall i \in [1,k])}{\textstyle\bigvee_{i=1}^k B_i \subtypeA A}
      \and
      \infer[CompLTop]{ }{\CompPure{\topty} \subtypeA B}
      \and
      \infer[CompLPair]{\CompPure{A} \subtypeA \CompPure{A'} \and \CompPure{B} \subtypeA \CompPure{B'}}{\CompPure{(A,B)} \subtypeA \CompPure{(A',B')}}
      \and
      \infer[CompLArr]{\CompPure{A'} \subtypeA \CompPure{A} \and \CompPure{B} \subtypeA \CompPure{B'}}{\CompPure*{A \to B} \subtypeA \CompPure*{A' \to B'}}
      \and
      \infer[CompLOk]{\CompPure{B} \subtypeA A}{\CompPure{\okty} \subtypeA A}\; B \in \mathcal{B}
      \and
      \infer[CompLUn]{\CompPure{B_j} \subtypeA A}{\CompPure*{\textstyle\bigvee_{i=1}^k B_i} \subtypeA A}\;j \in [1,k]
      \and
      \infer[CompLC]{A \subtypeA B}{\CompPure*{\CompPure{A}} \subtypeA B}
    \end{mathpar}
  \end{mdframed}
  \caption{Alternative rules for subtyping.}\label{fig:sub-alt}
\end{figure*}

The rules of the subtype relation are natural from our understanding of types as certain sets of normal forms, so it is not surprising that \begin{added}these sets form a model of the theory, as revealed by \cref{thm:subtyping-soundness}\end{added}.  The contrapositive of the theorem allows us to use the set-theoretic model to exclude that certain subtype inequalities are derivable.  For example, we cannot have $A \to B \subtype \intty$ for any types $A$ and $B$, nor can we derive $A \to B \subtype \CompPure{(B \to C)}$ for any $B$ and $C$, since the set on the left always contains $\abs{x}{\divtm}$ but the set on the right does not.

\begin{theorem}[Soundness of Subtyping]\label{thm:subtyping-soundness}
  If $A \subtype B$ then $\mng{A} \subseteq \mng{B}$.
\end{theorem}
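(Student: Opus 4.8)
The plan is to prove $\mng{A} \subseteq \mng{B}$ by induction on the derivation of $A \subtype B$, with one case for each rule of \cref{fig:simple-subtyping}. Before the induction I would isolate one auxiliary fact: for every type $A$ we have $\mng{A} \subseteq \mng{\topty}$, i.e. each type really does denote a set of \emph{closed normal forms}. This follows by a routine induction on $A$; the only mildly delicate clause is $(A,B)$, where one uses the characterisation of stuck terms (that $(S,N)$ is stuck whenever $S$ is) to see that the elements of the second summand are closed normal forms. With this in hand, complement behaves as expected on denotations: $\mng{A} \cap \mng{\CompPure{A}} = \varnothing$ and $\mng{A} \cup \mng{\CompPure{A}} = \mng{\topty}$.

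Most cases of the induction are then immediate. \rlnm{Refl} and \rlnm{Trans} reduce to reflexivity and transitivity of $\subseteq$; \rlnm{Top} is the auxiliary fact; \rlnm{Atom} is $\{\atom{a}\} \subseteq \mng{\atomty}$. The lattice rules \rlnm{UnionL}, \rlnm{UnionR}, \rlnm{Pair} and \rlnm{Fun} follow because the corresponding clauses of \cref{def:type-meanings} are monotone in the evident way --- covariantly, except the domain position of $\to$, where the induction hypothesis $\mng{A'}\subseteq\mng{A}$ is used contravariantly after lifting it to $\mng*{A'}\subseteq\mng*{A}$ (which is immediate from the definition of $\mng*{\cdot}$). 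The complement-swapping rules \rlnm{CompL} and \rlnm{CompR} are pure set calculus using the auxiliary fact: for \rlnm{CompL}, the premise says $\mng{\topty}\setminus\mng{B}\subseteq\mng{A}$, and complementing both sides inside $\mng{\topty}$ gives $\mng{\topty}\setminus\mng{A}\subseteq\mng{\topty}\setminus(\mng{\topty}\setminus\mng{B})=\mng{B}$, which is the conclusion; \rlnm{CompR} is dual.

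The rules that carry real content are \rlnm{Ok}, \rlnm{Disj} and \rlnm{PairC}. For \rlnm{Ok} I would note that $\mng{\intty}$ and $\mng{\atomty}$ consist of values outright; that in $\mng{\pairvalty}=\mng{(\okty,\okty)}$ the stuck summand vanishes because $\mng{\okty}\cap\mng{\CompPure{\okty}}=\varnothing$, leaving exactly the pairs of values; and --- the one non-obvious point --- that $\mng{\funty}=\mng{\botty\to\topty}$ is precisely the set of closed abstractions, since $\mng*{\topty}$ is the set of \emph{all} closed terms and so the membership condition defining $\botty\to\topty$ is vacuous. Hence the left-hand union is contained in $\mng{\okty}$. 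For \rlnm{Disj} it suffices, given the auxiliary fact, to check that $\mng{A}\cap\mng{B}=\varnothing$ whenever $A\distype B$: distinct atoms have disjoint singleton denotations, and the denotations of $\intty,\pairty,\funty,\atomty$ are pairwise disjoint because their elements are distinguished by outermost shape (numeral, pair, abstraction, atom).

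I expect \rlnm{PairC} to be the main obstacle, because it is where the asymmetry in the definition of $\mng{(A,B)}$ bites. Here I would take an arbitrary element of $\mng{(\CompPure{A},\topty)\vee(\okty,\CompPure{B})}$ and, reading off the definition of $\mng{(\cdot,\cdot)}$, split it into the three non-empty sub-cases: $(V,U)$ with $V\in\mng{\CompPure{A}}\cap\mng{\okty}$; $(S,N)$ with $S\in\mng{\CompPure{A}}\cap\mng{\CompPure{\okty}}$; and $(V,U)$ with $V\in\mng{\okty}$ and $U\in\mng{\CompPure{B}}$ (the stuck summand of $(\okty,\CompPure{B})$ being empty by the auxiliary fact). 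In each case I would check that the element lies in \emph{neither} summand of $\mng{(A,B)}$: a value first component rules out the stuck summand, while the first component failing to be in $\mng{A}$ (first two cases) or the second component failing to be in $\mng{B}$ (third case) rules out the value summand. Being anyway in $\mng{\topty}$, the element is therefore in $\mng{\topty}\setminus\mng{(A,B)}=\mng{\CompPure{(A,B)}}$, which finishes this case and, with it, the theorem. (That $\topty$ genuinely cannot replace $\okty$ in the second component here is witnessed by the example recorded just after rule \rlnm{PairC} in \cref{subsec:subtype-relation}.)
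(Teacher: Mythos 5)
Your proposal is correct and follows essentially the same route as the paper's proof: induction on the derivation of $A \subtype B$, with the lattice and variance rules handled by monotonicity, \rlnm{CompL}/\rlnm{CompR} by set calculus inside $\mng{\topty}$, and \rlnm{Ok}, \rlnm{Disj} and \rlnm{PairC} by direct case analysis on the shape of the denotations (your three-way split for \rlnm{PairC} matches the paper's exactly). The only difference is that you explicitly isolate the auxiliary fact $\mng{A} \subseteq \mng{\topty}$ as a lemma, which the paper uses implicitly; that is a reasonable tidying rather than a change of approach.
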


However the axiomatisation itself is not straightforward to analyse because of the unrestricted transitivity rule and the general rules regarding complementation.  Yet it is essential to understand certain invariants of this relation to properly argue the soundness of the type system.  For example, for soundness, we require that whenever $A \to B \subtype C \to D$ is derivable, then it must be that already $C \subtype A$ and $B \subtype D$.  If this inequality is derived using the \rlnm{Fun} rule, then all is well, but as usual the unrestricted transitivity rule allows for other more convoluted proofs.  For example, we can prove $\intty \to \intty \subtype \intty \to \okty$ using transitivity with the mid-point $\CompPure*{\textstyle\bigvee_{i=1}^2 \CompPure{(\CompPure{\CompPure{\intty}} \to (\intty \vee \atom{a} \vee \atom{b_i}))}}$.  Clearly, there is no need to invoke transitivity at all in the proof of this inequality, but even supposing we decide to invoke transitivity, this chosen mid-point is excessively complicated -- through a combination of the union and complementation rules it can be seen equivalent to the simpler $\intty \to (\intty \vee \atom{a})$.

Thus we are led to present an alternative characterisation of the subtyping relation in which the space of possible proofs is significantly more constrained, although, as we shall see, the provable inequalities are the same.  The rules of Figure~\ref{fig:sub-alt} are obtained from the the rules of the original subtyping relation by `baking in' transitivity, and enforcing a more principled use of the \rlnm{CompL} and \rlnm{CompR} rules.

\begin{definition}[Alternative Subtyping]
  The alternative subtyping relation $A \subtypeA B$ is defined by the rules of Figure~\ref{fig:sub-alt}.  Here $\mathcal{B}$ abbreviates the set $\{\intty,\,\pairvalty,\,\funty,\,\atomty\}$.
\end{definition}

The motivation for restricting the complementation rules is as follows.  When constructing a proof of $A \subtype \CompPure{B}$, bottom-up, one may sometimes find a choice between using disjointness \rlnm{Disj}, \rlnm{PairC} or the \rlnm{CompR} rule.  If the \rlnm{CompR} rule is chosen, then there are many choices with how to conclude the subproof of $B \subtype \CompPure{A}$.  For example, one may use disjointness with $A$, or some rule that exploits the structure of $B$, or even using \begin{added}\rlnm{CompR}\end{added} again to uselessly return to the original conclusion.  In our alternative system of Figure~\ref{fig:sub-alt}, we split \rlnm{CompR} into a large number of cases, and we restrict that proofs should follow a particular strategy based on the syntax of $B$.  In particular, in most cases when the form of the judgement $A \subtype \CompPure{B}$ would call for a use of $\rlnm{CompR}$ in the original system, the alternative system will insist that this should be followed by some rule exploiting the structure of $B$, followed by a use of \rlnm{CompL} to `undo' the rearrangement.  For example, the rule \rlnm{CompRUn} of the alternative system corresponds to the following derivation in the original:

\begin{mathpar}
  \infer*[left=CompR]{
    \infer*[Left=UnionL,vcenter]{
      \infer*[left=CompL]{
        A \begin{added}\subtype\end{added} \CompPure{B_1}
      }
      {
        B_1 \begin{added}\subtype\end{added} \CompPure{A}
      }
      \and
      \cdots
      \and
      \infer*[left=CompL]{
        A \begin{added}\subtype\end{added} \CompPure{B_k}
      }
      {
        B_k \begin{added}\subtype\end{added} \CompPure{A}
      }
    }
    {
      \textstyle\bigvee_{i=1}^k B_i \begin{added}\subtype\end{added} \CompPure{A}
    }
  }
  {
    A \subtype \CompPure*{\textstyle\bigvee_{i=1}^k B_i}
  }
\end{mathpar}

The alternative characterisation of subtyping makes it much easier to prove the invariants we need in order to establish the soundness of the type system.  For example, a brief inspection of the rules reveals that \rlnm{Fun} is now the \emph{only} way to derive $A \to B \subtypeA C \to D$.  In particular, there is no transitivity rule and the complementation rules are restricted by the syntax of the complemented type.  However, it is straightforward to show that the original rules are admissible, and then equivalence with the original system follows by induction.



\begin{theorem}[Equivalence of Subtyping Relations]\label{thm:subtyping-equivalence}
  $A \subtype B$ iff $A \subtypeA B$
\end{theorem}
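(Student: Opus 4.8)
The plan is to prove the two directions of the biconditional separately, both by induction on derivations, so that the real content becomes the admissibility of each rule of one system in the other. For $A \subtypeA B \Rightarrow A \subtype B$ I would induct on the $\subtypeA$-derivation and check that every rule of Figure~\ref{fig:sub-alt} is a derived rule of the original system: the shared rules (\rlnm{Refl}, \rlnm{Top}, \rlnm{Atom}, \rlnm{Pair}, \rlnm{Fun}, \rlnm{UnionL}, \rlnm{UnionR}) are literally present; \rlnm{OkR} and \rlnm{CompLOk} follow from \rlnm{Ok} together with \rlnm{UnionR}/\rlnm{UnionL} and \rlnm{Trans}; and each rule of the \rlnm{CompR}- and \rlnm{CompL}-families expands to a short derivation of exactly the shape illustrated in the text for \rlnm{CompRUn} (an application of \rlnm{CompR}, then a rule driven by the syntax of the complemented type, then \rlnm{CompL} to undo the rearrangement), with \rlnm{Disj} and \rlnm{PairC} supplying the base cases such as \rlnm{CompRAtom1}, \rlnm{CompRInt}, \rlnm{CompRPair}, \rlnm{CompRArr}, \rlnm{CompRAtoms}, \rlnm{CompLTop}. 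This direction is mechanical.

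For the substantive direction $A \subtype B \Rightarrow A \subtypeA B$, inducting on the $\subtype$-derivation reduces the problem to showing every rule of Figure~\ref{fig:simple-subtyping} admissible for $\subtypeA$. The rules \rlnm{Refl}, \rlnm{Top}, \rlnm{Atom}, \rlnm{Pair}, \rlnm{Fun}, \rlnm{UnionL}, \rlnm{UnionR} are present verbatim; \rlnm{Ok} follows from \rlnm{OkR} (with $\mathcal{B}$-membership) and \rlnm{UnionL}, each disjunct discharged by \rlnm{Refl}; \rlnm{PairC} follows from \rlnm{CompRPairL} and \rlnm{CompRPairR} with \rlnm{Refl} premises, combined by \rlnm{UnionL}; and \rlnm{Disj} follows by a case analysis on the disjoint pair, which lands in exactly one of the \rlnm{CompR}-family rules, whose side conditions were designed to cover precisely the relation $\distype$. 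What remain are the two genuinely load-bearing obligations: the general complement-swapping rules \rlnm{CompL} and \rlnm{CompR}, and the transitivity rule \rlnm{Trans}.

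I would handle these with two lemmas. First, a complement-duality lemma: $A \subtypeA \CompPure{B}$ is derivable iff $B \subtypeA \CompPure{A}$ is (and similarly for the $\CompPure{A}$-on-the-left shape), proved by induction on derivations, exploiting that the alternative complement rules come in matched \rlnm{CompR}/\rlnm{CompL} pairs (\rlnm{CompRPairL}/\rlnm{CompLPair}, \rlnm{CompRUn}/\rlnm{CompLUn}, \rlnm{CompRC}/\rlnm{CompLC}, and so on) precisely so that swapping a derivation is a local transformation. Second, admissibility of transitivity, by a cut-elimination argument: to turn $A \subtypeA C$ and $C \subtypeA B$ into $A \subtypeA B$, induct lexicographically on the structure of the cut type $C$ with a secondary induction on the pair of subderivations. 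Commuting cases---where one subderivation ends in a rule whose principal type is not $C$---permute the cut into the premises, decreasing the subderivation measure; principal cases either reduce to cuts on the immediate subcomponents of $C$ (when $C$ is a pair, arrow or union type), or, when $C$ is a complement $\CompPure{D}$, are discharged by the duality lemma, which rewrites the cut as one with cut type $D$ and so decreases the $C$-measure (the potential $\CompPure{\CompPure{D}}$-to-$D$ loop being exactly what \rlnm{CompRC}/\rlnm{CompLC} take care of). I expect it cleanest to prove the duality lemma and transitivity by a single simultaneous induction, since each can be invoked inside the argument for the other.

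The main obstacle is this cut-elimination step: there is a proliferation of cases because of the many complementation rules and their side conditions, and the delicate point is choosing the induction measure so that the complement cases strictly decrease rather than cycling. The ``principled'' restrictions baked into Figure~\ref{fig:sub-alt}---the $\mathcal{B}$-side conditions, the fixed complemented types in \rlnm{CompRInt}, \rlnm{CompRArr}, \rlnm{CompRAtoms}, and the split of \rlnm{CompR} into a syntax-directed family---are exactly what make every case close, so the proof essentially amounts to verifying that this design is complete. Once both lemmas are established the main induction on the $\subtype$-derivation goes through, and together with the easy first direction this yields $A \subtype B \iff A \subtypeA B$.
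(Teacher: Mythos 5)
Your proposal is correct and matches the paper's overall strategy: both directions go by induction on derivations, the backward direction is the mechanical check that each rule of Figure~\ref{fig:sub-alt} is derivable in the original system, and the forward direction reduces to admissibility in the alternative system of \rlnm{Trans}, \rlnm{CompL} and \rlnm{CompR}, established via an admissibility-of-transitivity lemma plus a complement-swapping lemma. The one place where the paper's organisation is noticeably simpler than yours is the transitivity proof and the ordering of the two lemmas. The paper proves transitivity \emph{first} and entirely independently of any complement duality, by a strong induction on the sum of the heights of the two derivations alone --- no lexicographic measure on the cut type is needed, because in every principal case (including the complement cases, e.g.\ \rlnm{CompRUn} against \rlnm{CompLUn}, or \rlnm{CompRC} against \rlnm{CompLC}) the matched rule pairs hand you strictly smaller subderivation pairs directly, so the $\CompPure{\CompPure{D}}$-to-$D$ ``loop'' you worry about never arises as a cut on a larger type. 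It then proves that complement is order-reversing ($A \subtypeA B$ implies $\CompPure{B} \subtypeA \CompPure{A}$) by induction on the derivation, \emph{using} the already-established transitivity and the involution $\CompPure{\CompPure{A}} \tyeq A$; \rlnm{CompL} and \rlnm{CompR} then fall out of order-reversal plus involution, which is interderivable with your duality lemma. Your proposed simultaneous induction of duality and transitivity would need a carefully justified combined measure to rule out circularity between the mutual calls; the paper's sequential arrangement shows this complication is avoidable, so if you follow your plan you should either verify that the duality lemma is never invoked inside the transitivity argument (it isn't needed there) or adopt the paper's ordering outright.
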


\newcommand{\subExp}{\mathsf{SubExp}}
\newcommand{\univ}[1]{\mathcal{U}_{#1}}

As well as making it easier to establish certain invariants, the alternative characterisation gives us a straightforward approach to argue its decidability.

\begin{theorem}[Decidability]\label{thm:subtyping-decidability}
  The problem, given $A$ and $B$ determine derivability of $A \subtype B$, is decidable.
\end{theorem}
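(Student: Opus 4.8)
The plan is to decide $\subtype$ through the equivalent relation $\subtypeA$ of Figure~\ref{fig:sub-alt} (invoking Theorem~\ref{thm:subtyping-equivalence}, $A \subtype B$ iff $A \subtypeA B$) by bounded bottom-up proof search. Two features of the alternative system make this feasible. First, all its side conditions are plainly decidable: syntactic disequality of atoms, membership of a type in a fixed finite set, an index lying in $[1,k]$. Second, for any fixed judgement $C \subtypeA D$ there are only finitely many rule instances of Figure~\ref{fig:sub-alt} having it as conclusion: each rule is essentially determined by the outermost shape of $C$ and of $D$, the remaining choices (the index $j$ in \rlnm{UnionR} or \rlnm{CompLUn}, or a witness type drawn from a fixed finite set in \rlnm{OkR}, \rlnm{CompRInt}, \rlnm{CompLOk}, and the like) ranging over finite sets. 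Hence the search tree is finitely branching, and the only remaining task is to bound its depth by bounding the set of judgements that can occur in a derivation.

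To that end I would fix the target $A_0 \subtypeA B_0$, let $\mathsf{Sub}$ be the finite set of all subterms of $A_0$, of $B_0$, and of the finitely many fixed types appearing in the rules ($\topty$, $\okty$, $\intty$, $\atomty$ and the abbreviations $\botty$, $\pairty$, $\pairvalty$, $\funty$), and put $\univ{A_0,B_0} = \mathsf{Sub} \cup \{\,\CompPure{C} \mid C \in \mathsf{Sub}\,\}$, a finite set effectively computable from $A_0, B_0$. The claim is that every judgement occurring in any $\subtypeA$-derivation of $A_0 \subtypeA B_0$ has the form $C \subtypeA D$ with $C, D \in \univ{A_0,B_0}$. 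This is proved by induction on the derivation: the root lies in $\univ{A_0,B_0}\times\univ{A_0,B_0}$, and the step is a rule-by-rule inspection of Figure~\ref{fig:sub-alt} showing that if the conclusion of a rule lies in $\univ{A_0,B_0}\times\univ{A_0,B_0}$ then so does every premise. Most rules are immediate because the premise types are built from subterms of the conclusion and $\univ{A_0,B_0}$ is closed under subterms; the cases that matter are the rules that inject a fixed auxiliary type (\rlnm{OkR}, \rlnm{CompRAtom2}, \rlnm{CompRInt}, \rlnm{CompRPair}, \rlnm{CompRArr}, \rlnm{CompRAtoms}, \rlnm{CompLOk}), covered since those auxiliary types and their complements are in $\univ{A_0,B_0}$ by construction, and the rules \rlnm{CompRPairL}, \rlnm{CompRPairR}, \rlnm{CompRUn}, \rlnm{CompLPair}, \rlnm{CompLArr}, \rlnm{CompLUn}, whose premises mention a complement $\CompPure{C'}$ of an immediate subterm $C'$ of some $C$ with $\CompPure{C}\in\univ{A_0,B_0}$ — and $\univ{A_0,B_0}$ is closed under precisely this operation, because $\CompPure{C}\in\univ{A_0,B_0}$ forces $C\in\mathsf{Sub}$, hence $C'\in\mathsf{Sub}$, hence $\CompPure{C'}\in\univ{A_0,B_0}$.

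Given the claim, deciding $A_0 \subtypeA B_0$ reduces to deciding derivability within the finite judgement set $\univ{A_0,B_0}\times\univ{A_0,B_0}$. This is done by a standard least-fixed-point computation: in the finite powerset lattice $\mathcal P(\univ{A_0,B_0}\times\univ{A_0,B_0})$, iterate the monotone operator that adds the conclusion of any rule instance whose premises are all already present and whose conclusion lies in $\univ{A_0,B_0}\times\univ{A_0,B_0}$; this saturates after at most $|\univ{A_0,B_0}|^2$ rounds, and $A_0 \subtype B_0$ holds iff $(A_0, B_0)$ is in the result (the forward direction uses the claim plus induction on derivation height; the backward direction unfolds the saturation into a proof tree). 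Equivalently one may observe that if $A_0\subtypeA B_0$ is derivable at all then it has a derivation no root-to-leaf branch of which repeats a judgement — splice out any repetition — hence of depth at most $|\univ{A_0,B_0}|^2$, so exhaustive bottom-up search with loop detection terminates.

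I expect the main obstacle to be purely in pinning down $\univ{A_0,B_0}$ correctly: it must simultaneously be closed under taking premises for every one of the many rules of Figure~\ref{fig:sub-alt} and yet be provably finite. The delicacy is that several premises are complements of \emph{proper subterms} of the conclusion rather than subterms themselves, and that some rules inject fixed auxiliary types such as $\funty = \CompPure{\topty} \to \topty$ that already contain complements; one must verify this cannot cause complements to nest without bound. That this does not happen is exactly what is secured by the facts that the closure only ever pushes a complement inward by one level and that the fixed auxiliary types have bounded size, which together keep $\univ{A_0,B_0}$ inside the finite set $\mathsf{Sub} \cup \{\CompPure{C} \mid C \in \mathsf{Sub}\}$.
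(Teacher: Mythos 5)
Your proposal is correct and follows essentially the same route as the paper: reduce to the alternative relation $\subtypeA$ via Theorem~\ref{thm:subtyping-equivalence}, construct the finite universe of subexpressions of the inputs together with the fixed auxiliary types and their complements, verify by rule inspection that this universe is closed under passing from conclusions to premises, and conclude by bounded bottom-up search (or, equivalently, saturation) over the finitely many judgements. The only differences from the paper's proof are presentational — your explicit treatment of the one-level complement-pushing in rules like \rlnm{CompLPair} and \rlnm{CompRUn} is a correct elaboration of what the paper leaves to "inspection of the rules".
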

\begin{proof}
  Let $A$ and $B$ be types given as input.  By the equivalence theorem, it suffices to decide $A \subtypeA B$.  Let $\subExp(A)$ be the set of all subexpressions\footnote{As usual, we consider $A$ to be a subexpression of itself.} of the type $A$ and $\subExp(B)$ be the set of all subexpressions of the type $B$.  Then consider the following set of types:
  \[
    \mathcal{A}_{A,B} \coloneqq \subExp(A) \cup \subExp(B) \cup \{\topty,\,\intty,\,\pairty,\,\pairvalty,\,\funty,\,\atomty,\,\okty\}\\
  \]
  This set consists of all the finitely many types that occur as subexpressions of the input types $A$ and $B$ as well as certain fixed types such as $\topty$, $\intty$ and so on.   Then define the following finite set of types:
  \[
    \univ{A,B} \coloneqq \mathcal{A}_{A,B} \cup \{D^c \mid D \in \mathcal{A}_{A,B}\}
  \]
  One can see, by inspection of the alternative subtyping rules in Figure~\ref{fig:sub-alt}, that whenever a rule instance is concluded by some inequality $C \subtypeA D$, and $C$ and $D$ are in $\univ{A,B}$, then every type occurring in the premises of the rule instance are also contained in $\univ{A,B}$.  Hence, in any proof of $A \subtypeA B$ it must be that every subproof is concluded by an inequality of shape $C \subtypeA D$, where both $C$ and $D$ are in $\univ{A,B}$, of which there are only finitely many.  Consequently, a bottom-up proof search that avoids cycles will decide $A \subtypeA B$.
\end{proof}

\section{Metatheory of Type System}\label{sec:type-system-metatheory}

\begin{figure}
    \begin{mdframed}
        \begin{mathpar}
            \infer[Comp]{
                \Gamma,\,\Constituent{x:A} \types \Delta
            }
            {
                \Gamma \types \Principal{x:\CompPure{A}},\,\Delta
            }
            \and
            \infer[Match]{
                \Gamma \types \Constituent{M : \textstyle\bigvee_{i=1}^k p_i\theta_i},\,\Delta\\
                \Gamma,\,\theta_i \types \Constituent{M : \CompPure*{p_i\theta_i}},\,\Constituent{N_i : A},\,\Delta \;\; (\forall i)
            }
            {
                \Gamma \types \Principal{\matchtm{M}{\mid_{i=1}^k p_i \mapsto N_i} : A},\,\Delta
            }
            \\
            \infer[OpE]{
                \Gamma \types \Constituent{M:\CompPure{\intty}},\,\Constituent{N:\CompPure{\intty}},\,\Delta
            }
            {
                \Gamma \types \Principal{\relOpPrim{M}{N} : \CompPure{\okty}},\,\Delta
            }
            \and
            \infer[RelOpE]{
                \Gamma \types \Constituent{M:\CompPure{\intty}},\,\Constituent{N:\CompPure{\intty}},\,\Delta
            }
            {
                \Gamma \types \Principal{\binOpPrim{M}{N} : \CompPure{\okty}},\,\Delta
            }
            \and
            \infer[PairE]{
                \Gamma \types \Constituent{M:\CompPure{\okty}},\,\Constituent{N:\CompPure{\okty}},\,\Delta
            }
            {
                \Gamma \types \Principal{(M,N) : \CompPure{\okty}},\,\Delta
            }
            \\
            \infer[AppE]{
                \Gamma \types \Constituent{M:\CompPure{\funty}},\,\Delta
            }
            {
                \Gamma \types \,\Principal{M\,N : \CompPure{\okty}},\,\Delta
            }
            \and
            \infer[MatchE] {
            \Gamma \types \Constituent{M:\CompPure*{\textstyle\bigvee_{i=1}^k p_i\thetaok{p_i}}},\,\Delta
            }
            {
            \Gamma \types \Principal{\matchtm{M}{|_{i=1}^k p_i \mapsto N_i} : {\CompPure{\okty}}},\,\Delta
            }
        \end{mathpar}
    \end{mdframed}
    \caption{One-sided typing rules (differences).}\label{fig:one-sided-delta}
\end{figure}

For developing the metatheory of the type system, we first simplify the situation.  Two-sided type systems have arbitrary typing formulas on the left- and right-hand sides of the judgement.  However, we show that it is possible to reformulate the type system so that\del{ only} arbitrary typings are needed \begin{added}only\end{added} on the right (though we may still need any number of them).

\begin{definition}[One-Sided System]
    Judgements of the one-sided system have shape $\Gamma \types \Delta$ in which every typing in $\Gamma$ is a variable typing, of shape $x:A$.  The rules are those of the two-sided system but with \rlnm{SubL} removed, and \rlnm{CompL}, \rlnm{Match}, \rlnm{OpE}, \rlnm{RelOpE}, \rlnm{PairR}, \rlnm{AppE} and \rlnm{MatchE} replaced by the rules of Figure~\ref{fig:one-sided-delta}.
\end{definition}

The idea is as follows.  It is easy to see that, the one-sided and two-sided versions of \rlnm{Match}, \rlnm{OpE}, \rlnm{RelOpE}, \rlnm{PairE}, \rlnm{AppE} and \rlnm{MatchE} are inter-derivable in the two-sided system by bracketing rule uses with \rlnm{CompL} and \rlnm{CompR}.  Therefore, one can suppose that the original, two-sided system uses \begin{added}rules in \cref{fig:one-sided-delta}\end{added} instead.  In that case, a short inspection of the remaining two-sided rules shows that the only way to conclude a typing on the left-hand side is by \rlnm{SubL} or \rlnm{CompL}, and thus the latter must eventually be used.  This implies that to refute $M:A$, we must have already verified $M:B$ on the right, for some $B$ such that $A \subtype \CompPure{B}$.  Therefore, every refutation $M:A$ corresponds to a verification of some $M:B$, for which we did not need any left rules.  Formally, the proof is by induction.

\newcommand{\typesA}{\types_2}
\newcommand{\typesB}{\types_1}

\begin{theorem}[Equivalence]\label{thm:one-two-equivalence}
    Let us write $\Gamma \typesA \Delta$ for provability in the two-sided system and $\Gamma \typesB \Delta$ for provability in the one-sided system.  Then both of the following:
    \begin{quote}
        (i) \ If $\Gamma \typesA \Delta$ then $\typesB \CompPure{\Gamma},\,\Delta$.  \qquad
        (ii) \ If $\Gamma \typesB \Delta$ then $\Gamma \typesA \Delta$.
    \end{quote}
\end{theorem}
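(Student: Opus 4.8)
The plan is to prove the two implications separately; (ii) is the routine direction and (i) carries the content.

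\smallskip\noindent\emph{Direction (ii).} It suffices to check that every rule of the one-sided system is derivable in the two-sided system, after which a straightforward induction on the one-sided derivation turns $\Gamma \typesB \Delta$ into $\Gamma \typesA \Delta$. The rules common to both systems need nothing. The rule \rlnm{Comp} of Figure~\ref{fig:one-sided-delta} is derived by \rlnm{SubL} using the equivalence $A \tyeq \CompPure{\CompPure{A}}$ (itself derivable from \rlnm{Refl}, \rlnm{CompL}, \rlnm{CompR}) followed by \rlnm{CompL}. The one-sided forms of \rlnm{Match}, \rlnm{OpE}, \rlnm{RelOpE}, \rlnm{PairE}, \rlnm{AppE} and \rlnm{MatchE} are obtained from the two-sided forms of Figure~\ref{fig:simple-two-sided-rules} by bracketing each rule instance with \rlnm{CompL} and \rlnm{CompR} (equivalently, the \rlnm{SwapL}/\rlnm{SwapR} rules of Figure~\ref{fig:simple-two-sided-left}, derivable by Theorem~\ref{thm:left-rules-derivable}), so that each principal formula sitting on the left of a premise moves to the right under a complement; for instance the second premise $\Gamma,\theta_i,\,M:p_i\theta_i \types N_i:A,\Delta$ of two-sided \rlnm{Match} becomes $\Gamma,\theta_i \types M:\CompPure{p_i\theta_i},\,N_i:A,\Delta$ by \rlnm{SwapL}, and conversely. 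Hence each one-sided step is simulated by a block of two-sided steps.

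\smallskip\noindent\emph{Direction (i).} Since the one- and two-sided versions of the rules just mentioned are interderivable in the two-sided system, we may assume without loss of generality that the given derivation of $\Gamma \typesA \Delta$ is built using the Figure~\ref{fig:one-sided-delta}-style forms of \rlnm{Match}, \rlnm{OpE}, \rlnm{RelOpE}, \rlnm{PairE}, \rlnm{AppE} and \rlnm{MatchE}. In that presentation the only rules whose \emph{conclusion} carries a left principal formula are \rlnm{SubL} and \rlnm{CompR}, and \rlnm{Fix}, \rlnm{Abs} and \rlnm{Match} are the only remaining rules that place anything on the left of a premise, always variable typings. We induct on the derivation, establishing $\typesB \CompPure{\Gamma},\Delta$ where $\CompPure{\Gamma} \coloneqq \{\,M:\CompPure{A} \mid M:A \in \Gamma\,\}$. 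For \rlnm{SubL} with side condition $A \subtype B$, the induction hypothesis gives $\typesB \CompPure{\Gamma},\,M:\CompPure{A},\Delta$, and \rlnm{SubR} with $\CompPure{B} \subtype \CompPure{A}$ (a consequence of $A \subtype B$ via \rlnm{Trans}, \rlnm{CompR} and $B \tyeq \CompPure{\CompPure{B}}$) finishes. For \rlnm{CompL} the hypothesis gives $\typesB \CompPure{\Gamma},\,M:\CompPure{\CompPure{A}},\Delta$, and \rlnm{SubR} with $A \tyeq \CompPure{\CompPure{A}}$ concludes. For \rlnm{CompR} the hypothesis already delivers the target judgement verbatim. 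For the axiom \rlnm{Var} we take the one-sided axiom $x:A \typesB \CompPure{\Gamma},\,x:A,\Delta$ and apply \rlnm{Comp} to move $x:A$ to the right, obtaining $\typesB \CompPure{\Gamma},\,x:\CompPure{A},\,x:A,\Delta$. For \rlnm{Fix}, \rlnm{Abs} and the reformulated \rlnm{Match}, the hypothesis turns the variable typings those rules add on the left of a premise into \emph{complemented} variable typings on the right; since their subjects are variables, the one-sided \rlnm{CompR} applies and returns each of them to the left, after which the corresponding one-sided rule is applicable to the transformed premises. The freshness conditions ($x \notin \fv(\Gamma,\Delta)$; $\fv(p_i) \cap \fv(\Gamma,\Delta)=\varnothing$) transfer unchanged because $\fv(\CompPure{\Gamma})=\fv(\Gamma)$, and the side condition $\mathsf{PatSubst}(p_i,\theta_i)$ is reused with the same $\theta_i$. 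Every other rule acts purely on the right, so the induction hypothesis applied to each premise followed by a re-application of the same rule, now read as a one-sided rule, gives the result.

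\smallskip\noindent The step I expect to be the main obstacle is the bookkeeping for \rlnm{Fix}, \rlnm{Abs} and \rlnm{Match} in direction (i): one has to be sure that the variable typings the induction hypothesis deposits on the right can always be pushed back to the left, and this is exactly where the discipline of the one-sided system — only variable typings on the left — earns its keep, since it makes the relevant instances of \rlnm{CompR} and \rlnm{Comp} legal. A minor but pervasive ingredient is the subtyping fact $A \subtype B \Rightarrow \CompPure{B} \subtype \CompPure{A}$; it follows easily from the involution rules, but it is the one place the argument leans on the internal structure of the subtype relation rather than on purely proof-theoretic rearrangement.
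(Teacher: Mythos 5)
Your overall architecture matches the paper's: direction (ii) by simulating each one-sided rule in the two-sided system via \rlnm{CompL}/\rlnm{CompR} bracketing, and direction (i) by induction on the two-sided derivation, pushing the left context onto the right under complement. Direction (ii) is essentially fine (though your derivation of \rlnm{Comp} via ``\rlnm{SubL} then \rlnm{CompL}'' is garbled --- it is just an instance of the two-sided \rlnm{CompR}), and your \rlnm{Var}, \rlnm{SubL}, \rlnm{CompL} and \rlnm{CompR} cases in direction (i) are right in intent despite some typos in which type the induction hypothesis delivers.

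The genuine gap is exactly where you predicted it: the \rlnm{Abs}, \rlnm{Fix} and \rlnm{Match} cases of direction (i). You claim that ``the one-sided \rlnm{CompR} applies and returns each of them to the left,'' but the one-sided system has no such rule. Its only complement rule, \rlnm{Comp}, goes the other way: to \emph{prove} $\Gamma \typesB x:\CompPure{A},\,\Delta$ one proves $\Gamma,\,x:A \typesB \Delta$. What you need is the converse --- from a proof of $\typesB \Sigma,\,x:\CompPure{A},\,\Delta'$, produce a proof of $x:A \typesB \Sigma,\,\Delta'$ --- i.e.\ the invertibility of \rlnm{Comp}, which is not a rule but a metatheorem. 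This is precisely the content of the paper's admissibility lemma for \rlnm{SwapL$_0$}, which in turn rests on admissibility of \rlnm{SubL} and \rlnm{CompL$_0$} for variable typings in the one-sided system. Proving \rlnm{CompL$_0$} admissible is not a routine induction: the statement must be strengthened to ``$\Gamma \typesB x:A,\,\Delta$ and $\forall B \subtype A.\ x:B \notin \Gamma$ implies $\Gamma,\,x:\CompPure{A} \typesB \Delta$'' to survive the cases where the derivation ends in \rlnm{Var}, \rlnm{Comp} or \rlnm{Sub} with $x$ as (or interacting with) the principal formula. Without this lemma the inductive step for the binding rules does not close, so you should state and prove it explicitly rather than appeal to a nonexistent primitive rule.
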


Note: a judgement like $x:A,\,x:\CompPure{A} \types$, although provable in the two-sided system is \emph{not} provable in the one-sided system.  On the other hand, the theorem shows that the ``equivalent'' judgement $\types x:A,\,x:\CompPure{A}$ \emph{is} provable in the one-sided system.  The asymmetry stems from the fact that the one-sided rules have \rlnm{Comp} that may faithfully simulate \rlnm{CompR} on variable typings, but no rule to faithfully simulate \rlnm{CompL} on variable typings.

This is a useful simplification, but there still remains the problem of multiple conclusions on the right.  This is a significant barrier to a standard progress and preservation style argument for the soundness of the type system.  Consider progress for a judgement of shape $\Gamma \types M:\okty,\,\Delta$.  When $\Gamma$ and $\Delta$ are arbitrary, and not only variable typings, it is not true that $M$ in normal form will be a value.  This is the case in, for example, the judgement:
\[
    \types \pn{2}\,(\abs{z}{z}) : \okty,\,\abs{x}{x} : \intty \to \intty
\]
This judgement expresses the property: \emph{either} $\pn{2}\,(\abs{z}{z})$ diverges or yields an integer value, \emph{or} $\abs{x}{x}$ is an integer function.  Since the second conclusion is true, the truth of the judgement as a whole simply does not depend on the\begin{deleted}\del{ meaningless}\end{deleted} drivel in the first conclusion, and so we cannot expect to be able to rely on anything about it.

\begin{remark}\label{rmk:soundness-comparison-with-itpde}
    In fact, this highlights a deficiency in the original work.  Due to this difficulty, progress and preservation in \cite{ramsay-walpole-popl2024} was proven only for judgements satisfying the \emph{single-subject constraint}, that is, in which at most one term in the judgement is not a variable.  Thus, they establish soundness only for fragments of their two-sided type systems.
\end{remark}

Here, rule \rlnm{App} is applied at the root of the derivation tree to obtain $\pn{2}\,(\abs{z}{z})$. However, the validity of the judgement does not rely on the validity of $\pn{2}\,(\abs{z}{z}):\okty$, but the validity of $\abs{x}{x}$.  For example, we can have a smaller proof where $\pn{2}\,(\abs{z}{z})$ in untouched.

For preservation, although it is true that $Q \ped Q'$ and $\Gamma \types Q:A,\,\Delta$ implies $\Gamma \types Q':A,\,\Delta$ holds unconditionally, the proof is not straightforward.   In a typical preservation proof, we take the typing derivation for the term \textit{to be} reduced ($Q$), and adjust it to a typing derivation for the term \textit{after} reduction ($Q'$). For a single-succedent system, a derivation that assigns a type to $Q$ is always decided by $Q$ itself. When the term reduces, we can construct a derivation of $Q'$ by using the subderivation that types the (perhaps reduced) subterm of $Q$. However, in a multi-succedent system, the derivation of $\Gamma \types Q:A, \Delta$ might be completely independent from the term $Q$. Consider the derivation:
\begin{mathpar}
    \infer*[left=App]{
        \infer*[Left=Abs]{
            \infer*[Left=Var]{ }{
                x:\intty \types \pn{2}:\okty \to \okty,\,x:\intty
            }
        }
        {
            \types \pn{2}:\okty \to \okty,\,\abs{x}{x} : \intty \to \intty
        }
        \infer*[left=Abs]{
            \infer*[Left=Var]{ }{
                x:\intty \types x:\intty,\,\pn{3} : \okty
            }
        }
        {
            \types \pn{3} : \okty,\,\abs{x}{x} : \intty \to \intty
        }
    }
    {
        \types \pn{2}\ \pn{3} : \okty,\,\abs{x}{x} : \intty \to \intty
    }
\end{mathpar}
In such cases, reconstructing a derivation for $Q'$ becomes problematic because the subderivations might bear no relation to $Q$. To address this problem, we need a way to identify which conclusion formulas are \emph{relevant} to the proof.

However, this task is far from straightforward. Vertically, the \textit{principal} (or \textit{focused}) term can shift arbitrarily among conclusions at each inference step, introducing redundancy at varying levels. Horizontally, the relevance can interfere among terms: a typing assignment might only be meaningful when another typing assignment in the same conclusion is meaningful. Moreover, a typing judgement may also have many different derivations, each with different relevant formulas in their conclusion, for example when, in $\types M:A,\,N:B$, both $\types M:A$ and $\types N:B$ are independently provable. This reflects a deeper issue: the typing system lacks an internal mechanism to track the semantic justification (i.e., the root cause) of a typing judgment.

To bridge the gap between derivations and their semantics, we introduce \begin{added}an additional precondition $\Gamma \nvdash \Delta$, meaning \emph{$\Gamma \types \Delta$ is not derivable},\end{added} for the preservation proof. That is, we show $\Gamma \types Q:A,\,\Delta$ implies $\Gamma \types Q':A,\,\Delta$ whenever $\Gamma \nvdash \Delta$. This allows us to prove preservation via standard induction on typing rules just as in a single-succedent (traditional) type system, without delving into the complexities of tracking stepwise dependencies across all possible derivations. Finally, we demonstrate that this assumption can be eliminated, restoring the standard preservation theorem without additional premises. This approach provides a purely syntactic method to handle irrelevance in multi-succedent systems, bypassing the need for explicit relevance constraints.



\begin{theorem}[Progress and Preservation]\label{thm:progress-preservation}
    Suppose all terms in the following are closed:\\
    {\centering
    \begin{tabular}{rp{.9\textwidth}}
        {Preservation:} & Suppose $Q \ped Q'$. If $\Gamma \types Q:A,\, \Delta$, then $\Gamma \types Q':A,\, \Delta$.                 \\
        {Progress:}     & If $\ \types \Delta$, then there is some formula $M:A \in \Delta$ such that, if $M$ is a normal form, then:
        \begin{itemize}
            \item $A \subtype \okty$ implies $M$ is a value
            \item and, $A \subtype \CompPure{\okty}$ implies $M$ is stuck.
        \end{itemize}
    \end{tabular}}
\end{theorem}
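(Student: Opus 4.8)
\emph{Overall plan.} Progress and Preservation are proved independently. For Preservation the plan is to first prove an auxiliary version carrying the extra hypothesis $\Gamma \nvdash \Delta$, and then to observe that the hypothesis costs nothing: if instead $\Gamma \types \Delta$ is derivable, then $\Gamma \types Q':A,\Delta$ follows immediately by \rlnm{WeakR}, regardless of $Q'$. So everything reduces to: if $Q \ped Q'$, $\Gamma \types Q:A,\Delta$ and $\Gamma \nvdash \Delta$, then $\Gamma \types Q':A,\Delta$. This is proved by induction on the derivation of $\Gamma \types Q:A,\Delta$ in the one-sided system. If the principal formula of the last rule is $Q:A$, the situation is exactly that of a single-succedent system: one case-splits on the rule and on $Q = \Ctx[R]$ (an evaluation context filled with a redex), appealing to (i) a substitution lemma — $\Gamma, x:B \types N:C,\Delta$ and $\Gamma \types P:B,\Delta$ imply $\Gamma \types N[P/x]:C,\Delta$ — for the $\beta$- and fixpoint-unfolding redexes, and (ii) generation lemmas for the principal term (e.g. any derivation of $\Gamma \types \abs{x}{M}:C,\Delta$ with $C$ an arrow type yields $\Gamma, x:B' \types M:D',\Delta$ with $B \subtype B'$ and $D' \subtype D$ for the subsumed arrow $B' \to D' \mathrel{\subtype} C$), whose validity rests on the subtyping invariants behind Theorems~\ref{thm:subtyping-equivalence} and~\ref{thm:subtyping-soundness} — in particular that \rlnm{Fun} is the only way to derive an inequality between arrow types. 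The hypothesis $\Gamma \nvdash \Delta$ is exactly what licenses this inversion: it guarantees the derivation genuinely ``uses'' the formula $Q:A$ rather than proving something in $\Delta$. If instead the principal formula of the last rule lies in $\Delta$, then since the side formulas of the one-sided right rules are copied verbatim into the premises, at most one premise can have a derivable ``$Q$-free'' succedent; on the other premises $\Gamma \nvdash (\cdot)$ still holds and we recurse, on the exceptional one we weaken $Q':A$ back in, and we re-apply the rule.

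\emph{Progress.} Here I would induct on the derivation of $\types \Delta$, generalised to an arbitrary context $\Gamma$ of variable typings, and produce some $M:A$ in the succedent that is \emph{honest}, i.e. such that if $M$ is a normal form then $A \subtype \okty$ implies $M$ is a value and $A \subtype \CompPure{\okty}$ implies $M$ is stuck. For \rlnm{Num} and \rlnm{Atom} the principal formula is an explicit honest witness (the subject is a value and its type is below $\okty$ but not below $\CompPure{\okty}$); for \rlnm{Top} it is honest vacuously, since by soundness of subtyping $\topty$ is a subtype of neither $\okty$ nor $\CompPure{\okty}$; and \rlnm{Var} cannot occur at the root of a judgement with empty antecedent. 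For every other rule I take the honest witness given by the inductive hypothesis on a premise: if it already lies in the shared part of the succedent it transfers unchanged, and otherwise it is the formula the rule acted on — necessarily a subterm of the principal term — and I argue about the principal term directly via the Stuck Terms lemma. For example, in the \rlnm{App} case with principal $M\,N:A$ and premise $\types M:B \to A,\Delta$: if $M\,N$ is a normal form then $M$ is a normal form and not an abstraction, so $M\,N$ is stuck; honesty of the witness for $M$, together with the subtyping invariants (which force $B \to A \subtype \okty$ and, for such an $M$, rule out $A \subtype \okty$), then gives honesty of $M\,N:A$. The eliminator cases for \rlnm{BinOp}/\rlnm{RelOp}, \rlnm{Match}, \rlnm{Pair} and the $\CompPure{\okty}$-rules of Figure~\ref{fig:one-sided-delta} are dispatched by the analogous shape-of-stuck-term reasoning.

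\emph{Main obstacle.} The real difficulty throughout is the multiple conclusions: a derivation of $\Gamma \types Q:A,\Delta$ need not be ``about'' $Q:A$ at all, so the textbook preservation recipe — take the derivation of the subject apart and rebuild it for the reduct — has nothing to latch onto. The device that unblocks it is the hypothesis $\Gamma \nvdash \Delta$, which semantically pins the relevant conclusion and is eliminable for free; the delicate point is the bookkeeping of this hypothesis across the right rules, i.e. showing it propagates to the appropriate premise(s) while the remaining premise collapses into the trivial \rlnm{WeakR} case. On the Progress side the corresponding subtlety is the transfer step: when the inductive witness concerns a proper subterm of the principal term, one must show it cannot be merely ``vacuously honest'' in a way that fails to constrain the principal term, which is exactly where the structural invariants of subtyping from Section~\ref{sec:subtyping-props} are needed.
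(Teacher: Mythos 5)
Your Preservation argument is essentially the paper's: the same auxiliary hypothesis $\Gamma \nvdash \Delta$, the same observation that it is eliminable by weakening, and the same treatment of rules whose principal formula lies in $\Delta$ (not all premises can have a derivable $Q$-free succedent, so the hypothesis propagates to at least one premise and the others are recovered by weakening -- note it is ``not all'', rather than ``at most one'', that you need for $k$-ary rules like \rlnm{Match}). The paper merely packages this bookkeeping once and for all into a standalone Inversion lemma and then proves preservation by induction on the evaluation context rather than on the typing derivation; the substance is the same.

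The Progress half has a genuine gap. Your induction invariant -- the witness $M:A$ is ``honest'' in exactly the sense of the theorem statement -- is too weak to close the constructor/eliminator cases. Take your own \rlnm{App} example with conclusion $M\,N : A$ and premise witness $M : B \to A$. If $M\,N$ is a normal form then $M$ is a non-abstraction normal form, and since $B \to A \subtype \okty$ always holds, honesty tells you only that $M$ is a \emph{value}; it does not tell you that $M$ is an abstraction, nor does it ``rule out $A \subtype \okty$''. Concretely, honesty of $\pn{2} : \botty \to \okty$ is satisfied ($\pn{2}$ is a value), yet the conclusion $\pn{2}\,N : \okty$ is not honest ($\pn{2}\,N$ is stuck). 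What actually excludes this situation is that $\types \pn{2} : \botty \to \okty, \Delta$ is underivable when $\nvdash \Delta$ -- i.e.\ a canonical-forms property of the \emph{typing} relation, which in this multi-succedent system must itself be extracted via the Inversion lemma; the subtyping invariants of Section~\ref{sec:subtyping-props} alone cannot supply it, since they say nothing about which terms inhabit which types. You would have to strengthen the induction invariant to carry canonical-forms information, at which point you are re-deriving the paper's machinery. The paper avoids the witness-tracking induction entirely: it reduces to a single formula with $\Gamma \nvdash \Delta$, observes that a closed normal form is either a value or stuck, invokes the completeness lemmas ($\types V : \okty$ for every value, $\types S : \CompPure{\okty}$ for every stuck term) and then derives a contradiction from cut admissibility for normal forms -- if a stuck $Q$ had a type below $\okty$, cutting $Q:\okty$ against $Q:\CompPure{\okty}$ would yield $\Gamma \types \Delta$. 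That detour through complete classification plus cut is the missing idea in your proposal.
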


\noindent
\begin{added}A corollary of this theorem is that $M:\okty \types$ implies that $M$ either diverges or gets stuck, since $\types M :\CompPure{\okty}$ follows from this judgement by \rlnm{CompR}.\end{added}

Two properties of our system distinguish its progress theorem from conventional formulations. The first is the multi-succedent nature. Unlike preservation, which applies uniformly to all conclusion terms, progress only guarantees that some term in the conclusion satisfies the progress statement. This reflects a reliance on the underlying semantics that justify why the statement holds. The second is its ability to type normal forms in general.  In standard systems, progress rules out non-value normal forms (i.e., stuck terms) by appealing to canonical forms of values, ensuring all non-values are reducible. By contrast, our system does not exclude non-value normal forms. Instead, our progress theorem adopts a different perspective: a normal form of the type $\CompPure{\okty}$ is guaranteed to go wrong, and a normal form a value type is guranteed to not go wrong. This motivated the revealing of another property of our system\del{, which we refer as cut admissibility of normal forms}: any normal-form cannot be given both a type and its complement.

\begin{theorem}[Normal Form Cut Admissibility]
    If $\Gamma \types U :A,\,\Delta$ and $\Gamma \types U:\CompPure{A},\,\Delta$, then $\Gamma \types \Delta$
\end{theorem}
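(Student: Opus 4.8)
The plan is to derive the theorem from a ``generation lemma for normal forms'' obtained via a canonical-inhabitant construction. To each closed normal form $U$ I would assign a closed normal form $w(U)$ of the same shape class, but chosen as generic as possible: $w(\pn{n}) \coloneqq \pn{0}$; $w(\atom{a}) \coloneqq \atom{a}$; $w(\abs{x}{M}) \coloneqq \abs{z}{\divtm}$; $w$ acts componentwise on a pair normal form, using $\divtm$ in the second slot when the first component is stuck (so the definition is total even though that slot need not be a normal form); and $w(S) \coloneqq \pn{0}\,\pn{0}$ for every remaining stuck term. The point of these choices is that $w$ preserves being a value, preserves being stuck, and --- since it only genericises at numeral and abstraction positions, which patterns never inspect --- preserves exactly which patterns a value matches.

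The heart of the argument is the following pair of claims, proved by simultaneous induction, lexicographically on the size of $U$ and then on the size of the given derivation:
\begin{quote}
(R) if $\Gamma \types U : A,\,\Delta$ then $\Gamma \types \Delta$ or $w(U) \in \mng{A}$;\\
(L) if $\Gamma,\,U : A \types \Delta$ then $\Gamma \types \Delta$ or $w(U) \notin \mng{A}$.
\end{quote}
The cases split on the root rule's treatment of the displayed occurrence of $U$. If the occurrence is not principal, apply the induction hypothesis to the premises containing it: either one already yields $w(U) \in \mng{A}$ (resp.\ $\notin$), or all are derivable without the occurrence and re-applying the root rule gives $\Gamma \types \Delta$ --- every side condition survives deleting a formula, as this only shrinks free-variable sets. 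If the root rule is \rlnm{SubR}/\rlnm{SubL} or \rlnm{CompR}/\rlnm{CompL}, invoke the dual claim on the (strictly smaller) premise, together with soundness of subtyping (Theorem~\ref{thm:subtyping-soundness}) and the identity $\mng{\CompPure{B}} = \mng{\topty} \setminus \mng{B}$; here $U$ is unchanged, so only the inner measure decreases. The rest are the shape-specific cases: \rlnm{Num}, \rlnm{Atom}, \rlnm{Pair}, \rlnm{Abs} for $U$ on the right; \rlnm{App}, \rlnm{BinOp}, \rlnm{RelOp}, \rlnm{Match} when $U$ is stuck on the right; and \rlnm{AppE}, \rlnm{OpE}, \rlnm{RelOpE}, \rlnm{PairE}, \rlnm{MatchE} for $U$ on the left. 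In each, recurse into the immediate subterms of $U$ (closed normal forms wherever they matter) and either reach a contradiction and conclude $\Gamma \types \Delta$, or check directly against the mutually recursive semantics of Definition~\ref{def:type-meanings} that $w(U)$ lies in (resp.\ outside) the type. For example, \rlnm{App} concluding $\Gamma \types U_0\,N : A,\,\Delta$ forces, by (R) on $U_0$, either $\Gamma \types \Delta$ or that $w(U_0)$ is a function --- impossible, since $U_0$ is a non-abstraction normal form and $w$ keeps it so; a stuck \texttt{match} forces $w$ of its scrutinee to match one of the patterns, contradicting either its being stuck or $w$'s preservation of non-matching; and the productive cases \rlnm{Num}/\rlnm{Atom}/\rlnm{Pair}/\rlnm{Abs} (and their left counterparts) are settled by computation, the pair case using that $w$ keeps the value/stuck status so the asymmetric clause defining $\mng{(A,B)}$ is met, and the \rlnm{Abs} case using that $\abs{z}{\divtm}$ inhabits every arrow type vacuously.

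Given (R) the theorem is immediate: from $\Gamma \types U : A,\,\Delta$ and $\Gamma \types U : \CompPure{A},\,\Delta$ we get $\Gamma \types \Delta$ outright, or else $w(U) \in \mng{A}$ and $w(U) \in \mng{\CompPure{A}} = \mng{\topty} \setminus \mng{A}$, which is absurd; so $\Gamma \types \Delta$. (When ``$U$'' is a bare variable, the statement reduces to the closed case after handling the one extra root case, \rlnm{Var} on that variable, and noting that $x : A,\, x : \CompPure{A} \types$ is derivable by \rlnm{CompL} and \rlnm{Var}.)

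I expect the main obstacle to be ensuring the simultaneous induction is genuinely well founded: \rlnm{SubR}/\rlnm{SubL}/\rlnm{CompR}/\rlnm{CompL} keep $U$ fixed while potentially enlarging or complementing the type, so they must be absorbed entirely by the inner induction on derivation size, and every genuine advance in the shape of the type must coincide with descending into a subterm of $U$ --- verifying that no case secretly needs the type to shrink, and that a few cases (such as \rlnm{PairE} applied to a pair value) must chain two recursive uses to reach $\Gamma \types \Delta$, is the delicate part. A secondary chore is pinning down the witnesses for stuck terms precisely enough that $w$ commutes with pattern matching and with the value/stuck dichotomy, which is what makes the \rlnm{Match} and \rlnm{Pair} cases close.
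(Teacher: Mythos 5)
Your proposal is correct in outline, but it reaches the theorem by a genuinely different route than the paper. The paper first proves a \emph{syntactic} Inversion lemma (by induction on the derivation, with the same delicate non-principal-occurrence analysis you describe), and then proves cut admissibility by a separate induction on the shape of $U$: inversion is applied to \emph{both} derivations, the resulting subtype inequalities are combined by transitivity into something of the form $B \subtype \CompPure{C}$, and a contradiction is extracted via soundness of subtyping by exhibiting, case by case, a concrete inhabitant of $\mng{B} \cap \mng{C}$ --- $\atom{a}$ for atoms, $\pn{n}$ for numerals, $\abs{x}{\divtm}$ for arrows --- or by recursing into subterms. Your $w(U)$ is precisely these witnesses packaged into a single function, and your simultaneous claims (R)/(L) merge the paper's inversion lemma and cut lemma into one lexicographic induction, after which cut falls out immediately from $\mng{A} \cap \mng{\CompPure{A}} = \varnothing$. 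What your version buys is uniformity (one induction, one appeal to disjointness at the end) and it makes explicit \emph{why} the witnesses work: $w$ preserves value/stuck status and pattern (non-)matching, is insensitive to which numeral or which function body appears, and the $\divtm$ in the second slot of a stuck pair exactly matches the $\mng*{B}$ clause of the asymmetric pair semantics. The cost is that the entire argument becomes semantic and every case must be checked against Definition~\ref{def:type-meanings}, whereas the paper's inversion lemma stays syntactic and is reused independently for progress and preservation. Your two flagged worries --- well-foundedness of the lexicographic measure and the chained recursion in the pair case --- are real but both close: \rlnm{SubL}/\rlnm{SubR}/\rlnm{CompL}/\rlnm{CompR} and non-principal occurrences keep $U$ fixed while strictly shrinking the derivation, and every shape-specific case descends into a proper subterm, so the second application of the induction hypothesis in the pair case is licensed by the outer component alone.

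One genuine (but peripheral) defect: your parenthetical for the case where $U$ is a bare variable does not work as stated. Deriving $x:A,\,x:\CompPure{A} \types$ by \rlnm{CompL} and \rlnm{Var} is easy, but using it to conclude $\Gamma \types \Delta$ from $\Gamma \types x:A,\,\Delta$ and $\Gamma \types x:\CompPure{A},\,\Delta$ requires exactly the cut rule whose admissibility is at issue, so the argument is circular. Fortunately this case is outside the theorem's intended scope --- the paper's statement and proof (Lemma~\ref{lem:cut-normal-form}) concern \emph{closed} normal forms, for which $w$ is total --- so you should simply restrict to closed $U$ rather than attempt the reduction.
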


From progress and preservation we obtain the soundness of the type system, that is, if $\types M : \okty$ then $M$ cannot go wrong -- it either diverges or reduces to a value, and if $\types M : \CompPure{\okty}$, then $M$ cannot evaluate -- it either diverges or goes wrong.  As is typical, the system is not \emph{complete}, that is, \begin{added}there exist closed terms $M$ for which neither $\types M:\okty$, nor $\types M : \CompPure{\okty}$.  An example is $\mathsf{if}\ \pn{2} < \pn{3}\ \mathsf{then}\ \pn{1}\ \mathsf{else}\ (\pn{1}\ \pn{1})$, for which deducing a type of $\okty$ would need more precise information about the behaviour of the relation $<$ in the guard than is available in the system\end{added}.

However, a remarkable fact about our system is that it \emph{is} complete for $M$ in normal form.  Given that the type system is designed to allow us to prove properties $A$ of the reduction behaviour of terms $M$, expressed as $M:A$, it seems quite basic that we should want that the system is, at the very least, capable of proving properties of terms in normal form, whose reduction behaviour is completely trivial.  However, this basic property is not satisfied by most traditional type systems, even for values!  A function value whose body happens to be a stuck term, e.g. $\abs{x}{\pn{2}\,x}$, is not usually typable, since the standard \rlnm{Abs} rule would require that we can type the stuck term $\pn{2}\,x$.

\begin{theorem}[Complete Classification of Normal Forms]\label{thm:complete-classification}
    For all normal forms $U$, both of the following are true:
    \begin{quote}
        (i) \ If $U$ is a value then $\ \types U : \okty$. \qquad (ii) \ If $U$ is stuck then $\ \types U : \CompPure{\okty}$
    \end{quote}
\end{theorem}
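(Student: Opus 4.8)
The plan is to prove both parts by induction on the structure of normal forms, using the syntactic description of stuck terms from the Stuck Terms lemma, and working throughout in the two-sided system (the one-sided version then follows by Theorem~\ref{thm:one-two-equivalence}). Part (i) comes first, by structural induction on the value $V$. It is convenient to prove the slightly stronger statement that every closed value $V$ has a derivable typing $\types V : \hat V$, where $\hat V$ records the outermost shape of $V$ --- $\atom a$ for an atom, $\intty$ for a numeral, $\funty$ for an abstraction, and $(\hat W_1,\,\hat W_2)$ for a pair $(W_1,W_2)$ --- and moreover $\hat V \subtype \okty$. The four cases are routine: \rlnm{Atom}, \rlnm{Num}, \rlnm{Pair} (on the two smaller values, via the induction hypothesis), and for $\abs x M$ the combination of \rlnm{Top} and \rlnm{Abs} to obtain $\types \abs x M : \botty \to \topty$, which is $\funty$. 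That $\hat V \subtype \okty$ is itself an easy subsidiary induction using \rlnm{Ok}, \rlnm{UnionR}, \rlnm{Trans} and \rlnm{Pair}, and then $\types V : \okty$ follows from $\types V : \hat V$ by \rlnm{SubR}.

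For part (ii) I will establish the equivalent statement that $U$ stuck implies that $U : \okty \types$ is derivable; the theorem's formulation then follows by \rlnm{SubL} and \rlnm{CompL}, using $\okty \tyeq \CompPure{\CompPure{\okty}}$. The argument is a structural induction following the grammar of stuck terms, and the guiding observation is that each production of that grammar is matched by exactly one left-elimination rule --- \rlnm{PairE} for $(S,M)$ and $(V,S)$, \rlnm{OpE}/\rlnm{RelOpE} for the arithmetic redexes, \rlnm{AppE} for $U\,M$, and \rlnm{MatchE} for the match form --- whose premise demands a refutation that the offending subterm reduces to a value (respectively to an integer, to a function, to a value matching one of the patterns). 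Each such refutation is produced in one of two ways. If the subterm in question is itself stuck, it is supplied directly by the induction hypothesis, after which $\okty$ is shrunk on the left to $\intty$, $\funty$, or $\bigvee_{i=1}^k p_i[\okty/x \mid x \in \fv(p_i)]$ by \rlnm{SubL} (each of these being a subtype of $\okty$, the last one by a trivial induction on patterns). If the subterm is a value $U$, we invoke the strengthened part (i) to get $\types U : \hat U$ and then observe $\hat U \subtype \CompPure{\cdots}$ using the disjointness facts baked into subtyping --- \rlnm{Disj} for $\intty,\pairty,\funty,\atomty$ and for distinct atoms, and \rlnm{PairC} for structured types --- so that the required refutation is obtained via \rlnm{SubR} and \rlnm{CompR}. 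Some admissible weakening (\rlnm{WeakL}) is needed to bring both premises of \rlnm{PairE}, \rlnm{OpE} and \rlnm{RelOpE} into shape.

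The one genuinely delicate case is the match form $\matchtm{U}{\mid_{i=1}^k p_i \mapsto N_i}$ when $U$ is a value matching none of the disjoint patterns. Here \rlnm{MatchE} requires $U : \bigvee_{i=1}^k p_i[\okty/x \mid x\in\fv(p_i)] \types$, i.e.\ $\types U : \CompPure{\bigvee_i p_i[\cdots]}$, and the system has no right-rule that introduces a complemented union directly. I will therefore exhibit a \emph{single} type, namely $\hat U$, for which $\types U : \hat U$ holds and for which $\hat U \subtype \CompPure{p_i[\cdots]}$ holds for every $i$, and then combine these inequalities into $\hat U \subtype \CompPure{\bigvee_i p_i[\cdots]}$ using the \rlnm{CompR}/\rlnm{UnionL}/\rlnm{CompL} derivation displayed in the excerpt (the admissible \rlnm{CompRUn} pattern); the conclusion follows by \rlnm{SubR} and \rlnm{CompR}. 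Proving that $\hat U \subtype \CompPure{p[\cdots]}$ whenever the value $U$ fails to match the pattern $p$ is a side induction on $p$: the variable case is vacuous, the atom case is a disjointness argument, and the pair case either uses disjointness of $\pairty$ from the other base types (when $U$ is not a pair) or recurses into the component responsible for the mismatch and re-packages the resulting complement inequality through \rlnm{PairC}. I expect this interplay between \rlnm{PairC} and the pattern induction, together with the need to route complemented unions through a single carrier type, to be the main obstacle; the remaining cases are a mechanical traversal of the stuck-term grammar against the left-elimination rules.
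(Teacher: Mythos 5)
Your proposal is correct and follows essentially the same route as the paper, which proves the theorem by citing two lemmas that correspond exactly to your two inductions: one showing every value has its ``natural'' type (an atom type, $\intty$, $\funty$, or a pair of value types, each a subtype of $\okty$), and one traversing the stuck-term grammar against the left-elimination rules, including the same single-carrier-type trick for routing the complemented union through the \rlnm{MatchE} premise via a value-vs-pattern non-matching sublemma. The only cosmetic differences are that the paper runs that sublemma's induction on the value rather than the pattern, and that the conversion from $U:\okty \types$ to $\types U:\CompPure{\okty}$ is a single application of \rlnm{CompR} (your cited \rlnm{SubL}/\rlnm{CompL} go in the opposite direction, but the equivalence is trivial either way).
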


\begin{added}
    For values, this is perhaps not so interesting, because we are very familiar with the typing of values. For stuck terms, it is more illuminating, because here it is not so clear \emph{a priori} that we have the ``right'' set of typing and subtyping rules.
    Consider a stuck term of shape $\matchtm{V}{\mid_{i=1}^k p_i \mapsto N_i}$, which is stuck by virtue of the fact that value $V$ doesn't match any of the patterns $p_i$.  Completeness requires that we can certify that this term is stuck by a derivation.  To do this, say in the one-sided system (which is equivalent to the two-sided system), there is essentially no choice but to use \rlnm{MatchE}, and hence it must be shown that: from $\forall i.\ \forall \sigma.\ V \neq p_i\sigma$ it follows that $\types V : \CompPure{(\textstyle\bigvee_{i=1}^k p_i[\okty/x \mid x \in \fv(p_i)])}$.  This latter judgement is the single premise of (MatchE). In other words, Theorem 7.6 reveals, amongst other things, that pattern (non-)matching of $V$, i.e. the condition $\forall i.\ \forall \sigma.\ V \neq p_i\sigma$ can be captured precisely by the type system.
\end{added}



\section{Conclusion and Related Work}\label{sec:related}

We introduced a new, two-sided type system that overcomes some significant weaknesses of the systems in \cite{ramsay-walpole-popl2024}.  The system supports a wide range of refutation principles, including a type-theoretic of the coimplication from bi-intuitionistic logic.  These were put to use certifying that a number of Erlang-like programs go wrong.  The system is sound and, in a sense, complete for normal forms.

\paragraph{Type inference} We did not consider type inference, but we can observe that, barring the \rlnm{Comp} and \rlnm{SubR} rules, in all of the rules of the one-sided system the terms in the premises are strictly smaller than those in the conclusion.  Hence, the system naturally suggests a standard approach in which subsumption is first absorbed into the other rules, and then a notion of canonical derivation(s) defined, in which fresh type variables are introduced at synthesis points and constrained according to subtyping.  However, since the rules are not syntax directed -- one may freely choose between typings on the right -- there is some challenge to make such a scheme work efficiently.

\subsection{Related Work}

\paragraph{Two-sided type systems} Our work is based on two-sided type systems introduced in \cite{ramsay-walpole-popl2024}.  Compared to the systems in that paper, our system overcomes the three weaknesses listed in Remarks~\ref{rmk:comp-comparison-with-itpde}, \ref{rmk:left-rules-comparison-with-itpde} and \ref{rmk:soundness-comparison-with-itpde}. Thanks to the notion of relevance and the cut theorem, we were able to prove soundness for all judgements, and not only those that satisfy the single-subject constraint. Our system contains both \rlnm{CompL} and \rlnm{CompR} and yet still enjoys both \emph{well-typed programs cannot go wrong} and \emph{ill-typed programs cannot evaluate}.  Our system is made from a small number of well-motivated rules, with the key left rules of \cite{ramsay-walpole-popl2024} all derivable.
Moreover, our \rlnm{Match} rule exploits the two-sidedness of the system and the general soundness result.  However, we don't consider type inference.

\begin{added}
    By encoding atoms as unary constructors, the constrained type system of \cite[Section 6]{ramsay-walpole-popl2024} can handle the first two of Héberts examples, but not the third: when typing an expression of shape $\matchtm{x}{\mid_{i=1}^k p_i \mapsto M_i}$, a type $A$ must be given that can be refuted for both of the branches simultaneously, since there is no capability to refute that a variable $x$ in the scrutinee can match any particular pattern.
    Their system could be extended with such a capability by adding e.g. a notion of type disjointness in the subtype system. This way, assuming a type $x:A$ would be sufficient to refute an assignment $x:B \types$ whenever $A$ and $B$ disjoint, which could be used to ignore a case in the match and thus derive a typing expressing the dependency of just a single branch. Currently, the system of \cite[Section 6]{ramsay-walpole-popl2024} only allows reasoning about type disjointness for concrete values, but not for variables. With this extension, their system would be able to verify the third example.
\end{added}

\paragraph{Typing for incorrectness.}
One half of our system, and the part we have concentrated on in this paper, has as its purpose to reason about incorrectness by refuting type assignments.  Probably the best known type system for reasoning about incorrectness is Lindahl and Sagonas' \emph{success types}, which have been popularised in the Erlang community through the \emph{Dialyzer} tool \cite{lindahl-sagonas-ppdp2006}.
As argued by Sagonas in his Erlang'21 keynote, the Dialyzer tool has been very successful \cite{sagonas-erlang2021}.  It is part of the standard Erlang platform OTP \cite{dialyzer}, and is widely used by the community.  Yet, outside of \begin{added}this community (and those of Erlang's sibling languages)\end{added}, type systems for finding bugs, that is for reasoning about incorrectness, are largely unknown.  One may wonder why the single most popular developer tool for Erlang programmers \cite{nagy-et-al-erlang08} has not \begin{deleted}\del{been}\end{deleted} been replicated elsewhere, for example through success typing for other dynamically typed programming languages?  Or why there is no parallel to the substantial and varied research that we see into new and more powerful type systems for correctness?  One reason could be that there lacks a unified foundation for typing for (in)correctness, and we hope that the present work can contribute there.

Alternative \begin{added}type systems\end{added} for incorrectness, include work by \citet{jakob-thiemann-nasa2015}, who attempt to characterise success types as a kind of refutation in a system in which one constructs a logical formula to describe inputs that guarantee to crash a function, and the \emph{coverage types} of \cite{zhou-et-al-pldi2023}, which embed ``must-style'', underapproximate reasoning designed for checking the coverage of generators in property-based testing.

\begin{added}
    There are also works in which types are used as a language for properties, but the verification that a program satisfies such a property is based on the semantics directly, rather than via a proof (type) system.  For example, \citet{carrott-et-al-ecoop25} and \citet{qian-et-al-oopsla24}, both concern type safety bugs. However, rather than deriving a proof in a type system, they instead reduce the problem to deciding a reachability problem concerning a transformation of the program. Both works show the effectiveness of solving their reachability problem by symbolic execution.
\end{added}

\begin{added}
    One application of our system is to provide more precise type errors.  \citet{webbers-et-al-oopsla24} improve the type error messages from refinement type checking and inference. For each non-provable judgment, an explicit ``refutation'' is produced to explain why the judgement is not provable, but it does not necessarily imply something about the behaviour of the subject (e.g. that it will crash).
\end{added}

\paragraph{Logics for incorrectness.}
Incorrectness logic of \citet{ohearn-popl2019} has sparked a new interest in systems for reasoning about programs that go wrong.  Subsequent work has extended its scope \begin{added}to languages with concurrency\end{added} \citep{raad-et-al-popl2022},  demonstrated its real-world effectiveness \citep{le-et-al-oopsla2022} and spawned alternatives \begin{added}such as the \emph{Outcome Logic} of \citet{zilberstein-et-al-oopsla2023}, which additionally supports reasoning about correctness\end{added}.  A key feature of these logics is \emph{under-approximation} as a means to achieve true positives.  In incorrectness logic, states in a postcondition must be \emph{reachable}.  This leads to the need to reason about termination, but it also allows for an analysis to dynamically drop disjunctions in order to scale well.  The foundation of our system is not under-approximation, but necessity (via complement) -- when $M : A$ is provable, it remains the case that the type $A$ is an \emph{over-approximation} of $M$.  This makes the system closer to the works of \del{on the one hand,} \citet{cousot-et-al-vmcai2011,coutsot-et-al-vmcai2013} and \citet{mackay-et-al-oopsla2022}.  The former considers the problem of inferring \emph{necessary preconditions} in order to detect contract violations that will lead to assertion failures.  The latter uses necessity as a means to specify the \emph{robustness} of module specifications in object-oriented programming.

\paragraph{Type systems with complement.}
A small number of systems in the literature are equipped with a complement, or negation type operator.  However, to the best of our knowledge, none allow for reasoning about incorrectness by proving that programs go wrong.
A complement operator appears in the system of \citet{aiken-et-al-popl1994} and a negation in the work of \citet{parreaux-et-al-oopsla2022}.  Both systems are more sophisticated than ours, with the latter giving a treatment of a realistic programming language and a detailed and insightful account of type inference -- in fact this is the main motivation for the complement in both of these papers.  A complement operator is also present in the line of work on set-theoretic types and semantic subtyping (see e.g. \cite{castagna-et-al-icfp2011,castagna-et-al-popl2022,Castagna2024}) in which the subtyping relation is defined directly as inclusion of the induced sets.  A much simpler system is that of \cite{pearce-vmcai2013}, which lacks function types, and axiomatises a subtyping relation that is shown complete with respect to a set theoretic model.    Finally, in the dependent type system of \citet{unno-et-al-popl2018}, no typable program can go wrong, but the system is so expressive that one can compute directly the complements of arbitrary types without the need for a formal operator.

\paragraph{Coimplication and bi-intuitionistic logic.}
Bi-intuitionistic logic extends intuitionistic logic with a coimplication (also called subtraction, or pseudo-difference), dual to implication.  It was first studied by \citet{rauszer-stlog1974}, but the cut-elimination result was later shown to be flawed by \citet{pinto-uustalu-aratrm2009}.  A version of the Curry-Howard correspondence has been given for a version of bi-intuitionistic logic by \citet{crolard-jlc2004}, but it is for a natural deduction system, in the style of Parigot's $\lambda\mu$-calculus.  \begin{added}Coimplication is at the heart of the work of \citet{choudhury-et-al-popl25}, though in the context of language design. Whereas our paper views a type as a property, they are interested in designing a calculus corresponding to the logical content of the type. In particular, rather than viewing coimplication as a property of functions, they design a calculus in which the proofs of coimplications have their own term construction, the coabstraction.\end{added}



\begin{acks}
We are grateful for many improvements suggested by the anonymous reviewers, and for useful discussions with our colleagues in the Programming Languages Research Group.  We also especially thank Tomos Sherlock for highlighting an issue with the original version of the proof of Inversion.
\end{acks}

\bibliographystyle{ACM-Reference-Format}
\bibliography{references}

@article{choudhury-et-al-popl25,
  author     = {Choudhury, Vikraman and Gay, Simon J.},
  title      = {The Duality of $\lambda$-Abstraction},
  year       = {2025},
  issue_date = {January 2025},
  publisher  = {Association for Computing Machinery},
  address    = {New York, NY, USA},
  volume     = {9},
  number     = {POPL},
  url        = {https://doi.org/10.1145/3704848},
  doi        = {10.1145/3704848},
  journal    = {Proc. ACM Program. Lang.},
  month      = jan,
  articleno  = {12},
  numpages   = {30},
  keywords   = {category theory, classical logic, continuations, control effects, control operators, curry-howard, denotational semantics, duality, equational theory, lambda-calculus, type theory}
}

@article{webbers-et-al-oopsla24,
  author     = {Webbers, Robin and von Gleissenthall, Klaus and Jhala, Ranjit},
  title      = {Refinement Type Refutations},
  year       = {2024},
  issue_date = {October 2024},
  publisher  = {Association for Computing Machinery},
  address    = {New York, NY, USA},
  volume     = {8},
  number     = {OOPSLA2},
  url        = {https://doi.org/10.1145/3689745},
  doi        = {10.1145/3689745},
  journal    = {Proc. ACM Program. Lang.},
  month      = oct,
  articleno  = {305},
  numpages   = {26},
  keywords   = {Counterexamples, Refinement Types, Type Refutations}
}

@inproceedings{carrott-et-al-ecoop25,
  author    = {Carrott, Pedro and Ayoun, Sacha-\'{E}lie and Raad, Azalea},
  title     = {{Compositional Bug Detection for Internally Unsafe Libraries: A Logical Approach to Type Unsoundness}},
  booktitle = {39th European Conference on Object-Oriented Programming (ECOOP 2025)},
  pages     = {5:1--5:28},
  series    = {Leibniz International Proceedings in Informatics (LIPIcs)},
  isbn      = {978-3-95977-373-7},
  issn      = {1868-8969},
  year      = {2025},
  volume    = {333},
  editor    = {Aldrich, Jonathan and Silva, Alexandra},
  publisher = {Schloss Dagstuhl -- Leibniz-Zentrum f{\"u}r Informatik},
  address   = {Dagstuhl, Germany},
  doi       = {10.4230/LIPIcs.ECOOP.2025.5},
  annote    = {Keywords: Rust, bug detection, static analysis, program logics, under-approximation}
}

@article{qian-et-al-oopsla24,
  author     = {Qian, Kelvin and Smith, Scott and Stride, Brandon and Weng, Shiwei and Wu, Ke},
  title      = {Semantic-Type-Guided Bug Finding},
  year       = {2024},
  issue_date = {October 2024},
  publisher  = {Association for Computing Machinery},
  address    = {New York, NY, USA},
  volume     = {8},
  number     = {OOPSLA2},
  url        = {https://doi.org/10.1145/3689788},
  doi        = {10.1145/3689788},
  journal    = {Proc. ACM Program. Lang.},
  month      = oct,
  articleno  = {348},
  numpages   = {28},
  keywords   = {Incorrectness, Semantic Typing, Symbolic Execution, Test Generation}
}

@unpublished{pitts-types-notes,
  author = {Andrew M. Pitts},
  note   = {Page 2.  Available from \url{https://www.cl.cam.ac.uk/teaching/1213/Types/typ-notes.pdf}, last accessed 18/07/2025},
  title  = {Lecture Notes on Types for Part II of the Computer Science Tripos},
  year   = {2012}
}

@article{crolard-jlc2004,
  author    = {Tristan Crolard},
  title     = {A Formulae-as-Types Interpretation of Subtractive Logic},
  journal   = {J. Log. Comput.},
  volume    = {14},
  number    = {4},
  pages     = {529--570},
  year      = {2004},
  doi       = {10.1093/LOGCOM/14.4.529},
  timestamp = {Wed, 17 May 2017 14:25:55 +0200}
}

@inproceedings{pinto-uustalu-aratrm2009,
  author    = {Pinto, Lu{\'i}s
               and Uustalu, Tarmo},
  editor    = {Giese, Martin
               and Waaler, Arild},
  title     = {Proof Search and Counter-Model Construction for Bi-intuitionistic Propositional Logic with Labelled Sequents},
  booktitle = {Automated Reasoning with Analytic Tableaux and Related Methods},
  year      = {2009},
  publisher = {Springer Berlin Heidelberg},
  address   = {Berlin, Heidelberg},
  pages     = {295--309},
  isbn      = {978-3-642-02716-1}
}

@inbook{Castagna2024,
  author    = {Castagna, Giuseppe},
  editor    = {Meyer, Bertrand},
  title     = {Programming with Union, Intersection, and Negation Types},
  booktitle = {The French School of Programming},
  year      = {2024},
  publisher = {Springer International Publishing},
  address   = {Cham},
  pages     = {309--378},
  isbn      = {978-3-031-34518-0},
  doi       = {10.1007/978-3-031-34518-0_12}
}

@inproceedings{castagna-et-al-icfp2011,
  author    = {Castagna, Giuseppe and Xu, Zhiwu},
  title     = {Set-theoretic foundation of parametric polymorphism and subtyping},
  year      = {2011},
  isbn      = {9781450308656},
  publisher = {Association for Computing Machinery},
  address   = {New York, NY, USA},
  doi       = {10.1145/2034773.2034788},
  booktitle = {Proceedings of the 16th ACM SIGPLAN International Conference on Functional Programming},
  pages     = {94–106},
  numpages  = {13},
  keywords  = {parametricity, polymorphism, subtyping, types, xml},
  location  = {Tokyo, Japan},
  series    = {ICFP '11}
}

@article{castagna-et-al-popl2022,
  author     = {Castagna, Giuseppe and Laurent, Micka\"{e}l and Nguy\~{\^e}n, Kim and Lutze, Matthew},
  title      = {On Type-Cases, Union Elimination, and Occurrence Typing},
  year       = {2022},
  issue_date = {January 2022},
  publisher  = {Association for Computing Machinery},
  address    = {New York, NY, USA},
  volume     = {6},
  number     = {POPL},
  doi        = {10.1145/3498674},
  journal    = {Proc. ACM Program. Lang.},
  month      = jan,
  articleno  = {13},
  numpages   = {31}
}

@inproceedings{pearce-vmcai2013,
  author    = {Pearce, David J.},
  editor    = {Giacobazzi, Roberto
               and Berdine, Josh
               and Mastroeni, Isabella},
  title     = {Sound and Complete Flow Typing with Unions, Intersections and Negations},
  booktitle = {Verification, Model Checking, and Abstract Interpretation},
  year      = {2013},
  publisher = {Springer Berlin Heidelberg},
  address   = {Berlin, Heidelberg},
  pages     = {335--354},
  isbn      = {978-3-642-35873-9}
}

@article{zhou-et-al-pldi2023,
  author     = {Zhou, Zhe and Mishra, Ashish and Delaware, Benjamin and Jagannathan, Suresh},
  title      = {Covering All the Bases: Type-Based Verification of Test Input Generators},
  year       = {2023},
  issue_date = {June 2023},
  publisher  = {Association for Computing Machinery},
  address    = {New York, NY, USA},
  volume     = {7},
  number     = {PLDI},
  url        = {https://doi.org/10.1145/3591271},
  doi        = {10.1145/3591271},
  abstract   = {Test input generators are an important part of property-based testing (PBT) frameworks. Because PBT is intended to test deep semantic and structural properties of a program, the outputs produced by these generators can be complex data structures, constrained to satisfy properties the developer believes is most relevant to testing the function of interest. An important feature expected of these generators is that they be capable of producing all acceptable elements that satisfy the function’s input type and generator-provided constraints. However, it is not readily apparent how we might validate whether a particular generator’s output satisfies this coverage requirement. Typically, developers must rely on manual inspection and post-mortem analysis of test runs to determine if the generator is providing sufficient coverage; these approaches are error-prone and difficult to scale as generators become more complex. To address this important concern, we present a new refinement type-based verification procedure for validating the coverage provided by input test generators, based on a novel interpretation of types that embeds “must-style” underapproximate reasoning principles as a fundamental part of the type system. The types associated with expressions now capture the set of values guaranteed to be produced by the expression, rather than the typical formulation that uses types to represent the set of values an expression may produce. Beyond formalizing the notion of coverage types in the context of a rich core language with higher-order procedures and inductive datatypes, we also present a detailed evaluation study to justify the utility of our ideas.},
  journal    = {Proc. ACM Program. Lang.},
  month      = jun,
  articleno  = {157},
  numpages   = {24},
  keywords   = {property-based testing, refinement types, underapproximate reasoning}
}

@article{dunfield-krishnaswami-acmcs2021,
  author     = {Dunfield, Jana and Krishnaswami, Neel},
  title      = {Bidirectional Typing},
  year       = {2021},
  issue_date = {June 2022},
  publisher  = {Association for Computing Machinery},
  address    = {New York, NY, USA},
  volume     = {54},
  number     = {5},
  issn       = {0360-0300},
  doi        = {10.1145/3450952},
  journal    = {ACM Comput. Surv.},
  month      = {May},
  articleno  = {98},
  numpages   = {38},
  keywords   = {Type checking, type inference}
}

@article{milner-jcss1978,
  title   = {A theory of type polymorphism in programming},
  journal = {Journal of Computer and System Sciences},
  volume  = {17},
  number  = {3},
  pages   = {348-375},
  year    = {1978},
  issn    = {0022-0000},
  doi     = {https://doi.org/10.1016/0022-0000(78)90014-4},
  author  = {Robin Milner}
}

@article{rauszer-stlog1974,
  issn      = {00393215, 15728730},
  url       = {http://www.jstor.org/stable/20014691},
  author    = {Cecylia Rauszer},
  journal   = {Studia Logica: An International Journal for Symbolic Logic},
  number    = {1},
  pages     = {23--34},
  publisher = {Springer},
  title     = {A Formalization of the Propositional Calculus of H-B Logic},
  urldate   = {2025-07-02},
  volume    = {33},
  year      = {1974}
}

@article{tranchini-jal2017,
  title    = {Natural deduction for bi-intuitionistic logic},
  journal  = {Journal of Applied Logic},
  volume   = {25},
  pages    = {S72-S96},
  year     = {2017},
  note     = {Logical Investigations on Assertion and Denial},
  issn     = {1570-8683},
  doi      = {https://doi.org/10.1016/j.jal.2017.12.001},
  author   = {Luca Tranchini},
  keywords = {Multiple-conclusion, Refutation, Proofnet, Co-implication}
}

@article{crolard-tcs2001,
  author  = {Tristan Crolard},
  doi     = {https://doi.org/10.1016/S0304-3975(99)00124-3},
  issn    = {0304-3975},
  journal = {Theoretical Computer Science},
  number  = {1},
  pages   = {151-185},
  title   = {Subtractive logic},
  volume  = {254},
  year    = {2001}
}

@book{hebert2013,
  author    = {Fred H\'ebert},
  publisher = {No Starch Press},
  title     = {Learn You Some Erlang for Great Good!},
  year      = {2013},
  note      = {As of 6/6/2025, the examples can be viewed directly at \url{https://learnyousomeerlang.com/dialyzer\#type-inference-and-discrepancies}}
}

@article{ramsay-walpole-popl2024,
  author     = {Ramsay, Steven and Walpole, Charlie},
  title      = {Ill-Typed Programs Don't Evaluate},
  year       = {2024},
  issue_date = {January 2024},
  publisher  = {Association for Computing Machinery},
  address    = {New York, NY, USA},
  volume     = {8},
  number     = {POPL},
  doi        = {10.1145/3632909},
  journal    = {Proc. ACM Program. Lang.},
  month      = {Jan},
  articleno  = {67},
  numpages   = {31}
}

@article{mackay-et-al-oopsla2022,
  author     = {Mackay, Julian and Eisenbach, Susan and Noble, James and Drossopoulou, Sophia},
  title      = {Necessity Specifications for Robustness},
  year       = {2022},
  issue_date = {October 2022},
  publisher  = {Association for Computing Machinery},
  address    = {New York, NY, USA},
  volume     = {6},
  number     = {OOPSLA2},
  doi        = {10.1145/3563317},
  abstract   = {Robust modules guarantee to do only what they are supposed to do – even in the presence of untrusted malicious clients, and considering not just the direct behaviour of individual methods, but also the emergent behaviour from calls to more than one method. Necessity is a language for specifying robustness, based on novel necessity operators capturing temporal implication, and a proof logic that derives explicit robustness specifications from functional specifications. Soundness and an exemplar proof are mechanised in Coq.},
  journal    = {Proc. ACM Program. Lang.},
  month      = {oct},
  articleno  = {154},
  numpages   = {30},
  keywords   = {Necessity. Necessary conditions. Correctness. Verification}
}

@inproceedings{coutsot-et-al-vmcai2013,
  author    = {Cousot, Patrick
               and Cousot, Radhia
               and F{\"a}hndrich, Manuel
               and Logozzo, Francesco},
  editor    = {Giacobazzi, Roberto
               and Berdine, Josh
               and Mastroeni, Isabella},
  title     = {Automatic Inference of Necessary Preconditions},
  booktitle = {Verification, Model Checking, and Abstract Interpretation},
  year      = {2013},
  publisher = {Springer Berlin Heidelberg},
  address   = {Berlin, Heidelberg},
  pages     = {128--148},
  abstract  = {We consider the problem of automatic precondition inference. We argue that the common notion of sufficient precondition inference (i.e., under which precondition is the program correct?) imposes too large a burden on callers, and hence it is unfit for automatic program analysis. Therefore, we define the problem of necessary precondition inference (i.e., under which precondition, if violated, will the program always be incorrect?). We designed and implemented several new abstract interpretation-based analyses to infer atomic, disjunctive, universally and existentially quantified necessary preconditions.},
  isbn      = {978-3-642-35873-9},
  doi       = {10.1007/978-3-642-35873-9_10}
}

@inproceedings{cousot-et-al-vmcai2011,
  author    = {Cousot, Patrick
               and Cousot, Radhia
               and Logozzo, Francesco},
  editor    = {Jhala, Ranjit
               and Schmidt, David},
  title     = {Precondition Inference from Intermittent Assertions and Application to Contracts on Collections},
  booktitle = {Verification, Model Checking, and Abstract Interpretation},
  year      = {2011},
  publisher = {Springer Berlin Heidelberg},
  address   = {Berlin, Heidelberg},
  pages     = {150--168},
  abstract  = {Programmers often insert assertions in their code to be optionally checked at runtime, at least during the debugging phase. In the context of design by contracts, these assertions would better be given as a precondition of the method/procedure which can detect that a caller has violated the procedure's contract in a way which definitely leads to an assertion violation (e.g., for separate static analysis). We define precisely and formally the contract inference problem from intermittent assertions inserted in the code by the programmer. Our definition excludes no good run even when a non-deterministic choice (e.g., an interactive input) could lead to a bad one (so this is not the weakest precondition, nor its strengthening by abduction, since a terminating successful execution is not guaranteed). We then introduce new abstract interpretation-based methods to automatically infer both the static contract precondition of a method/procedure and the code to check it at runtime on scalar and collection variables.},
  isbn      = {978-3-642-18275-4},
  doi       = {10.1007/978-3-642-18275-4_12}
}

@article{le-et-al-oopsla2022,
  author     = {Le, Quang Loc and Raad, Azalea and Villard, Jules and Berdine, Josh and Dreyer, Derek and O'Hearn, Peter W.},
  title      = {Finding Real Bugs in Big Programs with Incorrectness Logic},
  year       = {2022},
  issue_date = {April 2022},
  publisher  = {Association for Computing Machinery},
  address    = {New York, NY, USA},
  volume     = {6},
  number     = {OOPSLA1},
  url        = {https://doi.org/10.1145/3527325},
  doi        = {10.1145/3527325},
  abstract   = {Incorrectness Logic (IL) has recently been advanced as a logical theory for compositionally proving the presence of bugs—dual to Hoare Logic, which is used to compositionally prove their absence. Though IL was motivated in large part by the aim of providing a logical foundation for bug-catching program analyses, it has remained an open question: is IL useful only retrospectively (to explain existing analyses), or can it actually be useful in developing new analyses which can catch real bugs in big programs? In this work, we develop Pulse-X, a new, automatic program analysis for catching memory errors, based on ISL, a recent synthesis of IL and separation logic. Using Pulse-X, we have found 15 new real bugs in OpenSSL, which we have reported to OpenSSL maintainers and have since been fixed. In order not to be overwhelmed with potential but false error reports, we develop a compositional bug-reporting criterion based on a distinction between latent and manifest errors, which references the under-approximate ISL abstractions computed by Pulse-X, and we investigate the fix rate resulting from application of this criterion. Finally, to probe the potential practicality of our bug-finding method, we conduct a comparison to Infer, a widely used analyzer which has proven useful in industrial engineering practice.},
  journal    = {Proc. ACM Program. Lang.},
  month      = {apr},
  articleno  = {81},
  numpages   = {27},
  keywords   = {incorrectness logic, bug catching, incorrectness proving, compositionality}
}

@article{zilberstein-et-al-oopsla2023,
  author     = {Zilberstein, Noam and Dreyer, Derek and Silva, Alexandra},
  title      = {Outcome Logic: A Unifying Foundation for Correctness and Incorrectness Reasoning},
  year       = {2023},
  issue_date = {April 2023},
  publisher  = {Association for Computing Machinery},
  address    = {New York, NY, USA},
  volume     = {7},
  number     = {OOPSLA1},
  url        = {https://doi.org/10.1145/3586045},
  doi        = {10.1145/3586045},
  abstract   = {Program logics for bug-finding (such as the recently introduced Incorrectness Logic) have framed correctness and incorrectness as dual concepts requiring different logical foundations. In this paper, we argue that a single unified theory can be used for both correctness and incorrectness reasoning. We present Outcome Logic (OL), a novel generalization of Hoare Logic that is both monadic (to capture computational effects) and monoidal (to reason about outcomes and reachability). OL expresses true positive bugs, while retaining correctness reasoning abilities as well. To formalize the applicability of OL to both correctness and incorrectness, we prove that any false OL specification can be disproven in OL itself. We also use our framework to reason about new types of incorrectness in nondeterministic and probabilistic programs. Given these advances, we advocate for OL as a new foundational theory of correctness and incorrectness.},
  journal    = {Proc. ACM Program. Lang.},
  month      = {apr},
  articleno  = {93},
  numpages   = {29},
  keywords   = {Hoare Logic, Program Logics, Incorrectness Reasoning}
}

@article{raad-et-al-popl2022,
  author     = {Raad, Azalea and Berdine, Josh and Dreyer, Derek and O'Hearn, Peter W.},
  title      = {Concurrent Incorrectness Separation Logic},
  year       = {2022},
  issue_date = {January 2022},
  publisher  = {Association for Computing Machinery},
  address    = {New York, NY, USA},
  volume     = {6},
  number     = {POPL},
  url        = {https://doi.org/10.1145/3498695},
  doi        = {10.1145/3498695},
  abstract   = {Incorrectness separation logic (ISL) was recently introduced as a theory of under-approximate reasoning, with the goal of proving that compositional bug catchers find actual bugs. However, ISL only considers sequential programs. Here, we develop concurrent incorrectness separation logic (CISL), which extends ISL to account for bug catching in concurrent programs. Inspired by the work on Views, we design CISL as a parametric framework, which can be instantiated for a number of bug catching scenarios, including race detection, deadlock detection, and memory safety error detection. For each instance, the CISL meta-theory ensures the soundness of incorrectness reasoning for free, thereby guaranteeing that the bugs detected are true positives.},
  journal    = {Proc. ACM Program. Lang.},
  month      = {jan},
  articleno  = {34},
  numpages   = {29},
  keywords   = {Concurrency, separation logic, program logics, bug catching}
}

@article{parreaux-et-al-oopsla2022,
  author     = {Parreaux, Lionel and Chau, Chun Yin},
  title      = {MLstruct: Principal Type Inference in a Boolean Algebra of Structural Types},
  year       = {2022},
  issue_date = {October 2022},
  publisher  = {Association for Computing Machinery},
  address    = {New York, NY, USA},
  volume     = {6},
  number     = {OOPSLA2},
  url        = {https://doi.org/10.1145/3563304},
  doi        = {10.1145/3563304},
  abstract   = {Intersection and union types are becoming more popular by the day, entering the mainstream in programming languages like TypeScript and Scala 3. Yet, no language so far has managed to combine these powerful types with principal polymorphic type inference. We present a solution to this problem in MLstruct, a language with subtyped records, equirecursive types, first-class unions and intersections, class-based instance matching, and ML-style principal type inference. While MLstruct is mostly structurally typed, it contains a healthy sprinkle of nominality for classes, which gives it desirable semantics, enabling the expression of a powerful form of extensible variants that does not need row variables. Technically, we define the constructs of our language using conjunction, disjunction, and negation connectives, making sure they form a Boolean algebra, and we show that the addition of a few nonstandard but sound subtyping rules gives us enough structure to derive a sound and complete type inference algorithm. With this work, we hope to foster the development of better type inference for present and future programming languages with expressive subtyping systems.},
  journal    = {Proc. ACM Program. Lang.},
  month      = {oct},
  articleno  = {141},
  numpages   = {30},
  keywords   = {union and intersection types, principal type inference, structural typing}
}

@article{unno-et-al-popl2018,
  author     = {Unno, Hiroshi and Satake, Yuki and Terauchi, Tachio},
  title      = {Relatively Complete Refinement Type System for Verification of Higher-Order Non-Deterministic Programs},
  year       = {2017},
  issue_date = {January 2018},
  publisher  = {Association for Computing Machinery},
  address    = {New York, NY, USA},
  volume     = {2},
  number     = {POPL},
  url        = {https://doi.org/10.1145/3158100},
  doi        = {10.1145/3158100},
  abstract   = {This paper considers verification of non-deterministic higher-order functional programs. Our contribution is a novel type system in which the types are used to express and verify (conditional) safety, termination, non-safety, and non-termination properties in the presence of ∀-∃ branching behavior due to non-determinism. For instance, the judgement ⊢ e:{u: int | φ(u) }∀∀ says that every evaluation of e either diverges or reduces to some integer u satisfying φ(u), whereas ⊢ e:{u: int | ψ(u) }∃∀ says that there exists an evaluation of e that either diverges or reduces to some integer u satisfying ψ(u). Note that the former is a safety property whereas the latter is a counterexample to a (conditional) termination property. Following the recent work on type-based verification methods for deterministic higher-order functional programs, we formalize the idea on the foundation of dependent refinement types, thereby allowing the type system to express and verify rich properties involving program values, branching behaviors, and the combination thereof. Our type system is able to seamlessly combine deductions of both universal and existential facts within a unified framework, paving the way for an exciting opportunity for new type-based verification methods that combine both universal and existential reasoning. For example, our system can prove the existence of a path violating some safety property from a proof of termination that uses a well-foundedness termination argument. We prove that our type system is sound and relatively complete, and further, thanks to having both modes of non-determinism, we show that our types are closed under complement.},
  journal    = {Proc. ACM Program. Lang.},
  month      = {dec},
  articleno  = {12},
  numpages   = {29},
  keywords   = {Program Verification, Refinement Types, Relative Completeness, Higher-Order Programs, Non-Deterministic Programs}
}

@misc{dialyzer,
  title  = {{Dialyzer v5.4}},
  author = {{Erlang/OTP 28}},
  note   = {Last accessed: 1/7/2025},
  year   = {2025}
}

@inproceedings{nagy-et-al-erlang08,
  author    = {Nagy, Tam\'{a}s and Nagyn\'{e} V\'{\i}g, Anik\'{o}},
  title     = {Erlang testing and tools survey},
  year      = {2008},
  isbn      = {9781605580654},
  publisher = {Association for Computing Machinery},
  address   = {New York, NY, USA},
  url       = {https://doi.org/10.1145/1411273.1411277},
  doi       = {10.1145/1411273.1411277},
  booktitle = {Proceedings of the 7th ACM SIGPLAN Workshop on ERLANG},
  pages     = {21–28},
  numpages  = {8},
  keywords  = {erlang, market analysis, test tools},
  location  = {Victoria, BC, Canada},
  series    = {ERLANG '08}
}

@inproceedings{sagonas-erlang2021,
  author    = {Sagonas, Konstantinos},
  title     = {Fifteen years of successfully dialyzing Erlang and Elixir code (invited talk)},
  year      = {2021},
  isbn      = {9781450386128},
  publisher = {Association for Computing Machinery},
  address   = {New York, NY, USA},
  url       = {https://doi.org/10.1145/3471871.3480952},
  doi       = {10.1145/3471871.3480952},
  booktitle = {Proceedings of the 20th ACM SIGPLAN International Workshop on Erlang},
  pages     = {1},
  numpages  = {1},
  keywords  = {Verification, Security, Reliability, Languages},
  location  = {Virtual, Republic of Korea},
  series    = {Erlang 2021}
}

@inproceedings{jakob-thiemann-nasa2015,
  author    = {Jakob, Robert
               and Thiemann, Peter},
  editor    = {Havelund, Klaus
               and Holzmann, Gerard
               and Joshi, Rajeev},
  title     = {A Falsification View of Success Typing},
  booktitle = {NASA Formal Methods},
  year      = {2015},
  publisher = {Springer International Publishing},
  address   = {Cham},
  pages     = {234--247},
  abstract  = {Dynamic languages are praised for their flexibility and expressiveness, but static analysis often yields many false positives and verification is cumbersome for lack of structure. Hence, unit testing is the prevalent incomplete method for validating programs in such languages.},
  isbn      = {978-3-319-17524-9},
  doi       = {https://doi.org/10.1007/978-3-319-17524-9_17}
}

@inproceedings{lindahl-sagonas-ppdp2006,
  author    = {Lindahl, Tobias and Sagonas, Konstantinos},
  title     = {Practical Type Inference Based on Success Typings},
  year      = {2006},
  isbn      = {1595933883},
  publisher = {Association for Computing Machinery},
  address   = {New York, NY, USA},
  url       = {https://doi.org/10.1145/1140335.1140356},
  doi       = {10.1145/1140335.1140356},
  booktitle = {Proceedings of the 8th ACM SIGPLAN International Conference on Principles and Practice of Declarative Programming},
  pages     = {167–178},
  numpages  = {12},
  keywords  = {subtyping, success typings, Erlang, constraint-based type inference},
  location  = {Venice, Italy},
  series    = {PPDP '06}
}

@article{ohearn-popl2019,
  author     = {O'Hearn, Peter W.},
  title      = {Incorrectness Logic},
  year       = {2019},
  issue_date = {January 2020},
  publisher  = {Association for Computing Machinery},
  address    = {New York, NY, USA},
  volume     = {4},
  number     = {POPL},
  url        = {https://doi.org/10.1145/3371078},
  doi        = {10.1145/3371078},
  abstract   = {Program correctness and incorrectness are two sides of the same coin. As a programmer, even if you would like to have correctness, you might find yourself spending most of your time reasoning about incorrectness. This includes informal reasoning that people do while looking at or thinking about their code, as well as that supported by automated testing and static analysis tools. This paper describes a simple logic for program incorrectness which is, in a sense, the other side of the coin to Hoare's logic of correctness.},
  journal    = {Proc. ACM Program. Lang.},
  month      = {dec},
  articleno  = {10},
  numpages   = {32},
  keywords   = {none}
}

@inproceedings{aiken-et-al-popl1994,
  author    = {Alexander Aiken and
               Edward L. Wimmers and
               T. K. Lakshman},
  title     = {Soft Typing with Conditional Types},
  booktitle = {Conference Record of POPL'94: 21st {ACM} {SIGPLAN-SIGACT} Symposium
               on Principles of Programming Languages, Portland, Oregon, USA, January
               17-21, 1994},
  pages     = {163--173},
  year      = {1994},
  doi       = {https://doi.org/10.1145/174675.177847}
}

\iftoggle{supplementary}{%
  \clearpage

  \appendix

  \section{Supplementary Materials Related to Subtyping}

This appendix contains proofs of:
\begin{itemize}
  \item The fact that complement is an order-reversing involution.
  \item The soundness of the subtyping relation (Theorem~\ref{thm:subtyping-soundness}).
  \item The equivalence of the original and alternative characterisations of subtyping (Theorem~\ref{thm:subtyping-equivalence}).
\end{itemize}

\subsection{Proof of complement as an order-reversing involution}
We justify the claim that complement acts as an order-reversing involution.
\begin{lemma}\label{thm:ortholattice}
  For all $A$ and $B$: $\CompPure{\CompPure{A}} \tyeq A$ and, if $A \subtype B$, then $\CompPure{B} \subtype \CompPure{A}$.
\end{lemma}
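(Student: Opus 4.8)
The plan is to derive both parts directly from the rules \rlnm{Refl}, \rlnm{Trans}, \rlnm{CompL} and \rlnm{CompR} of Figure~\ref{fig:simple-subtyping}; no induction on derivations is needed, the whole argument is mechanical rule-chasing.

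First I would establish the double-complement law $\CompPure{\CompPure{A}} \tyeq A$, i.e.\ both inequalities separately. For $\CompPure{\CompPure{A}} \subtype A$, read this as the conclusion of an instance of \rlnm{CompL}, whose conclusion schema is $\CompPure{A'} \subtype B'$: take $A' = \CompPure{A}$ and $B' = A$, so that the premise is $\CompPure{B'} \subtype A'$, that is $\CompPure{A} \subtype \CompPure{A}$, which is \rlnm{Refl}. Symmetrically, for $A \subtype \CompPure{\CompPure{A}}$, read this as the conclusion of an instance of \rlnm{CompR}, whose conclusion schema is $A' \subtype \CompPure{B'}$: take $A' = A$ and $B' = \CompPure{A}$, so that the premise is $B' \subtype \CompPure{A'}$, that is again $\CompPure{A} \subtype \CompPure{A}$, by \rlnm{Refl}. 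Hence $\CompPure{\CompPure{A}} \tyeq A$.

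For the order-reversal part, assume $A \subtype B$; I would obtain $\CompPure{B} \subtype \CompPure{A}$ as the conclusion of an instance of \rlnm{CompL}, matching the schema $\CompPure{A'} \subtype B'$ with $A' = B$ and $B' = \CompPure{A}$. The required premise is then $\CompPure{B'} \subtype A'$, i.e.\ $\CompPure{\CompPure{A}} \subtype B$, which follows by \rlnm{Trans} from $\CompPure{\CompPure{A}} \subtype A$ (just established) and the hypothesis $A \subtype B$. Equivalently, one could instead use \rlnm{CompR}: the target $\CompPure{B} \subtype \CompPure{A}$ matches the schema $A' \subtype \CompPure{B'}$ with $A' = \CompPure{B}$ and $B' = A$, and the premise $A \subtype \CompPure{\CompPure{B}}$ follows from $A \subtype B$ and $B \subtype \CompPure{\CompPure{B}}$ by \rlnm{Trans}.

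I do not anticipate a real obstacle here; the only point requiring care is to match the schematic meta-variables of \rlnm{CompL} and \rlnm{CompR} correctly, and to invoke the already-proved involution $\CompPure{\CompPure{A}} \subtype A$ in the right direction when discharging the premise of the order-reversal step. Note also that the first part must be proved before the second, since the second uses it.
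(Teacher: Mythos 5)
Your proof is correct and essentially identical to the paper's: the double-complement law is obtained exactly as in the paper by feeding \rlnm{Refl} into \rlnm{CompL} and \rlnm{CompR}, and your second (\rlnm{CompR}-based) route for order-reversal is precisely the paper's derivation, with your first route being a symmetric variant of it.
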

\begin{proof}
  For the first part, we show that $\CompPure{\CompPure{A}} \subtype A$ and $A \subtype \CompPure{\CompPure{A}}$:
  
  \begin{mathpar}
    \infer*[Left=CompL]{
      \infer*[Left=Refl]{ }{\CompPure{A} \subtype \CompPure{A}}
    }
    {
      \CompPure{\CompPure{A}} \subtype A
    }
    \and
    \infer*[Left=CompR]{
      \infer*[Left=Refl]{ }{\CompPure{A} \subtype \CompPure{A}}
    }
    {
      A \subtype \CompPure{\CompPure{A}}
    }
  \end{mathpar}

  For the second part, assume  $A \subtype B$. Then, since $B \subtype \CompPure{\CompPure{B}}$, 
  \begin{mathpar}
    \infer*[Left=CompR]{
      \infer*[Left=Trans]
      { 
        A \subtype B 
        \and
        B \subtype \CompPure{\CompPure{B}}
      }
      {
        A \subtype \CompPure{\CompPure{B}}
      }
    }
    {
      \CompPure{B} \subtype \CompPure{A}
    }
  \end{mathpar}
\end{proof}

\subsection{Proof of Subtyping Soundness, Theorem~\ref{thm:subtyping-soundness}}

Theorem~\ref{thm:subtyping-soundness}:
\begin{quote}
    If $A \subtype B$ then $\mng{A} \subseteq \mng{B}$.
\end{quote}

\begin{proof}
  By induction on the derivation of $A \subtype B$.
  \begin{description}
    \item[\rlnm{Refl}] Immediate.
    \item[\rlnm{Top}] Immediate from the definition of $\topty$ as the set of all closed normal forms.
    \item[\rlnm{Ok}] Suppose $\intty \vee \pairvalty \vee \funty \vee \atomty \subtype \okty$. Then $\mng{\intty \vee \pairvalty \vee \funty \vee \atomty} = \mng{\intty} \cup \mng{\pairvalty} \cup \mng{\funty} \cup \mng{\atomty}$. Now, we just need to check that each of these four sets are subsets of $\mng{\okty}$ i.e they are sets of values, which they are by the definition of the grammar of this language.
    \item[\rlnm{Disj}] Suppose $A \subtype \CompPure{B}$ and $A \neq B,\, A \distype B$.
          If $A$ and $B$ are atoms, then let $\mng{A} = \{\atom{a}\}$ and $\mng{B} = \{\atom{b}\}$ where $\atom{a} \neq \atom{b}$. Then $\atom{a} \in \mng{\topty}$ so $\{\atom{a}\} \subseteq \mng{\topty} \setminus \{\atom{b}\}$ as required.
          If $A,B \in \{\intty,\pairty,\funty,\atomty\}$, by definition of \mng{} for these types, these sets are pairwise disjoint. Thus, for each pair of types, $\mng{A} \subseteq \mng{\CompPure{B}}$.
    \item[\rlnm{Atom}] Suppose $\atom{a} \subtype \atomty$. Then $\mng{\atom{a}} = \{\atom{a}\} \subseteq \mng{\atomty}$.
    \item[\rlnm{PairC}] Suppose $(\CompPure{A},\,\topty) \vee (\okty,\,\CompPure{B}) \subtype \CompPure{(A,\,B)}$.
          Then, by definition $\mng{(\CompPure{A},\,\topty) \vee (\okty,\,\CompPure{B})} = \mng{(\CompPure{A},\,\topty)} \cup \mng{(\okty,\,\CompPure{B})}$,
          where we have
          \begin{align}
            \mng{(\CompPure{A},\,\topty)} & =
            \{(M,\,N) \mid M \in \mng{\CompPure{A}} \cap \mng{\okty}, N \in \mng{\topty} \}
            \cup
            \{(M,\,N) \mid M \in \mng{\CompPure{A}} \cap \mng{\CompPure{\okty}}, N \in \mng*{\topty} \}
          \end{align}
          and, since $\mng{\okty} \cap \mng{\CompPure{\okty}} = \varnothing $, 
          \begin{align}
            \mng{(\okty,\,\CompPure{B})} & =
            \{(P,\,Q) \mid P \in \mng{\okty}, Q \in \mng{\CompPure{B}} \}
            \cup
            \{(P,\,Q) \mid P \in \mng{\okty} \cap \mng{\CompPure{\okty}}, Q \in \mng*{\CompPure{B}} \}      \\
                                         & =  \{(P,\,Q) \mid P \in \mng{\okty}, Q \in \mng*{\CompPure{B}} \}
          \end{align}

          Let $N = (R, S) \in \mng{(\CompPure{A},\,\topty) \vee (\okty,\,\CompPure{B})}$.
          Then either
          \begin{enumerate}
            \item $R \in \mng{\CompPure{A}} \cap \mng{\okty}$ \emph{or}
            \item $R \in \mng{\CompPure{A}} \cap \mng{\CompPure{\okty}}$ \emph{or}
            \item $S \in \mng*{\CompPure{B}}$.
          \end{enumerate}
          In any case, this means $N \notin \mng{(A, \,B)}$.
          Since $N\in \mng{\topty}$, we have $N \in \mng{\topty} \setminus \mng{(A,\,B)} = \mng{\CompPure{(A,\,B)}}$.
    \item[\rlnm{Pair}] Suppose $(A,B) \subtype (A'\,B')$. The induction hypotheses give:
          \begin{enumerate}[({IH}1)]
            \item $\mng{A} \subseteq \mng{A'}$
            \item $\mng{B} \subseteq \mng{B'}$
          \end{enumerate}
          Note that by definition of $\mng*{}$, for types $A$ and $B$, if $\mng{A} \subseteq \mng{B}$, then $\mng*{A} \subseteq \mng*{B}$ also.
          Then, since set intersection is monotonic,
          \begin{align} 
            \mng{(A,\,B)} &=
            \{ (M,\,N) \mid M \in \mng{A} \cap \mng{\okty}, N \in \mng{B}\} \cup \{(M,\,N) \mid M \in \mng{A} \cap \mng{\CompPure{\okty}}, N \in \mng*{B} \}  \\
            &\subseteq 
            \{ (M,\,N) \mid M \in \mng{A'} \cap \mng{\okty}, N \in \mng{B'}\} \cup \{(M,\,N) \mid M \in \mng{A'} \cap \mng{\CompPure{\okty}}, N \in \mng*{B'} \} \\
            &= \mng{(A',\,B')}
          \end{align}
    \item[\rlnm{Fun}] Suppose $A \to B \subtype A' \to B'$. The induction hypotheses give:
          \begin{enumerate}[({IH}1)]
            \item $\mng{A'} \subseteq \mng{A}$
            \item $\mng{B} \subseteq \mng{B'}$
          \end{enumerate}
          Again, by definition of $\mng*{}$, by (IH1) we have $\mng*{A'} \subseteq \mng*{A}$. Similarly, by (IH2) we have $\mng*{B} \subseteq \mng*{B'}$. Then, $\mng{A \to B} = \{ \abs{x}{M} \mid \forall N \in \mng*{A}.\:M[N/x] \in \mng*{B} \} \subseteq \{ \abs{x}{M} \mid \forall N \in \mng*{A'}.\:M[N/x] \in \mng*{B'} \} = \mng{A' \to B'}$.
    \item[\rlnm{UnionL}] Suppose ${\textstyle\bigvee_{i=1}^k B_i} \subtype A$. The induction hypothesis gives:
          \begin{enumerate}[(IH)]
            \item $\mng{B_j} \subseteq \mng{A}  (\forall j\in[1,k])$
          \end{enumerate}
          Then, by basic properties of set inclusion, $\mng{\textstyle\bigvee_{i=1}^k B_i} = \textstyle\bigcup_{i=1}^k \mng{B_i} \subseteq \mng{A}$.
    \item[\rlnm{UnionR}] Suppose $A \subtype {\textstyle\bigvee_{i=1}^k B_i}$. The induction hypothesis gives:
          \begin{enumerate}[(IH)]
            \item $\mng{A} \subseteq \mng{B_j},  j\in[1,k]$
          \end{enumerate}
          Then, by basic properties of set inclusion, $\mng{A} \subseteq \mng{B_j} \subseteq \textstyle\bigcup_{i=1}^k \mng{B_i} = \mng{\textstyle\bigvee_{i=1}^k B_i}$.
    \item[\rlnm{CompL}] Suppose $\CompPure{A} \subtype B$. The induction hypothesis gives:
          \begin{enumerate}[(IH)]
            \item $\mng{\CompPure{B}} \subseteq \mng{A}$
          \end{enumerate}
          Let $N \in \mng{\CompPure{A}} = \mng{\topty} \setminus \mng{A}$. By definition, this means $N \notin \mng{A}$. By the induction hypothesis $\mng{\CompPure{B}} = \mng{\topty} \setminus \mng{B}  \subseteq \mng{A}$, we must also have that  $N \notin \mng{\topty} \setminus \mng{B}$. By $\rlnm{Top}$, $N \in \mng{\topty}$, so we must have $N \in \mng{B}$.
    \item[\rlnm{CompR}] Suppose $A \subtype \CompPure{B}$. The induction hypothesis gives:
          \begin{enumerate}[(IH)]
            \item $\mng{B} \subseteq \mng{\CompPure{A}}$
          \end{enumerate}
          Let $N \in \mng{A}$. By $\rlnm{Top}$, $N \in \mng{\topty}$.
          By the induction hypothesis $\mng{B} \subseteq  \mng{\topty} \setminus \mng{A}$ and that $N \in \mng{A}$, we must have $N \notin \mng{B}$.
          Thus, $N \in \mng{\topty} \setminus \mng{B} = \mng{\CompPure{B}}$ as required.
    \item[\rlnm{Trans}] Suppose $A \subtype B$. The induction hypotheses give:
          \begin{enumerate}[({IH}1)]
            \item $\mng{A} \subseteq \mng{C}$
            \item $\mng{C} \subseteq \mng{B}$
          \end{enumerate}
          Then, by transitivity of set inclusion, $\mng{A} \subseteq \mng{C} \subseteq \mng{B}$.
  \end{description}
\end{proof}

\subsection{Proof of Equivalence of the Subtyping Systems, Theorem~\ref{thm:subtyping-equivalence}}

We begin by showing that the alternative characterisation admits transitivity.

\begin{lemma}[Admissibility of Transitivity]\label{lem:alt-transitivity}
  If $A \subtypeA B$ and $B \subtypeA C$ then $A \subtypeA C$.
\end{lemma}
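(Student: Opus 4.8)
The plan is to prove this exactly as one proves cut-admissibility for a sequent calculus: by a nested induction on the pair $(|B|,\ |\mathcal{D}_1| + |\mathcal{D}_2|)$ ordered lexicographically, where $B$ is the ``cut type'', $|B|$ is its number of type constructors, $\mathcal{D}_1$ is the given derivation of $A \subtypeA B$, $\mathcal{D}_2$ the given derivation of $B \subtypeA C$, and $|\mathcal{D}_i|$ is the height of $\mathcal{D}_i$. The whole point of the alternative system of \cref{fig:sub-alt} is that every combination of last rules of $\mathcal{D}_1$ and $\mathcal{D}_2$ can be dispatched so that either $B$ shrinks or one derivation shrinks while $B$ and the other derivation are untouched, so the argument is a (large) case analysis of this kind. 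I would organise the cases into three groups: trivial cases, commuting cases, and principal cases.

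The trivial cases: if $\mathcal{D}_1$ ends in \rlnm{Refl} then $A = B$ and $\mathcal{D}_2$ already witnesses $A \subtypeA C$, and dually if $\mathcal{D}_2$ ends in \rlnm{Refl}; and whenever the last rule of $\mathcal{D}_2$ is an axiom whose conclusion is determined by $C$ alone (\rlnm{Top}) or the last rule of $\mathcal{D}_1$ is an axiom whose conclusion is determined by $A$ alone (\rlnm{CompLTop}), the goal $A \subtypeA C$ is immediate. The commuting cases exploit the observation — evident from \cref{fig:sub-alt} — that several rules constrain only one side of their conclusion while leaving the other schematic: \rlnm{UnionL}, \rlnm{CompLTop}, \rlnm{CompLOk}, \rlnm{CompLUn}, \rlnm{CompLC} touch only the left, and \rlnm{Top}, \rlnm{OkR}, \rlnm{UnionR}, \rlnm{CompRC}, \rlnm{CompRUn}, \rlnm{CompRInt}, \rlnm{CompRAtom2}, \rlnm{CompRArr}, \rlnm{CompRAtoms}, \rlnm{CompRPair} touch only the right. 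If $\mathcal{D}_1$ ends in one of the first group, reapply that rule to the target $A \subtypeA C$, discharging its premises by the induction hypothesis with $B$ unchanged and a strictly smaller first derivation; symmetrically if $\mathcal{D}_2$ ends in one of the second group. After these reductions, the last rules of $\mathcal{D}_1$ and $\mathcal{D}_2$ both genuinely engage with the shape of $B$ — in particular, when $B$ is one of the ``base'' types $\topty, \intty, \atomty, \atom{a}, \okty, \CompPure{\intty}, \CompPure{\atom{a}}, \CompPure{\okty}$, inspection shows $\mathcal{D}_1$ must have ended in \rlnm{Refl} or a left-commuting rule, so nothing remains.

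The principal cases proceed by the top constructor of $B$. If $B = (B_1,B_2)$, $\mathcal{D}_1$ ends in \rlnm{Pair} and $\mathcal{D}_2$ in \rlnm{Pair}, \rlnm{CompRPairL} or \rlnm{CompRPairR}; in each subcase I cut through $B_1$ and/or $B_2$ (both strictly smaller than $B$) via the outer induction hypothesis and reassemble with the appropriate right rule. The case $B = B_1 \to B_2$ is similar, with \rlnm{Fun} against \rlnm{Fun}, cutting through $B_1$ and $B_2$; and $B = \bigvee B_i$ pairs \rlnm{UnionR} (selecting some $B_j$) against \rlnm{UnionL}, cutting through $B_j$. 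The substantial case is $B = \CompPure{B'}$, which splits further on $B'$: for $B' = \CompPure{B''}$ the derivations end in \rlnm{CompRC} and \rlnm{CompLC} and we cut through $B''$; for $B' = (B_1,B_2)$ the right side uses \rlnm{CompRPairL}/\rlnm{CompRPairR}/\rlnm{CompLPair} against \rlnm{CompLPair}, cutting through $\CompPure{B_1}$ and/or $\CompPure{B_2}$; for $B' = B_1 \to B_2$ it is \rlnm{CompRArr}/\rlnm{CompLArr} against \rlnm{CompLArr}, cutting through $\CompPure{B_1}$ and $\CompPure{B_2}$; for $B' = \bigvee B_i$ it is \rlnm{CompRUn} against \rlnm{CompLUn}, cutting through some $\CompPure{B_j}$; and for $B' = \atomty$ it is \rlnm{CompRAtoms} against \rlnm{CompLAtoms}. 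In every subcase the residual cut is through a proper subexpression of $\CompPure{B'}$, so the outer hypothesis applies; and several subcases close with no cut at all, by applying a ``coarse'' complement-right rule (e.g.\ \rlnm{CompRPair}, \rlnm{CompRArr}, \rlnm{CompRAtom2}) directly with the new right-hand side. All side conditions ($a \neq b$, membership in $\mathcal{B}$ or $\{\intty,\pairty,\funty\}$, etc.) are carried over verbatim.

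The only real obstacle is the complement case $B = \CompPure{B'}$: one must check that the numerous \rlnm{CompR$\ast$} rules that can produce a complemented type and the \rlnm{CompL$\ast$} rules that can consume one always pair up so that the residual cut strictly decreases (or vanishes), and that no last-rule configuration is left unmatched — which is why the alternative system restricts the complement rules by the syntax of the complemented type. Once this lemma is established, admissibility of the remaining original rules of \cref{fig:simple-subtyping} (in particular \rlnm{Trans}) is routine, and \cref{thm:subtyping-equivalence} follows by an easy induction in each direction.
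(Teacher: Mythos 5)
Your proposal is correct and takes essentially the same route as the paper's proof: a case analysis on the pair of last rules of the two derivations, organised into the same three groups (trivial, one-sided/commuting, and principal cases keyed to the shape of the pivot $B$). The only difference is the induction measure --- the paper uses a plain strong induction on the sum of the heights of the two derivations, which already suffices because every residual cut in your principal cases is between proper subderivations of the originals, so the extra lexicographic component on the size of $B$ is harmless but unnecessary.
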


\begin{proof}
  The proof is by strong induction on the sum of the heights of the two derivations in the antecedent.  We analyse cases on the rules that can be used to conclude two judgements of the required form.  

  To begin, we note the following.  Whenever the judgement on the left is concluded using a rule whose conclusion admits an arbitrary type on the RHS, then $A \subtypeA C$ can be deduced no matter which rule was used to justify the judgement on the right.  
  \begin{description}
    \item[\rlnm{Refl}] In this case, $A=B$ and hence the second premise is already a proof of $A \subtypeA C$.
    \item[\rlnm{UnionL}] In this case, $A$ is of shape $\textstyle\bigvee_{i=1}^k A_i'$ and we may assume smaller proofs of $A_i' \subtypeA B$ for all $i \in [1,k]$.  Then it follows from the induction hypothesis that there are proofs of $A_i' \subtypeA C$ for all $i \in [1,k]$ and thus the result follows by \rlnm{UnionL}.
    \item[\rlnm{CompLTop}] In this case, $A$ is of shape $\botty$ and the result follows immediately by \rlnm{CompLTop}.
    \item[\rlnm{CompLOk}] In this case, $A = \CompPure{\okty}$ and we may assume a smaller proof of $D \subtype B$ where $D$ is one of the types allowed by the side condition.  Then it follows from the induction hypothesis that $D \subtype C$ and so the result follows by \rlnm{CompLOk}.
    \item[\rlnm{CompLUn}] In this case, $A$ is of shape $\CompPure*{\textstyle\bigvee_{i=1}^k A_i'}$ and we may assume a smaller proof of $A_j \subtypeA B$ for some $j$.  Then it follows from the induction hypothesis that $A_j \subtypeA C$ and the conclusion follows from \rlnm{CompLUn}.
    \item[\rlnm{CompLC}] In this case, $A$ is of shape $\CompPure{\CompPure{A'}}$ and we may assume a smaller proof of $A' \subtypeA B$.  It follows from the induction hypothesis that $A' \subtypeA C$ from which the result follows by \rlnm{CompLC}.
  \end{description}

  We note similarly that, whenever the judgement on the right is obtained using one of the rules that admits an arbitrary type on left, then $A \subtypeA C$ can be deduced independently of how the judgement on the left is justified.
  \begin{description}
    \item[\rlnm{Refl}] In this case $B = C$ and so the first premise is actually a proof of $A \subtypeA C$.
    \item[\rlnm{Top}] In this case, $C = \topty$ and and the result follows immediately from another application of \rlnm{Top}.
    \item[\rlnm{OkR}] In this case $C=\okty$ and we may assume some type $D$ of those allowed by the side condition, and a smaller proof of $B \subtypeA D$.  It follows from the induction hypothesis that $A \subtypeA D$ and thus the result follows by \rlnm{OkR}.
    \item[\rlnm{CompRInt}, \rlnm{CompRPair}, \rlnm{CompRArr}, \rlnm{CompRAtoms}, \rlnm{CompRAtom2}] These cases follow by the same argument to the previous.
    \item[\rlnm{CompRUn}] In this case, $C$ has shape $\CompPure*{\textstyle\bigvee_{i=1}^k C_i}$ and we may assume a smaller proof of $B \subtype \CompPure{C_i}$ for each $i$.  Then it follows from the induction hypothesis that $A \subtypeA \CompPure{C_i}$ for each $i$, and the result follows by \rlnm{CompRUn}.
    \item[\rlnm{CompRC}] In this case, $C$ has shape $\CompPure{\CompPure{C'}}$ and we may assume a smaller proof of $B \subtypeA C'$.  Then it follows from the induction hypothesis that $A \subtypeA C'$ and the result follows by \rlnm{CompRC}.
    \item[\rlnm{UnionR}] In this case, $C$ has shape $\bigvee_{i=1}^k C_i$ and we may assume a smaller proof of $B \subtype C_j$ for some $j$.  Then it follows from the induction hypothesis that $A \subtype C_j$ and the result follows by \rlnm{UnionR}.
  \end{description}

  Having discharged all those combinations where the rule for the left premise had an arbitrary type $B$ on the RHS in the conclusion and the rule justifying the right premise has an arbitrary type $B$ on the LHS in the conclusion, all that remains to consider those combinations of rules that require agreement of a specific syntactic shape for the pivot type $B$.
  \begin{description}
    \item[\rlnm{CompLAtoms / CompRAtoms}] In this case, $B$ is of shape $\CompPure{\atomty}$ and $C$ is of shape $\CompPure{\atom{a}}$ we may assume some type $D$ allowed by the side condition and a smaller proof of $A \subtypeA D$.  Then $A \subtypeA \CompPure{\atom{a}}$ follows from \rlnm{CompRAtom2}.
    \item[\rlnm{CompRPair} / \rlnm{CompLPair}] In this case, $B$ is of shape $\CompPure{(B',C')}$ and $C$ is of shape $\CompPure{(B'',C'')}$ and we may assume some type $D$ as allowed by the side condition on the left rule, such that $A \subtype D$.  Then the result follows immediately by \rlnm{CompRPair}.
    \item[\rlnm{CompRPairL} / \rlnm{CompLPair}]  In this case, $A$ is of shape $(A',B')$, $B$ is of shape $\CompPure{(C',D')}$ and $C$ is of shape $\CompPure{(C'',D'')}$ and we may assume smaller proofs of $A' \subtypeA \CompPure{C'}$ and $\CompPure{C'} \subtypeA \CompPure{C''}$ and $\CompPure{D'} \subtypeA \CompPure{D''}$.  Then it follows from the induction hypothesis that $A' \subtypeA \CompPure{C''}$ and the result follows by \rlnm{CompRPairL}.
    \item[\rlnm{CompRPairR} / \rlnm{CompLPair}] This case is symmetrical to the previous.
    \item[\rlnm{Pair} / \rlnm{CompRPairL}] In this case, $A$ has shape $(A_1,A_2)$, $B$ has shape $(B_1,B_2)$, $C$ has shape $\CompPure{(C_1,C_2)}$ and we may assume smaller proofs of $A_1 \subtypeA B_1$, $A_2 \subtypeA B_2$ and $B_1 \subtypeA \CompPure{C_1}$.  Then it follows from the induction hypothesis that $A_1 \subtypeA \CompPure{C_1}$ and the result follows from \rlnm{CompRPairL}.
    \item[\rlnm{Pair} / \rlnm{CompRPairR}] This case is symmetrical to the previous.
    \item[\rlnm{CompRArr} / \rlnm{CompLArr}] In this case, $B$ is of shape $\CompPure*{B_1 \to B_2}$ and $C$ is of shape $\CompPure*{B_1' \to B_2'}$ and we may assume some type $D$ of those allowed by the side condition of the first rule, and a smaller proof of $A \subtype D$.  Then the result follows immediately by \rlnm{CompRArr}.
    \item[\rlnm{CompRUn} / \rlnm{CompLUn}] In this case, $B$ is of shape $\CompPure{(\textstyle\bigvee_{i=1}^k B_i)}$ and we may assume smaller proofs of $A \subtypeA B_i$ for each $i \in [1,k]$ and $B_j \subtypeA C$ for some $j \in [1,k]$.  It follows from the induction hypothesis that $A \subtype C$, as required.
    \item[\rlnm{CompRC} / \rlnm{CompLC}] In this case, $B$ is of shape $\CompPure{\CompPure{B'}}$ and we may assume smaller proofs of $A \subtype B'$ and $B' \subtype C$ from which the result follows by the induction hypothesis.
    \item[\rlnm{Pair} / \rlnm{Pair}] In this case, $A$ is of shape $(A_1,A_2)$, $B$ is of shape $(B_1,B_2)$, $C$ is of shape $(C_1,C_2)$ and we may assume smaller proofs of $A_1 \subtypeA B_1$, $A_2 \subtypeA B_2$, $B_1 \subtypeA C_1$ and $B_2 \subtypeA C_2$.  It follows from the induction hypothesis that $A_1 \subtypeA C_1$ and $A_2 \subtypeA C_2$ and so the result follows from \rlnm{Pair}.
    \item[\rlnm{Fun} / \rlnm{Fun}] This case is analogous to the previous.
    \item[\rlnm{CompLPair} / \rlnm{CompLPair}]  This case is analogous to the previous.
    \item[\rlnm{CompLArr} / \rlnm{CompLArr}] This case is analogous to the previous.
    \item[\rlnm{UnionR} / \rlnm{UnionL}] In this case, $B$ has shape $\textstyle\bigvee_{i=1}^k B_i$ and we may assume smaller proofs of $A \subtypeA B_j$ for some $j \in [1,k]$ and $B_i \subtype C$ for all $i \in [1,k]$.  Therefore, the result follows directly from the induction hypothesis.
  \end{description}  
\end{proof}

We can now see that the complement of the alternative characterisation is an order involution in the following sense.
\begin{lemma}
  If $A \subtypeA \CompPure{\CompPure{B}}$ or $\CompPure{\CompPure{A}} \subtypeA B$ then $A \subtypeA B$
\end{lemma}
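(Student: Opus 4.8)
The plan is to obtain both implications as immediate consequences of the admissibility of transitivity for the alternative system, Lemma~\ref{lem:alt-transitivity}, together with two trivial ``double-complement cancellation'' derivations. Specifically, I would first note that $\CompPure{\CompPure{B}} \subtypeA B$ is derivable for every $B$: it is just \rlnm{CompLC} applied to the axiom $B \subtypeA B$ (an instance of \rlnm{Refl}). Dually, $B \subtypeA \CompPure{\CompPure{B}}$ is derivable by \rlnm{CompRC} applied to $B \subtypeA B$. Both are one-line derivations using exactly the rules whose purpose is to manage double complements.

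Given these, the first case is immediate: assuming $A \subtypeA \CompPure{\CompPure{B}}$, I would chain it with $\CompPure{\CompPure{B}} \subtypeA B$ via Lemma~\ref{lem:alt-transitivity} to conclude $A \subtypeA B$. The second case is symmetric: assuming $\CompPure{\CompPure{A}} \subtypeA B$, I would chain $A \subtypeA \CompPure{\CompPure{A}}$ (obtained as above) with the hypothesis, again via Lemma~\ref{lem:alt-transitivity}, to conclude $A \subtypeA B$.

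The one thing worth double-checking before writing this up is the exact orientation of \rlnm{CompLC} and \rlnm{CompRC}, namely that from a premise $A \subtypeA B$ they respectively conclude $\CompPure{\CompPure{A}} \subtypeA B$ and $A \subtypeA \CompPure{\CompPure{B}}$; with that confirmed, the \rlnm{Refl} instantiations above are correct. I do not expect any genuine obstacle: all of the real work has already been done in establishing Lemma~\ref{lem:alt-transitivity}, and the present statement is in effect just recording a convenient corollary of it. One could instead attempt a direct induction on the derivation of, say, $A \subtypeA \CompPure{\CompPure{B}}$, but that would needlessly re-do case analysis already subsumed by transitivity admissibility, so I would avoid that route.
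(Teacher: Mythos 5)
Your proposal is correct and matches the paper's own proof: the paper likewise derives $\CompPure{\CompPure{B}} \subtypeA B$ from \rlnm{Refl} via \rlnm{CompLC} and concludes by the admissibility of transitivity (Lemma~\ref{lem:alt-transitivity}), treating the second case as analogous exactly as you do. Your check of the rule orientations is right, so nothing further is needed.
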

\begin{proof}
  Suppose $A \subtypeA \CompPure{\CompPure{B}}$ (the other case is analogous).  Then we have $B \subtypeA B$ by reflexivity and hence \begin{added}$\CompPure{\CompPure{B}} \subtypeA B$\end{added} by \rlnm{CompLC}.  Then the result follows by transitivity.
\end{proof}

We next show that the complement operator in the alternative characterisation is also order-reversing.

\begin{lemma}
If $A \subtypeA B$ then $\CompPure{B} \subtypeA \CompPure{A}$.
\end{lemma}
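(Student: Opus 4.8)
This is the analogue, for $\subtypeA$, of the order-reversal half of Lemma~\ref{thm:ortholattice}; since the equivalence of $\subtype$ and $\subtypeA$ is precisely what this lemma is being marshalled to prove, it must be established directly in the alternative system. The plan is to induct on the height of the derivation of $A \subtypeA B$, reasoning by cases on the last rule applied. Three ingredients already in hand oil the induction: admissibility of transitivity (Lemma~\ref{lem:alt-transitivity}); the identities $X \subtypeA \CompPure{\CompPure{X}}$ and $\CompPure{\CompPure{X}} \subtypeA X$, immediate from \rlnm{Refl} with \rlnm{CompRC}, resp.\ \rlnm{CompLC}, so that $\CompPure{\CompPure{X}} \tyeq X$ may be used freely; and the preceding lemma, which collapses a double complement wherever it appears in a goal.

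The cases split into three families. For rules whose conclusion $A \subtypeA B$ has $B$ not a complement --- \rlnm{Refl}, \rlnm{Top}, \rlnm{Atom}, \rlnm{OkR}, \rlnm{Pair}, \rlnm{Fun}, \rlnm{UnionR}, \rlnm{UnionL} --- the reversed judgement $\CompPure{B} \subtypeA \CompPure{A}$ is obtained from the \emph{complement-dual} rule: \rlnm{Refl} is self-dual; \rlnm{Top}, \rlnm{Atom} dualise to \rlnm{CompLTop}, \rlnm{CompLAtoms}; \rlnm{OkR} to \rlnm{CompLOk}; \rlnm{UnionR}, \rlnm{UnionL} to \rlnm{CompLUn}, \rlnm{CompRUn}; and \rlnm{Pair}, \rlnm{Fun} to \rlnm{CompLPair}, \rlnm{CompLArr}. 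In each subcase with premises, the premises the dual rule wants are exactly what the induction hypothesis supplies on the subderivations, after, where needed, erasing a double complement via the preceding lemma to put the hypothesis in the required shape. Symmetrically, for rules whose conclusion has $A$ a complement --- \rlnm{CompLTop}, \rlnm{CompLAtoms}, \rlnm{CompLPair}, \rlnm{CompLArr}, \rlnm{CompLOk}, \rlnm{CompLUn}, \rlnm{CompLC} --- one collapses the outer double complement on the left of the goal (here $\CompPure{A}$ is really $\CompPure{\CompPure{A'}}$) and dualises to the corresponding right-hand rule (\rlnm{Top}, \rlnm{Atom}, \rlnm{Pair}, \rlnm{Fun}, \rlnm{OkR}, \rlnm{UnionR}, and for \rlnm{CompLC} just transitivity with $\CompPure{\CompPure{X}} \tyeq X$).

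The remaining family is the complement-introduction rules that encode disjointness: \rlnm{CompRAtom1}, \rlnm{CompRAtom2}, \rlnm{CompRInt}, \rlnm{CompRPair}, \rlnm{CompRArr}, \rlnm{CompRAtoms}, \rlnm{CompRUn}, \rlnm{CompRPairL}, \rlnm{CompRPairR}, \rlnm{CompRC}. In each the conclusion is $A \subtypeA \CompPure{C}$, so by the preceding lemma the goal $\CompPure{\CompPure{C}} \subtypeA \CompPure{A}$ is equivalent to $C \subtypeA \CompPure{A}$. For the axiom \rlnm{CompRAtom1} this is \rlnm{CompRAtom1} again with the atoms swapped; \rlnm{CompRC} is closed by transitivity and \rlnm{CompRUn} by \rlnm{UnionL} together with the induction hypotheses. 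For the unary base-shape rules (\rlnm{CompRAtom2}, \rlnm{CompRInt}, \rlnm{CompRPair}, \rlnm{CompRArr}, \rlnm{CompRAtoms}) we have a subderivation of $A \subtypeA D$ for a base shape $D$; the induction hypothesis gives $\CompPure{D} \subtypeA \CompPure{A}$, and it remains to supply $C \subtypeA \CompPure{D}$, a pure disjointness inequality between base shapes which I expect to be a direct instance of one of the \rlnm{CompR*} axioms, after which transitivity finishes the case. The genuine obstacle is the pair (and analogously arrow) rules \rlnm{CompRPairL} and \rlnm{CompRPairR}: the induction hypotheses here yield, respectively, $C_1 \subtypeA \CompPure{A_1}$, and $\CompPure{\okty} \subtypeA \CompPure{A_1}$ together with $C_2 \subtypeA \CompPure{A_2}$, and one must then reassemble $(C_1, C_2) \subtypeA \CompPure{(A_1, A_2)}$ while respecting the asymmetric side conditions of these rules, which constrain only the \emph{left} component, mirroring the asymmetric definition of $\mng{(A,B)}$. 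Making these pair cases line up --- in particular checking that the $\okty$-guard transfers correctly under reversal, possibly via an inner case analysis on how the guarded premise was derived --- is where I expect the real work to lie; everything else is a long but mechanical dualisation.
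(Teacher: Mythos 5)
Your overall strategy is exactly the paper's: induct on the derivation of $A \subtypeA B$ and dualise each rule, using the already-established admissibility of transitivity and the involution lemma to collapse double complements. Every case you actually discharge is handled the same way in the paper's proof and is correct. (One small remark: there is no asymmetric arrow analogue of \rlnm{CompRPairL}/\rlnm{CompRPairR} in Figure~\ref{fig:sub-alt}, so the ``analogously arrow'' worry is vacuous --- \rlnm{CompRArr} and \rlnm{CompLArr} dualise to one another without incident.)

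The gap is the case you flagged and left open, \rlnm{CompRPairR}, and your instinct about the $\okty$-guard is exactly right --- but no inner case analysis on how the guarded premise was derived can rescue it with the rules as given. The derivation concludes $(A_1,A_2) \subtypeA \CompPure{(C_1,C_2)}$ from $A_1 \subtypeA \okty$ and $A_2 \subtypeA \CompPure{C_2}$. The reversed goal $\CompPure{\CompPure{(C_1,C_2)}} \subtypeA \CompPure{(A_1,A_2)}$ can only make progress via \rlnm{CompLC} (or \rlnm{CompRPair}, whose premise is strictly harder), reducing to $(C_1,C_2) \subtypeA \CompPure{(A_1,A_2)}$; and every rule that could conclude this demands a premise about $C_1$ --- either $C_1 \subtypeA \okty$ (\rlnm{CompRPairR}) or $C_1 \subtypeA \CompPure{A_1}$ (\rlnm{CompRPairL}) --- whereas $C_1$ is completely unconstrained in the original derivation. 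Concretely, take $A = (\intty,\,\intty)$ and $B = \CompPure{(\topty,\,\atomty)}$: then $A \subtypeA B$ holds by \rlnm{CompRPairR}, since $\intty \subtypeA \okty$ and $\intty \subtypeA \CompPure{\atomty}$, but $\CompPure{B} \subtypeA \CompPure{A}$ reduces to $(\topty,\,\atomty) \subtypeA \CompPure{(\intty,\,\intty)}$, which is underivable because each candidate premise ($\topty \subtypeA \CompPure{\intty}$, $\topty \subtypeA \okty$, or $(\topty,\,\atomty) \subtypeA D$ for $D \in \{\intty,\funty,\atomty\}$) fails. So the statement is actually false for the rule set of Figure~\ref{fig:sub-alt} as printed. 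You should not take this as a defect of your write-up alone: the paper's own proof disposes of \rlnm{CompRPairR} with ``this case is symmetrical to the previous'', and it is precisely that symmetry that does not hold. A plausible repair is to add the variant of \rlnm{CompRPairR} whose $\okty$-premise constrains $C$ rather than $A$ (sound for the same semantic reason, and then the two variants reverse into one another), but that is a change to the system, not something the induction can supply.
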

\begin{proof}
  The proof is by induction on the derivation of $A \subtypeA B$.
  \begin{description}
    \item[\rlnm{Refl}] In this case the result is immediate.
    \item[\rlnm{Top}] In this case, the result follows immediately by \rlnm{CompLTop}.
    \item[\rlnm{OkR}] In this case, $B = \okty$.  It follows from the induction hypothesis that there is some type $C$ as required by the side condition, such that $\CompPure{C} \subtypeA \CompPure{A}$ and hence the result follows by \rlnm{CompLOk}.
    \item[\rlnm{Atom}] In this case, $A$ is of shape $\atom{a}$ and $B = \atomty$.  Then, since $\CompPure{\atomty} \subtypeA \CompPure{\atomty}$ by \rlnm{Refl}, $\CompPure{\atomty} \subtypeA \CompPure{\atom{a}}$ by \rlnm{CompRAtom2}.
    \item[\rlnm{CompRAtom1}] In this case, $A$ is of shape $\atom{b}$ and $B$ of shape $\CompPure{\atom{a}}$ and we may assume $a \neq b$.  Then also by \rlnm{CompRAtom1}, $\atom{a} \subtypeA \CompPure{\atom{b}}$ and the result follows by \rlnm{CompLC}.
    \item[\rlnm{CompRInt}, \rlnm{CompRPair}, \rlnm{CompRArr}, \rlnm{CompRAtoms}, \rlnm{CompRAtom2}] These cases are similar, so we will only argue the first.  In this case, $B = \CompPure{\intty}$ and it follows from the induction hypothesis that there is some type $D$ of those allowed by the side condition, such that $\CompPure{D} \subtype \CompPure{A}$.  Whichever type $D$ was chosen, it follows that $\intty \subtypeA \CompPure{D}$ by one of the rules in this case.  Therefore, by transitivity, $\intty \subtypeA \CompPure{A}$ and $\CompPure{\CompPure{\intty}} \subtypeA \CompPure{A}$ by \rlnm{CompLC}.
    \item[\rlnm{CompRPairL}, \rlnm{CompRPairR}] We will just argue the first, since the second is similar.  In this case, $A$ is of shape $(A_1,A_2)$ and $B$ of shape $\CompPure{(B_1,B_2)}$ and it follows from the induction hypothesis that $\CompPure{\CompPure{B_1}} \subtypeA \CompPure{A_1}$.  Then $B_1 \subtypeA \CompPure{A_1}$ by the order involution property of complement.  By \rlnm{CompRPairL} we can then conclude $(B_1,B_2) \subtypeA \CompPure*{A_1,A_2}$ and, finally, $\CompPure{\CompPure{(B_1,B_2)}} \subtypeA \CompPure{(A_1,A_2)}$ follows by \rlnm{CompLC}.
    \item[\rlnm{CompRUn}, \rlnm{CompLUn}] These cases are symmetrical, so we only argue the former.  In this case, $B$ is of shape $\CompPure{(\bigvee_{i=1}^k B_i')}$ and it follows from the induction hypothesis that $\CompPure{\CompPure{B_i'}} \subtypeA \CompPure{A}$ for all $i \in [1,k]$.  Then, by the order involution property, $B_i' \subtypeA \CompPure{A}$ for all $i \in [1,k]$.  Hence, $\textstyle\bigvee_{i=1}^k B_i \subtypeA \CompPure{A}$ by \rlnm{UnionL}.  Finally, the desired result is obtained from this by \rlnm{CompLC}.
    \item[\rlnm{CompRC}, \rlnm{CompLC}] These cases are symmetrical, so we only argue the former.  In this case, $B$ is of shape $\CompPure{\CompPure{B'}}$ and it follows from the induction hypothesis that $\CompPure{B'} \subtype \CompPure{A}$.  Then $\CompPure{\CompPure{\CompPure{B'}}} \subtypeA \CompPure{A}$ follows by \rlnm{CompLC}.
    \item[\rlnm{Pair}, \rlnm{Fun}] These cases are similar, so we only argue the former.  In this case, $A$ is of shape $(A_1,A_2)$ and $B$ of shape $(B_1,B_2)$, and it follows from the induction hypothesis that $\CompPure{B_1} \subtype \CompPure{A_1}$ and $\CompPure{B_2} \subtype \CompPure{A_2}$.  Then the result follows by \rlnm{CompLPair}.
    \item[\rlnm{UnionR}, \rlnm{UnionL}] These cases are symmetrical, so we only argue the former.  In this case, $B$ is of shape $\textstyle\bigvee_{i=1}^k B_i'$ and it follows from the induction hypothesis that there is some $j$ such that $\CompPure{B_j'} \subtypeA \CompPure{A}$.  Then the result is obtained by \rlnm{CompLUn}.
    \item[\rlnm{CompLPair}, \rlnm{CompLArr}] These cases are analogous, so we only argue the former.  In this case, $A$ is of shape $\CompPure{(A_1,A_2)}$ and $B$ of shape $\CompPure{(B_1,B_2)}$ and it follows from the induction hypothesis that $\CompPure{\CompPure{B_1}} \subtypeA \CompPure{\CompPure{A_1}}$ and $\CompPure{\CompPure{B_2}} \subtypeA \CompPure{\CompPure{A_2}}$.  Then it follows from the order involution property that $B_1 \subtypeA A_1$ and $B_2 \subtypeA A_2$, and by \rlnm{Pair}, $(B_1,\,B_2) \subtypeA (A_1,A_2)$.  Finally, the result follows by \rlnm{CompLC} and \rlnm{CompRC}.
    \item[\rlnm{CompLOk}] In this case, $A$ is of shape $\CompPure{\okty}$ and it follows from the induction hypothesis that there is some type $\CompPure{C}$ of the set allowed by the side condition, such that $\CompPure{B} \subtypeA \CompPure{\CompPure{C}}$.  Then $\CompPure{B} \subtype C$ by the order involution property, and $\CompPure{B} \subtypeA \okty$ follows from \rlnm{OkR}.  Finally, the result can be obtained by \rlnm{CompRC}.
    \item[\rlnm{CompLAtoms}]. In this case, $A$ is of shape $\CompPure{\atomty}$ and $B$ is of shape $\CompPure{\atom{a}}$ the result can be obtained from \rlnm{Atom} by \rlnm{CompLC} and \rlnm{CompRC}.
  \end{description}
\end{proof}

This makes it easy to recover \rlnm{CompL} and \rlnm{CompR}, i.e. to show that these are admissible in the alternative system.
Having recovered \rlnm{Trans}, \rlnm{CompL} and \rlnm{CompR}, it is straightforward to see that the two systems are equivalent.

Theorem~\ref{thm:subtyping-equivalence}:
\begin{quote}
  $A \subtype B$ iff $A \subtypeA B$
\end{quote}

\begin{proof}
  In the forward direction, the proof is by induction on $A \subtype B$.
  \begin{description}
    \item[\rlnm{Ok}] In this case, $A$ is $\intty \vee \pairvalty \vee \funty \vee \atomty$ and $B$ is $\okty$.  Then, since each of the disjuncts $X$ can be shown a subtype of $\okty$ by \rlnm{Refl} and \rlnm{OkR}, the result follows by \rlnm{UnionL}.
    \item[\rlnm{Disj}] In this case, $B$ is of shape $\CompPure{C}$, $A \neq C$ and either $A$ and $C$ are atoms, or they are both drawn from the set $\{\intty,\,\pairty,\,\funty,\,\atomty\}$.  For each such pair of $A$ and $C$, the required conclusion can be derived using one of \rlnm{CompRAtom1}, \rlnm{CompRInt}, \rlnm{CompRPair}, \rlnm{CompRArr} or \rlnm{CompRAtoms}.
    \item[\rlnm{PairC}] In this case, $A$ is of shape $(\CompPure{A'},\topty) \vee (\okty,\CompPure{B'})$ and $B$ is $\CompPure{(A',B')}$.  By \rlnm{Refl} $\CompPure{A'} \subtypeA \CompPure{A'}$ and so $(\CompPure{A'},\topty) \subtypeA \CompPure{(A',B')}$ by \rlnm{CompRPairL}.  A similar argument shows $(\okty,\CompPure{B'}) \subtypeA \CompPure{(A',B')}$.  Hence, the required conclusion follows by \rlnm{UnionL}.
    \item[\rlnm{CompR}] In this case, $B$ is of shape $\CompPure{C}$.  It follows from the induction hypothesis that $C \subtypeA \CompPure{A}$.  Then, by the order-reversing property of complement, $\CompPure{\CompPure{A}} \subtypeA \CompPure{C}$ and the result follows by the involution property.
    \item[\rlnm{CompL}] This is symmetrical to the previous case.
    \item[\rlnm{Refl}, \rlnm{Top} \rlnm{Atom}, \rlnm{Pair}, \rlnm{Fun}, \rlnm{UnionL}, \rlnm{UnionR}] These rules are identical in both systems.  
  \end{description}
  In the backward direction, the proof is by induction on $A \subtypeA B$.
  \begin{description}
    \item[\rlnm{OkR}] In this case, $B = \okty$.  It follows from the induction hypothesis that there is some $C \in \{\intty,\pairvalty,\funty,\atomty\}$ such that $A \subtype C$.  Then $C \subtype \intty \vee \pairvalty \vee \funty \vee \atomty$ follows by \rlnm{UnionR} and $A \subtype \okty$ follows by transitivity.
    \item[\rlnm{CompRAtom1}] In this case $A$ is of shape $\atom{b}$ and $B$ is of shape $\CompPure{\atom{a}}$.  Then the result follows from \rlnm{Disj}.
    \item[\rlnm{CompRAtom2}] In this case $B$ is of shape $\CompPure{\atom{a}}$ and it follows from the induction hypothesis that there is some type $C$ allowed by the side condition and such that $A \subtype C$.  In all cases, $A \subtype \CompPure{\atomty}$ follows by \rlnm{Disj}, $\CompPure{\atomty} \subtype \CompPure{\atom{a}}$ follows from \rlnm{Atom} by the order-reversing property of complement, and then the result follows by transitivity.
    \item[\rlnm{CompRInt}, \rlnm{CompRPair}, \rlnm{CompRArr}, \rlnm{CompRAtoms}] These cases are similar so we only show the first.  In this case, $B = \CompPure{\intty}$ and it follows from the induction hypothesis that there is some $C \in \{\pairty,\funty,\atomty\}$ such that $A \subtype C$.  Then the result follows from \rlnm{Disj} and transitivity. 
    \item[\rlnm{CompRPairL}, \rlnm{CompRPairR}] These cases are similar, so we only show the former.  In this case, $A$ is of shape $(C,D)$ and $B$ is of shape $\CompPure{(C',D')}$ and the induction hypothesis yields $C \subtype \CompPure{C'}$.  Since $D \subtype \topty$ by \rlnm{Top}, $(C,D) \subtype (\CompPure{C'},\topty)$ by \rlnm{Pair}.  Then $(C,D) \subtype (\CompPure{C'},\topty) \vee (\okty,\CompPure{D'})$ by \rlnm{UnionR} and transitivity.  Finally, the result is obtained by \rlnm{PairC} and transitivity.
    \item[\rlnm{CompRUn}, \rlnm{CompLUn}] These cases are symmetrical, so we only show the former.  In this case, $B$ is of shape $\CompPure*{\textstyle\bigvee_{i=1}^k B_i}$ and it follows from the induction hypothesis that $A \subtype \CompPure{B_i}$ for all $i$.  Then it follows from \rlnm{CompR} that $\CompPure{\CompPure{B_i}} \subtype \CompPure{A}$ for all $i$, and therefore that $B_i \subtype \CompPure{A}$ for all $i$.  Consequently, $\bigvee_{i=1}^k B_i \subtype \CompPure{A}$ and the result follows by \rlnm{CompR}.
    \item[\rlnm{CompRC}, \rlnm{CompLC}] These cases are symmetrical, so we only show the former.  In this case, $B$ is of shape $\CompPure{\CompPure{D}}$ and it follows from the induction hypothesis that $A \subtype D$.  Then it follows from the involution property of complement (\cref{thm:ortholattice}) that $A \subtype \CompPure{\CompPure{D}}$.
    \item[\rlnm{CompLPair}, \rlnm{CompLArr}] These cases are similar, so we show only the former.  In this case, $A$ is of shape $\CompPure*{C_1,C_2}$ and $B$ is of shape $\CompPure*{D_1,D_2}$ and it follows from the induction hypothesis that $\CompPure{C_1} \subtype \CompPure{D_1}$ and $\CompPure{C_2} \subtype \CompPure{D_2}$.  Then since complement is order-reversing, $D_1 \subtype C_1$ and $D_2 \subtype C_2$.  Hence, $(D_1,D_2) \subtype (C_1,C_2)$ by \rlnm{Pair} and the result follows by the order-reversing property of complement (\cref{thm:ortholattice}).
    \item[\rlnm{CompLOk}] In this case, $A = \CompPure{\okty}$ and it follows from the induction hypothesis that there is some $C \in \{\intty, \funty, \atomty, \pairvalty\}$ such that $\CompPure{C} \subtype B$.  Then $\CompPure{B} \subtype C$ by \rlnm{CompL} and $C \subtype \okty$ by \rlnm{Ok}, so $\CompPure{B} \subtype \okty$ by transitivity.  Then \rlnm{CompL} gives $\CompPure{\okty} \subtype B$ as required.
    \item[\rlnm{CompLAtoms}] In this case $A$ is of shape $\CompPure{\atomty}$, $B$ is of shape $\CompPure{\atom{a}}$.  Then the result follows from \rlnm{Atom} by the order-reversing property of complement.
    \item[\rlnm{Refl}, \rlnm{Top}, \rlnm{Atom}, \rlnm{Pair}, \rlnm{Fun}, \rlnm{UnionL}, \rlnm{UnionR}] These rules are identical in both systems.
  \end{description} 
\end{proof}

  \section{Derivability of the Refutation Rules, Theorem~\ref{thm:left-rules-derivable}}

Theorem~\ref{thm:left-rules-derivable}:
\begin{quote}
  All the rules of Figure~\ref{fig:simple-two-sided-left} are derivable.
\end{quote}

\begin{proof}
First observe that the swapping rules are derivable:
\begin{mathpar}
  \infer*[left=SubL]{
    \infer*[left=CompL]{
        \Gamma \types M:\CompPure{A} \Delta
    }
    {
      \Gamma,\,M:\CompPure*{\CompPure{A}} \types \Delta
    }
  }
  {
    \Gamma,\,M:A \types \Delta
  }
  \and
  \infer*[left=SubR]{
    \infer*[left=CompR]{
        \Gamma,\,M:\CompPure{A} \types \Delta
    }
    {
      \Gamma \types M:\CompPure*{\CompPure{A}},\,\Delta
    }
  }
  {
    \Gamma \types M:A,\,\Delta
  }
\end{mathpar}
We derive each of the remaining rules in turn (inline).
\begin{description}
  \item[\rlnm{BotL}] This judgement follows from \rlnm{Top} by \rlnm{CompL}.
  \item[\rlnm{NumL}] This judgement follows from \rlnm{Num} by \rlnm{CompL}.
  \item[\rlnm{FunL}] This judgement follows from \rlnm{Abs} and \rlnm{Top} by \rlnm{CompL}.
  \item[\rlnm{AtomL}] This judgement follows from \rlnm{Atom} by \rlnm{CompL}.
  \item[\rlnm{PairL1}] This judgement follows from \rlnm{Pair} and \rlnm{Top} by \rlnm{CompL}
  \item[\rlnm{PairLL}] Suppose $\Gamma,\,M_1:A_1 \types \Delta$ (the other case is similar).  We have $\Gamma \types M_1:\CompPure{A_1},\,\Delta$ by \rlnm{CompR} and $\Gamma \types M_2:\topty,\,\Delta$ by \rlnm{Top}.  Then, $\Gamma \types (M_1,M_2) : (\CompPure{A_1},\topty),\,\Delta$ by \rlnm{Pair}.  Since $(A_1,\topty) \subtype \CompPure{(A_1,A_2)}$, we obtain $\Gamma \types (M_1,M_2) : \CompPure{(A_1,A_2)}$ by \rlnm{SubL} and finally the result is obtained by \rlnm{SwapL}.
  \item[\rlnm{PrjL1}] Suppose $\Gamma \types M:(\CompPure{A},\,B),\,\Delta$.  It follows from \rlnm{Prj$1$} that $\Gamma \types \pi_1(M) : \CompPure{A},\,\Delta$ and the result follows from \rlnm{SwapL}.
  \item[\rlnm{PrjL2}] Suppose $\Gamma \types M:(A,\,\CompPure{B}),\,\Delta$.  It follows from \rlnm{Prj$2$} that $\Gamma \types \pi_2(M) : \CompPure{B},\,\Delta$ and the result follows from \rlnm{SwapL}.
  \item[\rlnm{AppL}] Suppose $\Gamma \types M : B \onlyto A,\,\Delta$ and $\Gamma,\,N:B \types \Delta$.  Then it follows from \rlnm{CompR} that $\Gamma \types N:\CompPure{B},\,\Delta$ and hence, by \rlnm{App}, that $\Gamma \types M\,N:\CompPure{A}$.  Then the result follows from \rlnm{SwapL}.
  \item[\rlnm{FixL}] Suppose $\Gamma,\,M:A \types x:A,\,\Delta$.  Then by \rlnm{CompL} and \rlnm{CompR}, $\Gamma,\,x:\CompPure{A} \types M:\CompPure{A},\,\Delta$.  Then $\Gamma \types \fixtm{x}{M} : \CompPure{A},\,\Delta$ by \rlnm{Fix}, and the result follows from \rlnm{SwapL}.
  \item[\rlnm{MatchL}] Suppose $\Gamma \types M : \bigvee_{i=1}^k p_i\theta_i,\,\Delta$ and, for each $i \in [1,k]$, $\Gamma,\,M:p_i\theta_i,\,N_i:A \types \CompPure{\theta_i},\,\Delta$.  Then, for each $i\in[1,k]$, $\Gamma,\,\theta_i,\,M:p_i\theta_i \types N_i : \CompPure{A},\,\Delta$ by \rlnm{SwapL} and \rlnm{CompR}.  Hence, it follows from \rlnm{Match} that $\Gamma \types \matchtm{M}{|_{i=1}^k p_i \mapsto N_i} : \CompPure{A},\,\Delta$ and the result can be obtained by \rlnm{SwapL}.
\end{description}
\end{proof}
  \section{The Equivalence of the Two- and One-Sided Type Systems, Theorem~\ref{thm:one-two-equivalence}}

\begin{figure}\vspace{1ex}
  \input{figures/simple-typing-rules.tex}
  \caption{One-sided Typing.}\label{fig:simple-typing-rules}
\end{figure}

The full typing rules of the one-sided system are given in Figure~\ref{fig:simple-typing-rules}.  We begin by showing that the one-sided system admits a version of subtyping on the left (only for variables, since we do not allow terms on the left in the one-sided system), a version of complement on the left (again, only for variables) and a version of swap on the left (likewise).

\begin{lemma}
  The following rules are admissible in the system of Figure~\ref{fig:simple-typing-rules}:
  \begin{mathpar}
    \infer[SubL]{
      \Gamma,\,x:B \types \Delta
    }
    {
      \Gamma,\,x:A \types \Delta
    }\;A \subtype B
    \and
    \infer[CompL$_0$]{
      \Gamma \types x:A,\,\Delta
    }
    {
      \Gamma,\,x:\CompPure{A} \types \Delta
    }\;x \notin \dom(\Gamma)
    \and
    \infer[SwapL$_0$]{
      \Gamma \types x:\CompPure{A},\,\Delta
    }
    {
      \Gamma,\,x:A \types \Delta
    }\;x \notin \dom(\Gamma)
  \end{mathpar}
\end{lemma}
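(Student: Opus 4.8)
The plan is to treat all three rules as statements about variable typings only — which is all the one-sided system of Figure~\ref{fig:simple-typing-rules} allows on the left — and to prove \rlnm{SubL} by induction on the derivation of its premise, leaving the two complement/swap rules as short consequences of \rlnm{Comp}, the one-sided subsumption rule \rlnm{Sub}, and the fact (Lemma~\ref{thm:ortholattice}) that complement is an order-reversing involution on subtyping.

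For \rlnm{SubL} the induction is on the given derivation. The key observation is that no rule of Figure~\ref{fig:simple-typing-rules} ever deletes a variable typing from the left, so the distinguished typing of $x$ is present in every sequent of the derivation tree. Hence every case other than \rlnm{Var} is mechanical: apply the induction hypothesis to the premise(s) and reapply the same rule, now carrying $x:A$ where $x:B$ stood; for \rlnm{Abs}, \rlnm{Fix} and \rlnm{Match} one first $\alpha$-renames the bound (respectively pattern) variables to keep them distinct from $x$ and fresh for the context, and observes that the side conditions are undisturbed because replacing $x:B$ by $x:A$ changes neither $\dom$ nor $\fv$ of the left context (so the $\mathsf{PatSubst}$ and freshness premises still hold). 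The only case with content is a \rlnm{Var} leaf whose pivot formula is the distinguished typing, $\Gamma,\,x:B \types x:B,\,\Delta$: I replace it by \rlnm{Var} concluding $\Gamma,\,x:A \types x:A,\,\Delta$ followed by \rlnm{Sub} on the right, which applies precisely because $A \subtype B$; a \rlnm{Var} leaf pivoting on any other context typing is simply re-derived by \rlnm{Var}, since that typing survives the swap.

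The remaining two rules are then immediate. One of them is, modulo its harmless side condition, already an instance of \rlnm{Comp}. The other is obtained from \rlnm{Comp} by instantiating its active type at $\CompPure{A}$ — which produces $x:\CompPure{\CompPure{A}}$ in the position where $x:A$ is wanted — and then rewriting it with \rlnm{Sub} using the double-complement law $A \tyeq \CompPure{\CompPure{A}}$ of Lemma~\ref{thm:ortholattice}; if a left typing must be reshaped along the subtype order at the same time, the \rlnm{SubL} rule just established can be interleaved.

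The only real obstacle is the \rlnm{SubL} induction, and within it essentially just the \rlnm{Var} case — an instance of identity expansion composed with subsumption — together with the routine but slightly tedious check that the left context is inert under the remaining rules and that their freshness and $\mathsf{PatSubst}$ side conditions survive the substitution of $x:A$ for $x:B$. No deeper difficulty arises for the complement and swap rules, since they exercise only the direction in which the one-sided system is already complete on variable typings — namely, moving such a typing across the turnstile via \rlnm{Comp} — rather than the direction (an analogue of \rlnm{CompL} on variables) that the one-sided system deliberately omits.
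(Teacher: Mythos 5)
Your treatment of \rlnm{SubL} is essentially the paper's: induction on the derivation, with the only interesting case being a \rlnm{Var} leaf whose pivot is the distinguished typing, handled by re-deriving $\Gamma,\,x:A \types x:A,\,\Delta$ and applying \rlnm{Sub} on the right. That part is fine. Your reduction of one of the two remaining rules to the other via \rlnm{Sub}/\rlnm{SubL} and the double-complement law is also fine, and matches how the paper derives \rlnm{SwapL$_0$} from \rlnm{CompL$_0$}.

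The gap is in the claim that the remaining rule is ``already an instance of \rlnm{Comp}.'' It is not: \rlnm{Comp} takes a premise with $x:A$ on the \emph{left} to a conclusion with $x:\CompPure{A}$ on the \emph{right}, whereas both \rlnm{CompL$_0$} and \rlnm{SwapL$_0$} go the opposite way, from a right-hand typing in the premise to a left-hand typing in the conclusion. They are therefore the \emph{converse} (inversion) of \rlnm{Comp}, and the admissibility of the inverse of a rule is exactly the kind of invertibility property that requires its own induction over the premise derivation --- it cannot be read off from the rule itself. The induction is genuinely delicate: the formula $x:A$ on the right may be a mere side formula, or may be concluded by \rlnm{Var} from some $x:B \in \Gamma$, and in the latter situation the conclusion $\Gamma,\,x:\CompPure{A} \types \Delta$ can actually fail (the paper notes that $x:A,\,x:\CompPure{A} \types$ is \emph{not} provable in the one-sided system --- this is precisely why the side condition $x \notin \dom(\Gamma)$ is there and why it is not ``harmless''). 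The paper handles this by proving a strengthened statement, namely that $\Gamma \types x:A,\,\Delta$ together with $\forall B \subtype A.\ x:B \notin \Gamma$ implies $\Gamma,\,x:\CompPure{A} \types \Delta$, by induction on the derivation, with nontrivial case analysis when the last rule is \rlnm{Var}, \rlnm{Comp} (where the principal formula may be $x:\CompPure{C}$ for some $C \subtype A$, requiring a direct rebuild of the proof), or \rlnm{Sub} (where the side condition must be propagated through the subtype chain). None of this work appears in your proposal, so the admissibility of \rlnm{CompL$_0$} --- and hence of \rlnm{SwapL$_0$} --- is not established.
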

\begin{proof}
  We start with \rlnm{SubL}, since this is needed in the proof for \rlnm{CompL$_0$}.  The proof is by induction on $\Gamma,\,x:B \types \Delta$.
  \begin{description}
    \item[\rlnm{Var}] Let $A \subtype B$.  There are two subcases.  If $x:B$ is the principal formula of the rule, so that $\Delta$ is of shape $x:B,\,\Delta'$ then we have $\Gamma,\,x:A \types x:A,\,\Delta'$ by \rlnm{Var} and then $\Gamma,\,x:A \types x:B,\,\Delta'$ by \rlnm{Sub}.  Otherwise, there is some other $y:B \in \Gamma \cap \Delta$ and so $\Gamma,\,x:A \types \Delta$ follows by \rlnm{Var}.
    \item[Otherwise] In all other cases, the result follows immediately from the induction hypotheses.
  \end{description}
  For the admissibility of \rlnm{CompL$_0$} we need to generalise the result to relax the condition on $\Gamma$.  We instead prove that:
  \begin{quotation}
    $\Gamma \types x:A,\,\Delta$ and $(\forall B \subtype A.\:x:B \notin \Gamma)$ implies $\Gamma,\,x:\CompPure{A} \types \Delta$
  \end{quotation} by induction on the derivation of $\Gamma \types x:A,\,\Delta$.
  \begin{description}
    \item[\rlnm{Var}] Suppose $\forall B \subtype A.\,x:B \notin \Gamma$ so that, in particular, $x:A \notin \Gamma$.  Then it cannot be that $x:A$ is the principal formula of the rule instance, and there is some other $y:B \in \Gamma \cap \Delta$ which is.  Consequently, $\Gamma,\,x:\CompPure{A} \types \Delta$ follows by \rlnm{Var}.
    \item[\rlnm{Top},\,\rlnm{Num},\,\rlnm{Atom}]  For each of these axioms (R), $x:A$ cannot be the principal formula of the rule, and so the conclusion follows by (R).
    \item[\rlnm{Comp}] Suppose $\forall B \subtype A.\,x:B \notin \Gamma$.  There are three subcases.
          \begin{itemize}
            \item If $x:A$ is the principal formula of the rule, then $A$ is of shape $\CompPure{C}$ and the rule hypothesis gives $\Gamma,\,x:C \types \Delta$.  Then $\Gamma,\,x:\CompPure*{\CompPure{C}} \types \Delta$ follows by \rlnm{SubL}, which is just $\Gamma,\,x:\CompPure{A} \types \Delta$, as required.
            \item Otherwise, the principal formula is some $y:\CompPure{C} \in \Delta$ and $\Delta$ has shape $y:\CompPure{C},\,\Delta'$.  If $y=x$ and $C \subtype A$, then this judgement has shape $\Gamma \types x:A,\,x:\CompPure{C},\,\Delta'$ and our goal is to show $\Gamma,\, x:\CompPure{A} \types x:\CompPure{C},\,\Delta'$.  Since $C \subtype A$, also $\CompPure{A} \subtype \CompPure{C}$, and the goal may be proven directly:
                  \begin{mathpar}
                    \infer*[left=Sub]{
                      \infer*[left=Var]{
                      }
                      {
                        \Gamma,\,x:\CompPure{A} \types x:\CompPure{A},\,\Delta'
                      }
                    }
                    {
                      \Gamma,\, x:\CompPure{A} \types x:\CompPure{C},\,\Delta'
                    }
                  \end{mathpar}
            \item Otherwise the principal formula is some $y:\CompPure{C} \in \Delta$ and $\Delta$ has shape $y:\CompPure{C},\,\Delta'$ but $y \neq x$ or $C \not\subtype A$.  Hence, there is no typing $x:B \in (\Gamma,\,y:C)$ with $B \subtype A$ and we may apply the induction hypothesis to obtain: $\Gamma,\,y:C,\,x:\CompPure{A} \types \Delta'$.  Thus, $\Gamma,\,x:\CompPure{A} \types \Delta$ follows by another application of \rlnm{Comp}.
          \end{itemize}
    \item[\rlnm{Sub}]  Suppose $\forall B \subtype A.\,x:B \notin \Gamma$.  There are two subcases.
          \begin{itemize}
            \item If $x:A$ is the principal formula of the rule instance, then the induction hypothesis gives some $C \subtype A$ such that, if $\forall B \subtype C.\,x:B \notin \Gamma$, then $\Gamma,\,x:\CompPure{C} \types \Delta$.  To establish the antecedent, suppose that $B \subtype C$ and $x:B \in \Gamma$ and obtain a contradiction with the assumption, since then we would have some $x:B \in \Gamma$ with $B \subtype C \subtype A$.  Therefore, we obtain $\Gamma,\,x:\CompPure{C} \types \Delta$ from the induction hypothesis.  Since $C \subtype A$, $\CompPure{A} \subtype \CompPure{C}$ and thus $\Gamma,\,x:\CompPure{A} \types \Delta$ follows by \rlnm{SubL}.
            \item Otherwise, $\Delta$ is of shape $M:C,\,\Delta'$ and it follows from the induction hypothesis that there is some $D$ such that $D \subtype C$ and $\Gamma,\,x:\CompPure{A} \types M:D,\,\Delta'$.  Then the desired $\Gamma,\,x:\CompPure{A} \types \Delta$ follows by \rlnm{Sub}.
          \end{itemize}
    \item[Otherwise] For every other rule (R), $x:A$ cannot be the principal formula and so the result follows immediately from the induction hypotheses and another application of (R).
  \end{description}
  Finally, we can then derive \rlnm{SwapL$_0$}.  Suppose $x \notin \dom(\Gamma)$ and note that $A \subtype \CompPure*{\CompPure{A}}$:
  \begin{mathpar}
    \infer*[left=SubL]{
      \infer*[left=CompL$_0$]{
        \Gamma \types x:\CompPure{A},\,\Delta
      }
      {
        \Gamma,\,x:\CompPure{\CompPure{A}} \types \Delta
      }
    }
    {
      \Gamma,\,x:A \types \Delta
    }
  \end{mathpar}
\end{proof}

Now we may prove the main equivalence theorem.

Theorem~\ref{thm:one-two-equivalence}:
\begin{quote}
  Let us write $\Gamma \typesA \Delta$ for provability in the system of Figure~\ref{fig:simple-two-sided-rules} and $\Gamma \typesB \Delta$ for provability in the system of Figure~\ref{fig:simple-two-sided-left}.  Then both of the following:
  \begin{enumerate}[(i)]
    \item If $\Gamma \typesA \Delta$ then $\typesB \CompPure{\Gamma},\,\Delta$.
    \item If $\Gamma \typesB \Delta$ then $\Gamma \typesA \Delta$.
  \end{enumerate}
\end{quote}

\begin{proof}
  Part (i) is proven by induction on $\Gamma \typesA \Delta$.
  \begin{description}
    \item[\rlnm{Var}] In this case, $\Delta$ is of shape $x:A,\,\Delta'$ and $\Gamma$ is of shape $\Gamma',\,x:A$.  Hence, $x:A \typesB \CompPure{\Gamma'},\,x:A,\,\Delta'$ by \rlnm{Var} and $\typesB \CompPure{\Gamma},\,\Delta$ by \rlnm{Comp}.
    \item[\rlnm{Top}, \rlnm{Num}, \rlnm{Atom}] In these cases, the result is immediate.
    \item[\rlnm{SubL}] In this case, $\Gamma$ is of shape $\Gamma',\,M:A$.  It follows from the induction hypothesis that there is some type $A \subtype B$ such that $\typesB \CompPure{\Gamma'},\,M:\CompPure{B},\,\Delta$ and the desired conclusion follows from \rlnm{Sub}.
    \item[\rlnm{CompL}] In this case, $\Gamma$ is of shape $\Gamma',\,M:\CompPure{A}$.  It follows from the induction hypothesis that $\typesB \CompPure{\Gamma'},\,M:\CompPure{\CompPure{A}},\,\Delta$, and the conclusion follows by \rlnm{Sub} since $\CompPure{\CompPure{A}} \subtype A$.
    \item[\rlnm{CompR}] In this case, $\Delta$ is of shape $M:\CompPure{A},\,\Delta'$ and it follows from the induction hypothesis that $\typesB \CompPure{\Gamma},\, M:\CompPure{A},\,\Delta'$, as required.
    \item[\rlnm{Abs}] In this case, $\Delta$ is of shape $\abs{x}{M}:A \to B,\,\Delta'$.  We may assume that $x$ does not occur in $\Gamma$ or $\Delta'$. The induction hypothesis gives:
          \begin{enumerate}[(IH)]
            \item $\typesB \CompPure{\Gamma},\,x:\CompPure{A},\,M:B,\,\Delta'$
          \end{enumerate}
          Then, $x:A \typesB \CompPure{\Gamma},\,M:B,\,\Delta'$ by \rlnm{SwapL$_0$} and the result can be obtained by \rlnm{Abs}.
    \item[\rlnm{Fix}] In this case, $\Delta$ is of shape $\fixtm{x}{M}:A,\,\Delta'$.  We may assume that $x$ does not occur in $\Gamma$ or $\Delta'$.  Then the induction hypothesis gives:
          \begin{enumerate}[(IH)]
            \item $\typesB \CompPure{\Gamma},\,x:A,\,M:A,\,\Delta'$
          \end{enumerate}
          Then, $x:A \typesB \CompPure{\Gamma},\,M:A,\,\Delta'$ follows by \rlnm{SwapL$_0$} and the result can then be obtained using \rlnm{Fix}.
    \item[\rlnm{Match}] In this case, $\Delta$ is of shape $\matchtm{M}{|_{i=1}^k p_i \mapsto N_i}:A,\,\Delta'$ and we may assume that no variable in any of the patterns $p_i$ occurs in $\Gamma$ or $\Delta'$.  The induction hypotheses give some substitutions $\theta_i$ whose domain is exactly $\fv(p_i)$ and such that:
          \begin{enumerate}[({IH}1)]
            \item $\typesB \CompPure{\Gamma},\,M:\bigvee_{i=1}^k p_i\theta_i,\,\Delta'$
            \item for each $i \in [1,k]$: $\typesB \CompPure{\Gamma},\,\CompPure{\theta_i},\,M:\CompPure*{p_i\theta_i},\,N_i:A,\,\Delta'$
          \end{enumerate}
          If follows by $\Sigma_{i=1}^k |\fv(p_i)|$ applications of \rlnm{SwapL$_0$} that, for each $i \in [1,k]$, $\theta_i \typesB \CompPure{\Gamma},\,M:\CompPure*{p_i\theta_i},\,N_i:A,\,\Delta'$.  Therefore, the result can be derived by \rlnm{Match}.
    \item[Otherwise] For all other rules \rlnm{R} of the first system, the result follows immediately from the induction hypotheses and the corresponding rule \rlnm{R} of the second system.
  \end{description}
  Part (ii) can be proven by induction on $\Gamma \typesB \Delta$.
  \begin{description}
    \item[\rlnm{Comp}] In this case, $\Delta$ has shape $x:\CompPure{A},\,\Delta'$ the induction hypothesis gives $\Gamma,\,x:A \typesA \Delta'$ from which we obtain the result by \rlnm{CompR}.
    \item[\rlnm{OpE}] In this case, $\Delta$ has shape $M \otimes N : \CompPure{\okty},\,\Delta'$ and it follows from the induction hypothesis that $\Gamma \typesA M:\CompPure{\intty},\,N:\CompPure{\intty},\,\Delta'$.  Then, by two applications of \rlnm{SwapL}, we obtain $\Gamma,\,M:\intty,\,N:\intty \typesA \Delta'$ from which the conclusion follows by \rlnm{OpE}.
    \item[\rlnm{AppE}] In this case, $\Delta$ has shape $M\,N:\CompPure{\okty},\,\Delta'$ and it follows from the induction hypothesis that $\Gamma \typesA M:\CompPure{\funty},\,\Delta'$.  By \rlnm{SwapL}, $\Gamma,\,M:\funty \types \Delta'$ and the conclusion follows by \rlnm{AppE}.
    \item[\rlnm{MatchE}] In this case, $\Delta$ has shape $\matchtm{M}{|_{i=1}^k P_i \mapsto N_i} : \CompPure{\okty},\,\Delta'$ and it follows from the induction hypothesis that $\Gamma \typesA M:\CompPure*{\bigvee_{i=1}^k p_i\theta_\okty},\,\Delta'$.  By \rlnm{SwapL}, $\Gamma,\,M:\bigvee_{i=1}^k p_i\theta_\okty \typesA \Delta'$ and the conclusion follows by \rlnm{MatchE}.
    \item[\rlnm{PrjE$_i$}]  In this case, $\Delta$ has shape $\pi_i(M):\CompPure{\okty},\,\Delta'$ and it follows from the induction hypothesis that $\Gamma \typesA M:\CompPure{\pairty},\,\Delta'$.  By \rlnm{SwapL}, $\Gamma,\,M:\pairty \typesA \Delta'$ and the conclusion follows by \rlnm{PrjE$_i$}.
    \item[Otherwise] For every other rule (R) of the system of Figure~\ref{fig:simple-typing-rules}, the conclusion follows immediately from the induction hypotheses and an application of (R) in the system of Figure~\ref{fig:simple-two-sided-rules}.
  \end{description}
\end{proof}
  \section{Progress and Preservation}
The proofs in this appendix relate to the one-sided system of Figure~\ref{fig:simple-typing-rules}.

\subsection{Preservation}
\begin{lemma}[Weakening]
    \label{lem:weakening}
    If $\Gamma \types \Delta$ is provable, then for any term $Q$ and type $B$, it is also provable that $\Gamma \types Q:B,\,\Delta$.
\end{lemma}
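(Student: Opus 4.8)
The plan is to prove \cref{lem:weakening} by structural induction on the derivation of $\Gamma \types \Delta$ in the one-sided system of \cref{fig:simple-typing-rules}. The uniform strategy is: for each rule used at the root, apply the induction hypothesis to every premise to insert the extra formula $Q:B$ on the right of that premise, and then re-apply the very same rule; since the right-hand context is just a finite set of formulas and the principal formula of each rule always remains present, enlarging $\Delta$ to $Q:B,\,\Delta$ never destroys applicability.

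First I would dispatch the leaves. For the axioms \rlnm{Var}, \rlnm{Top}, \rlnm{Num} and \rlnm{Atom} there are no premises: each merely asserts that some principal formula occurs on the right (and, for \rlnm{Var}, that a variable typing occurs on both sides), which is preserved when we pass from $\Delta$ to $Q:B,\,\Delta$. Then I would handle all rules that carry no variable-capture side condition — \rlnm{Sub}, \rlnm{Comp}, \rlnm{Pair}, \rlnm{BinOp}, \rlnm{RelOp}, \rlnm{App}, \rlnm{BinOpE}, \rlnm{RelOpE}, \rlnm{PairE}, \rlnm{AppE} and \rlnm{MatchE} — by the mechanical recipe above. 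Note that \rlnm{Comp} moves a variable typing from the right to the left, which is orthogonal to the added right-hand formula $Q:B$, and that \rlnm{MatchE} puts only types of the form $p_i\theta$ on the right in which every pattern variable has been replaced by $\okty$, so no freshness condition is in force there.

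The only genuinely delicate cases are \rlnm{Abs}, \rlnm{Fix} and \rlnm{Match}, whose side conditions require the bound variable — respectively, the pattern variables $\fv(p_1,\dots,p_k)$ — to avoid $\fv(\Gamma,\Delta)$; after extending $\Delta$ with $Q:B$ we additionally need them to avoid $\fv(Q)$. Here I would use $\alpha$-renaming: since terms are identified up to renaming of bound variables, the subderivation of the relevant premise can be rewritten, by a capture-avoiding renaming of the bound/pattern variables to names chosen fresh for $\fv(Q)\cup\fv(\Gamma,\Delta)$, into a derivation of the same height whose conclusion is the original one up to $\alpha$-equivalence (observe that in \rlnm{Match} the types $p_i\theta_i$ do not themselves mention the pattern variables, so they are literally unchanged by the renaming). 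Applying the induction hypothesis to this renamed premise and then re-applying \rlnm{Abs} / \rlnm{Fix} / \rlnm{Match} — whose side conditions now hold, the relevant variables being fresh for $Q$ — yields $\Gamma \types \abs{x}{M}:A\to B',\,Q:B,\,\Delta$, and likewise for $\fixtm{x}{M}$ and $\matchtm{M}{\mid_{i=1}^k p_i \mapsto N_i}$. This $\alpha$-renaming bookkeeping is the main (and essentially the only) obstacle; every other case is a direct transcription of the rule with $Q:B$ adjoined to each context on the right.
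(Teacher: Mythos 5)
Your proof is correct and follows essentially the same route as the paper's: induction on the derivation, weakening each premise by the induction hypothesis and re-applying the root rule. You are in fact slightly more careful than the paper, which silently ignores the freshness side conditions on \rlnm{Abs}, \rlnm{Fix} and \rlnm{Match}; your $\alpha$-renaming step is exactly the right way to discharge them.
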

\begin{proof}
    It suffices to show that for any proof $\delta$ of $\Gamma \types \Delta$, we can always construct some proof $\delta'$ of $\Gamma \types Q:B,\, \Delta$ (with arbitrary $Q:B$). We proceed by induction on the height of $\delta$.
    \begin{indproof}
        \indcase{($\height{\delta}=0$)} Here, $\delta$ must be an instance of an axiom rule (i.e., $\textsc{Var}$, $\textsc{Top}$, $\textsc{Num}$, or $\textsc{Atom}$). Thus $\delta$ must be of one of the structures shown in \cref{fig:weakening-r-base-delta}. For every $\delta$ that proves $\Gamma \types \Delta$, we can construct $\delta'$ by inserting $Q:B$ into the right context of the underlying rule instance of $\delta$, such that $\delta'$ proves $\Gamma \types Q:B,\, \Delta$. \cref{fig:weakening-r-base-delta'} shows the corresponding structures of $\delta'$ for all respective structures of $\delta$.
        \begin{figure}[!htbp]
            \centering
            \begin{subfigure}[c]{0.3\textwidth}
                \centering
                \[
                    \begin{array}{c}
                        \infer*[left=\textsc{Var}]{ }{\smashunderbrace{\Gamma',\,x:A}{\Gamma} \types \smashunderbrace{\Principal{x:A},\,\Delta'}{\Delta}} \\[0.6em]
                        \infer*[left=\textsc{Top}]{ }{\Gamma \types \smashunderbrace{\Principal{M:\top},\,\Delta'}{\Delta}}                               \\[0.6em]
                        \infer*[left=\textsc{Num}]{ }{\Gamma \types \smashunderbrace{\Principal{n : \intty},\,\Delta'}{\Delta}}                           \\[0.6em]
                        \infer*[left=\textsc{Atom}]{ }{\Gamma \types \smashunderbrace{\Principal{\atomtm{a} : \atom{a}},\,\Delta'}{\Delta}}
                    \end{array}
                \]
                \subcaption{Proof $\delta$ of $\Gamma \types \Delta$.}
                \label{fig:weakening-r-base-delta}
            \end{subfigure}
            \hspace{0.02\textwidth}
            \raisebox{0.6\height}{$\xrightarrow{\text{Add } Q:B \text{ to the right context}}$}
            \hspace{0.02\textwidth}
            \begin{subfigure}[c]{0.3\textwidth}
                \centering
                \[
                    \begin{array}{c}
                        \infer*[left=\textsc{Var}]{ }{\smashunderbrace{\Gamma',\,x:A}{\Gamma} \types Q:B,\, \smashunderbrace{\Principal{x:A},\,\Delta'}{\Delta}} \\[0.6em]
                        \infer*[left=\textsc{Top}]{ }{\Gamma \types Q:B,\,\smashunderbrace{\Principal{M:\top},\,\Delta'}{\Delta}}                                \\[0.6em]
                        \infer*[left=\textsc{Num}]{ }{\Gamma \types Q:B,\,\smashunderbrace{\Principal{n : \intty},\,\Delta'}{\Delta}}                            \\[0.6em]
                        \infer*[left=\textsc{Atom}]{ }{\Gamma \types Q:B,\,\smashunderbrace{\Principal{\atomtm{a} : \atom{a}},\,\Delta'}{\Delta}}
                    \end{array}
                \]
                \subcaption{Proof $\delta'$ of $\Gamma \types Q:B,\, \Delta$.}
                \label{fig:weakening-r-base-delta'}
            \end{subfigure}
            \label{fig:weakening-r-base}
        \end{figure}
        \par Thus, whenever $\Gamma\,\types \Delta$ has a proof of height $0$, $\Gamma\,\types Q:B,\, \Delta$ also has a proof of height $0$.
        \indcase{($\height{\delta}>0$)} Suppose for every proof $\delta$ of $\Gamma \types \Delta$ with height $\leq k\, (k\geq 0)$, we can construct a proof $\delta'$ of $\Gamma \types Q:B,\, \Delta$. Now consider $\height{\delta}=k+1$, then $\delta$ must have the structure:
        \begin{equation*}
            \tag{Structure of $\delta$}
            \label{proof:weakening-r-inductive-delta}
            \infer*[left=\textup{\textsc{R}},right=\normalsize\,($n>0$)]{
                \Gamma,\, \Constituent{\Gamma_{c}^{i}} \types_{\delta_i} \Constituent{\Delta_{c}^{i}},\, \Delta_{r} \quad (0 < i \leq n)
            }{
                \Gamma \types \smashunderbrace{\Principal{P:A},\, \Delta_{r}}{\Delta}
            }
        \end{equation*}
        where \textup{\textsc{R}} is the rule applied at the root step, $P:A$ is the principal formula of the root application and $\Gamma^i_c,\,\Delta^i_c$ are the constituent formulas of the $i^{th}$ premise, that is, those that are distinguished in the rule. Since each immediate subproof $\delta_i$ has height $\height{\delta_i} \leq \height{\delta}-1 = (k+1)-1=k$, it follows by the induction hypothesis that:
        \begin{equation}
            \label{proof:weakening-r-inductive-premises}
            \forall Q\,\forall B.\,\exists \delta'_{i} \text{ such that } \Gamma,\,\Constituent{\Gamma_{c}^{i}} \types_{\delta'_i} \Constituent{\Delta_{c}^{i}},\, Q:B,\, \Delta_{r} \quad (0 < i \leq n)
        \end{equation}
        By applying \textsc{R} to the judgements in \cref{proof:weakening-r-inductive-premises}, we combine all these modified subproofs $\delta'_1,\,\ldots,\, \delta'_n$  and construct:
        \begin{equation*}
            \tag{Structure of $\delta'$}
            \label{proof:weakening-r-inductive-delta'}
            \infer*[left=\textup{\textsc{R}},right=\normalsize\,($n>0$)]{
                \Gamma,\, \Constituent{\Gamma_{c}^{i}} \types_{\delta'_i} \Constituent{\Delta_{c}^{i}},\, Q:B,\,\Delta_{r} \quad (0 < i \leq n)
            }{
                \Gamma \types Q:B,\,\smashunderbrace{\Principal{P:A},\, \Delta_{r}}{\Delta}
            }
        \end{equation*}
        Let $\delta'$ be this proof. It is immediate that $\delta'$ proves $\Gamma \types Q:B,\, \Delta$, which completes the inductive step.
        \par By combining the base and inductive step, we conclude that for every proof (regardless of its height) of $\Gamma \types \Delta$, there exists a corresponding proof of $\Gamma \types Q:B,\, \Delta$ where $Q:B$ is arbitrary. Thus the provability of $\Gamma \types \Delta$ guarantees the provability of $\Gamma \types Q:B,\,\Delta$ for any term $Q$ and type $B$, thereby proving the weakening lemma.
    \end{indproof}
\end{proof}

\begin{lemma}[Substitution]
    \label{lem:substitution}
    Suppose $\Gamma,\, x:A \vdash M:B,\, \Delta$, where $x \notin \fv(\Delta)$. If $\Gamma \vdash N:C,\,\Delta$ and $C \subtyping A$,
    then $\Gamma \vdash \Subst{M}{N}{x}:B,\,\Delta$.
\end{lemma}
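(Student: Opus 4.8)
The plan is to prove the lemma by induction on the derivation $\mathcal{D}$ of $\Gamma,\,x:A \vdash M:B,\,\Delta$, analysing the last rule. Two preparatory moves keep the bookkeeping under control. First, I would $\alpha$-rename $\mathcal{D}$ (and any auxiliary derivation built during the proof) so that no variable bound by an abstraction, a fixpoint, or a match-pattern occurring in it coincides with $x$ or occurs free in $N$; the freshness side conditions of \rlnm{Abs}, \rlnm{Fix} and \rlnm{Match} make this harmless. Second, besides the right-weakening principle of Lemma~\ref{lem:weakening}, I would record an easy \emph{left-weakening} principle for the one-sided system: if $\Gamma'\vdash\Delta'$ and $y$ is a variable bound nowhere in the witnessing derivation, then $\Gamma',\,y:D\vdash\Delta'$, proved by inserting $y:D$ into every left context (no rule instance is disturbed, since the rules only inspect which formulas are present). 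I also use throughout that substitution commutes with term formation, so that e.g. $(P\,Q)[N/x]=(P[N/x])\,(Q[N/x])$ and, once $y$ is fresh, $(\abs{y}{P})[N/x]=\abs{y}{(P[N/x])}$.

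The routine part of the induction treats all rules whose application leaves, on the right of each premise, a subterm of $M$ (or leaves $M:B$ itself untouched because the principal formula is one of the formulas of $\Delta$). This covers the axioms \rlnm{Top}, \rlnm{Num}, \rlnm{Atom} (where the principal term is closed, so $M[N/x]$ is derived by the same axiom), the rule \rlnm{Sub} (recurse on the premise, re-apply \rlnm{Sub}), the non-binding compound rules \rlnm{App}, \rlnm{Pair}, \rlnm{BinOp}, \rlnm{RelOp} together with the elimination rules \rlnm{BinOpE}, \rlnm{RelOpE}, \rlnm{PairE}, \rlnm{AppE}, \rlnm{MatchE}, and the binding rules \rlnm{Abs}, \rlnm{Fix}, \rlnm{Match}. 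In each case I apply the induction hypothesis to every premise --- first left-weakening the hypothesis $\Gamma\vdash N:C,\,\Delta$ by whatever fresh variable typings the rule adds to the premise context (the bound variable, or the pattern substitution $\theta_i$), so that the side conditions of the induction hypothesis are met --- and then reassemble the conclusion with the same rule, invoking commutation of substitution with the appropriate constructor.

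The cases needing genuine care are those in which $M$ is the variable $x$. If the last rule is \rlnm{Var} and the formula shared by the two sides mentions $x$, then since $x\notin\fv(\Delta)$ it must be $M:B$ itself, so $M=x$, and (as $x\notin\dom(\Gamma)$) $B=A$; hence $M[N/x]=N$ and the goal $\Gamma\vdash N:A,\,\Delta$ follows from $\Gamma\vdash N:C,\,\Delta$ by \rlnm{Sub}, using $C\subtype A$. If the shared formula does not mention $x$, it is untouched by the substitution and \rlnm{Var} re-derives the goal. The principal obstacle is the \rlnm{Comp} case: here the principal formula is some $z:\CompPure{E}$ on the right and the premise is $\Gamma,\,x:A,\,z:E\vdash\Delta$ --- an $x$-free succedent --- so there is no subterm of $M$ left on the right to recurse on, and moreover (when $z=x$) the premise carries an extra, spurious typing of $x$. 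I would handle this by first proving a \emph{strengthening} lemma: if $\Gamma',\,x:E\vdash\Delta'$ and $x\notin\fv(\Delta')$, then $\Gamma'\vdash\Delta'$. Its proof is a straightforward induction on the derivation, the key observation being that, since $x\notin\fv(\Delta')$, the variable $x$ heads no right-hand subject at any node (pushing this invariant up through each rule), so the left typing $x:E$ is never consumed --- neither by \rlnm{Var} nor, via \rlnm{Comp}, turned into a right-hand occurrence --- and hence may be deleted from every context.

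Armed with the strengthening lemma, in the \rlnm{Comp} case I delete the typings of $x$ from the premise, discharge the remaining assumption $x:A$ (again by the strengthening lemma, or by the induction hypothesis applied to the now-$x$-free premise), and then either re-apply \rlnm{Comp} when $z\neq x$ (recovering $z:\CompPure{E}$, which is $M[N/x]:B$) or, when $z=x$ so that $M[N/x]=N$, re-attach $N:\CompPure{E}$ by right-weakening (Lemma~\ref{lem:weakening}). The type rearrangements incidental to these steps are justified by the involution and order-reversal of complement (Lemma~\ref{thm:ortholattice}) together with the disjointness rules. This interaction --- \rlnm{Comp} being available on the right with no faithful dual on the left, the very asymmetry remarked on after Theorem~\ref{thm:one-two-equivalence} --- is where I expect the real effort to go; a tidier alternative may be to carry a slightly strengthened induction hypothesis that discharges the assumption on $x$ uniformly, substituting throughout the whole succedent rather than privileging a single formula $M:B$.
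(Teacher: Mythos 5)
Your plan is far more detailed than the paper's own proof, which consists solely of the sentence ``By induction on the structure of $M$''; so there is nothing concrete in the paper to compare against, and your identification of the \rlnm{Var} and \rlnm{Comp} subtleties, the need for a left-weakening principle, and the strengthening lemma ($\Gamma',x:E \types \Delta'$ with $x \notin \fv(\Delta')$ implies $\Gamma' \types \Delta'$) is genuinely useful and, as far as I can check, sound. The strengthening lemma in particular does hold for the reason you give: the invariant $x \notin \fv(\text{succedent})$ propagates upwards through every rule, so the left typing of $x$ can never be consumed by \rlnm{Var} nor reintroduced by \rlnm{Comp}.

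There is, however, one genuine gap, and it sits precisely where you relegate the fix to a closing aside. The induction hypothesis you carry substitutes into a \emph{single} distinguished succedent formula $M:B$ and requires $x \notin \fv(\Delta)$ for everything else. This is too weak for every rule whose premises place \emph{two or more} subterms of the principal term into the succedent. Concretely, the verification premises of \rlnm{Match} have the shape $\Gamma,\theta_i,x:A \types P:\CompPure{(p_i\theta_i)},\,N_i:B,\,\Delta$, and the one-sided \rlnm{OpE}, \rlnm{RelOpE} and \rlnm{PairE} have premises of the shape $\Gamma,x:A \types P:\CompPure{\intty},\,Q:\CompPure{\intty},\,\Delta$ (resp.\ $\CompPure{\okty}$). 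When $x$ occurs free in both $P$ and $Q$ (e.g.\ the term $x \otimes x$, or a match that scrutinises $x$ and also mentions $x$ in a branch), neither formula can be taken as the distinguished one, because the other then lies in the ``$\Delta$'' part and violates the side condition $x \notin \fv(\Delta)$; nor can the hypothesis be applied twice in sequence, for the same reason. So these cases cannot be ``reassembled with the same rule'' as you claim in the routine part of the argument. The remedy is exactly the generalisation you mention at the very end --- prove the statement with the substitution applied to the \emph{entire} succedent, $\Gamma,x:A \types \Delta$ implies $\Gamma \types \Delta[N/x]$ --- but this is not an optional tidier alternative: it is forced by \rlnm{Match}, \rlnm{OpE}, \rlnm{RelOpE} and \rlnm{PairE}, and the single-formula version of the lemma should then be recovered as the special case in which $x$ is free only in the one formula. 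With that change (and the height-based induction needed because you pre-weaken premises before invoking the hypothesis), the rest of your plan goes through.
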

\begin{proof}
    By induction on the structure of $M$.
\end{proof}

\begin{definition}[Subtyping on Pattern Type Substitutions]
    \label{def:subtyping-on-subst}
    The subtyping relation $\subtyping$ is extended to pattern type substitutions \textit{pointwise}:
    \[
        \theta \subtyping \theta' \quad \defeq \quad \dom(\theta') = \dom(\theta) \;\land\; \forall x\in\dom(\theta').\, \theta(x) \subtyping \theta'(x).
    \]
\end{definition}

\begin{lemma}
    \label{lem:subst-subtyping-on-pattern}
    Let $p$ be a pattern, and let $\theta,\,\theta'$ be pattern type substitutions with $\dom(\theta)=\dom(\theta')=\fv(p)$. Then:
    \[
        p\theta \subtyping p\theta' \quad\Leftrightarrow\quad \theta \subtyping \theta'
    \]
\end{lemma}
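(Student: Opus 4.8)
The plan is to prove both directions of the equivalence simultaneously by structural induction on the pattern $p$, which by \cref{def:terms} has one of the three forms $x$, $\atom{a}$, or $(q_1,q_2)$. The forward implication will be an easy use of the congruence rule \rlnm{Pair}, while the reverse implication will rely on an inversion property for subtyping between pair types, for which I would pass to the alternative characterisation of \cref{fig:sub-alt} via \cref{thm:subtyping-equivalence}.

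For the base case $p = x$, the hypothesis $\dom(\theta) = \dom(\theta') = \fv(p) = \{x\}$ forces $\theta = [A/x]$ and $\theta' = [A'/x]$ for some $A$, $A'$, so $p\theta = A$ and $p\theta' = A'$; both $p\theta \subtype p\theta'$ and, by \cref{def:subtyping-on-subst}, $\theta \subtype \theta'$ unfold to exactly $A \subtype A'$, so the equivalence is immediate. For $p = \atom{a}$ we have $\fv(p) = \varnothing$, hence $\theta = \theta' = \varnothing$ and $p\theta = \atom{a} = p\theta'$; here $p\theta \subtype p\theta'$ holds by \rlnm{Refl} and $\theta \subtype \theta'$ holds vacuously, so both sides of the equivalence are true.

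For the inductive case $p = (q_1,q_2)$, linearity of patterns gives $\fv(q_1) \cap \fv(q_2) = \varnothing$, so I would write $\theta_i$ and $\theta'_i$ for the restrictions of $\theta$ and $\theta'$ to $\fv(q_i)$; then $p\theta = (q_1\theta_1,\, q_2\theta_2)$ and $p\theta' = (q_1\theta'_1,\, q_2\theta'_2)$, and $\theta \subtype \theta'$ holds iff $\theta_1 \subtype \theta'_1$ and $\theta_2 \subtype \theta'_2$. In the forward direction, the induction hypotheses turn $\theta_i \subtype \theta'_i$ into $q_i\theta_i \subtype q_i\theta'_i$, and \rlnm{Pair} then yields $p\theta \subtype p\theta'$. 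In the reverse direction, from $(q_1\theta_1, q_2\theta_2) \subtype (q_1\theta'_1, q_2\theta'_2)$ I would pass to the alternative system using \cref{thm:subtyping-equivalence} and observe, by inspection of \cref{fig:sub-alt}, that the only rules that can conclude a judgement of shape $(C_1,C_2) \subtypeA (D_1,D_2)$ are \rlnm{Refl} and \rlnm{Pair}; in either case one reads off $q_i\theta_i \subtypeA q_i\theta'_i$, hence $q_i\theta_i \subtype q_i\theta'_i$ again by \cref{thm:subtyping-equivalence}, and the induction hypotheses then give $\theta_i \subtype \theta'_i$, so $\theta \subtype \theta'$.

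The main obstacle is precisely this inversion step for pair types: in the original axiomatisation, unrestricted transitivity and the general complementation rules obscure the fact that $(C_1,C_2) \subtype (D_1,D_2)$ entails $C_1 \subtype D_1$ and $C_2 \subtype D_2$. Routing through the cut-free presentation of \cref{fig:sub-alt} --- where \rlnm{Pair} is visibly the only non-reflexive way to relate two pair types --- is what makes this transparent, and it is the same manoeuvre the metatheory uses for the analogous inversion on arrow types. Everything else in the argument is routine bookkeeping with restrictions of substitutions.
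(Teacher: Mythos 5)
Your proof is correct and follows essentially the same route as the paper: induction on the shape of $p$, with the forward direction by \rlnm{Pair} and the reverse direction resting on inversion of subtyping between pair types. The only cosmetic difference is that you inline the inversion argument (passing to the alternative system of \cref{fig:sub-alt} and observing that only \rlnm{Refl} and \rlnm{Pair} can conclude $(C_1,C_2)\subtypeA(D_1,D_2)$), whereas the paper factors this out as item (4) of \cref{lem:subtyping} --- which it proves in exactly the way you describe.
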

\begin{proof}
    Both directions of $\Leftrightarrow$ are proved by induction on on the shape of $p$. The induction for $\Leftarrow$ direction is straightforward. Here we only show the pair case for the $\Rightarrow$ direction.
    \begin{indproof}
        \indcase{$p$ is of shape $(p_1,p_2)$}
        Then from $p\theta \subtyping p\theta'$, we have:
        \[
            (p_1\theta_1, p_2\theta_2) \subtyping (p_1\theta'_1, p_2\theta'_2)
        \]
        , where $\theta_1$, $\theta'_1$ and $\theta_2$, $\theta'_2$ are the restriction of $\theta$ and $\theta'$ on $\fv(p_1)$ and $\fv(p_2)$ respectively.
        By \cref{lem:subtyping} it follows that  $p_1\theta \subtyping p_1\theta'_1$ and $p_2\theta \subtyping p_2\theta'_2$. Then by the induction hypothosis we have $\theta_1 \subtyping \theta'_1$ and $\theta_2 \subtyping \theta'_2$. Since $\dom(\theta)=\dom(\theta')=\fv(p_1) \cup \fv(p_2)$, $\theta \subtyping \theta'$ follows directly.
    \end{indproof}
\end{proof}

\begin{definition}[Coherence of Term and Pattern Type Substitutions under Contexts]
    \label{def:coherence-subst-ctx}
    Let $\sigma$ be a term substitution (mapping term variables to terms), $\theta$ a pattern type substitution (mapping term variables to types), and $\Gamma$, $\Delta$ left and right contexts respectively.  We define the \textit{coherence relation} $\Gamma \types \sigma\tyfit\theta,\, \Delta$ as:
    \[
        \Gamma \types \sigma\tyfit\theta,\, \Delta \quad \defeq \quad
        \dom(\theta) = \dom(\sigma)
        \;\land\;
        \forall x\in \dom(\theta).\,\Gamma \types \sigma(x):\theta(x),\,\Delta
    \]
\end{definition}

\paragraph{Remark}
The implication of $\Gamma \types \sigma\tyfit\theta,\, \Delta$ depends critically on the contexts:
\begin{enumerate}
    \item When $\Gamma \nvdash \Delta$, the relation $\Gamma \vdash \sigma\tyfit\theta, \Delta$ says $\sigma$ and $\theta$ share the same domain, and for each variable $x$ in the domain, the type $\theta(x)$ assigned to $\sigma(x)$ is \textit{meaningful} under the given contexts.
    \item By contrast, when $\Gamma \vdash \Delta$, the relation $\Gamma \vdash \sigma\tyfit\theta, \Delta$ is satisfied vacuously for any $\sigma$ and $\theta$ with $\dom(\sigma) = \dom(\theta)$. This follows directly from \cref{lem:weakening} (\textsc{Weakening}).  For example, when $\Gamma \vdash \Delta$ we can have $\Gamma \types [(\abs{y}{y})/x]\tyfit[\intty/x],\, \Delta$.
\end{enumerate}
\begin{lemma}[Simultaneous Substitution]
    \label{lem:simultaneous-substitution}
    Suppose $\Gamma,\,\theta' \types N:A,\,\Delta$, where $\dom(\theta') \cap \fv(\Delta)=\varnothing$.
    If $\theta \subtyping \theta'$ and $\Gamma \types \sigma\tyfit\theta,\,\Delta$, then $\Gamma \types N\sigma:A,\,\Delta$.
\end{lemma}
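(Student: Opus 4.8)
The plan is to reduce the simultaneous substitution to an iterated application of the single‑variable Substitution Lemma (Lemma~\ref{lem:substitution}), which already absorbs the subtyping slack between $\theta$ and $\theta'$, so that no new induction is needed. Write $\dom(\theta')=\dom(\theta)=\dom(\sigma)=\{x_1,\dots,x_n\}$; by the hypothesis $\dom(\theta')\cap\fv(\Delta)=\varnothing$ (and the usual convention that $\dom(\Gamma)$ is disjoint from $\dom(\theta')$), none of the $x_i$ occurs in $\Gamma$ or $\Delta$.

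First I would pick fresh, pairwise distinct variables $y_1,\dots,y_n$, fresh for everything in sight: $\Gamma$, $\Delta$, $N$, $\theta$, and each $\sigma(x_i)$. Since derivations are stable under a bijective renaming of their free variables (an elementary property, on a par with weakening), from $\Gamma,\theta'\types N:A,\Delta$ I obtain $\Gamma, y_1:\theta'(x_1),\dots,y_n:\theta'(x_n)\types N':A,\Delta$ with $N'=N[y_1/x_1,\dots,y_n/x_n]$; the contexts $\Gamma$ and $\Delta$ are untouched because the $x_i$ do not occur there. Next I would eliminate $y_1,\dots,y_n$ one at a time. Suppose after $i-1$ steps I have $\Gamma, y_i:\theta'(x_i),\dots,y_n:\theta'(x_n)\types N'_{i-1}:A,\Delta$, with $N'_0=N'$. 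The coherence hypothesis $\Gamma\types\sigma\tyfit\theta,\Delta$ gives $\Gamma\types\sigma(x_i):\theta(x_i),\Delta$; weakening the left context by the (fresh) variables $y_{i+1},\dots,y_n$ — admissible, since adding a fresh left hypothesis to every sequent of a derivation preserves it — yields $\Gamma, y_{i+1}:\theta'(x_{i+1}),\dots,y_n:\theta'(x_n)\types\sigma(x_i):\theta(x_i),\Delta$. Applying Lemma~\ref{lem:substitution} with this as the argument typing, with the variable $y_i$ (which is not free in $\Delta$), and with the subtyping fact $\theta(x_i)\subtype\theta'(x_i)$ supplied by $\theta\subtype\theta'$, I get $\Gamma, y_{i+1}:\theta'(x_{i+1}),\dots\types N'_i:A,\Delta$ where $N'_i=N'_{i-1}[\sigma(x_i)/y_i]$. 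After $n$ steps, $\Gamma\types N'_n:A,\Delta$.

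Finally, since the $y_i$ are fresh and pairwise distinct, none of them occurs in any $\sigma(x_j)$, so the sequential substitutions collapse into the simultaneous one: $N'_n = N'[\sigma(x_1)/y_1]\cdots[\sigma(x_n)/y_n] = N'[\sigma(x_1)/y_1,\dots,\sigma(x_n)/y_n] = N[\sigma(x_1)/x_1,\dots,\sigma(x_n)/x_n] = N\sigma$, using $N'=N[y_1/x_1,\dots,y_n/x_n]$ and that the $y_i$ do not occur in $N$. Hence $\Gamma\types N\sigma:A,\Delta$, as required.

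The lemma is not deep; the only place any real care is needed is precisely the renaming‑and‑weakening bookkeeping that makes the iterated single‑variable substitution genuinely compute the simultaneous one (the freshness of the $y_i$ is what rules out variable capture between the $\sigma(x_j)$). A direct induction on the derivation of $\Gamma,\theta'\types N:A,\Delta$ is also possible, but there the main obstacle is the multi‑succedent structure: the last rule of the derivation need not act on $N:A$, and when it decomposes a formula of $\Delta$ one must re‑establish the coherence hypothesis for the premises — which works if one keeps the decomposed formula around and weakens in both zones before appealing to the induction hypothesis. The reduction above sidesteps exactly that subtlety by delegating it to Lemma~\ref{lem:substitution} and Lemma~\ref{lem:weakening}.
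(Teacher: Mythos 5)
Your proof is correct and takes essentially the same route as the paper, whose entire proof of this lemma is the one-liner ``by Lemma~\ref{lem:substitution} (Substitution) and induction on the size of $\dom(\theta')$.'' The only thing you add is the explicit renaming-to-fresh-variables and left-weakening bookkeeping that makes the iterated single-variable substitution coincide with the simultaneous one; the paper leaves that implicit.
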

\begin{proof}
    By \cref{lem:substitution} (\textsc{Substitution}) and induction on the size of $\dom(\theta')$.
\end{proof}

\begin{lemma}[Irrelevant Assumption]
  \label{lem:irrelevance}
  If $\Gamma,\,x:A \types \Delta$ and $x \notin \fv(\Delta)$, then $\Gamma \types \Delta$.
\end{lemma}
\begin{proof}
The proof is by induction on the derivation.
\begin{description}
  \item[\rlnm{Var}] In this case, the assumption implies that the distinguished typing $x:A$ is not principal.  Therefore, $\Gamma \types \Delta$ follows by \rlnm{Var}.
  \item[\rlnm{Top},\rlnm{Num},\rlnm{Atom}] In these cases, $x:A$ cannot be the principal formula and so the result follows immediately by application of the same rule. 
  \item[\rlnm{Abs},\rlnm{Fix}] These cases are similar, so we do only the former.  In this case, $\Delta$ is of shape $\abs{y}{M}:B \to C,\,\Delta'$ with the principal formula distinguished, and we may assume $y \neq x$.  Assume $x \notin \fv(\Delta)$, so in particular $x \notin \fv(M)$.  Then it follows from the induction hypothesis that $\Gamma,\,y:B \types M:C,\,\Delta'$, and the result follows by \rlnm{Abs}.
  \item[\rlnm{Sub}] In this case, the result follows immediately from the induction hypothesis and \rlnm{Sub}.
  \item[\rlnm{Comp}] In this case, $\Delta$ has shape $y:\CompPure{B},\,\Delta'$ and it follows from the induction hypothesis that $\Gamma,\,y:B \types \Delta'$ from which the result follows by \rlnm{Comp}.
  \item[\rlnm{Pair},\rlnm{BinOp},\rlnm{RelOp},\rlnm{App},\rlnm{BinOpE},\rlnm{RelOpE},\rlnm{PairE},\rlnm{AppE},\rlnm{MatchE}] Each of these cases follows immediately from the induction hypotheses and application of the same rule.
  \item[\rlnm{Match}] In this case, $\Delta$ has shape $\matchtm{M}{|_{i=1}^k p_i \mapsto N_i} : B,\,\Delta'$ and we may assume that the free variables of each $p_i$ are disjoint from $x$.  Hence, for each $i$, $x$ is not in the domain of $\theta_i$.  Thus, we may apply the induction hypotheses in each premise to obtain $\Gamma \types M : \bigvee_{i=1}^k p_i\theta_i,\,\Delta'$ and, for each $i$, $\Gamma,\,\theta_i \types M:\CompPure*{p_i\theta_i},\,N_i:B,\,\Delta'$.  Then the result follows by \rlnm{Match}.   
\end{description}
\end{proof}

\begin{lemma}[Irrelevant Conclusions]
  If $\Gamma \types \Delta_1,\,\Delta_2$ and $\fv(\Delta_1) \cap \fv(\Delta_2) = \emptyset$ and $\Gamma \nvdash \Delta_2$ then already $\Gamma \types \Delta_1$. 
\end{lemma}
\begin{proof}
  The proof is by induction on the derivation.
  \begin{description}
    \item[\rlnm{Var},\rlnm{Top},\rlnm{Num},\rlnm{Atom}] Suppose $\fv(\Delta_1) \cap \fv(\Delta_2) = \emptyset$ and $\Gamma \nvdash \Delta_2$.  By the latter, it must be that the principal formula of the rule is in $\Delta_1$, from which the result follows by \rlnm{Var}.
    \item[\rlnm{Abs},\rlnm{Fix}] These cases are similar, so we do only the former. Suppose $\fv(\Delta_1) \cap \fv(\Delta_2) = \emptyset$ and $\Gamma \nvdash \Delta_2$.  There are two subcases.  
    \begin{itemize}
      \item If the principal formula is in $\Delta_1$, then $\Delta_1$ has shape $\abs{x}{M}:A \to B,\,\Delta_1'$ and we may assume, in particular, that $x \notin \fv(\Gamma) \cup \fv(\Delta_2)$.  Then the premise may be split as $M:B,\,\Delta_1'$ and $\Delta_2$.  It follows that $\fv(M:B,\,\Delta_1') \cap \fv(\Delta_2) = \emptyset$ and, by the contrapositive of \cref{lem:irrelevance} (\rlnm{Irrelevant Assumption}), that $\Gamma,\,x:A \nvdash \Delta_2$.  Therefore, the induction hypothesis yields $\Gamma,\,x:A \types M:B,\,\Delta_1'$ and the result follows by \rlnm{Abs}.
      \item Otherwise, $\Delta_2$ is of shape $\abs{x}{M}:A \to B,\,\Delta_2'$ and we may assume that $x \notin \fv(\Gamma) \cup \fv(\Delta_1)$.  Then the premise may be split as $\Delta_1$ and $M:B,\,\Delta_2'$.  It follows that $\fv(\Delta_1) \cap \fv(M:B,\,\Delta_2') = \emptyset$ and we must have $\Gamma,\,x:A \nvdash M:B,\,\Delta_2'$ since, otherwise, $\Gamma \types \Delta_2$ would follow by \rlnm{Abs}.  Therefore, the result follows immediately from the induction hypothesis.
    \end{itemize}
  \item[\rlnm{Comp}] There are two subcases.
    \begin{itemize}
      \item If the principal formula is in $\Delta_1$, then $\Delta_1$ has shape $x:\CompPure{A},\Delta'$.  Since $\Gamma \nvdash \Delta_2$ and $x \notin \fv(\Delta_2)$, it follows from the contrapositive of Lemma \cref{lem:irrelevance} \rlnm{Irrelevant Assumption} that $\Gamma,\,x:A \nvdash \Delta_2$ and thus $\Gamma,\,x:A \types \Delta'$ follows from the induction hypothesis.  From this, $\Gamma \types \Delta_1$ follows by \rlnm{Comp}.
      \item Otherwise, $\Delta_2$ is of shape $x:\CompPure{A},\Delta'$.  Since $\Gamma \nvdash \Delta_2$, it follows from the contrapositive of \rlnm{Comp} that $\Gamma,\,x:A \nvdash \Delta'$.  Hence, $\Gamma,\,x:A \types \Delta_1$ follows from the induction hypothesis.  Since $x \notin \fv(\Delta_1)$, $\Gamma \types \Delta_1$ follows by Lemma~\cref{lem:irrelevance} \rlnm{Irrelevant Assumption}.
    \end{itemize}
  \item[\rlnm{Pair},\rlnm{BinOp},\rlnm{RelOp},\rlnm{App},\rlnm{BinOpE},\rlnm{RelOpE},\rlnm{PairE},\rlnm{AppE},\rlnm{MatchE}] These cases are similar, so we consider only the first.  There are two subcases.
    \begin{itemize}
      \item If the principal formula is in $\Delta_1$, then $\Delta_1$ has shape $(M,N):(A,B),\Delta'$.  It follows immediately from the induction hypotheses that $\Gamma \types M:A,\Delta'$ and $\Gamma \types N:B,\Delta'$, and the result then follows by \rlnm{Pair}.
      \item Otherwise, the principal formula is in $\Delta_2$, which therefore has shape $(M,N):(A,B),\Delta'$.  It follows from the contrapositive of \rlnm{Pair} that either $\Gamma \nvdash M:A,\Delta'$ or $\Gamma \nvdash N:B,\Delta'$.  In either case, it follows immediately from the respective induction hypothesis that $\Gamma \types \Delta_1$.
    \end{itemize}
  \item[\rlnm{Match}] There are two subcases.
    \begin{itemize}
      \item If the principal formula is in $\Delta_1$, then $\Delta_1$ has shape $\matchtm{M}{|_{i=1}^k p_i \mapsto N_i}:A,\Delta'$ with $\fv(p_i) \cap \fv(\Delta_2) = \emptyset$.  Hence, the first induction hypothesis yields $\Gamma \types M:\bigvee_{i=1}^k p_i \theta_i,\Delta'$.  By the disjointness of the pattern variables, we have that, for each $i$, the domain of $\theta_i$ is disjoint from the free variables of $\Delta_2$.  Therefore, it follows by repeated application the contrapositive of Lemma~\ref{lem:irrelevance} \rlnm{Irrelevant Assumption} that $\Gamma,\,theta_i \nvdash \Delta_2$.  Consequently, the induction hypothesis yields, for each $i$, $\Gamma,\,\theta_i \types M:\CompPure*{p_i\theta_i},N_i:A,\Delta'$.  Then the result follows by \rlnm{Match}.
      \item Otherwise, $\Delta_2$ has shape $\matchtm{M}{|_{i=1}^k p_i \mapsto N_i}:A,\Delta'$ with $\fv(p_i) \cap \fv(\Delta_1) = \emptyset$.  It follows from the contrapositive of \rlnm{Match} that either $\Gamma \nvdash M : \bigvee_{i=1}^k p_i\theta_i,\,\Delta'$ or $\Gamma,\,\theta_i \nvdash M:\CompPure*{p_i\theta_i},N_i:A,\Delta'$ for some $i$.  In the first case, the result follows from the induction hypothesis.  In the second case, it follows from the induction hypothesis that, for the witnessing $i$, $\Gamma,\,\theta_i \types \Delta_1$.  Then, due to the disjointness of the pattern variables (domain of $\theta_i$), repeated application of Lemma~\cref{lem:irrelevance} \rlnm{Irrelevant Assumption} yields the desired result.
    \end{itemize}
  \end{description}
\end{proof}

\begin{lemma}[Inversion]\label{lem:inversion}
    Let $\Gamma \types Q:A,\, \Delta$ with $Q$ closed and $A \tyneq \topty$. Then, \textbf{either} $\Gamma \types \Delta$ is itself provable, \textbf{or} $Q$ satisfies the following syntax-directed conditions:
    \begin{enumerate}
        \item If $Q$ is of shape $\atomtm{a}$, then $\atomtm{a} \subtyping A$.
        \item If $Q$ is of shape $\pn{n}$, then $\intty \subtyping A$.
        \item If $Q$ is of shape $\abs{x}{M}$, then $\exists B,\,C$ such that $B \to C \subtyping A$ and $\Gamma, x:B \types M:C,\, \Delta$.
        \item If $Q$ is of shape $\app{M}{N}$, then either:
              \begin{enumerate}[\protect\circled{\arabic*}]
                  \item $\exists B, C$ such that $C\subtyping A$, $\Gamma \types M:B \to C,\,\Delta$ and $\Gamma \types N:B,\,\Delta$
                  \item $\CompPure{\okty}\subtyping A$ and $\Gamma \types M : \CompPure{\funty},\,\Delta$.
              \end{enumerate}
        \item If $Q$ is of shape $(M_1,M_2)$, then either:
              \begin{enumerate}[\protect\circled{\arabic*}]
                  \item $\exists A_1,\,A_2$ such that $(A_1,A_2) \subtyping A$ and $\Gamma \types M_1: A_1,\,\Delta$ and $\Gamma \types M_2 : A_2,\,\Delta$
                  \item $\CompPure{\okty} \subtyping A$ and $\Gamma \types M_1: \CompPure{\okty},\,M_2:\CompPure{\okty},\,\Delta$.
              \end{enumerate}
        \item If $Q$ is of shape $\fixtm{x}{M}$, then $\exists A'$ such that $A' \subtyping A$ and $\Gamma,\,x:A' \types M:A',\,\Delta$.
        \item If $Q$ is of shape $\opPrim{M}{N}$, then either:
              \begin{enumerate}[\protect\circled{\arabic*}]
                  \item $\intty \subtyping A$ and $\Gamma \types M : \intty,\,\Delta$ and $\Gamma \types N : \intty,\,\Delta$
                  \item $\CompPure{\okty} \subtyping A$ and $\Gamma \types M: \CompPure{\intty},\,N:\CompPure{\intty},\,\Delta$.
              \end{enumerate}
        \item If $Q$ is of shape $\relOpPrim{M}{N}$, then either:
              \begin{enumerate}[\protect\circled{\arabic*}]
                  \item $\boolty \subtyping A$ and $\Gamma \types M : \intty,\,\Delta$ and $\Gamma \types N : \intty,\,\Delta$
                  \item $\CompPure{\okty} \subtyping A$ and $\Gamma \types M: \CompPure{\intty},\,N:\CompPure{\intty},\,\Delta$.
              \end{enumerate}
        \item If $Q$ is of shape $\matchtm{M}{\mid_{i=1}^k p_i \mapsto N_i}$ then either:
              \begin{enumerate}[\protect\circled{\arabic*}]
                  \item $\Gamma \types M:\textstyle\bigvee_{i=1}^k p_i\theta_i,\,\Delta$ and $\forall i\in[1,k].\,\Gamma,\,\theta_i \types M:\CompPure*{p_i\theta_i},\,N_i:A,\,\Delta$ and, for all $i \in [1,k]$ and $x \in \dom(\theta_i)$, $\theta_i(x) \subtype \okty$.
                  \item $\CompPure{\okty}\subtyping A$ and $\Gamma \types M:\CompPure*{\textstyle\bigvee_{i=1}^k p_i\thetaok{p_i}},\,\Delta$.
              \end{enumerate}
    \end{enumerate}
\end{lemma}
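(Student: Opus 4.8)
The plan is to prove the statement by induction on the derivation of $\Gamma \types Q:A,\,\Delta$ in the one-sided system of Figure~\ref{fig:simple-typing-rules}. As the remark preceding the lemma observes, the point is that the last rule of the derivation need \emph{not} have $Q:A$ as its principal formula, so the governing case split is: \emph{(A)} the principal formula of the final rule is the focused formula $Q:A$; or \emph{(B)} it is some other formula $R:D \in \Delta$. In case (A) we read the conditions off the premises of that rule (possibly after an appeal to the induction hypothesis); in case (B) we apply the induction hypothesis to the premises, into which $Q:A$ is passively carried, and then either re-assemble $\Gamma\types\Delta$ or transport the conditions on $Q$ back down to $(\Gamma,\Delta)$.

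\textbf{Case (A): $Q:A$ principal.} Here \rlnm{Top} cannot be the rule, since it would force $A$ to be $\topty$ (up to equivalence), contradicting $A \tyneq \topty$; this is exactly where that hypothesis is spent. For the introduction/elimination rules \rlnm{Var}, \rlnm{Num}, \rlnm{Atom}, \rlnm{Pair}, \rlnm{BinOp}, \rlnm{RelOp}, \rlnm{App}, \rlnm{Abs}, \rlnm{Fix}, \rlnm{Match}, and the $\CompPure{\okty}$-elimination rules \rlnm{AppE}, \rlnm{MatchE}, \rlnm{PairE}, \rlnm{BinOpE}, \rlnm{RelOpE}, a direct inspection shows that the premises are precisely one of the two listed alternatives for the relevant shape of $Q$ (the ``$\circled{1}$'' options for the introduction rules, the ``$\circled{2}$'' options for the $\CompPure{\okty}$ ones), with the subtyping side condition met reflexively by $A$. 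For \rlnm{Comp}, $Q$ is a variable $x$ and $A = \CompPure{B}$, and the premise $\Gamma,\,x:B \types \Delta$ directly witnesses the second alternative of clause~1 with $\CompPure{B} \subtype A$ by reflexivity. The only genuinely non-trivial subcase is \rlnm{Sub}, with premise $\Gamma \types Q:A',\,\Delta$ and $A' \subtype A$: since $A' \subtype A$ and $A \tyneq \topty$ force $A' \tyneq \topty$, the induction hypothesis applies and delivers either $\Gamma \types \Delta$ (done) or one of the listed conditions phrased with $A'$ in place of $A$; we then raise it to $A$ by composing the subtyping constraint inside the condition with $A' \subtype A$ via \rlnm{Trans}, the body judgments being untouched since they do not mention $A$.

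\textbf{Case (B): principal formula $R:D \in \Delta$, say $\Delta = R:D,\,\Delta'$.} Then $Q:A$ occurs passively in every premise $\Gamma,\,\Gamma_c^i \types \Delta_c^i,\,\Delta',\,Q:A$, where the $\Gamma_c^i$ are the context formulas the rule introduces (bound variables, fresh for $\Gamma,\Delta$ and hence for $Q$) and the $\Delta_c^i$ its constituent succedent formulas. Applying the induction hypothesis to each premise, focused on $Q:A$ over context $\Gamma\cup\Gamma_c^i$ and side succedent $\Delta_c^i\cup\Delta'$, gives for each $i$ either $\Gamma,\,\Gamma_c^i \types \Delta_c^i,\,\Delta'$ or one of the listed conditions on $Q$ relative to that enlarged data. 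If the first alternative holds for \emph{every} $i$, re-applying \rlnm{R} (whose schema permits the smaller side succedent, and whose freshness side conditions only weaken) concludes $\Gamma \types R:D,\,\Delta' = \Gamma \types \Delta$. Otherwise some premise yields a condition on $Q$: the purely type-level parts (clauses 2 and 3, and the subtyping constraint in every clause) are unchanged, and a condition referring to the context by a variable distinct from the bound variables of $\Gamma_c^i$ is unaffected by freshness; for a condition whose body is a judgment $\Gamma,\,\Gamma_c^i,\,\Xi \types T,\,\Delta_c^i,\,\Delta'$ one wants $\Gamma,\,\Xi \types T,\,R:D,\,\Delta'$, which is obtained by re-applying \rlnm{R} using \cref{lem:weakening} to insert $\Xi$ and $T$ and drawing the remaining premises from the other $i$ (whenever \emph{those} took the first alternative).

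\textbf{Main obstacle.} The last step is where the real work lies, because the sibling premises need not cooperate: when two or more premises of a multi-premise rule (\rlnm{App}, \rlnm{Pair}, \rlnm{BinOp}, \rlnm{RelOp}, \rlnm{Match}) simultaneously return \emph{conditions on $Q$} over distinct enlarged succedents, rather than the collapsed form $\Gamma,\Gamma_c^i\types\Delta_c^i,\Delta'$, one cannot directly re-assemble the instance of \rlnm{R}. The plan there is a secondary argument — a nested induction on the structure of \rlnm{R} (equivalently, on the number of premises), re-invoking the Inversion hypothesis on the body judgments of the conditions to peel the constituent succedent formulas $\Delta_c^i$ off one at a time — together with careful bookkeeping of which freshly-bound variables and which constituent formulas occur where. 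A related wrinkle is that \rlnm{Comp} carries no freshness condition, so in case (B) the subcase where $Q$ is the very variable discharged by \rlnm{Comp} must be handled separately; there one uses that the \rlnm{Comp}-premise already has the form $\Gamma,\,x:B \types \ldots$, closing it by reflexivity of subtyping and a further use of \rlnm{Comp}. I expect everything else to be routine bookkeeping once this transport mechanism is set up.
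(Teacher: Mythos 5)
Your overall strategy coincides with the paper's: induction on the derivation, a case split on whether $Q:A$ is the principal formula of the final rule, reading the conditions off the premises in the principal case (with \rlnm{Sub} handled by the induction hypothesis followed by transitivity of subtyping), and, in the non-principal case, either recombining the premises to conclude $\Gamma \types \Delta$ or transporting a condition on $Q$ back down. Your treatment of case (A) matches the paper's, and your observation that \rlnm{Comp} introduces a left formula without a freshness side condition is a legitimate wrinkle that the paper's own write-up glosses over.

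The problem is case (B), specifically the step you yourself flag as the ``main obstacle'': you leave it unresolved. A ``nested induction on the structure of \rlnm{R}, re-invoking the Inversion hypothesis on the body judgments of the conditions to peel the constituent succedent formulas off one at a time'' is not yet an argument --- it is unclear what the measure is, and unclear why re-invoking Inversion on a body judgement (which is focused on a different formula over a different context) would discharge a sibling premise's constituent formulas. The paper's resolution is different and more direct. Working under the assumption $\Gamma \nvdash \Delta$, it first observes that \emph{some} premise $j$ must satisfy $\Gamma,\,\Gamma_j \nvdash \Delta_j,\,\Delta'$ (otherwise recombining all premises by \rlnm{R} would derive $\Gamma \types \Delta$); it then applies the induction hypothesis \emph{only to that premise}, fixing the disjunct and all existential witnesses from that single application, so there is never a need to reconcile conditions arising from different premises. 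The remaining work is to pad \emph{every} premise $i$ of \rlnm{R} with the condition's extra assumptions $\Gamma''$ and conclusions $\Delta''$, i.e.\ to establish $\Gamma,\,\Gamma_i,\,\Gamma'' \types \Delta'',\,\Delta_i,\,\Delta'$ for each $i$ --- by weakening when $\Gamma,\,\Gamma_i \types \Delta_i,\,\Delta'$ already holds, and via the induction hypothesis otherwise --- after which a \emph{single} re-application of \rlnm{R} yields the required body judgement $\Gamma,\,\Gamma'' \types \Delta'',\,\Delta$ over the original contexts (see the paper's \rlnm{App}/\rlnm{BinOp} illustration for how the padded premises are recombined). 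As it stands, your proposal correctly locates the difficulty but does not overcome it.
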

\begin{proof}
    Suppose $\Gamma \types Q:A,\, \Delta$ with $Q$ closed, $A \tyneq \topty$, and suppose $\Gamma \nvdash \Delta$.  Then it follows from \rlnm{Irrelevant Conclusions} that already there is a proof of $\Gamma \types Q : A$.  We prove that $Q$ satisfies conditions 1 -- 9, with $\Delta$ empty, by induction on the structure of that proof.
    \begin{itemize}
      \item If the proof is concluded by \rlnm{Var}, $Q$ must be a variable, which contradicts the assumption that $Q$ was closed.
      \item If the proof is concluded by \rlnm{Sub}, we note that the induction hypothesis applies to the premise, of shape $\types Q : D$ with $D \subtype A$, and this gives that $Q:D$ satisfies the conditions given, with $\Delta$ empty.  Then observe that all the conditions require only that the given type $A$ is a supertype (of some type that depends on the particular case), and thus, since the syntax-directed conditions hold, with $\Delta$ empty, for $Q:D$ they also hold for $Q:A$, since $D \subtype A$.  
      \item Otherwise, the conditions, with $\Delta$ empty, are immediate by inspection of the rule.
    \end{itemize}
    Finally, if the conditions 1--9 hold with $\Delta$ empty, then they also hold for the $\Delta$ given in the statement by repeated application of \rlnm{Weakening}.

\end{proof}

\begin{lemma}
    \label{lem:fun-to-div}
    For any types $A$ and $B$, $\abs{x}{\divtm} \in \mng{A \to B}$.
\end{lemma}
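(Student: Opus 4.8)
The plan is to unfold the definition of $\mng{A \to B}$ from Definition~\ref{def:type-meanings} and reduce the goal to a trivial observation about divergence. By that definition, $\mng{A \to B} = \{\,\abs{x}{M} \mid \forall N \in \mng*{A}.\ M[N/x] \in \mng*{B}\,\}$, so to show $\abs{x}{\divtm} \in \mng{A \to B}$ it suffices to check that for every closed term $N \in \mng*{A}$ we have $\divtm[N/x] \in \mng*{B}$. Since $\divtm = \fixtm{x}{x}$ is a closed term (the occurrence of $x$ is bound, and we identify terms up to renaming of bound variables), $\divtm[N/x] = \divtm$, and the goal reduces to showing $\divtm \in \mng*{B}$ for an arbitrary type $B$.

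Next I would establish that $\divtm$ diverges. The only redex in $\divtm = \fixtm{x}{x}$ is the fixpoint redex, and by the reduction rule for $\mathsf{fix}$ we have $\fixtm{x}{x} \ped x[\fixtm{x}{x}/x] = \fixtm{x}{x}$; hence every reduct of $\divtm$ is again $\divtm$, and $\divtm$ never reaches a normal form. Recalling $\mng*{B} = \{\,M \mid \text{if } M \text{ reduces to a normal form } U, \text{ then } U \in \mng{B}\,\}$, the membership condition for $\divtm$ holds vacuously, so $\divtm \in \mng*{B}$ for every $B$. Combining this with the reduction of the previous paragraph, $\abs{x}{\divtm}$ satisfies the membership condition defining $\mng{A \to B}$, for all $A$ and $B$.

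There is no genuine obstacle here; the only point worth a moment's care is the edge behaviour when $\mng{A}$ or $\mng{B}$ is degenerate (e.g. $A = \CompPure{\topty}$), but this is harmless: if $\mng*{A}$ were empty the universally quantified obligation would be trivial, and when $\mng{B}$ is empty the vacuous-divergence argument for $\divtm \in \mng*{B}$ still applies unchanged. This lemma is the expected witness for the remark that $A \to B$ is never empty, and it will be used as such when discharging the function cases in the soundness argument for subtyping (Theorem~\ref{thm:subtyping-soundness}).
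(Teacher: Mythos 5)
Your proof is correct and follows essentially the same route as the paper's: unfold the definition of $\mng{A \to B}$, observe that $\divtm$ is closed so the substitution is the identity, and note that $\divtm \in \mng*{B}$ holds vacuously because $\divtm$ never reaches a normal form. Your explicit verification that $\fixtm{x}{x}$ reduces only to itself just makes precise the "vacuously" step that the paper leaves implicit.
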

\begin{proof}
    By the definition of $\mng*{\cdot}$, for any type $B$, $\divtm \in \mng*{B}$ holds vacuously.
    Moreover, as $\fv(\divtm)=\varnothing$, we have $\Subst{\divtm}{N}{x} \eqsyntac \divtm$ for any term $N$. It follows that $\Subst{\divtm}{N}{x} \in \mng*{B}$ for any type $B$ and any term $N$. By \cref{def:type-meanings}, this guarantees that $\abs{x}{\divtm} \in \mng{A \to B}$ for any types $A$ and $B$.
\end{proof}

\begin{lemma}
    \label{lem:all-values-typable}
    Let $V$ be a value.  Then $\types V : \okty$, in particular:
    \begin{itemize}
        \item If $V$ is a numeral, then $\types V : \intty$.
        \item If $V$ is an atom $\atom{a}$, then $\types V : \atom{a}$.
        \item If $V$ is an abstraction, then $\types V : \funty$.
        \item If $V$ is a pair, then $\types V : \pairvalty$.
    \end{itemize}
\end{lemma}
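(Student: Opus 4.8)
The plan is to argue by structural induction on the value $V$, following the grammar $V \Coloneqq \atom{a} \mid \pn{n} \mid (V,W) \mid \abs{x}{M}$. In each case I would first derive the sharper typing listed among the bullet points, using the right-introduction rule of Figure~\ref{fig:simple-two-sided-rules} corresponding to the head constructor, and then obtain $\types V : \okty$ by a single application of \rlnm{SubR}, since each of $\intty$, $\atom{a}$, $\funty$ and $\pairvalty$ is a subtype of $\okty$. The general statement $\types V : \okty$ is what the induction hypothesis supplies, and the four refined typings are read off along the way.

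Before the induction I would record the subtyping facts that drive the final subsumption. From \rlnm{Ok} we have $\intty \vee \pairvalty \vee \funty \vee \atomty \subtype \okty$; combined with \rlnm{Refl}, \rlnm{UnionR} and \rlnm{Trans} this yields $\intty \subtype \okty$, $\pairvalty \subtype \okty$, $\funty \subtype \okty$ and $\atomty \subtype \okty$. Since also $\atom{a} \subtype \atomty$ by the subtyping rule \rlnm{Atom}, transitivity gives $\atom{a} \subtype \okty$ for every atom $\atom{a}$.

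The four cases are then immediate. For $V = \pn{n}$, rule \rlnm{Num} gives $\types \pn{n} : \intty$, and \rlnm{SubR} promotes this to $\types \pn{n} : \okty$. For $V = \atom{a}$, rule \rlnm{Atom} gives $\types \atom{a} : \atom{a}$, and \rlnm{SubR} gives $\types \atom{a} : \okty$. For $V = \abs{x}{M}$: unfolding $\funty = \botty \to \topty$, the premise $x:\botty \types M : \topty$ of \rlnm{Abs} holds by \rlnm{Top} (the side condition $x \notin \fv(\Gamma,\Delta)$ being vacuous since $\Gamma$ and $\Delta$ are empty), so \rlnm{Abs} yields $\types \abs{x}{M} : \funty$ and \rlnm{SubR} gives $\types \abs{x}{M} : \okty$. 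For $V = (V_1, V_2)$, the induction hypothesis gives $\types V_1 : \okty$ and $\types V_2 : \okty$, so \rlnm{Pair} gives $\types (V_1,V_2) : (\okty,\okty)$, which is $\pairvalty$, and \rlnm{SubR} again yields $\okty$.

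I do not expect any real obstacle; the argument is routine. The one point worth emphasising is the abstraction case: it is essential that the codomain of $\funty$ is $\topty$, because \rlnm{Top} derives $M:\topty$ for an \emph{arbitrary} body $M$ --- even one that itself goes wrong. This is precisely what allows the system to type every abstraction value, in contrast to traditional systems in which typing $\abs{x}{M}$ would require typing its body $M$.
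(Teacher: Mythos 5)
Your proof is correct and takes exactly the route the paper intends: the paper's own proof is simply ``Straightforward induction on $V$,'' and your case analysis (each head constructor's right rule followed by \rlnm{SubR} via the subtypings derived from \rlnm{Ok}, \rlnm{Atom}, \rlnm{UnionR} and \rlnm{Trans}) fills in those details correctly, including the key observation that $\funty = \botty \to \topty$ lets \rlnm{Top} discharge the body of any abstraction.
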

\begin{proof}
    Straightforward induction on $V$.
\end{proof}

\begin{lemma}[Cut Admissibility for Values]
    \label{lem:cut}
    Let $V$ be a value, and $A$ and $B$ types with $A \subtyping \CompPure{B}$. If both $\Gamma \types V:A,\,\Delta$ and $\Gamma \types V:B,\,\Delta$ hold, then $\Gamma \types \Delta$ follows.
\end{lemma}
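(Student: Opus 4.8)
The plan is to prove, by induction on the structure of the value $V$, the slightly stronger statement in which the hypothesis $A\subtyping\CompPure{B}$ is weakened to $\mng{A}\cap\mng{B}=\varnothing$; this suffices, since $A\subtyping\CompPure{B}$ yields $\mng{A}\subseteq\mng{\CompPure{B}}=\mng{\topty}\setminus\mng{B}$ by soundness of subtyping (\cref{thm:subtyping-soundness}). Working with the semantic hypothesis is convenient because it is visibly symmetric in $A$ and $B$ and behaves well under the decompositions we will need.

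The engine of the argument is the inversion lemma (\cref{lem:inversion}). Since $\mng{\topty}$ is nonempty, $A$ and $B$ cannot both be $\topty$, and if one of them is $\topty$ then the other has empty meaning --- a degenerate situation absorbed by the reasoning below --- so assume for now that $A\tyneq\topty$ and $B\tyneq\topty$. Applying \cref{lem:inversion} to $\Gamma\types V:A,\,\Delta$ and to $\Gamma\types V:B,\,\Delta$: if either application already produces $\Gamma\types\Delta$ we are done; otherwise we read off the syntax-directed data for $V$. When $V$ is a numeral, an atom, or an abstraction, inversion supplies a type $X\subtyping A$ and a type $Y\subtyping B$ --- respectively $\intty$, the singleton atom type, or arrow types --- and hence $\mng{X}\cap\mng{Y}\subseteq\mng{A}\cap\mng{B}=\varnothing$ by \cref{thm:subtyping-soundness}. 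But $\mng{X}\cap\mng{Y}$ is inhabited: by the numeral itself, by the atom itself, and (in the abstraction case) by $\abs{z}{\divtm}$, using \cref{lem:fun-to-div}. This contradiction shows the non-trivial branch of inversion is impossible here, so $\Gamma\types\Delta$.

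The case $V=(V_1,V_2)$ is the only one that invokes the induction hypothesis, and is the heart of the proof. Inverting a pair typing offers two alternatives. The principal one gives $(A_1,A_2)\subtyping A$ with $\Gamma\types V_j:A_j,\,\Delta$, and symmetrically $(B_1,B_2)\subtyping B$ with $\Gamma\types V_j:B_j,\,\Delta$, whence $\mng{(A_1,A_2)}\cap\mng{(B_1,B_2)}\subseteq\mng{A}\cap\mng{B}=\varnothing$. The key observation is that this forces $\mng{A_j}\cap\mng{B_j}=\varnothing$ for some $j$: unwinding the asymmetric definition of $\mng{(\cdot,\cdot)}$, if $W_1\in\mng{A_1}\cap\mng{B_1}$ and $W_2\in\mng{A_2}\cap\mng{B_2}$ both existed, then $(W_1,W_2)$ when $W_1$ is a value, and $(W_1,\divtm)$ when $W_1$ is stuck, would witness $\mng{(A_1,A_2)}\cap\mng{(B_1,B_2)}\neq\varnothing$. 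Then the induction hypothesis, applied to the strict subterm $V_j$ (a value) with its two component derivations, closes the goal. The secondary alternative asserts $\Gamma\types V_1:\CompPure{\okty},\,V_2:\CompPure{\okty},\,\Delta$; since $V_1,V_2$ are values, \cref{lem:all-values-typable} and weakening also give their $\okty$-typings in this context, and two applications of the induction hypothesis (first on $V_1$, then on $V_2$, each using $\mng{\okty}\cap\mng{\CompPure{\okty}}=\varnothing$) yield $\Gamma\types\Delta$. Lastly, the deferred case where one of $A,B$ is $\topty$ --- so the other has empty meaning --- is handled by inverting the non-$\topty$ typing alone: its non-trivial outcomes give types that must be empty yet are inhabited (impossible), or, for pairs, the same recursive situation (with the missing second derivation of a component manufactured via \cref{lem:all-values-typable}).

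The main obstacle I anticipate is the pair case, and within it the decomposition of $\mng{(A_1,A_2)}\cap\mng{(B_1,B_2)}=\varnothing$ into componentwise disjointness: this hinges on the asymmetric shape of $\mng{(A,B)}$ --- concretely, on the fact that $\divtm\in\mng*{B_2}$ for every type $B_2$, so a stuck first component can always be completed to an element of $\mng{(A_1,A_2)}$. Everything else is bookkeeping, and the induction is well-founded: every appeal to the hypothesis is on a strict subterm $V_1$ or $V_2$ of $V$, including the two appeals used to dispatch the secondary pair-inversion alternative.
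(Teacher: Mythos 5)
Your proof is correct, and its skeleton (induction on $V$, inversion via \cref{lem:inversion}, contradictions from inhabited-but-disjoint types, the $\abs{x}{\divtm}$ witness for arrow types) matches the paper's. The genuine divergence is in the pair case, which is indeed the crux. The paper stays syntactic: from $(B_1,B_2)\subtyping\CompPure{(A_1,A_2)}$ it invokes the subtyping invariant (\cref{lem:subtyping}, item 5) to conclude $B_1\subtyping\CompPure{A_1}$ or $B_2\subtyping\CompPure{B_2}$, and that invariant in turn rests on the alternative characterisation of subtyping (\cref{thm:subtyping-equivalence}). You instead strengthen the induction hypothesis from $A\subtyping\CompPure{B}$ to the semantic $\mng{A}\cap\mng{B}=\varnothing$ and decompose the disjointness of $\mng{(A_1,A_2)}$ and $\mng{(B_1,B_2)}$ directly in the model, correctly exploiting the asymmetric definition of $\mng{(\cdot,\cdot)}$ (the $(W_1,\divtm)$ completion when $W_1$ is stuck is exactly the point where a naive symmetric decomposition would fail). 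What your route buys is independence from the alternative subtyping system --- only \cref{thm:subtyping-soundness} is needed --- at the price of a semantically stated induction hypothesis; what the paper's route buys is that the lemma statement is preserved verbatim through the induction. Incidentally, your explicit treatment of the secondary pair-inversion alternative (two cuts against $\okty$-typings obtained from \cref{lem:all-values-typable}) is a case the paper's own write-up of this lemma passes over silently, so your version is, if anything, the more complete of the two.
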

\begin{proof}
    First we consider the case that one of $A$ or $B$ is subtype equivalent to $\topty$, let us suppose it is $A$ because the other case is symmetrical -- $A \subtype \CompPure{B}$ iff $B \subtype \CompPure{A}$.  Then it follows that $\CompPure{B} \tyeq \topty$.  In such a case, we can always find another type $A' \tyneq \topty$ such that $A' \subtype \CompPure{B}$ and $\Gamma \types V:A',\,\Delta$, and thus we may conduct the proof assuming wlog $A \tyneq \topty$.  To see this, observe that, by Lemma~\ref{lem:all-values-typable}, subtyping and weakening, we obtain $\Gamma \types V : \okty,\,\Delta$ and, of course, $\okty \subtype \topty \tyeq  \CompPure{B}$.

    So, assume $A \tyneq \topty$, $B \tyneq \topty$, $\Gamma \types V:A,\,\Delta$ and $\Gamma \types V:B,\,\Delta$, while $\Gamma \nvdash \Delta$ (i.e., the negation of \cref{lem:cut}). We show that this always leads to a contradiction by induction on the shape of $V$.
    \begin{indproof}

        \indcase{$V$ is of shape $\atomtm{a}$}
        Since $\Gamma\nvdash\Delta$, applying \cref{lem:inversion} (\textsc{Inversion}) to $\Gamma \types \atomtm{a}:A,\,\Delta$ and $\Gamma \types \atomtm{a}:B,\,\Delta$, we obtain $\atom{a} \subtyping A$ and $\atom{a} \subtyping B$. By transitivity of subtyping, from $A \subtyping \CompPure{B}$ and $\atom{a} \subtyping A$, we derive $\atom{a} \subtyping \CompPure{B}$.
        \par Now, applying \cref{thm:subtyping-soundness} (\textsc{Soundness of Subtyping}) to both $\atom{a} \subtyping B$ and $\atom{a} \subtyping \CompPure{B}$, we conclude $\mng{\atom{a}} \subseteq \mng{B}$ and $\mng{\atom{a}} \subseteq \mng{\CompPure{B}}$. This implies $\mng{\atom{a}} \subseteq \mng{B} \cap \mng{\CompPure{B}}$. However, by \cref{def:type-meanings}, $\mng{B}\cap\mng{\CompPure{B}}=\mng{B}\cap\mng{\topty} \setminus \mng{B}=\varnothing$. Thus $\mng{\atom{a}} \subseteq \varnothing$. But since $\mng{\atom{a}} = \makeset{\atomtm{a}}\neq\varnothing$ (again by \cref{def:type-meanings}), we arrive at a contradiction.

        \indcase{$V$ is of shape $\pn{n}$}
        Since $\Gamma\nvdash\Delta$, applying \cref{lem:inversion} (\textsc{Inversion}) to $\Gamma \types \pn{n}:A,\,\Delta$ and $\Gamma \types \pn{n}:B,\,\Delta$ yields $\intty \subtyping B$ and $\intty \subtyping \CompPure{B}$. This leads to a contradiction by an argument analogous to the case for $V \eqsyntac \atomtm{a}$.

        \indcase{$V$ is of shape $(W_1,W_2)$}
        Here we have $\Gamma \types (W_1,W_2):A,\,\Delta$ and $\Gamma \types (W_1,W_2):B,\,\Delta$, where both $W_1$ and $W_2$ are values.
        Since $\Gamma \nvdash \Delta$, by \cref{lem:inversion} (\textsc{Inversion}), there exists $A_1,\,A_2$ and $B_1,\,B_2$, such that:
        \[
            \begin{array}{c@{\;\land\;}c@{\;\land\;}c}
                (A_1,A_2) \subtyping A & \Gamma \types W_1:A_1,\,\Delta & \Gamma \types W_2:A_2,\,\Delta \\
                (B_1,B_2) \subtyping B & \Gamma \types W_1:B_1,\,\Delta & \Gamma \types W_2:B_2,\,\Delta
            \end{array}
        \]
        By transitivity of subtyping, from $(A_1, A_2) \subtyping A$ and $A \subtyping \CompPure{B}$, we have $(A_1, A_2) \subtyping \CompPure{B}$. Applying the subtyping rule \textsc{CompR} to this yields $B \subtyping \CompPure{(A_1, A_2)}$. Combining this with $(B_1, B_2) \subtyping B$ via transitivity gives $(B_1, B_2) \subtyping \CompPure{(A_1, A_2)}$. By \cref{lem:subtyping}, this implies either $B_1 \subtyping \CompPure{A_1}$ or $B_2 \subtyping \CompPure{A_2}$.
        \par If $B_1 \subtyping \CompPure{A_1}$, the induction hypothesis applied to $\Gamma \types W_1:A_1,\,\Delta$ and $\Gamma \types W_1:B_1,\,\Delta$ implies $\Gamma \types \Delta$. Similarly, if $B_2 \subtyping \CompPure{A_2}$, the induction hypothesis applied to $\Gamma \types W_2 : A_2, \Delta$ and $\Gamma \types W_2 : B_2, \Delta$ again gives $\Gamma \types \Delta$. In either case, we conclude $\Gamma \types \Delta$, contradicting the assumption that $\Gamma \nvdash \Delta$.

        \indcase{$V$ is $\abs{x}{M}$}
        Here we have $\Gamma \types \abs{x}{M} : A,\, \Delta$ and $\Gamma \types \abs{x}{M} : B,\, \Delta$.
        Since $\Gamma \nvdash \Delta$, by \cref{lem:inversion} (\textsc{Inversion}):
        \[
            \begin{array}{c@{\text{ such that: }}c@{\;\land\;}c}
                \exists A_1, A_2 & A_1 \to A_2 \subtyping A & \Gamma, x:A_1 \types M:A_2, \Delta \\
                \exists B_1, B_2 & B_1 \to B_2 \subtyping B & \Gamma, x:B_1 \types M:B_2, \Delta
            \end{array}
        \]
        By transitivity of subtyping, from $A \subtyping \CompPure{B}$ and $A_1 \to A_2 \subtyping A$, we have $A_1 \to A_2 \subtyping \CompPure{B}$. Applying the subtyping rule \textsc{CompR} to this yields $B \subtyping A_1 \to A_2$. Combining this with $B_1 \to B_2 \subtyping B$ via transitivity gives $B_1 \to B_2 \subtyping \CompPure{(A_1 \to A_2)}$.
        \par By \cref{thm:subtyping-soundness} (\textsc{Soundness of Subtyping}), $B_1 \to B_2 \subtyping \CompPure{(A_1 \to A_2)}$ implies $\mng{B_1 \to B_2} \subseteq \mng{\CompPure{(A_1 \to A_2)}}$. Since $\mng{\CompPure{(A_1 \to A_2)}}=\topty \setminus \mng{A_1 \to A_2}$ (by \cref{def:type-meanings}), this requires $\mng{B_1 \to B_2} \cap \mng{A_1 \to A_2} = \varnothing$.
        \par However, \cref{lem:fun-to-div} states that $\abs{x}{\divtm} \in \mng{C \to D}$ for all types $C$ and $D$. In particular, $\abs{x}{\divtm} \in \mng{B_1 \to B_2} \cap \mng{A_1 \to A_2}$. This contradicts our earlier conclusion that $\mng{B_1 \to B_2} \cap \mng{A_1 \to A_2} = \varnothing$.
    \end{indproof}
\end{proof}

\begin{lemma}
    \label{lem:type-of-value}
    Let $V$ be a value. If $\Gamma \types V:A,\,\Delta$,
    then $\exists B\,(B\subtyping \okty \land B\subtyping A)$ such that $\Gamma \types V:B,\,\Delta$.
\end{lemma}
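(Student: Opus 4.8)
The plan is to proceed by induction on the structure of the value $V$, using the Inversion Lemma (Lemma~\ref{lem:inversion}) to expose how $V$ was given the type $A$, the canonical typings of values (Lemma~\ref{lem:all-values-typable}) for the degenerate cases, and Cut Admissibility for Values (Lemma~\ref{lem:cut}) to kill off one stray subcase arising in the pair case. Throughout, the three conditions we must produce on $B$ are $B \subtype \okty$, $B \subtype A$, and $\Gamma \types V : B,\,\Delta$.

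First I would dispose of the case $A \tyeq \topty$: here Lemma~\ref{lem:all-values-typable} gives $\types V : \okty$, hence $\Gamma \types V : \okty,\,\Delta$ by weakening (Lemma~\ref{lem:weakening}), and $B = \okty$ works since $\okty \subtype \topty \tyeq A$. So assume $A \tyneq \topty$ and apply Inversion to $\Gamma \types V : A,\,\Delta$. If the Inversion alternative that $\Gamma \types \Delta$ is itself provable holds, take $B = \botty$: indeed $\Gamma \types V : \botty,\,\Delta$ by weakening, and $\botty \subtype \okty$ and $\botty \subtype A$ both follow from \rlnm{Top} via \rlnm{CompL}. Otherwise $V$ satisfies the syntax-directed condition for its shape, and I would split on $V$. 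For $V = \pn{n}$, Inversion gives $\intty \subtype A$; take $B = \intty$, with $\Gamma \types \pn{n} : \intty,\,\Delta$ by \rlnm{Num} and $\intty \subtype \okty$ by \rlnm{UnionR}, \rlnm{Ok}, \rlnm{Trans}. For $V = \atom{a}$, Inversion gives $\atom{a} \subtype A$; take $B = \atom{a}$, with $\Gamma \types \atom{a} : \atom{a},\,\Delta$ by \rlnm{Atom} and $\atom{a} \subtype \atomty \subtype \okty$. For $V = \abs{x}{M}$, Inversion gives types $B_1,C_1$ with $B_1 \to C_1 \subtype A$ and $\Gamma,\,x:B_1 \types M : C_1,\,\Delta$ (with $x$ taken fresh for $\Gamma,\Delta$); take $B = B_1 \to C_1$, with $\Gamma \types \abs{x}{M} : B_1 \to C_1,\,\Delta$ by \rlnm{Abs}, and $B_1 \to C_1 \subtype \funty \subtype \okty$, the first inclusion by \rlnm{Fun} (from $\botty \subtype B_1$ and $C_1 \subtype \topty$) and the second by \rlnm{UnionR}, \rlnm{Ok}, \rlnm{Trans}. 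None of these three cases invokes the induction hypothesis.

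The pair case $V = (W_1, W_2)$ is the genuinely inductive one, and the only place I expect real friction. Inversion (case 6) offers two possibilities. In possibility \circled{2} we have $\Gamma \types W_1 : \CompPure{\okty},\,W_2 : \CompPure{\okty},\,\Delta$; since $W_1$ is a value, Lemma~\ref{lem:all-values-typable} and weakening give $\Gamma \types W_1 : \okty,\,W_2 : \CompPure{\okty},\,\Delta$, whence Cut Admissibility for Values (Lemma~\ref{lem:cut}, whose side condition holds since $\okty \subtype \CompPure{\CompPure{\okty}}$) yields $\Gamma \types W_2 : \CompPure{\okty},\,\Delta$; repeating the argument with $W_2$ gives $\Gamma \types \Delta$, and we finish with $B = \botty$ as before. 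In possibility \circled{1} we obtain $A_1, A_2$ with $(A_1, A_2) \subtype A$, $\Gamma \types W_1 : A_1,\,\Delta$ and $\Gamma \types W_2 : A_2,\,\Delta$; applying the induction hypothesis to $W_1$ and to $W_2$ yields $B_1$ with $B_1 \subtype \okty$, $B_1 \subtype A_1$, $\Gamma \types W_1 : B_1,\,\Delta$, and likewise $B_2$; then $B = (B_1, B_2)$ works, since $\Gamma \types (W_1, W_2) : (B_1, B_2),\,\Delta$ by \rlnm{Pair}, $(B_1, B_2) \subtype (A_1, A_2) \subtype A$ by \rlnm{Pair} and \rlnm{Trans}, and $(B_1, B_2) \subtype \pairvalty \subtype \okty$ by \rlnm{Pair} (on $B_1, B_2 \subtype \okty$) together with $\pairvalty = (\okty,\okty)$ and \rlnm{UnionR}, \rlnm{Ok}, \rlnm{Trans}.

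The main obstacle is thus organisational rather than deep: one must notice that the component types handed back by Inversion in the pair case need not already be subtypes of $\okty$ (a component could be typed at $\topty$, say), so they must first be tightened using the induction hypothesis, and that the ``right-side'' alternative \circled{2} of the pair clause of Inversion cannot simply be ignored but is instead collapsed to $\Gamma \types \Delta$ by the cut lemma, since its components are values.
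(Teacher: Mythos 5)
Your proof is correct and follows essentially the same route as the paper's: dispose of $A \tyeq \topty$ via Lemma~\ref{lem:all-values-typable}, handle the $\Gamma \types \Delta$ alternative with $B = \botty$, and otherwise induct on the shape of $V$ using Inversion, taking $B$ to be $\intty$, $\atom{a}$, $B_1 \to C_1$ or $(B_1,B_2)$ as appropriate. Your explicit collapse of the pair-case alternative \circled{2} to $\Gamma \types \Delta$ via cut admissibility for values is in fact a cleaner treatment of a subcase the paper's write-up leaves somewhat muddled.
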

\begin{proof}
    Note, that if $A \tyeq \topty$ then the result follows immediately by Lemma~\ref{lem:all-values-typable}.  Otherwise, we start with a simple case analysis on whether $\Gamma \types \Delta$.
    \begin{itemize}
        \item Suppose $\Gamma \types \Delta$. Then it follows by \cref{lem:weakening} (\textsc{Weakening}) that $\Gamma \types V:\botty,\,\Delta$. By \cref{thm:ortholattice}, we have both $\botty \subtyping \okty$ and $\botty \subtyping A$. Thus when taking $B \eqsyntac \botty$, we have $\Gamma \types V:B,\,\Delta$ where $B \subtyping \okty \land B\subtyping A$.
        \item Otherwise, $\Gamma \nvdash \Delta$. In this case, we proceed by induction on the shape of $V$.
              \begin{indproof}
                  \indcase{$V$ is of shape $\atomtm{a}$}
                  Since $\Gamma \nvdash \Delta$, applying \cref{lem:inversion} (\textsc{Inversion}) to $\Gamma \types \atomtm{a}:A,\,\Delta$ yields $\atomtm{a} \subtyping A$. Moreover, $\Gamma \types \atomtm{a}:\atomtm{a},\,\Delta$ holds by the typing rule \textsc{Atom}. Thus, let $B \eqsyntac \atom{a}$, we have $\Gamma \types \atom{a}:B,\,\Delta$ where $B \subtyping \okty \land B\subtyping A$.

                  \indcase{$V$ is of shape $\pn{n}$}
                  Since $\Gamma \nvdash \Delta$, applying \cref{lem:inversion} (\textsc{Inversion}) to $\Gamma \types \pn{n}:A,\,\Delta$ yields $\intty \subtyping A$. Moreover, $\Gamma \types \pn{n}:\intty,\,\Delta$ holds by the typing rule \textsc{Num}. Thus, let $B \eqsyntac \intty$, we have $\Gamma \types \pn{n}:B,\,\Delta$ where $B \subtyping \okty \land B\subtyping A$.

                  \indcase{$V$ is of shape $(W_1,W_2)$}
                  Since $\Gamma \nvdash \Delta$, applying {\cref{lem:inversion}} (\textsc{Inversion}) to $\Gamma \types (W_1,W_2):A,\,\Delta$, we obtain:
                  \begin{circledlist}
                      \item $\exists A_1,\,A_2$ such that $(A_1,A_2) \subtyping A$ and $\Gamma \types W_1: A_1,\,\Delta$ and $\Gamma \types W_2 : A_2,\,\Delta$
                      \item $\CompPure{\okty} \subtyping A$ and $\Gamma \types W_1: \CompPure{\okty},\,W_2:\CompPure{\okty},\,\Delta$.
                  \end{circledlist}

                  As $W_1$ and $W_2$ are values, applying the induction hypothesis to each component, we get:
                  \begin{align*}
                      \exists B_1\,(B_1 \subtyping \okty \land B_1 \subtyping A_1) \text{ such that } \Gamma \types W_1 : B_1, \Delta \\
                      \exists B_2\,(B_2 \subtyping \okty \land B_2 \subtyping A_2) \text{ such that } \Gamma \types W_2 : B_2, \Delta
                  \end{align*}
                  Then $\Gamma \types W_i : \okty,\,\Delta$ for $i \in \makeset{1,2}$ by the typing rule \textsc{Sub}, and $\Gamma \types (W_1, W_2) : (B_1, B_2), \Delta$ by the typing rule \textsc{Pair}.
                  Assume \circled{2} hold.
                  \par Applying rule \textsc{Pair} to $B_1 \subtyping \okty$ and $B_2 \subtyping \okty$ yields $(B_1, B_2) \subtyping (\okty, \okty)$.
                  Since $\pairty \eqsyntac (\okty, \okty)$ by definition, we have $(B_1, B_2) \subtyping \pairvalty$.
                  Moreover, since $\pairvalty \subtyping \intty \vee \pairvalty \vee \funty \vee \atomty$ (by the subtyping rule \textsc{Ok}) and $\intty \vee \pairvalty \vee \funty \vee \atomty \subtyping \okty$ (by the subtyping rule \textsc{UnionR}), transitivity gives $\pairvalty \subtyping \okty$. Conbining this with $(B_1, B_2) \subtyping \pairvalty$ gives $(B_1, B_2) \subtyping \okty$.
                  Assume \circled{2} is true.
                  \par Meanwhile, applying the subtyping rule \textsc{Pair} to $B_1 \subtyping A_1$ and $B_2 \subtyping A_2$ yields $(B_1, B_2) \subtyping (A_1, A_2)$. Since we also have $(A_1, A_2) \subtyping A$ from inversion, transitivity gives $(B_1, B_2) \subtyping A$. Therefore, when taking $B \eqsyntac (B_1, B_2)$, we have $B \subtype \okty \land B \subtype A$ and $\Gamma \types (W_1,W_2):B,\,\Delta$.

                  \indcase{$V$ is of shape $\abs{x}{M}$}
                  Since $\Gamma \nvdash \Delta$, applying {\cref{lem:inversion}} (\textsc{Inversion}) to $\Gamma \types (W_1,W_2):A,\,\Delta$, we obtain:
                  \[
                      \exists B',\,C' \text{ such that } B' \to C' \subtyping A \text{ and }\Gamma,\,x:B' \types M:C',\,\Delta
                  \]
                  \par Applying the typing rule \textsc{Abs} to $\Gamma,\,x:B' \types M:C',\,\Delta$ yields $\Gamma \types \abs{x}{M}:B' \to C',\,\Delta$. From \cref{thm:ortholattice}, we have $\botty \subtyping B'$ and $C' \subtyping \topty$. By the subtyping rule \textsf{Fun}, it follows that $B' \to C' \subtyping \funty$ (recall that $\textsf{Fun} \eqsyntac \botty \to \topty$ by definition). By transitivity, combining $B' \to C' \subtyping \funty$ with $\funty \subtyping \okty$, $B' \to C'\subtyping A$ gives $B' \to C' \subtyping \okty$.
                  Since we also know $B' \to C'\subtyping A$ from inversion, when taking $B \eqsyntac B' \to C'$, we have $\Gamma \types (W_1,W_2):B,\,\Delta$ where $B \subtype \okty \land B \subtype A$.
              \end{indproof}
    \end{itemize}
\end{proof}

\begin{lemma}[Canonical Forms of Value Types]
    \label{lem:canonical-forms-value}
    Let $V$ be a (closed) value, and let $\Gamma$ and $\Delta$ be contexts such that $\Gamma \nvdash \Delta$.
    \begin{enumerate}
        \item If $\Gamma \types V: \intty,\,\Delta$, then $V \eqsyntac \pn{n}$ for some $n \in \mathbb{Z}$.
        \item If $\Gamma \types V: \atom{a},\,\Delta$, then $V \eqsyntac \atomtm{a}$.
        \item If $\Gamma \types V: \pairty,\,\Delta$, then $V \eqsyntac (V_1,V_2)$ for some values $V_1$ and $V_2$, and $\Gamma \types V: \pairvalty,\,\Delta$.
        \item If $\Gamma \types V: A \to B,\,\Delta$, then $V \eqsyntac \abs{x}{M}$ from some term $M$.
    \end{enumerate}
\end{lemma}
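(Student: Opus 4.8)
The plan is to prove each of the four parts by combining the Inversion Lemma (Lemma~\ref{lem:inversion}), Soundness of Subtyping (Theorem~\ref{thm:subtyping-soundness}), and — for one stubborn subcase — inspection of the alternative subtyping rules of Figure~\ref{fig:sub-alt} via Theorem~\ref{thm:subtyping-equivalence}. In each part the target type is one of $\intty$, $\atom{a}$, $\pairty$, or $A \to B$, and each of these is $\tyneq \topty$ because soundness forces $\mng{\cdot} \neq \mng{\topty}$; hence Inversion applies, and since we assume $\Gamma \nvdash \Delta$ the first disjunct of Inversion (that $\Gamma \types \Delta$ is provable) is excluded, so the relevant syntax-directed condition holds. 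As $V$ is a closed value it is of one of the four shapes $\atom{b}$, $\pn{n}$, $(V_1,V_2)$, $\abs{x}{M}$, and I would go through these in turn: when the shape matches the target the conclusion is immediate, and otherwise I derive a contradiction, showing those cases vacuous.

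For the mismatched shapes the recipe is uniform. Inversion yields a subtype inequality $\tau \subtype A$ where $\tau$ is $\atom{b}$, $\intty$, some $B \to C$, some $(A_1,A_2)$, or $\CompPure{\okty}$ (the last two being the two disjuncts available when the subject is a pair). In all these cases except $\tau = (A_1,A_2)$, the set $\mng{\tau}$ is nonempty — witnessed by $\atom{b}$ itself, any numeral, $\abs{y}{\divtm}$ via Lemma~\ref{lem:fun-to-div}, or a concrete stuck term such as $\pn{1}\,\pn{2}$ — whereas $\mng{A}$ contains only normal forms of a syntactic shape incompatible with the inhabitant just named (only numerals, only the single atom $\atom{a}$, only pairs, or only abstractions, according to the part); so Theorem~\ref{thm:subtyping-soundness} gives $\varnothing \neq \mng{\tau} \subseteq \mng{A}$, a contradiction. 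For the atom case $\atom{b} \subtype \atom{a}$ this reads $\{\atom{b}\} \subseteq \{\atom{a}\}$, which forces $b = a$ — exactly what part (2) asks for — rather than a contradiction.

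The one place this model-theoretic argument does not immediately apply is when the subject is a pair $(V_1,V_2)$, the target $A$ is $\intty$, $\atom{a}$, or an arrow type, and Inversion takes its first disjunct, yielding $(A_1,A_2) \subtype A$: here $\mng{(A_1,A_2)}$ could in principle be empty, so soundness is not enough. I expect this to be the only genuine obstacle, and I would dispatch it syntactically rather than semantically: by inspection of Figure~\ref{fig:sub-alt}, no rule has a conclusion whose left-hand side is a pair type and whose right-hand side is $\intty$, an atom $\atom{a}$, or an arrow type — the rules with a pair on the left conclude only pair, complemented, union, $\topty$, or $\okty$ targets — so by Theorem~\ref{thm:subtyping-equivalence} the inequality $(A_1,A_2) \subtype A$ is underivable and the subcase is vacuous. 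The companion disjunct $\CompPure{\okty} \subtype A$ is handled by soundness as in the previous paragraph, using that not every stuck term is a numeral / equal to $\atom{a}$ / a pair / an abstraction.

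Finally, for part (3) there remains the extra conclusion $\Gamma \types V : \pairvalty,\,\Delta$. The analysis above leaves exactly one surviving case, $V = (V_1,V_2)$ with $V_1$ and $V_2$ values (from Inversion's first disjunct, with premises $\Gamma \types V_1 : A_1,\,\Delta$ and $\Gamma \types V_2 : A_2,\,\Delta$), and I would simply reconstruct the desired typing: by Lemma~\ref{lem:all-values-typable} we have $\types V_1 : \okty$ and $\types V_2 : \okty$, hence $\Gamma \types V_1 : \okty,\,\Delta$ and $\Gamma \types V_2 : \okty,\,\Delta$ by Weakening (Lemma~\ref{lem:weakening}), and then $\Gamma \types (V_1,V_2) : (\okty,\okty),\,\Delta$ by \rlnm{Pair}, which is $\Gamma \types V : \pairvalty,\,\Delta$ by definition of $\pairvalty$.
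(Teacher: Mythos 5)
Your proof is correct and follows the same overall strategy as the paper's: apply Inversion (with the first disjunct excluded since $\Gamma \nvdash \Delta$), case-split on the four possible shapes of the closed value $V$, and refute each mismatched shape by pushing the resulting subtype inequality through Soundness of Subtyping against a concrete inhabitant ($\abs{y}{\divtm}$ via Lemma~\ref{lem:fun-to-div} for arrow types, a stuck term for $\CompPure{\okty}$, the atom or numeral itself otherwise). Where you genuinely diverge is the subcase in which the subject is a pair and Inversion yields $(A_1,A_2) \subtype A$ for a non-pair target $A$: you correctly observe that $\mng{(A_1,A_2)}$ may be empty (e.g.\ when $A_1 = \botty$), so the purely semantic refutation does not go through, and you instead argue syntactically that no rule of Figure~\ref{fig:sub-alt} can conclude an inequality with a pair type on the left and $\intty$, an atom, or an arrow on the right, invoking Theorem~\ref{thm:subtyping-equivalence}. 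The paper's own proof handles this subcase semantically (``$\mng{(A_1,A_2)}$ contains pairs only \ldots{} which would require pairs belong to $\mng{\intty}$''), which is silently incomplete when the pair type is uninhabited; your syntactic detour closes that hole, at the modest cost of depending on the alternative characterisation. One small loose end remains in part (3): you derive $\Gamma \types V_i : \okty,\,\Delta$ from $\types V_i : \okty$ by appeal to ``Weakening'', but Lemma~\ref{lem:weakening} as stated only adds formulas to the \emph{right}-hand context; extending $\Gamma$ on the left is not covered by any stated lemma. The paper sidesteps this by applying Lemma~\ref{lem:type-of-value} directly to the judgements $\Gamma \types V_i : A_i,\,\Delta$ already supplied by Inversion and then subsuming to $\okty$; you should do the same (or separately establish admissibility of left weakening).
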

\begin{proof}
    We present the proof (1) and (3) of \cref{lem:canonical-forms-value}; the remaining (2) and (4) follow by similar reasoning.
    \settowidth{\titleindent}{(2)}
    \begin{mylist}
        \item[(1)] Assume that $\forall n \in \mathbb{Z}.\,V \neqsyntac \pn{n}$ and $\Gamma \types V: \intty,\,\Delta$ where $\Gamma \nvdash \Delta$ (i.e., the negation of \cref{lem:canonical-forms-value}.(1)). We show that this always leads to a contradiction, by induction on the structure of the closed value $V$.
        \begin{indproof}
            \indcase{$V$ is of shape $\atomtm{a}$}
            By \cref{lem:inversion} (Inversion) applied to $\Gamma \types \atomtm{a}:\intty$ with $\Gamma \nvdash \Delta$, we obtain $\atom{a} \subtyping \intty$. Then \cref{thm:subtyping-soundness} implies $\mng{\atom{a}} \subseteq \mng{\intty}$. However, this contradicts \cref{def:type-meanings} since $\atomtm{a} \in \mng{\atom{a}}$ while $\atomtm{a} \in \mng{\intty}$.

            \indcase{$V$ is of shape $(V_1,V_2)$}
            By \cref{lem:inversion} (\textsc{Inversion}) on $\Gamma \types (V_1,V_2):\intty, \,\Delta$ with $\Gamma \nvdash \Delta$, either:
            \begin{circledlist}
                \item $\exists A_1,\,A_2$ such that $(A_1,A_2) \subtyping \intty$ and $\Gamma \types V_1: A_1,\,\Delta$ and $\Gamma \types V_2 : A_2,\,\Delta$
                \item $\CompPure{\okty} \subtyping \intty$ and $\Gamma \types V_1: \CompPure{\okty},\,V_2:\CompPure{\okty},\,\Delta$.
            \end{circledlist}
            \settowidth{\titleindent}{\textsf{Case}\circled{2}.}
            \begin{mypar}
                \item[Case \circled{1}.]
                By \cref{thm:subtyping-soundness}, $(A_1, A_2) \subtyping \intty$ implies $\mng{(A_1, A_2)} \subtyping \mng{\intty}$. However, by \cref{def:type-meanings}, $\mng{(A_1, A_2)}$ contains pairs only, while $\mng{\intty}$ contains $\pn{n}\, (n \in \mathbb{Z})$ only. This contradicts $\mng{(A_1, A_2)} \subtyping \mng{\intty}$, which would require pairs belong to $\mng{\intty}$.
                \item[Case \circled{2}.] By \cref{thm:subtyping-soundness}, $\CompPure{\okty} \subtyping \intty$ implies $\mng{\CompPure{\okty}} \subtyping \mng{\intty}$. However, by \cref{def:type-meanings} $\atomtm{a} \in \mng{\CompPure{\okty}}$ while $\atomtm{a}
                    \notin \mng{\intty}$, contradicting $\mng{\CompPure{\okty}} \subtyping \mng{\intty}$.
            \end{mypar}
            \indcase{$V$ is of shape $\abs{x}{M}$}
            Applying \cref{lem:inversion} (\textsc{Inversion}) to \begin{added}$\Gamma \types \abs{x}{M}:\intty, \,\Delta$\end{added} with $\Gamma \nvdash \Delta$, we obtain:
            \begin{equation*}
                \exists B,\,C \text{ such that } B \to C \subtyping \intty \land \Gamma, x:B \types M:C,\, \Delta
            \end{equation*}
            By \cref{thm:subtyping-soundness}, $B \to C \subtyping \intty$ implies $\mng{B \to C} \subtyping \mng{\intty}$. However, while \cref{lem:fun-to-div} gives $\abs{x}{\divtm} \in \mng{B \to C}$, \cref{def:type-meanings} gives $\abs{x}{\divtm} \notin \mng{\intty}$, contradicting $\mng{B \to C} \subtyping \mng{\intty}$.
        \end{indproof}
        Since all possible cases lead to contradictions, we conclude that the original assumption must be false. Therefore, if $\Gamma \types V : \intty$ and $\Gamma \nvdash \Delta$, then $V$ must be of the form $\pn{n}$ for some integer $n$.
        \item[(3)] Assume that $V \neqsyntac (V_1,V_2)$ for any values $V_1$, $V_2$ and $\Gamma \types V: \pairty,\,\Delta$ where $\Gamma \nvdash \Delta$ (i.e., the negation of \cref{lem:canonical-forms-value}.(3)). We show that this always leads to a contradiction, by induction on the structure of the closed value $V$.
        \begin{indproof}
            \indcase{$V$ is of shape $\pn{n}$ for some $n \in \mathbb{Z}$}
            Applying \cref{lem:inversion} (\textsc{Inversion}) to $\Gamma \types \pn{n}:\pairty, \,\Delta$ with $\Gamma \nvdash \Delta$ yields $\pn{n} \subtyping \pairty$. Since $\pairty \eqsyntac (\topty,\topty)$, by \cref{thm:subtyping-soundness}, it follows that $\mng{\intty} \subtyping \mng{(\topty,\topty)}$. This would repuire $\pn{n} \in \mng{(\topty,\topty)}$, contradicting \cref{def:type-meanings}.

            \indcase{$V$ is of shape $\atomtm{a}$}
            Applying \cref{lem:inversion} (\textsc{Inversion}) to $\Gamma \types \atomtm{a}:\pairty, \,\Delta$ with $\Gamma \nvdash \Delta$ yields $\atomtm{a} \subtyping \pairty$. Since $\pairty \eqsyntac (\topty,\topty)$, by \cref{thm:subtyping-soundness}, it follows that $\mng{\atom{a}} \subtyping \mng{(\topty,\topty)}$. This would repuire $\atomtm{a} \in  \mng{(\topty,\topty)}$, contradicting \cref{def:type-meanings}.

            \indcase{$V$ is of shape $\abs{x}{M}$}
            Applying \cref{lem:inversion} (\textsc{Inversion}) to \begin{added}$\Gamma \types \abs{x}{M}:\pairty, \,\Delta$\end{added} with $\Gamma \nvdash \Delta$, we obtain:
            \begin{equation*}
                \exists B,\,C \text{ such that } B \to C \subtyping \pairty \land \Gamma, x:B \types M:C,\, \Delta
            \end{equation*}
            By \cref{thm:subtyping-soundness} with $\pairty \eqsyntac (\topty,\topty)$, $B \to C \subtyping \pairty$ implies $\mng{B \to C} \subtyping \mng{(\topty,\topty)}$. But \cref{lem:fun-to-div} gives $\abs{x}{\divtm} \in \mng{B \to C}$, which would require $\abs{x}{\divtm} \in \mng{(\topty,\topty)}$. This contradicts \cref{def:type-meanings}.
        \end{indproof}
        Since all possible cases lead to contradictions, we conclude that the original assumption must be false. Thus, if $\Gamma \types V:\pairty,\,\Delta$ and $\Gamma \nvdash \Delta$, then $V \eqsyntac (V_1, V_2)$ for some values $V_1$ and $V_2$. Then, by \cref{lem:inversion} applied to $\Gamma \types (V_1, V_2):\pairty,\,\Delta$ with $\Gamma \nvdash \Delta$, either:
        \begin{circledlist}
            \item $\exists A_1,\,A_2$ such that $(A_1,A_2) \subtyping \pairty$ and $\Gamma \types V_1: A_1,\,\Delta$ and $\Gamma \types V_2 : A_2,\,\Delta$
            \item $\CompPure{\okty} \subtyping \pairty$ and $\Gamma \types S': \CompPure{\okty},\,M:\CompPure{\okty},\,\Delta$.
        \end{circledlist}
        By \cref{thm:subtyping-soundness} (\textsc{Soundness of Subtyping}), $\CompPure{\okty} \subtyping \pairty$ implies $\mng{\CompPure{\okty}} \subtyping {\pairty}$, which contradicts \cref{def:type-meanings}. Thus case \circled{2} is impossible, and case \circled{1} must be true.
        By \cref{lem:type-of-value}, $\Gamma \types V_1: A_1,\,\Delta$ and $\Gamma \types V_2 : A_2,\,\Delta$ implies $\Gamma \types V_1: \okty,\,\Delta$ and $\Gamma \types V_2 : \okty,\,\Delta$. Applying the typing rule \textsc{Pair} then gives $\Gamma \types (V_1, V_2):(\okty,\okty),\,\Delta$. Given $V \eqsyntac (V_1, V_2)$ and $\pairvalty \eqsyntac (\okty,\okty)$,we conclude $\Gamma \types V:\pairvalty,\,\Delta$.
    \end{mylist}
\end{proof}

\begin{lemma}
    \label{lem:value-substituition-to-pattern-yield-value}
    For any pattern $p$, $p\thetaok{p_i} \subtype \okty$ holds.
\end{lemma}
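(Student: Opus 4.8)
The plan is a routine structural induction on the pattern $p$ (recall $p,q \Coloneqq x \mid \atom{a} \mid (p,q)$), using only the subtyping rules \rlnm{Refl}, \rlnm{Atom}, \rlnm{Pair} and \rlnm{Trans}, together with the derived inequalities $\atomty \subtype \okty$ and $\pairvalty \subtype \okty$. The latter two are obtained from \rlnm{Ok} by \rlnm{UnionR} and \rlnm{Trans}, exactly as noted in the discussion following the definition of the subtype relation. Throughout, $\thetaok{p}$ denotes the pattern type substitution $[\okty/x \mid x \in \fv(p)]$, so $p\thetaok{p}$ is the type obtained by replacing every variable of $p$ with $\okty$.

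In the variable case $p = x$, the substitution $\thetaok{x}$ is $[\okty/x]$, so $x\thetaok{x} = \okty$, and $\okty \subtype \okty$ holds by \rlnm{Refl}. In the atom case $p = \atom{a}$, we have $\fv(\atom{a}) = \varnothing$, so $\thetaok{\atom{a}}$ is the empty substitution and $\atom{a}\thetaok{\atom{a}} = \atom{a}$; then $\atom{a} \subtype \atomty$ by \rlnm{Atom}, and $\atom{a} \subtype \okty$ follows by \rlnm{Trans} with $\atomty \subtype \okty$.

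For the pair case $p = (p_1,p_2)$, first observe that $\thetaok{(p_1,p_2)}$ assigns $\okty$ to every variable occurring in $p_1$ and to every variable occurring in $p_2$, so its restriction to $\fv(p_i)$ is precisely $\thetaok{p_i}$; hence $(p_1,p_2)\thetaok{(p_1,p_2)} = (p_1\thetaok{p_1},\, p_2\thetaok{p_2})$. The induction hypotheses give $p_1\thetaok{p_1} \subtype \okty$ and $p_2\thetaok{p_2} \subtype \okty$, so \rlnm{Pair} yields $(p_1\thetaok{p_1},\, p_2\thetaok{p_2}) \subtype (\okty,\okty) = \pairvalty$, and the result follows by \rlnm{Trans} with $\pairvalty \subtype \okty$.

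There is no genuine obstacle here; the only point requiring a moment's care is the bookkeeping in the pair case, namely that restricting $\thetaok{(p_1,p_2)}$ to the variables of each sub-pattern yields exactly $\thetaok{p_i}$, so that the induction hypotheses apply without modification. (Linearity of patterns is not even needed for this; one only uses that each $p_i$ mentions no variable outside $\fv(p_i)$.)
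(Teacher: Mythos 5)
Your proof is correct and follows exactly the same route as the paper, which simply states ``by induction on the shape of pattern $p$'' and leaves the details implicit. The case analysis you supply (\rlnm{Refl} for variables, \rlnm{Atom} plus $\atomty \subtype \okty$ for atoms, and \rlnm{Pair} plus $\pairvalty \subtype \okty$ for pairs, with the observation that $\thetaok{(p_1,p_2)}$ restricts to $\thetaok{p_i}$ on each component) is precisely what that induction amounts to.
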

\begin{proof}
    By induction on the shape of pattern $p$.
\end{proof}

\begin{lemma}
    \label{lem:value-not-matching-pattern}
    Let $V$ be a closed value and $p$ a pattern, we have: if $V \neqsyntac p\sigma$ for any $\sigma$ then $\types V:\CompPure{(p\thetaok{p})}$.
\end{lemma}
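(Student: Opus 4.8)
The plan is to argue by induction on the structure of the pattern $p$, following the grammar $p \Coloneqq x \mid \atom{a} \mid (p_1,p_2)$. Throughout, write $\thetaok{q}$ for the pattern type substitution $[\okty/x \mid x \in \fv(q)]$.

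For $p = x$ the statement is vacuous: then $p\thetaok{p} = \okty$ and, taking $\sigma = [V/x]$, we have $p\sigma = V$, so $V$ \emph{does} match $p$, contradicting the hypothesis. For $p = \atom{a}$ we have $p\thetaok{p} = \atom{a}$ and the hypothesis is just $V \eqsyntac \atom{a}$ fails. Since $V$ is a closed value distinct from $\atom{a}$, it is a numeral, an abstraction, a pair, or an atom $\atom{b}$ with $b \neq a$; in each case \cref{lem:all-values-typable} gives $V$ a type provably disjoint from $\atom{a}$. Concretely, a numeral has type $\intty$ and $\intty \subtype \CompPure{\atomty} \subtype \CompPure{\atom{a}}$ (by \rlnm{Disj} and the order-reversal of \rlnm{Atom}, \cref{thm:ortholattice}); an abstraction has type $\funty \subtype \CompPure{\atom{a}}$ likewise; a pair has type $\pairvalty \subtype \pairty \subtype \CompPure{\atom{a}}$; and an atom $\atom{b}$ ($b \neq a$) has $\atom{b} \subtype \CompPure{\atom{a}}$ directly by \rlnm{Disj}. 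In all cases $\types V : \CompPure{\atom{a}}$ follows by \rlnm{SubR}.

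For $p = (p_1,p_2)$, linearity gives $\fv(p_1) \cap \fv(p_2) = \varnothing$, so $p\thetaok{p} = (p_1\thetaok{p_1},\,p_2\thetaok{p_2})$; also $p_i\thetaok{p_i} \subtype \okty \subtype \topty$ by \cref{lem:value-substituition-to-pattern-yield-value}, whence $(p_1\thetaok{p_1},\,p_2\thetaok{p_2}) \subtype (\topty,\topty) = \pairty$ by \rlnm{Pair} and \rlnm{Top}, and therefore $\CompPure{\pairty} \subtype \CompPure{(p_1\thetaok{p_1},\,p_2\thetaok{p_2})}$ by \cref{thm:ortholattice}. Now split on $V$. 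If $V$ is not a pair, \cref{lem:all-values-typable} gives it a type among $\intty$, $\atomty$, $\funty$, each a subtype of $\CompPure{\pairty}$ by \rlnm{Disj}, so $\types V : \CompPure{(p_1\thetaok{p_1},\,p_2\thetaok{p_2})}$ by \rlnm{SubR}. If $V \eqsyntac (V_1,V_2)$ for values $V_1,V_2$, then it cannot be that both $V_1$ matches $p_1$ and $V_2$ matches $p_2$: since $\fv(p_1),\fv(p_2)$ are disjoint, two witnessing substitutions could be merged into one witnessing that $V$ matches $p$. Say $V_1$ does not match $p_1$ (the other case is symmetric); the induction hypothesis gives $\types V_1 : \CompPure{(p_1\thetaok{p_1})}$, and with $\types V_2 : \topty$ by \rlnm{Top} we get $\types (V_1,V_2) : (\CompPure{(p_1\thetaok{p_1})},\,\topty)$ by \rlnm{Pair}; since $(\CompPure{(p_1\thetaok{p_1})},\,\topty) \subtype \CompPure{(p_1\thetaok{p_1},\,p_2\thetaok{p_2})}$ by \rlnm{UnionR}, \rlnm{PairC} and \rlnm{Trans}, \rlnm{SubR} finishes. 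When instead $V_2$ fails to match $p_2$, we use that $V_1$ is a value so $\types V_1 : \okty$ by \cref{lem:all-values-typable}, apply \rlnm{Pair} with the inductive $\types V_2 : \CompPure{(p_2\thetaok{p_2})}$ to obtain type $(\okty,\,\CompPure{(p_2\thetaok{p_2})})$, and conclude via the second disjunct of \rlnm{PairC}.

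The routine part is the catalogue of disjointness subtypings, all immediate from \rlnm{Disj}, \rlnm{Atom}, \rlnm{Pair} and \cref{thm:ortholattice}. The main obstacle is the bookkeeping in the pair case: one must respect the asymmetry of \rlnm{PairC}, putting $\topty$ in the right component precisely when the \emph{left} component is the one that fails to match, and exploiting that $V_1$ is a value (hence in $\okty$) precisely when the \emph{right} component fails. The only slightly delicate combinatorial point is the merging-of-substitutions argument showing that, for a pair value, failure to match the pair pattern forces failure in at least one component.
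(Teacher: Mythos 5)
Your proof is correct and takes essentially the same approach as the paper's: mismatched head constructors are handled by the disjointness subtypings (via \rlnm{Disj}, \rlnm{Atom} and the order-reversal of complement), and the pair/pair case is handled by showing one component must fail to match and concluding through the asymmetric \rlnm{PairC} subtyping. The only cosmetic difference is that you induct on the pattern $p$ where the paper inducts on the value $V$; since the substantive case requires both to be pairs simultaneously, the two inductions coincide, and your explicit merging-of-substitutions justification for ``some component fails to match'' is a point the paper leaves implicit.
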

\begin{proof}
    By induction on the shape of $V$.
    \begin{indproof}
        \indcase{$V$ is of shape $\atomtm{a}$}
        Given $V \neqsyntac p\sigma$, there must be $p \neqsyntac x$ (otherwise $V \eqsyntac p\sigma$ by taking $\sigma \eqsyntac[\atomtm{a}/x]$) and $p \neqsyntac \atomtm{a}$ (otherwise $V \eqsyntac p\sigma$ for all $\sigma$). Thus, it has to be $p \eqsyntac (p_1,p_2)$ for some patterns $p_1$ and $p_2$. Consequently, $(p\thetaok{p}) \eqsyntac (p_1,p_2)\,\thetaok{p} \eqsyntac (p_1\thetaok{p},p_2\thetaok{p})$. By \cref{lem:value-substituition-to-pattern-yield-value}, we have $p_1\thetaok{p} \subtyping \okty$ and $p_2\thetaok{p} \subtyping \okty$. Thus the subtyping rule \textsc{Pair} gives:
        \[
            (p_1\thetaok{p},p_2\thetaok{p}) \subtyping (\okty,\okty)
        \]
        ,i.e., $(p\thetaok{p}) \subtyping \pairvalty$. \Cref{thm:ortholattice} then gives $\CompPure{\pairvalty} \subtyping \CompPure{(p\thetaok{p})}$.
        \par By \cref{lem:all-values-typable}, we have $\types \atomtm{a} :\atom{a}$. Since $\atom{a} \subtyping \CompPure{\pairvalty}$ (by the subtyping rule \textsc{Disj}), transitivity then gives $\atom{a} \subtyping \CompPure{(p\thetaok{p})}$. Therefore, applying the typing rule \text{Sub} to $\types \atomtm{a} :\atom{a}$ with $\atom{a} \subtyping \CompPure{(p\thetaok{p})}$, we get $\types V :\CompPure{(p\thetaok{p})}$.

        \indcase{$V$ is of shape $\pn{n}$}
        By an argument analogous to the case where $V$ is of shape $\atomtm{a}$.

        \indcase{$V$ is of shape $\abs{x}{M}$}
        By an argument analogous to the case where $V$ is of shape $\atomtm{a}$.

        \indcase{$V$ is of shape $(W_1,W_2)$}
        If $p$ is not of shape $(p_1, p_2)$, the lemma follows by a similar reasoning with other cases. Otherwise, $p\sigma \eqsyntac (p_1, p_2)\,\sigma \eqsyntac (p_1\sigma,p_2\sigma)$. Since $\forall \sigma.\,(W_1,W_2) \neqsyntac (p_1\sigma,p_2\sigma)$, either $\forall \sigma.\, W_1 \neqsyntac p_1\sigma$ or $\forall \sigma.\, W_2 \neqsyntac p_2\sigma$. In the following we prove for the former case, and the latter case follows by an analogous reasoning.
        \par If $\forall \sigma.\, W_1 \neqsyntac p_1\sigma$, the induction hypothesis applied to $W_1$ gives $\types W_1 : \CompPure{p_1\thetaok{p_1}}$, and \cref{lem:all-values-typable} gives $\types W_2 :\okty$. Applying the typing rule \textsc{Pair} then yields $\types (W_1,W_2): (\CompPure{p_1\thetaok{p_1}}, \okty)$.
        \par Next we show $(\CompPure{p_1\thetaok{p_1}}, \okty) \subtyping \CompPure{(p_1,p_2)\thetaok{p}}$. For any type $B$, by the subtyping rule \textsc{PairC} we have:
        \[
            (\CompPure{p_1\thetaok{p_1}}, \topty) \lor (\okty, \CompPure{B}) \subtyping \CompPure{(p_1\thetaok{p_1},B)}
        \]
        and by the subtyping rule \textsc{UnionR} and \textsc{Top} we have:
        \[(\CompPure{p_1\thetaok{p_1}}, \okty) \subtyping (\CompPure{p_1\thetaok{p_1}}, \topty) \lor (\okty, \CompPure{B})\]
        Thus by transitivity, for any type $B$:
        \begin{equation}
            \label{no-label-1}
            (\CompPure{p_1\thetaok{p_1}}, \okty) \subtyping \CompPure{(p_1\thetaok{p_1},B)}
        \end{equation}
        Moreover, as $\fv(p_1) \subseteq \fv(p)$, $\thetaok{p_1}$ is the restriction of $\thetaok{p}$ on $\fv(p_1)$. Thus $p\thetaok{p} \eqsyntac (p_1, p_2)\,\thetaok{p} \eqsyntac (p_1\thetaok{p}, p_2\thetaok{p}) \eqsyntac (p_1\thetaok{p_1}, p_2\thetaok{p})$, i.e.,
        \[
            p\thetaok{p} \eqsyntac ({p_1\thetaok{p_1}},p_2\thetaok{p})
        \]
        Let $B \eqsyntac  p_2\thetaok{p}$ in \cref{no-label-1} to get:
        \begin{equation}
            \label{no-label-2}
            (\CompPure{p_1\thetaok{p_1}}, \okty) \subtyping \CompPure{(p_1\thetaok{p_1}, p_2\thetaok{p})} \equiv \CompPure{p\thetaok{p}}
        \end{equation}
        Finally, applying the typing rule \textsc{Sub} to $\types (W_1,W_2): (\CompPure{p_1\thetaok{p_1}}, \okty)$ with \cref{no-label-2}, we conclude that $\types V : \CompPure{p\thetaok{p}}$.
    \end{indproof}
\end{proof}


For the following, it is useful to have a more explicit description of the possible shapes of stuck terms.

\begin{lemma}[Stuck Terms]
    \label{lem:stuck-term}
    The (closed) stuck terms may equivalently be described by the following grammar:
    \[
        \begin{array}{lll}
            S,\,T \;\Coloneqq\; (S,\,M) & \mid (V,\,S) \mid \opPrim{S}{M} \mid \relOpPrim{S}{M} \mid S\,M \mid \opPrim{\pn{n}}{S} \mid \relOpPrim{\pn{n}}{S} &                                                                       \\
                                        & \mid \opPrim{V}{M} \mid \relOpPrim{V}{M} \mid \opPrim{\pn{n}}{V} \mid \relOpPrim{\pn{n}}{V}                        & \text{when $V$ is not of shape $\pn{n}$}                              \\
                                        & \mid  V\,M                                                                                                         & \text{when $V$ is not of shape $\abs{x}{M}$}                          \\
                                        & \mid   \matchtm{S}{|_{i=1}^k p_i \mapsto N_i}                                                                                                                                              \\
                                        & \mid   \matchtm{V}{\mid_{i=1}^k p_i \mapsto N_i}                                                                   & \text{when $\forall i \in [1,k].\ \forall \sigma.\ V \neq p_i\sigma$}
        \end{array}
    \]
\end{lemma}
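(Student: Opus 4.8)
The plan is to establish the two inclusions between the set of closed stuck terms and the set of closed terms generated by the displayed grammar. Throughout I use the fact that, since one-step reduction is the closure of the redex schema under evaluation contexts, a closed term $U$ is a normal form precisely when it \emph{cannot} be written as $\Ctx[R]$ for any evaluation context $\Ctx$ and redex $R$; I also use that closed normal forms partition into values and stuck terms.

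For the inclusion of the grammar-generated terms in the stuck terms, I would argue by induction on the derivation generating the term, showing for each production that the term is (a) closed, given that its immediate constituents are, which is immediate; (b) not a value, by inspection of the value grammar --- a pair with a non-value component, and any application, arithmetic operation, relational operation or match, is never a value; and (c) a normal form. Point (c) is the crux: for each production one enumerates the shapes of evaluation context whose hole could sit at the root and checks each leads to a contradiction. Taking $(S,\,M)$ as the representative case, the candidate contexts are $\Hole$, excluded because no redex is a pair; $(\Ctx',\,N)$, which would require $S = \Ctx'[R]$, contradicting that $S$ is a normal form by the induction hypothesis; and $(V,\,\Ctx')$, which would require $S$ to be a value, contradicting that $S$ is stuck. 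The remaining productions are dispatched by the same mechanical analysis, and the side conditions --- $V$ not of shape $\pn{n}$, $V$ not an abstraction, $V$ not an instance of any pattern --- are precisely what block the corresponding redex rule from firing at the root (respectively $\opPrim{\pn{n}}{\pn{m}}$ or $\relOpPrim{\pn{n}}{\pn{m}}$, $\beta$-reduction, and the match rule).

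For the converse, let $U$ be a closed stuck term. Since $U$ is closed it is not a variable; since it is not a value it is not a numeral, atom, abstraction, or pair of values; and since it is a normal form it is not a fixpoint, as every fixpoint is a redex. Hence $U$ has one of the shapes $(M,\,N)$, $M\,N$, $\opPrim{M}{N}$, $\relOpPrim{M}{N}$, or $\matchtm{M}{\mid_{i=1}^k p_i \mapsto N_i}$. In each case I would use that $U$ is a normal form to force the leftmost active subterm to itself be a normal form --- otherwise $U$ would reduce via the matching evaluation context --- and then split that subterm into the value and stuck cases using the partition of normal forms. In the stuck case one reads off the corresponding production directly; in the value case, the ability to fire the head redex would contradict normal-formedness, and ruling this out yields exactly the stated side condition. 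For instance, when $U = \opPrim{M}{N}$: $M$ is a normal form; if $M$ is stuck we are in the $\opPrim{S}{M}$ case; if $M$ is a value other than a numeral we are in the $\opPrim{V}{M}$ case; and if $M = \pn{n}$ then $N$ is a normal form, and recursing on $N$ lands us in the $\opPrim{\pn{n}}{S}$ case or, when $N$ is a value necessarily distinct from any numeral, the $\opPrim{\pn{n}}{V}$ case. The shapes $(M,\,N)$, $M\,N$, the relational operations, and the match are handled by the same pattern, the pair case additionally splitting on whether $M$ is a value.

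The whole argument is essentially bookkeeping over the grammar of evaluation contexts and the list of redexes; \textbf{the one place that warrants care} is checking that no evaluation context other than those listed could decompose a term of a given shape --- in particular that there is no ``deep'' context of shape $(V,\,\Ctx)$ or $\opPrim{\pn{n}}{\Ctx}$ reducing inside a position that looks inert --- but once the context grammar is written out this is routine. Alternatively, the lemma follows with almost no work as a refinement of the earlier Lemma (Stuck Terms): its normal-form metavariable $U$ ranges over the disjoint union of values $V$ and stuck terms $S$, and splitting each occurrence of $U$ along this partition, while observing that a stuck term is never of shape $\pn{n}$, never an abstraction, and never an instance of a pattern, turns each of its productions into the corresponding productions displayed here.
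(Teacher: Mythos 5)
Your proposal is correct, and it matches the paper's (essentially elided) argument: the paper justifies this lemma only by the one-line remark ``by structural inspection of terms and analysis of the reduction relation,'' of which your two-inclusion case analysis over evaluation contexts and redexes is exactly the intended elaboration. Your closing observation --- that the lemma is also just the earlier main-body Stuck Terms lemma with the normal-form metavariable $U$ split into its value and stuck cases --- is likewise sound and is probably the quickest way to present it.
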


\begin{lemma}
    \label{lem:type-of-stuck-term}
    For any stuck term $S$, we can derive the typing judgement $\types S : \CompPure{\okty}$.
\end{lemma}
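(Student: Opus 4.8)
The plan is to proceed by structural induction on the grammar of closed stuck terms supplied by \cref{lem:stuck-term}, working throughout in the one-sided system of \cref{fig:simple-typing-rules}. Every closed stuck term is built by a pair, a primitive operation, an application, or a pattern match, and in each case the corresponding elimination rule---\rlnm{PairE}, \rlnm{BinOpE} or \rlnm{RelOpE}, \rlnm{AppE}, or \rlnm{MatchE}---reduces the goal $\types S : \CompPure{\okty}$ to giving a type of the form $\CompPure{A}$ to the immediate subterms, which is exactly what the induction hypothesis (applied to the stuck subterm) or \cref{lem:all-values-typable} (applied to a value subterm) delivers.

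Two subtyping observations and one structural lemma do most of the work. First, since $\intty \subtype \okty$ and $\funty \subtype \okty$ (from \rlnm{Ok}, \rlnm{UnionR} and \rlnm{Trans}), the order-reversing property of complement (\cref{thm:ortholattice}) gives $\CompPure{\okty} \subtype \CompPure{\intty}$ and $\CompPure{\okty} \subtype \CompPure{\funty}$; so an induction-hypothesis typing $\types S' : \CompPure{\okty}$ can be coerced to $\types S' : \CompPure{\intty}$ or $\types S' : \CompPure{\funty}$ by \rlnm{Sub}. Second, combining \cref{lem:all-values-typable} with \rlnm{Atom}, \rlnm{Pair}, \rlnm{Top}, \rlnm{Disj} and \rlnm{Trans}, every value that is not a numeral is typable with $\CompPure{\intty}$ (for a pair value one routes through $\pairvalty \subtype \pairty \subtype \CompPure{\intty}$, and for an atom through $\atom{a} \subtype \atomty \subtype \CompPure{\intty}$), and symmetrically every value that is not an abstraction is typable with $\CompPure{\funty}$. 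Finally, the one-sided weakening lemma (\cref{lem:weakening}) is what makes the ``E'' rules applicable at all: these rules expect the subterm typings in the (multi-conclusion) succedent, so once one genuinely derivable formula such as $\types S':\CompPure{\intty}$ is in hand, weakening attaches the second, possibly vacuous, formula (e.g. $\pn{n}:\CompPure{\intty}$) alongside it for free.

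With these in place the bulk of the cases are mechanical. For $(S',M)$ and $(V,S')$: apply the induction hypothesis to the stuck component, weaken, and apply \rlnm{PairE}. For $\opPrim{S'}{M}$, $\opPrim{\pn{n}}{S'}$ and the relational analogues $\relOpPrim{S'}{M}$, $\relOpPrim{\pn{n}}{S'}$: coerce the induction hypothesis to $\types S':\CompPure{\intty}$, weaken in the other operand, and apply \rlnm{BinOpE} / \rlnm{RelOpE}; the variants $\opPrim{V}{M}$, $\opPrim{\pn{n}}{V}$ etc. with $V$ a non-numeral value are identical, now typing $V$ with $\CompPure{\intty}$ directly. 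For $S'\,M$ with $S'$ stuck, coerce to $\types S':\CompPure{\funty}$ and apply \rlnm{AppE}; for $V\,M$ with $V$ a non-abstraction value, type $V$ with $\CompPure{\funty}$ and apply \rlnm{AppE}. For $\matchtm{S'}{\mid_{i=1}^k p_i \mapsto N_i}$ with $S'$ stuck: by \cref{lem:value-substituition-to-pattern-yield-value} each $p_i\thetaok{p_i} \subtype \okty$, so $\bigvee_{i=1}^k p_i\thetaok{p_i} \subtype \okty$ by \rlnm{UnionL}, hence $\CompPure{\okty} \subtype \CompPure*{\bigvee_{i=1}^k p_i\thetaok{p_i}}$ by \cref{thm:ortholattice}; coerce the induction hypothesis accordingly and apply \rlnm{MatchE}.

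The one genuinely delicate case is $\matchtm{V}{\mid_{i=1}^k p_i \mapsto N_i}$ where $V$ is a value matching none of the $p_i$, for which \rlnm{MatchE} demands $\types V : \CompPure*{\bigvee_{i=1}^k p_i\thetaok{p_i}}$. \cref{lem:value-not-matching-pattern} yields only the per-pattern facts $\types V : \CompPure{(p_i\thetaok{p_i})}$, and there is no rule that intersects these into the complemented union---the only available subtyping, $\CompPure{(p_j\thetaok{p_j})} \subtype \CompPure*{\bigvee_i p_i\thetaok{p_i}}$, runs the wrong way. I would therefore first strengthen \cref{lem:value-not-matching-pattern} to finitely many (necessarily disjoint) patterns: if $V$ is a closed value and $V \neq p_i\sigma$ for every $i \in [1,k]$ and every $\sigma$, then $\types V : \CompPure*{\bigvee_{i=1}^k p_i\thetaok{p_i}}$. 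The proof is the same structural induction on $V$ as the original: when $V$ is an atom, a numeral, or an abstraction, each $p_i$ is either a differently-shaped literal or a pair pattern, so each $p_i\thetaok{p_i}$ lies below $\CompPure{X}$ for the primitive type $X$ disjoint from the natural type of $V$, and one use of \rlnm{UnionL} plus \rlnm{Sub} applied to the typing of $V$ from \cref{lem:all-values-typable} closes the case; when $V = (W_1,W_2)$, each pair pattern $p_i=(q_i,r_i)$ fails to match in either its first or its second component, so one partitions the pair-pattern indices accordingly, applies the induction hypothesis to $W_1$ or to $W_2$, and reassembles with \rlnm{Pair}, \rlnm{PairC} and \rlnm{UnionL} exactly as in the pair case of \cref{lem:value-not-matching-pattern}. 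This strengthened lemma gives the premise of \rlnm{MatchE} directly, completing the induction. The bookkeeping in the pair case of that strengthening---tracking which component of each pair pattern causes the mismatch and gluing the component-wise results back together---is the main obstacle; every other case is routine.
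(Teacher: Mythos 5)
Your proof is correct and, for all but one case, follows the paper's own argument exactly: induction over the stuck-term grammar, dispatching each case with the corresponding elimination rule (\rlnm{PairE}, \rlnm{OpE}/\rlnm{RelOpE}, \rlnm{AppE}, \rlnm{MatchE}) after coercing the inductive typing $\types S':\CompPure{\okty}$ along $\CompPure{\okty}\subtype\CompPure{\intty}$ or $\CompPure{\okty}\subtype\CompPure{\funty}$, using \cref{lem:all-values-typable} for value subterms, weakening to populate the succedent, and \cref{lem:value-substituition-to-pattern-yield-value} for the stuck-scrutinee match case. The divergence is in the case $\matchtm{V}{\mid_{i=1}^k p_i \mapsto N_i}$ with $V$ matching no pattern, and you have correctly identified this as the genuinely delicate spot. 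The paper splits on $k=1$ (where \cref{lem:value-not-matching-pattern} suffices directly) versus $k>1$ (where disjointness forces every pattern to be an atom or a pair), and then asserts that ``consideration of the proof of \cref{lem:type-of-value}'' yields a single type $B$ with $\types V:B$ and $B\subtype\CompPure{(p_i\thetaok{p_i})}$ for every $i$ simultaneously, whence $B\subtype\CompPure*{\bigvee_i p_i\thetaok{p_i}}$. This is immediate when $V$ is a numeral, atom, or abstraction, but the paper's parenthetical list of witnesses conspicuously stops short of the pair-value case, which is exactly where a uniform $B$ is nontrivial to exhibit. Your alternative---restating \cref{lem:value-not-matching-pattern} for finitely many disjoint patterns at once and reproving it by the same structural induction on $V$---is the more systematic way to close this gap. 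One refinement you will need when you carry it out: in the pair case, after partitioning the pattern indices by which component fails, the type $A_1$ you assign to $W_1$ must be a subtype of $\okty$ as well as of $\CompPure{\bigvee_{i\in I_1} q_i\thetaok{q_i}}$, because for the indices in $I_2$ you must route through the $(\okty,\,\CompPure{\cdot})$ disjunct of \rlnm{PairC}, whose first component demands valuehood. So the statement to prove by induction is really: there exists a single type $C$ with $\types V:C$, $C\subtype\okty$, and $C\subtype\CompPure*{\bigvee_{i}p_i\thetaok{p_i}}$ (a simultaneous strengthening of \cref{lem:type-of-value} and \cref{lem:value-not-matching-pattern}). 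With that formulation the reassembly via \rlnm{Pair}, \rlnm{PairC} and \rlnm{UnionL} goes through, and your proof is complete.
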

\begin{proof}
    By induction on the shape of $S$.

    \begin{indproof}

        \indcase{$S$ is of shape $(S', M)$}
        Applying the induction hypothesis to $S'$ yields $\types S : \CompPure{\okty}$. By \cref{lem:weakening} (\textsc{Weakening}), it follows that $\types S : \CompPure{\okty},\, M:\CompPure{\okty}$, to which we apply the typing rule \textsc{PairE} to derive $\types (S', M): \CompPure{\okty}$.

        \indcase{$S$ is of shape $(V, S')$}
        By an analogous argument with the case where $S$ is of shape $(S', M)$.

        \indcase{$S$ is of shape $\opPrim{S'}{M}$}
        Applying the induction hypothesis to $S'$ yields $\types S': \CompPure{\okty}$. Moreover, from the subtyping rule \textsc{Ok}, we have $\intty \vee \pairvalty \vee \funty \vee \atomty \subtype \okty$. We also have $\intty \subtyping \intty \vee \pairvalty \vee \funty \vee \atomty$ from the subtyping rule \textsc{UnionR}. Then, transitivity gives $\intty \subtyping \okty$. By Lemma~\ref{thm:ortholattice}, this implies $\CompPure{\okty} \subtyping \CompPure{\intty}$.
        \par Consequently, applying the typing rule \textsc{Sub} to $\types S': \CompPure{\okty}$ gives $\types S': \CompPure{\intty}$. By \cref{lem:weakening} (\textsc{Weakening}), it follows that $\types S': \CompPure{\intty},\,M: \CompPure{\intty}$, to which we apply the typing rule \textsc{OpE} to derive $\types \opPrim{S'}{M}: \CompPure{\okty}$.

        \indcase{$S$ is of shape $\opPrim{V}{M}$, where $\forall n \in \mathbb{Z}.\,V \neqsyntac \pn{n}$}
        It follows from Lemma~\ref{lem:all-values-typable} that $\types V : \CompPure{\intty}$.  By \rlnm{Weakening}, $\types V: \CompPure{\intty},\,M:\CompPure{\intty}$. Thus, applying the typing rule \textsc{OpE} yields $\types \opPrim{V}{M} : \CompPure{\okty},\,\Delta$ as desired.


        \indcase{$S$ is of shape $\opPrim{\pn{n}}{S'}$}
        By an analogous argument with the case where $S$ is of shape $\opPrim{S'}{M}$.

        \indcase{$S$ is of shape $\opPrim{\pn{n}}{V}$, where $V \neqsyntac \pn{n}$ for any $n \in \mathbb{Z}$}
        By an analogous argument with the case where $S$ is of shape $\opPrim{V}{M}$ and $\forall n \in \mathbb{Z}.\,V \neqsyntac \pn{n}$.

        \indcase{$S$ is of shape $\app{S'}{M}$}
        By the induction hypothesis, $\types S' : \CompPure{\okty}$. Moreover, from the subtyping rule \textsc{Ok}, we have $\intty \vee \pairvalty \vee \funty \vee \atomty \subtype \okty$. We also have $\funty \subtyping \intty \vee \pairvalty \vee \funty \vee \atomty$ from the subtyping rule \textsc{UnionR}. Then, transitivity gives $\funty \subtyping \okty$. By the subtyping rule \textsc{CompR}, this implies $\CompPure{\okty} \subtyping \CompPure{\funty}$. Thus, applying the typing rule \textsc{App} to $\types S' : \CompPure{\funty}$ gives $\types \app{S'}{M} : \CompPure{\okty}$.

        \indcase{$S$ is of shape $\app{V}{M}$, where $V$ is not of shape $\abs{x}{M}$}
        Then it follows from Lemma~\ref{lem:all-values-typable} that $\types V: \CompPure{\funty}$, to which we apply the typing rule \textsc{App} to conclude $\types \app{V}{M}:\CompPure{\okty}$.

        \indcase{$S$ is of shape $\matchtm{S'}{|_{i=1}^k p_i \mapsto N_i$}}
        By the induction hypothesis, $\types S' : \CompPure{\okty}$ holds. From \cref{lem:value-substituition-to-pattern-yield-value}, $p_i\thetaok{p_i} \subtyping \okty$ holds for any $i \in [1,k]$. Consequently, \textsc{UnionL} gives $\textstyle\bigvee_{i=1}^k p_i\thetaok{p_i} \subtyping \okty$. Thus Lemma~\ref{thm:ortholattice} gives $\CompPure{\okty} \subtyping \CompPure*{\textstyle\bigvee_{i=1}^k p_i\thetaok{p_i}}$.
        \par Therefore, applying the typing rule \textsc{Sub} to $\types S' : \CompPure{\okty}$ yields $\types S': \CompPure{\textstyle\bigvee_{i=1}^k p_i\thetaok{p_i}}$. By the typing rule \textsc{MatchE}, it follows that $\types \matchtm{S'}{|_{i=1}^k p_i \mapsto N_i} : \CompPure{\okty}$.

        \indcase{$S$ is of shape $\matchtm{V}{|_{i=1}^k p_i \mapsto N_i$}, where $\forall i.\, \sigma.\, V \neqsyntac p_i\sigma$}
        By \cref{lem:value-not-matching-pattern}, since $\forall i.\, \sigma.\, V \neqsyntac p_i\sigma$, it follows that $\forall i.\, \types V:\CompPure{(p_i\thetaok{p_i})}$ for any $i$.  If $k=1$ then the result follows immediately by subtyping and \rlnm{MatchE}.  Otherwise, $k>1$ and, by disjointness of patterns, it must be that each pattern is either a pair or an atom.   Consideration of the proof of Lemma~\ref{lem:type-of-value} shows that there is some type $B$ (e.g. for $V$ a numeral, this can be taken to be $\intty$, for $V$ a function this may be taken to be any arrow, and so on) such that $\types V:B$ and $B \subtype \CompPure{(p_i\thetaok{p_i})}$.  Therefore $B \subtype \CompPure*{\bigvee_{i=1}^k p_i\thetaok{p_i}}$ and thus $\types V : \CompPure*{\bigvee_{i=1}^k p_i\thetaok{p_i}}$.  Therefore, the result is obtained by \rlnm{MatchE}.

    \end{indproof}

\end{proof}

\begin{lemma}
    \label{lem:subtyping}
    The subtyping relation satisfies the following invariants:
    \begin{enumerate}
        \item If $A \to B \subtyping A' \to B'$, then $A' \subtyping A$ and $B \subtyping B'$.
        \item If $A \subtyping \bigvee_{i=1}^k B_i$, and $A$ is \emph{not} a complement or union, then $\exists j\,(1 \leq j \leq k)$ such that $A\subtyping B_j$.
        \item If $\textstyle\bigvee_{i=1}^k B_{i} \subtyping A$, then $\forall j\,(1 \leq j \leq k)$, $B_j \subtyping A$.
        \item If $(A,B) \subtyping (A',B')$, then  $A \subtyping A'$ and $B\subtyping B'$.
        \item If $(A,B) \subtyping \CompPure{(A',B')}$, then $A\subtyping \CompPure{A'}$ or $B\subtyping \CompPure{B'}$.
        \item If $(A,B) \subtyping \okty$, then $A\subtyping \okty$ and $B\subtyping \okty$.
    \end{enumerate}
\end{lemma}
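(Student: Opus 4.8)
The plan is to reduce everything to the alternative characterisation of subtyping: by \cref{thm:subtyping-equivalence} it suffices to establish all six invariants for the relation $\subtypeA$ of \cref{fig:sub-alt}. This is exactly the payoff of introducing that system — it has no transitivity rule, and its complement rules are constrained by the syntactic shape of the complemented type, so that a derivation of a judgement whose two sides have a fixed shape can be dissected by a short case analysis on the last rule applied. For each of the six claims I would argue by induction on the derivation of the hypothesis; for (1), (2), (4) and (5) a plain case analysis on the final rule suffices, while (3) and (6) genuinely invoke the inductive hypothesis.

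For (1): the only rules of \cref{fig:sub-alt} that can conclude $A \to B \subtypeA A' \to B'$ are \rlnm{Refl} (whence $A=A'$, $B=B'$, and both conclusions follow by \rlnm{Refl}) and \rlnm{Fun} (which gives $A' \subtypeA A$ and $B \subtypeA B'$ directly). Invariant (4) is analogous: $(A,B) \subtypeA (A',B')$ can only be concluded by \rlnm{Refl} or \rlnm{Pair}. For (2), since $A$ is neither a complement nor a union, all \rlnm{CompL}-style rules and \rlnm{UnionL} are inapplicable at the root; and since (taking $k \ge 2$, the case $k=1$ being immediate) the right-hand side is literally a union, neither is \rlnm{Refl}, \rlnm{Top}, \rlnm{OkR}, \rlnm{Atom}, \rlnm{Pair}, \rlnm{Fun}, nor any of the \rlnm{CompR}-style rules; hence the last rule is \rlnm{UnionR}, which is precisely the goal. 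For (5), a derivation of $(A,B) \subtypeA \CompPure{(A',B')}$ must end in \rlnm{CompRPairL} (yielding $A \subtypeA \CompPure{A'}$), or \rlnm{CompRPairR} (yielding $B \subtypeA \CompPure{B'}$), or \rlnm{CompRPair}, but in the last case the rule would demand a subderivation of $(A,B) \subtypeA E$ with $E \in \{\intty,\funty,\atomty\}$, and inspection shows no rule of \cref{fig:sub-alt} produces a pair on the left of such an inequality, so that branch is vacuous.

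Claims (3) and (6) are the ones that actually use the inductive hypothesis. For (3): the last rule of a derivation of $\textstyle\bigvee_{i=1}^k B_i \subtypeA A$ is either \rlnm{UnionL} (exactly the goal), or one of the rules whose conclusion admits an arbitrary type on the left — \rlnm{Refl}, \rlnm{Top}, \rlnm{OkR}, \rlnm{UnionR}, \rlnm{CompRInt}, \rlnm{CompRPair}, \rlnm{CompRArr}, \rlnm{CompRAtoms}, \rlnm{CompRAtom2}, \rlnm{CompRUn}, \rlnm{CompRC} — and in each of those cases the premises again have $\textstyle\bigvee_{i=1}^k B_i$ on the left, so the inductive hypothesis gives the corresponding fact for every $B_j$ and re-applying the same rule yields $B_j \subtypeA A$; this is the same manoeuvre already used to prove admissibility of transitivity (\cref{lem:alt-transitivity}). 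For (6): a derivation of $(A,B) \subtypeA \okty$ must end in \rlnm{OkR}, whose premise is $(A,B) \subtypeA C$ for some $C \in \{\intty,\pairvalty,\funty,\atomty\}$; by the same inspection as in (5), only $C = \pairvalty = (\okty,\okty)$ admits a pair on the left, and then invariant (4) gives $A \subtypeA \okty$ and $B \subtypeA \okty$.

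The main obstacle I expect is simply completeness of the bookkeeping: for each invariant one must correctly enumerate every rule of \cref{fig:sub-alt} that could conclude a judgement of the stated shape. This is mildly delicate because several standard types appear only through their abbreviations, so $\funty$ must be matched as the arrow $\CompPure{\topty} \to \topty$, $\pairvalty$ as the pair $(\okty,\okty)$, and $\botty$ as $\CompPure{\topty}$, when comparing against rule conclusions and side conditions. Once this is fixed, each remaining case is either immediate or a single application of the inductive hypothesis, and invariants (4) and (6) should be arranged so that (6) may appeal to (4).
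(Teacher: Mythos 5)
Your proposal is correct and follows essentially the same route as the paper: invoke Theorem~\ref{thm:subtyping-equivalence} to pass to the transitivity-free relation $\subtypeA$ of Figure~\ref{fig:sub-alt}, then dispatch (1), (2), (4), (5) by a case analysis on the last rule and (3), (6) with a genuine induction. The only cosmetic differences are that the paper unfolds the premise of \rlnm{OkR} directly in case (6) where you appeal to invariant (4), and that your uniform ``re-apply the same rule'' recipe for (3) needs the small adjustment (via \rlnm{UnionR}) in the \rlnm{Refl} subcase, exactly as in the paper's separate treatment of that case.
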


\begin{proof}
    By the equivalence of $\subtyping$ and $\subtypeA$, for the proof we will use $\subtypeA$ and the rules from \cref{fig:sub-alt}.
    \begin{enumerate}
        \item We consider which rule could have been used to derive $A \to B \subtypeA A' \to B'$.
              \begin{description}
                  \item[\rlnm{Refl}] We have $A \to B \subtypeA A \to B$, and by \rlnm{Refl}, $A \subtypeA A$ and $B \subtypeA B$ as required.
                  \item[\rlnm{Fun}] Immediate from the premises.
                  \item[Otherwise] The conclusion could not have been derived by any other subtyping rule.
              \end{description}
        \item By inspection of the subtyping rules, we see the only rule that could have been used to derive the conclusion is \rlnm{UnionR}, whose premise is exactly the result required.
        \item We proceed by induction on the derivation of $\textstyle\bigvee_{i=1}^k B_{i} \subtypeA A$.
              \begin{description}
                  \item[\rlnm{Refl}] In this case, we have $\bigvee_{i=1}^k B_i \subtypeA \bigvee_{i=1}^k B_i$. By \rlnm{Refl}, $\forall j \in [1,k], B_j \subtypeA B_j$. Then, by applying \rlnm{UnionR} $k$ times, we have $\forall j \in [1,k], B_j \subtypeA \bigvee_{i=1}^k B_{i}$ as required.
                  \item[Otherwise] For every other rule (R) that could have derived the conclusion, the result follows immediately from the induction hypotheses and $k$ applications of (R).
              \end{description}
        \item We consider which rule could have been used to derive $(A,B) \subtypeA (A', B')$.
              \begin{description}
                  \item[\rlnm{Refl}] By \rlnm{Refl}, $A \subtypeA A$ and $B \subtypeA B$ as required.
                  \item[\rlnm{Pair}] Immediate from the premises.
                  \item[Otherwise] The conclusion could not have been derived by any other subtyping rule.
              \end{description}
        \item We consider which rule could have been used to derive $(A,B) \subtypeA \CompPure{(A',B')}$.
              \begin{description}
                  \item[\rlnm{CompRPair}] Not possible since $(A,B)$ is disjoint from $\intty, \funty, \atomty$, so the premise cannot be true.
                  \item[\rlnm{CompRPairL}] Immediate from premise.
                  \item[\rlnm{CompRPairR}] Immediate from premise.
                  \item[Otherwise] The conclusion could not have been derived by any other subtyping rule.
              \end{description}
        \item The only rule that could have been used to derive $(A,B) \subtypeA \okty$ is \rlnm{OkR} with the premise $(A,B) \subtypeA \pairvalty$. The only rule we could have used to derive the premise is either \rlnm{Refl}, in which case $A = \okty, B = \okty$ (since $\pairvalty = (\okty, \okty)$), so it follows by \rlnm{Refl} that $A \subtypeA \okty, B \subtypeA \okty$; or \rlnm{Pair}, in which case the result is immediate from the premises.
    \end{enumerate}
\end{proof}

For the following, it is useful to have a more explicit description of normal forms:

\begin{lemma}[Normal Forms]
    The (closed) normal forms may equivalently be described by the following grammar:
    \[
        \begin{array}{lll}
            U \;\Coloneqq\; \pn{n} & \mid \atomtm{a} \mid \abs{x}{M} \mid (U_1,\,U_2) &                                                                                  \\
                                   & \mid (U,\,M)                                     & \textup{\textsf{when $U$ is not of shape $V$}}                                   \\
                                   & \mid U \otimes M \mid \relOpPrim{U}{M}           & \textup{\textsf{when $U$ is not of shape $\pn{n}$}}                              \\
                                   & \mid \pn{n} \otimes U \mid \relOpPrim{\pn{n}}{U} & \textup{\textsf{when $U$ is not of shape $\pn{n}$}}                              \\
                                   & \mid U\,M                                        & \textup{\textsf{when $U$ is not of shape $\abs{x}{M}$}}                          \\
                                   & \mid \matchtm{U}{\mid_{i=1}^k p_i \mapsto N_i}   & \textup{\textsf{when $\forall i \in [1,k].\ \forall \sigma.\ U \neq p_i\sigma$}}
        \end{array}
    \]
\end{lemma}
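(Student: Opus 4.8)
The plan is to prove the two inclusions separately, using throughout the fact, immediate from Definition~\ref{def:reduction}, that a term $M$ can make a step exactly when $M \eqsyntac \Ctx[R]$ for some evaluation context $\Ctx$ and some redex $R$ (one of the terms on the left of the reduction schema).

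\emph{Every term generated by the grammar is a normal form.} I would prove this by induction on the derivation of $U$ from the grammar. Suppose towards a contradiction that such a $U$ steps, so $U \eqsyntac \Ctx[R]$ for an evaluation context $\Ctx$ and a redex $R$. If $\Ctx \eqsyntac \Hole$ then $U$ is itself a redex, and this is impossible by inspection of the finitely many productions: $\pn{n}$, $\atom{a}$, $\abs{x}{M}$ and pairs are never redexes; a production $\binOpPrim{U'}{M}$ (resp. $\binOpPrim{\pn{n}}{U'}$) could only be the redex $\binOpPrim{\pn{n}}{\pn{m}}$ if $U'$ were a numeral, which its side condition forbids; $U'\,M$ could only be a $\beta$-redex if $U'$ were an abstraction, again forbidden; $\matchtm{U'}{\cdots}$ could only be a match-redex if $U'$ were a value matching some pattern $p_i$, which is precisely what the side condition rules out; and there is no production of shape $\fixtm{x}{M}$ at all. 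If instead $\Ctx$ is not a hole, then the shape of $\Ctx$ together with the grammar production for $U$ pins down an immediate subterm of $U$ in an active position that is itself forced to be a grammar term, so by the induction hypothesis it is a normal form and cannot equal $\Ctx'[R]$ — a contradiction. The pair case needs slightly more attention: one splits on whether the active position is the first or second component, noting that the production $(U',M)$ with $U'$ a non-value is only available when the active position is the first component, while a context of shape $(V,\Ctx')$ can only arise under the $(U_1,U_2)$ production.

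\emph{Every closed normal form is generated by the grammar.} I would prove this by induction on the structure of the closed term $U$, casing on its outermost shape. A variable is not closed, and $\fixtm{x}{M}$ always steps, so those cases are vacuous; atoms, numerals and abstractions are immediate. For a pair $(M_1,M_2)$: since $(\Ctx,N)$ is an evaluation context, $M_1$ is a closed normal form, hence a grammar term by the induction hypothesis; if $M_1$ is a value then, since $(V,\Ctx)$ is an evaluation context, $M_2$ is also a closed normal form and the $(U_1,U_2)$ production applies, otherwise the $(U,M)$ production applies. For an application $M_1\,M_2$: $M_1$ is a closed normal form, hence a grammar term, and it cannot be an abstraction (else $U$ is a $\beta$-redex), so the $U\,M$ production applies. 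For $\binOpPrim{M_1}{M_2}$ and $\relOpPrim{M_1}{M_2}$: $M_1$ is a closed normal form; if it is not a numeral the $\binOpPrim{U}{M}$ production applies, and if $M_1 \eqsyntac \pn{n}$ then $M_2$ is also a closed normal form which cannot be a numeral (else $U$ is a primitive-operation redex), so the $\binOpPrim{\pn{n}}{U}$ production applies. For $\matchtm{M_0}{\mid_{i=1}^k p_i \mapsto N_i}$: $M_0$ is a closed normal form, hence a grammar term, and since $U$ is a normal form the scrutinee cannot be a value matching any $p_i$, which is what licenses the match production.

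The step I expect to be the main obstacle is exactly the match case of the second direction. Here one must be careful about the intended reading of the side condition "$\forall i.\,\forall\sigma.\ U \eqsyntac p_i\sigma$ fails" across its two sub-cases — a value scrutinee (where non-matching is forced, since otherwise the match-redex fires) versus a stuck scrutinee — and about how disjointness of the patterns interacts with variable patterns; I would treat the stuck-scrutinee sub-case in tandem with the corresponding clause of the characterisation of stuck terms, reusing the fact that a stuck scrutinee already blocks the match-redex regardless of the patterns. The remaining bookkeeping — checking that each "when $U$ is not of shape $\ldots$" side condition is precisely the hypothesis needed to exclude the relevant redex — is routine case analysis.
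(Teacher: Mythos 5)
Your proposal is correct, and it is essentially the paper's own (one-line) proof --- ``by structural inspection of terms and analysis of the reduction relation'' --- carried out in full: both inclusions follow by the case analyses you describe, using the decomposition of any reducible term as an evaluation context filled with a redex. The one subtlety you flag is real: read literally, the side condition $\forall i.\,\forall\sigma.\,U \neq p_i\sigma$ on the match production fails whenever the scrutinee is stuck and some $p_i$ is a bare variable (take $\sigma = [U/x]$), even though such a match expression is a normal form; your resolution --- reading the condition as in the paper's characterisation of stuck terms, i.e.\ splitting on whether the scrutinee is a value (where non-matching must be checked) or stuck (where the match is blocked regardless) --- is the intended one.
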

\begin{proof}
    By structural inspection of terms and analysis of the reduction relation.
\end{proof}

\begin{lemma}[Cut Admissibility of Normal Forms]
    \label{lem:cut-normal-form}
    Let $U$ be a closed normal form, and $A$ and $B$ types with $A \subtyping \CompPure{B}$. If both $\Gamma \types U:A,\,\Delta$ and $\Gamma \types U:B,\,\Delta$ hold, then $\Gamma \types \Delta$ follows.
\end{lemma}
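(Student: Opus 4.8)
The plan is to mirror the proof of cut admissibility for values (Lemma~\ref{lem:cut}), but to run the induction over the grammar of \emph{normal forms} rather than values, invoking Lemma~\ref{lem:cut} itself at the value leaves. First I would discharge the corner case that one of $A$, $B$ is subtype‑equivalent to $\topty$ exactly as in Lemma~\ref{lem:cut}: if say $A \tyeq \topty$ then $\CompPure{B} \tyeq \topty$, and we may replace $A$ by $\okty$ when $U$ is a value (Lemma~\ref{lem:all-values-typable}) or by $\CompPure{\okty}$ when $U$ is stuck (Lemma~\ref{lem:type-of-stuck-term}), obtaining $A' \tyneq \topty$ with $A' \subtype \CompPure{B}$ and $\Gamma \types U : A',\,\Delta$; so we may assume $A, B \tyneq \topty$. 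Then, arguing by contradiction, I would assume $\Gamma \nvdash \Delta$ and induct on the shape of $U$ using the explicit grammar of closed normal forms stated just above the lemma. The case $U$ a value is immediate from Lemma~\ref{lem:cut}.

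When $U$ is stuck I would case split using the stuck‑term grammar of Lemma~\ref{lem:stuck-term}. For shapes on which inversion pins down a non‑trivial ``natural type'' (e.g. $S \otimes M$, $S\,M$, where Inversion (Lemma~\ref{lem:inversion}) forces a typing of the distinguished subterm by $\intty$, $\funty$, a pair or arrow type, or $\CompPure{\intty}$/$\CompPure{\funty}$), the argument is as in the atom/int cases of Lemma~\ref{lem:cut}: applying inversion to $\Gamma \types U : A,\,\Delta$ and $\Gamma \types U : B,\,\Delta$ (under $\Gamma \nvdash \Delta$) exposes a common subtype $C$ of $A$ and $B$ with $\mng{C} \ne \varnothing$, whence $\mng{C} \subseteq \mng{B} \cap \mng{\CompPure{B}} = \varnothing$ by subtyping soundness (Theorem~\ref{thm:subtyping-soundness}) — a contradiction. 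For the compound stuck shapes, inversion yields for each of $A$ and $B$ either a ``positive'' disjunct, giving a typing of the distinguished normal‑form subterm $U'$ of $U$, or a ``$\CompPure{\okty}$'' disjunct, giving $\CompPure{\okty} \subtype A$ (resp.\ $B$) together with typings of $U'$ and of the remaining subterms. In a $\CompPure{\okty}$‑disjunct I would use $\CompPure{\okty} \subtype A \subtype \CompPure{B}$ to derive $B \subtype \okty$ and then re‑invert $\Gamma \types U : B,\,\Delta$: $B \subtype \okty$ forces the positive disjunct (the $\CompPure{\okty}$‑disjunct would give $\CompPure{\okty} \subtype \okty$, impossible by soundness). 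In every case one ends with two typings of the distinguished normal‑form subterm $U'$; combining one of them with the $\CompPure{\okty}$‑typing of $U'$ supplied by Lemma~\ref{lem:type-of-stuck-term} (when $U'$ is stuck), or invoking the canonical‑forms lemma (Lemma~\ref{lem:canonical-forms-value}) when $U'$ is a value, we either reach an immediate soundness contradiction or a cut on $U'$, which is strictly smaller than $U$ and so handled by the induction hypothesis (or by Lemma~\ref{lem:cut} when $U'$ is a value), contradicting $\Gamma \nvdash \Delta$. The match case $\matchtm{U'}{\mid_i p_i \mapsto N_i}$ needs the side condition $\theta_i(x) \subtype \okty$: via Lemma~\ref{lem:subst-subtyping-on-pattern} and Lemma~\ref{lem:value-substituition-to-pattern-yield-value} one gets $\bigvee_i p_i\theta_i \subtype \bigvee_i p_i\thetaok{p_i} \subtype \okty$, so the scrutinee's typing again collapses — to a cut on $U'$ via its $\CompPure{\okty}$‑typing when $U'$ is stuck, or via Lemma~\ref{lem:value-not-matching-pattern} and Lemma~\ref{lem:cut} when $U'$ is a value not matching any pattern.

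I expect the main obstacle to be the compound stuck terms whose non‑distinguished subterms are \emph{arbitrary} and need not be normal forms (the paradigmatic example is $(S,M)$, where $M$ may even diverge): the induction cannot recurse into such subterms, so every appeal to the induction hypothesis must be routed through the distinguished normal‑form subterm. The pair case is where this is most delicate: from $(B_1,B_2) \subtype \CompPure{(A_1,A_2)}$ the bare disjunction of Lemma~\ref{lem:subtyping}(5) is not enough, and I would instead read off the alternative subtyping system (Figure~\ref{fig:sub-alt}, rules \rlnm{CompRPairL}/\rlnm{CompRPairR}) that this forces \emph{either} $B_1 \subtype \CompPure{A_1}$ \emph{or} both $B_1 \subtype \okty$ and $B_2 \subtype \CompPure{A_2}$. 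In the first alternative we cut on the stuck first component directly; in the second, the extra fact $B_1 \subtype \okty$ combined with the $\CompPure{\okty}$‑typing of that stuck component again yields a cut on it through the induction hypothesis — so the disjointness is always made to ``propagate'' to the distinguished subterm. Assembling these pieces, together with the routine verification that every distinguished subterm appearing in a stuck shape is a closed normal form strictly smaller than $U$, completes the argument.
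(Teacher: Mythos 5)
Your proposal is correct and follows essentially the same strategy as the paper's proof: dispose of the $\topty$ corner case by substituting $\okty$ or $\CompPure{\okty}$ via Lemmas~\ref{lem:all-values-typable} and~\ref{lem:type-of-stuck-term}, assume $\Gamma \nvdash \Delta$ for contradiction, induct on the shape of the closed normal form, invert both typings, and in each case either reach a semantic contradiction through Theorem~\ref{thm:subtyping-soundness} or reduce to a cut on the distinguished normal-form subterm, manufacturing the second typing of that subterm from its ``natural type'' when the two inversion disjuncts are mixed. Two points deserve comment. First, your observation about the pair case is exactly right, and it is in fact what the paper's proof silently relies on: the disjunction as stated in Lemma~\ref{lem:subtyping}(5) is too weak, and the argument really needs the stronger alternative read off \rlnm{CompRPairL}/\rlnm{CompRPairR}, namely $B_1 \subtype \CompPure{A_1}$ or ($B_1 \subtype \okty$ and $B_2 \subtype \CompPure{A_2}$); the extra $B_1 \subtype \okty$ is precisely what lets the disjointness be pushed onto the first (normal-form) component when the second component of $(S,\,M)$ is arbitrary. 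Second, your match case genuinely diverges from the paper's: in the doubly-positive inversion case the paper applies the induction hypothesis to the branch bodies $N_i$ and then to the scrutinee, discharging the pattern-variable assumptions by simultaneous substitution, whereas you route everything through the scrutinee $U'$ using $\bigvee_{i=1}^k p_i\theta_i \subtype \okty$. Your route is arguably cleaner (and avoids recursing into the $N_i$, which need not be normal forms), but it has one small gap to close: when $U'$ is a value matching no pattern, Lemma~\ref{lem:value-not-matching-pattern} yields a \emph{separate} derivation $\types U' : \CompPure{(p_i\thetaok{p_i})}$ for each $i$, whereas the cut against $\bigvee_{i=1}^k p_i\theta_i$ requires a \emph{single} type $C$ with $\types U':C$ and $C \subtype \CompPure{(p_i\thetaok{p_i})}$ for all $i$ simultaneously. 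That common $C$ is obtained by the ``natural type of the value'' argument used in the match case of Lemma~\ref{lem:type-of-stuck-term}, exploiting disjointness of the patterns; with that repair your argument goes through.
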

\begin{proof}
    First we observe that it suffices to consider the case in which $A \tyneq \topty$ and $B \tyneq \topty$.  To see this, suppose that one of $A$ or $B$ is subtype equivalent to $\topty$, let us suppose it is $A$ because the other case is symmetrical.  Then it follows that $\CompPure{B} \tyeq \topty$.  By Lemma~\ref{lem:all-values-typable} and Lemma~\ref{lem:type-of-stuck-term}, we obtain $\types U : A'$ where $A'$ is either the type $\okty$ or the type $\CompPure{\okty}$.  Since $\okty \subtype \topty \tyeq \CompPure{B}$ and $\CompPure{\okty} \subtype \topty \tyeq \CompPure{B}$, this $A'$ can be used in place of $A$.

    So, assume $A \tyneq \topty$, $B \tyneq \topty$, $\Gamma \types U:A,\,\Delta$ and $\Gamma \types U:B,\,\Delta$, while $\Gamma \nvdash \Delta$ (i.e., the negation of \cref{lem:cut}). We show that this always leads to a contradiction by induction on the shape of $U$.
    \begin{indproof}

        \indcase{$U$ is of shape $\pn{n}$}
        Since $\Gamma\nvdash\Delta$, applying \cref{lem:inversion} (\textsc{Inversion}) to $\Gamma \types \pn{n}:A,\,\Delta$ and $\Gamma \types \pn{n}:B,\,\Delta$ yields $\intty \subtyping B$ and $\intty \subtyping \CompPure{B}$. This leads to a contradiction by an argument analogous to the case for $U \eqsyntac \atomtm{a}$.

        \indcase{$U$ is of shape $\atomtm{a}$}
        Since $\Gamma\nvdash\Delta$, applying \cref{lem:inversion} (\textsc{Inversion}) to $\Gamma \types \atomtm{a}:A,\,\Delta$ and $\Gamma \types \atomtm{a}:B,\,\Delta$, we obtain $\atom{a} \subtyping A$ and $\atom{a} \subtyping B$. By transitivity of subtyping, from $A \subtyping \CompPure{B}$ and $\atom{a} \subtyping A$, we derive $\atom{a} \subtyping \CompPure{B}$.
        \par Now, applying \cref{thm:subtyping-soundness} (\textsc{Soundness of Subtyping}) to both $\atom{a} \subtyping B$ and $\atom{a} \subtyping \CompPure{B}$, we conclude $\mng{\atom{a}} \subseteq \mng{B}$ and $\mng{\atom{a}} \subseteq \mng{\CompPure{B}}$. This implies $\mng{\atom{a}} \subseteq \mng{B} \cap \mng{\CompPure{B}}$. However, by \cref{def:type-meanings}, $\mng{B}\cap\mng{\CompPure{B}}=\mng{B}\cap\mng{\topty} \setminus \mng{B}=\varnothing$. Thus $\mng{\atom{a}} \subseteq \varnothing$. But since $\mng{\atom{a}} = \makeset{\atomtm{a}}\neq\varnothing$ (again by \cref{def:type-meanings}), we arrive at a contradiction.

        \indcase{$U$ is of shape $(M_1,M_2)$}
        Here we have $\Gamma \types (M_1,M_2):A,\,\Delta$ and $\Gamma \types (M_1,M_2):B,\,\Delta$. By \cref{lem:inversion} (\textsc{Inversion}) on $\Gamma \types (M_1,M_2):A,\,\Delta$ with $\Gamma \nvdash \Delta$, either:
        \begin{circledlist}
            \item $\exists A_1,\,A_2$ such that $(A_1,A_2) \subtyping A$ and $\Gamma \types M_1: A_1,\,\Delta$ and $\Gamma \types M_2 : A_2,\,\Delta$
            \item $\CompPure{\okty} \subtyping A$ and $\Gamma \types M_1: \CompPure{\okty},\,M_2:\CompPure{\okty},\,\Delta$.
        \end{circledlist}
        Similarly, by \cref{lem:inversion} (\textsc{Inversion}) on $\Gamma \types (M_1,M_2):B,\,\Delta$ with $\Gamma \nvdash \Delta$, either:
        \begin{itemize}[align=left,
                labelwidth=\circlelabel,
                leftmargin={\circlelabel + 1.54em\relax},
                topsep=\mytopsep,
                itemsep=\myitemsep,]
            \item[\circled{i}] $\exists B_1,\,B_2$ such that $(B_1,B_2) \subtyping B$ and $\Gamma \types M_1: B_1,\,\Delta$ and $\Gamma \types M_2 : B_2,\,\Delta$
            \item[\circled{ii}] $\CompPure{\okty} \subtyping B$ and $\Gamma \types M_1: \CompPure{\okty},\,M_2:\CompPure{\okty},\,\Delta$.
        \end{itemize}
        Moreover, since $U \eqsyntac (M_1,M_2)$ for some terms $M_1$ and $M_2$, for $U$ to be a normal form, there must be either
        \settowidth{\titleindent}{(1)}
        \begin{mylist}
            \item[(1)] $M_1 \eqsyntac U_1$ and $M_2 \eqsyntac U_2$ (i.e., $U \eqsyntac (U_1,U_2)$), for some normal forms $U_1$ and $U_2$.
            \item[(2)] $M_1 \eqsyntac S$ and $M_2 \eqsyntac M$ (i.e., $U \eqsyntac (S,M)$), for some stuck term $S$ and some term $M$.
        \end{mylist}
        \settowidth{\titleindent}{\textsf{Case}\circled{2}\,\circled{2}}
        \begin{mypar}
            \item[Case \circled{1}\,\circled{i}.] By transitivity, from $(A_1, A_2) \subtyping A$ and $A \subtyping \CompPure{B}$, we have $(A_1, A_2) \subtyping \CompPure{B}$. Apply \cref{thm:ortholattice} to $(B_1, B_2) \subtyping B$ gives $\CompPure{B} \subtyping \CompPure{(B_1,B_2)}$. Combining this with $(A_1, A_2) \subtyping \CompPure{B}$ via transitivity gives $(A_1, A_2) \subtyping \CompPure{(B_1, B_2)}$. By \cref{lem:subtyping}, this implies either $A_1 \subtyping \okty \land A_2 \subtyping \CompPure{B_2}$ or $A_1 \subtyping \CompPure{B_1}$. Then we proceed by further case analasis on the shape of $M_1$ and $M_2$.
            \settowidth{\titleindent}{(1)}
            \begin{mylist}
                \item[(1)] If $M_1 \eqsyntac U_1$ and $M_2 \eqsyntac U_2$ (i.e., $U \eqsyntac (U_1,U_2)$), for some normal forms $U_1$ and $U_2$:
                \settowidth{\titleindent}{\textendash}
                \begin{mylist}[\textendash]
                    \item If $A_1 \subtyping \okty \land A_2 \subtyping \CompPure{B_2}$, the induction hypothesis on \(U_1\) (with \(\Gamma \vdash U_2 : A_2, \Delta\) and \(\Gamma \vdash U_2 : B_2, \Delta\)) gives \(\Gamma \vdash \Delta\).
                    \item If $A_1 \subtyping \CompPure{B_1}$, the induction hypothesis on \(U_1\) (with \(\Gamma \vdash U_1 : A_1, \Delta\) and \(\Gamma \vdash U_1 : B_1, \Delta\)) also gives \(\Gamma \vdash \Delta\).
                \end{mylist}
                In both cases, this contradicts the assumption that $\Gamma \nvdash \Delta$.
                \item[(2)] $M_1 \eqsyntac S$ and $M_2 \eqsyntac M$ (i.e., $U \eqsyntac (S,M)$), for some stuck term $S$ and some term $M$. By \cref{lem:type-of-stuck-term} we have $\types S: \CompPure{\okty}$. Then \textsc{Weakening} gives $\Gamma \types S: \CompPure{\okty},\,\Delta$.
                \settowidth{\titleindent}{\textendash}
                \begin{mylist}[\textendash]
                    \item If $A_1 \subtyping \okty \land A_2 \subtyping \CompPure{B_2}$, applying the typing rule \textsc{Sub} to $\Gamma \types A_1: \okty,\,\Delta$ gives $\Gamma \types S: \okty,\,\Delta$. Thus, the induction hypothesis applied to $S$ (with $\Gamma \types S: \okty,\,\Delta$ and $\Gamma \types S: \CompPure{\okty},\,\Delta$) gives \(\Gamma \vdash \Delta\).
                    \item If $A_1 \subtyping \CompPure{B_1}$, applying the induction hypothesis to \(U_1\) (with \(\Gamma \vdash U_1 : A_1, \Delta\) and \(\Gamma \vdash U_1 : B_1, \Delta\)) gives \(\Gamma \vdash \Delta\) directly.
                \end{mylist}
                Again, both cases contradict the assumption that $\Gamma \nvdash \Delta$.
            \end{mylist}

            \item[Case \circled{1}\,\circled{ii}.]
            By transitivity of subtyping, from $(A_1,A_2)\subtyping A$ and $A \subtyping \CompPure{B}$, we have $(A_1, A_2) \subtyping \CompPure{B}$. Applying the subtyping rule \textsc{CompL} to $\CompPure{\okty} \subtyping B$ yields $\CompPure{B} \subtyping \okty$. Combining $(A_1, A_2) \subtyping \CompPure{B}$ and $\CompPure{B} \subtyping \okty$ via transitivity gives $(A_1, A_2) \subtyping \okty$. By \cref{lem:subtyping}, this implies $A_1 \subtyping \okty$ and $A_2 \subtyping \okty$.
            \settowidth{\titleindent}{(1)}
            \begin{mylist}
                \item[(1)] If $M_1 \eqsyntac U_1$ and $M_2 \eqsyntac U_2$ (i.e., $U \eqsyntac (U_1,U_2)$), for some normal forms $U_1$ and $U_2$:
                \par With $A_1 \subtyping \okty$ and $A_2 \subtyping \okty$, the typing rule \textsc{Sub} applied to $\Gamma \types U_1: A_1,\,\Delta$ and $\Gamma \types U_2 : A_2,\,\Delta$ gives $\Gamma \types U_1: \okty,\,\Delta$ and $\Gamma \types U_2 : \okty,\,\Delta$. By \cref{lem:weakening} (\textsc{Weakening}) on $\Gamma \types U_1: \okty,\,\Delta$, we have:
                \begin{equation}
                    \label{proof:cut-normal-form-pair-1ii1-eq1}
                    \Gamma \types U_1: \okty,\,U_2:\CompPure{\okty},\,\Delta
                \end{equation}
                Moreover, from \circled{ii}, we also have
                \begin{equation}
                    \label{proof:cut-normal-form-pair-1ii1-eq2}
                    \Gamma \vdash U_1 : \CompPure{\okty}, U_2 : \CompPure{\okty}, \Delta
                \end{equation}
                Thus, by the induction hypothesis (with $A = \okty$ and $B = \CompPure{\okty}$) on $U_1$, \cref{proof:cut-normal-form-pair-1ii1-eq1} and \cref{proof:cut-normal-form-pair-1ii1-eq2} implies $\Gamma \types U_2 : \CompPure{\okty},\, \Delta$. Since we have already shown $\Gamma \vdash U_2 : \okty, \Delta$, by the induction hypothesis on $U_2$ (again with $A = \okty$ and $B = \CompPure{\okty}$), it follows that \(\Gamma \vdash \Delta\). This contradicts our assumption that $\Gamma \nvdash \Delta$.
                \item[(2)] If $M_1 \eqsyntac S$ and $M_2 \eqsyntac M$ (i.e., $U \eqsyntac (S,M)$), for some stuck term $S$ and some term $M$:
                \par Applying \cref{lem:type-of-stuck-term} to $S$ gives $\types S: \CompPure{\okty}$, which implies $\Gamma \types S: \CompPure{\okty},\,\Delta$ by \textsc{Weakening}. Meanwhile, applying the typing rule \textsc{Sub} to $\Gamma \types S: A_1,\,\Delta$ with $A_1 \subtyping \okty$ gives $\Gamma \types S: \okty,\,\Delta$. Thus, the induction hypothesis on $S$ gives $\Gamma \nvdash \Delta$
            \end{mylist}
            Therefore, both cases contradict the assumption that $\Gamma \nvdash \Delta$.
            \item[Case \circled{2}\,\circled{i}.] This case follows symmetrically from \textit{Case} \circled{1}\,\circled{ii} via $A \subtyping \CompPure{B} \Leftrightarrow B \subtyping \CompPure{A}$.
            \item[Case \circled{2}\,\circled{ii}.] Applying the subtyping rule \textsc{CompL} to $\CompPure{\okty} \subtyping B$ gives $\CompPure{B} \subtyping \okty$. By transitivity, chaining $\CompPure{\okty} \subtyping A$, $A \subtyping \CompPure{B}$, and $\CompPure{B} \subtyping \okty$ yields $\CompPure{\okty} \subtyping \okty$. By \cref{thm:subtyping-soundness} (\textsc{Soundness of Subtyping}), this implies $\mng{\CompPure{\okty}} \subseteq \mng{\okty}$.
            \par Now, combining $\mng{\CompPure{\okty}} \subseteq \mng{\okty}$ and $\mng{\CompPure{\okty}} = \mng{\topty} \setminus \mng{\okty}$ (from \cref{def:type-meanings}), we derive $\mng{\CompPure{\okty}} \subseteq \mng{\okty} \cap (\mng{\topty} \setminus \mng{\okty})= \varnothing$. Thus $\mng{\CompPure{\okty}} = \mng{\topty} \setminus \mng{\okty} = \varnothing$. This requires all normal forms to be values (i.e., $\mng{\topty} \subseteq \mng{\okty}$), contradicting the existence of stuck terms.
        \end{mypar}

        \indcase{$U$ is of shape $M_1 \otimes M_2$}
        Here we have $\Gamma \types M_1 \otimes M_2:A,\,\Delta$ and $\Gamma \types M_1 \otimes M_2:B,\,\Delta$. By \cref{lem:inversion} (\textsc{Inversion}) on $\Gamma \types M_1 \otimes M_2:A,\,\Delta$ with $\Gamma \nvdash \Delta$, either:
        \begin{circledlist}
            \item $\intty \subtyping A$ and $\Gamma \types M_1 : \intty,\,\Delta$ and $\Gamma \types M_2 : \intty,\,\Delta$
            \item $\CompPure{\okty} \subtyping A$ and $\Gamma \types M_1: \CompPure{\intty},\,M_2:\CompPure{\intty},\,\Delta$.
        \end{circledlist}
        Similarly, by \cref{lem:inversion} (\textsc{Inversion}) on $\Gamma \types M_1 \otimes M_2:B,\,\Delta$ with $\Gamma \nvdash \Delta$, either:
        \begin{itemize}[align=left,
                labelwidth=\circlelabel,
                leftmargin={\circlelabel + 1.54em\relax},
                topsep=\mytopsep,
                itemsep=\myitemsep,]
            \item[\circled{i}] $\intty \subtyping B$ and $\Gamma \types M_1 : \intty,\,\Delta$ and $\Gamma \types M_2 : \intty,\,\Delta$
            \item[\circled{ii}] $\CompPure{\okty} \subtyping B$ and $\Gamma \types M_1: \CompPure{\intty},\,M_2:\CompPure{\intty},\,\Delta$
        \end{itemize}
        In cases \circled{1}\,\circled{ii} and \circled{2}\,\circled{i}, $\Gamma \types \Delta$ follows immediately from the induction hypothesis.
        Moreover, since $U \eqsyntac M_1 \otimes M_2$ for some terms $M_1$ and $M_2$, for $U$ to be a normal form, there must be either
        \settowidth{\titleindent}{(1)}
        \begin{mylist}
            \item[(1)] $M_1 \eqsyntac U_1$ and $U_1$ is not a numeral.
            \item[(2)] $M_1 \eqsyntac \pn{n}$ for some $n$ and $M_2 \eqsyntac U_2$ and $U_2$ is not a numeral.
        \end{mylist}
        It follows from (1) via Lemmas~\ref{lem:all-values-typable} and \ref{lem:type-of-stuck-term} that $\Gamma \types M_1 : \CompPure{\intty},\,\Delta$ and similarly from (2) that $\Gamma \types M_2 : \CompPure{\intty},\,\Delta$.  Therefore, in both cases, consideration of \circled{1}\,\circled{i} yields $\Gamma \types \Delta$ by the induction hypothesis.
        Therefore, all that remains is to consider \circled{2}\,\circled{ii}.  In this case, we have that $\CompPure{\okty} \subtype A \subtype \CompPure{B} \subtype \okty$, which is impossible by Theorem~\ref{thm:subtyping-soundness}.



        \indcase{$U$ is of shape $\matchtm{U'}{\mid_{i=1}^k p_i \mapsto N_i}$ and $\forall i. \sigma.\,U' \neq p_i\sigma$}
        Here we have $\Gamma \types \matchtm{U'}{\mid_{i=1}^k p_i \mapsto N_i}:A,\,\Delta$ and $\Gamma \types \matchtm{U'}{\mid_{i=1}^k p_i \mapsto N_i}:B,\,\Delta$. By \cref{lem:inversion} (\textsc{Inversion}) on $\Gamma \types \matchtm{U'}{\mid_{i=1}^k p_i \mapsto N_i}:A,\,\Delta$ with $\Gamma \nvdash \Delta$, either:
        \begin{circledlist}
            \item $\Gamma \types U':\textstyle\bigvee_{i=1}^k p_i\theta_i,\,\Delta$ and $\forall i\in[1,k].\,\Gamma,\,\theta_i \types U':\CompPure*{p_i\theta_i},\,N_i:A,\,\Delta$ and, for all $i \in [1,k]$ and $x \in \dom(\theta_i)$, $\theta_i(x) \subtype \okty$
            \item $\CompPure{\okty}\subtyping A$ and $\Gamma \types U':\CompPure*{\textstyle\bigvee_{i=1}^k p_i\thetaok{p_i}},\,\Delta$.
        \end{circledlist}
        Similarly, by \cref{lem:inversion} (\textsc{Inversion}) on $\Gamma \types \matchtm{U'}{\mid_{i=1}^k p_i \mapsto N_i}:B,\,\Delta$ with $\Gamma \nvdash \Delta$, either:
        \begin{itemize}[align=left,
                labelwidth=\circlelabel,
                leftmargin={\circlelabel + 1.54em\relax},
                topsep=\mytopsep,
                itemsep=\myitemsep,]
            \item[\circled{i}] $\Gamma \types U':\textstyle\bigvee_{i=1}^k p_i\theta_i,\,\Delta$ and $\forall i\in[1,k].\,\Gamma,\,\theta_i \types U':\CompPure*{p_i\theta_i},\,N_i:B,\,\Delta$ and, for all $i \in [1,k]$ and $x \in \dom(\theta_i)$, $\theta_i(x) \subtype \okty$
            \item[\circled{ii}] $\CompPure{\okty}\subtyping B$ and $\Gamma \types U':\CompPure*{\textstyle\bigvee_{i=1}^k p_i\thetaok{p_i}},\,\Delta$
        \end{itemize}
        In case \circled{2}\,\circled{ii}, we obtain $\CompPure{\okty} \subtype \okty$ and thus $\Gamma \types \Delta$ follows from the induction hypothesis.  In cases \circled{1}\,\circled{ii} and \circled{2}\,\circled{i}, it follows that $\textstyle\bigvee_{i=1}^k p_i\theta_i \subtype \okty$ and so we may apply the induction hypothesis with $U'$ to obtain $\Gamma \types \Delta$.  Finally, in case \circled{1}\,\circled{i}, for each $i \in [1,k]$, we may apply the induction hypothesis to $N_i$ to obtain $\Gamma,\,\theta_i \types U':\CompPure*{p_i\theta_i},\,\Delta$.  Applying the induction hypothesis again to $U'$ yields $\Gamma,\,\theta_i \types \Delta$.  Finally, since every type in the range of $\theta_i$ is a subtype of $\okty$, every type in $\okty$ is inhabited by a closed value and $\dom(\theta_i) \cap \fv(\Gamma,\,\Delta) = \varnothing$, it follows from Lemma~\ref{lem:simultaneous-substitution} that $\Gamma \types \Delta$.

        \indcase{$U$ is of shape $\abs{x}{M}$}
        Here we have $\Gamma \types \abs{x}{M} : A,\, \Delta$ and $\Gamma \types \abs{x}{M} : B,\, \Delta$.
        Since $\Gamma \nvdash \Delta$, by \cref{lem:inversion} (\textsc{Inversion}):
        \[
            \begin{array}{c@{\text{ such that: }}c@{\;\land\;}c}
                \exists A_1, A_2 & A_1 \to A_2 \subtyping A & \Gamma, x:A_1 \types M:A_2, \Delta \\
                \exists B_1, B_2 & B_1 \to B_2 \subtyping B & \Gamma, x:B_1 \types M:B_2, \Delta
            \end{array}
        \]
        By transitivity of subtyping, from $A \subtyping \CompPure{B}$ and $A_1 \to A_2 \subtyping A$, we have $A_1 \to A_2 \subtyping \CompPure{B}$. Applying the subtyping rule \textsc{CompR} to this yields $B \subtyping A_1 \to A_2$. Combining this with $B_1 \to B_2 \subtyping B$ via transitivity gives $B_1 \to B_2 \subtyping \CompPure{(A_1 \to A_2)}$.
        \par By \cref{thm:subtyping-soundness} (\textsc{Soundness of Subtyping}), $B_1 \to B_2 \subtyping \CompPure{(A_1 \to A_2)}$ implies $\mng{B_1 \to B_2} \subseteq \mng{\CompPure{(A_1 \to A_2)}}$. Since $\mng{\CompPure{(A_1 \to A_2)}}=\topty \setminus \mng{A_1 \to A_2}$ (by \cref{def:type-meanings}), this requires $\mng{B_1 \to B_2} \cap \mng{A_1 \to A_2} = \varnothing$.
        \par However, \cref{lem:fun-to-div} states that $\abs{x}{\divtm} \in \mng{C \to D}$ for all types $C$ and $D$. In particular, $\abs{x}{\divtm} \in \mng{B_1 \to B_2} \cap \mng{A_1 \to A_2}$. This contradicts our earlier conclusion that $\mng{B_1 \to B_2} \cap \mng{A_1 \to A_2} = \varnothing$.
    \end{indproof}
\end{proof}
Note: a corollary of \cref{lem:cut} is as follows.

\begin{corollary}
    \label{cor:bot-elim}
    Let $V$ be a value. If $\Gamma \types V : \botty, \Delta$, then $\Gamma \types \Delta$.
\end{corollary}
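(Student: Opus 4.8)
The plan is to obtain the corollary as an immediate instance of Cut Admissibility for Values (\cref{lem:cut}), applied to the complementary pair of types $\botty$ and $\okty$. Recall that $\botty$ is an abbreviation for $\CompPure{\topty}$, so the hypothesis $\Gamma \types V : \botty,\,\Delta$ is literally $\Gamma \types V : \CompPure{\topty},\,\Delta$.

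First I would record two subtyping facts about $\botty$. Since $\CompPure{\okty} \subtype \topty$ holds by \rlnm{Top}, rule \rlnm{CompL} yields $\CompPure{\topty} \subtype \okty$, that is $\botty \subtype \okty$. And since $\okty \subtype \topty$, the order-reversing property of complement (\cref{thm:ortholattice}) yields $\CompPure{\topty} \subtype \CompPure{\okty}$, that is $\botty \subtype \CompPure{\okty}$. Applying right subsumption (\rlnm{Sub} in the one-sided system, \rlnm{SubR} in the two-sided one) with the first of these facts to the hypothesis produces $\Gamma \types V : \okty,\,\Delta$, so that I now have both $\Gamma \types V : \botty,\,\Delta$ and $\Gamma \types V : \okty,\,\Delta$.

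Finally I would invoke \cref{lem:cut} with $A \coloneqq \botty$, $B \coloneqq \okty$, and the value $V$: its side condition $A \subtype \CompPure{B}$ is precisely $\botty \subtype \CompPure{\okty}$ established above, its two premises are the two judgements just derived, and its conclusion $\Gamma \types \Delta$ is exactly what we want. I do not anticipate a genuine obstacle here --- the whole argument is a one-line instantiation of an already-proved lemma. The only points needing a moment's care are unfolding the abbreviation $\botty = \CompPure{\topty}$, and keeping straight the two closely related complement-subtyping facts used at different places ($\botty \subtype \okty$ for the subsumption step, $\botty \subtype \CompPure{\okty}$ for the cut's side condition), both of which follow immediately from \rlnm{Top}, \rlnm{CompL} and \cref{thm:ortholattice}.
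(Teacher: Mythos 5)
Your proof is correct and takes essentially the same approach as the paper: both are one-line instantiations of the cut admissibility lemma for values (\cref{lem:cut}) with $A = \botty$. The only difference is the choice of the second type --- the paper takes $B = \topty$, so the second premise $\Gamma \types V : \topty,\,\Delta$ is immediate by \rlnm{Top} and the side condition $\botty \subtype \CompPure{\topty}$ is reflexivity, whereas your choice $B = \okty$ needs one extra subsumption step and a (still easy) subtyping derivation; both instantiations are valid.
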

\begin{proof}
    We have $\Gamma \types V : \botty, \Delta$ from the assumption.
    Meanwhile, $\Gamma \types V : \topty, \Delta$ holds for any $V$ by the typing rule \textsc{Top}.
    By \cref{thm:ortholattice}, $\botty$ is a subtype of all types. In particular we have $\botty \subtyping \CompPure{\topty}$. Thus applying \cref{lem:cut} with $A = \botty$ and $B = \topty$ gives $\Gamma \types \Delta$.
\end{proof}


\begin{lemma}
    \label{lem:pattern-subst-value}
    Let $p$ be a pattern, and let $\sigma$ be a term substitution, such that $p\sigma$ is a value with $\fv(p) = \dom(\sigma)$. If $\Gamma \types p\sigma:A,\,\Delta$ where $\Gamma \nvdash \Delta$, then there exists a type substitution $\theta$ with $\dom(\theta)=\fv(p)$ such that:
    \begin{enumerate}
        \item 
              $\forall x \in \dom(\theta). \theta(x) \subtyping \okty$
        \item $\Gamma \types \sigma\tyfit\theta,\,\Delta$
        \item $\Gamma \types p\sigma:p\theta,\,\Delta$ where $p\theta \subtyping \okty \land p\theta \subtyping A$
    \end{enumerate}
\end{lemma}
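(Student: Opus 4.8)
The plan is to prove Lemma~\ref{lem:pattern-subst-value} by induction on the shape of the pattern $p$, invoking the Inversion lemma (Lemma~\ref{lem:inversion}) at each structural case. Before starting the induction I would first reduce to the case $A \not\tyeq \topty$, so that Inversion applies: since $p\sigma$ is a value, Lemma~\ref{lem:all-values-typable} together with weakening (Lemma~\ref{lem:weakening}) gives $\Gamma \types p\sigma : \okty,\,\Delta$, and $\okty \subtype \topty$ while $\topty \not\subtype \okty$ (by soundness of subtyping, since $\topty$ contains stuck terms and $\okty$ does not), so $\okty \not\tyeq \topty$; as item (3) of the conclusion only demands $p\theta \subtype A$, it is harmless to replace $A$ by $\okty$, because $\okty \subtype \topty \tyeq A$ and then $p\theta \subtype \okty \subtype A$ by transitivity.

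For the base case $p = x$: here $\sigma = [V/x]$ with $V = p\sigma$ a value, $\fv(p) = \{x\}$, and $\Gamma \types V : A,\,\Delta$. Lemma~\ref{lem:type-of-value} supplies a type $B$ with $B \subtype \okty$ and $B \subtype A$ such that $\Gamma \types V : B,\,\Delta$; taking $\theta = [B/x]$ discharges (1), (2) and (3) immediately, since $p\theta = B$. For $p = \atom{a}$: $\fv(p) = \varnothing$, $\sigma$ is empty, $p\sigma = \atom{a}$, and Inversion on $\Gamma \types \atom{a} : A,\,\Delta$ (using $\Gamma \nvdash \Delta$) yields $\atom{a} \subtype A$; take $\theta$ empty, so (1) and (2) are vacuous, while for (3) we have $\Gamma \types \atom{a} : \atom{a},\,\Delta$ by \rlnm{Atom}, $\atom{a} \subtype \atomty \subtype \okty$, and $\atom{a} \subtype A$ from Inversion.

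The inductive case $p = (p_1, p_2)$ carries the real content. By linearity, $\sigma$ splits as $\sigma_1, \sigma_2$ restricted to $\fv(p_1), \fv(p_2)$, and $p\sigma = (p_1\sigma_1,\,p_2\sigma_2)$ is a value, hence both $p_1\sigma_1$ and $p_2\sigma_2$ are values. Inversion on $\Gamma \types (p_1\sigma_1,\,p_2\sigma_2) : A,\,\Delta$ offers two alternatives. The ``right-summand'' alternative---$\CompPure{\okty} \subtype A$ with both components typed at $\CompPure{\okty}$---is impossible: since $p_1\sigma_1$ is a value, Lemma~\ref{lem:all-values-typable} and weakening also give $\Gamma \types p_1\sigma_1 : \okty,\,p_2\sigma_2 : \CompPure{\okty},\,\Delta$; because $\okty \subtype \CompPure*{\CompPure{\okty}}$, cut admissibility for values (Lemma~\ref{lem:cut}) applied to $p_1\sigma_1$ (with the ambient right context $p_2\sigma_2 : \CompPure{\okty},\,\Delta$) yields $\Gamma \types p_2\sigma_2 : \CompPure{\okty},\,\Delta$, and a second application of Lemma~\ref{lem:cut} to $p_2\sigma_2$ (again a value, hence typable at $\okty$) gives $\Gamma \types \Delta$, contradicting the hypothesis. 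So we are in the ``left-summand'' alternative: there are $A_1, A_2$ with $(A_1, A_2) \subtype A$, $\Gamma \types p_1\sigma_1 : A_1,\,\Delta$ and $\Gamma \types p_2\sigma_2 : A_2,\,\Delta$. Applying the induction hypothesis to $(p_1, \sigma_1)$ and $(p_2, \sigma_2)$ produces $\theta_1, \theta_2$ with $\dom(\theta_i) = \fv(p_i)$; setting $\theta = \theta_1 \cup \theta_2$ (disjoint domains), items (1) and (2) follow componentwise, and for (3) the \rlnm{Pair} typing rule combines the component derivations of item (3) into $\Gamma \types p\sigma : p\theta,\,\Delta$ with $p\theta = (p_1\theta_1, p_2\theta_2)$, while the \rlnm{Pair} subtyping rule gives $p\theta \subtype (\okty, \okty) = \pairvalty \subtype \okty$ (via \rlnm{Ok}, \rlnm{UnionR}, \rlnm{Trans}) and $p\theta \subtype (A_1, A_2) \subtype A$.

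The main obstacle I expect is precisely the exclusion of the right-summand Inversion case in the pair step: it is the only place where the hypotheses $\Gamma \nvdash \Delta$ and the value-status of $p\sigma$ are genuinely exploited, and it needs two carefully-threaded uses of cut admissibility for values across the two components. The rest is a routine bottom-up assembly using \rlnm{Pair}, \rlnm{Ok}, \rlnm{UnionR}, and transitivity of subtyping, together with the trivial observation that applying a substitution of disjoint-domain pieces to a pattern acts componentwise.
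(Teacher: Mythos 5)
Your proof is correct and follows essentially the same route as the paper's: induction on the shape of $p$, using Lemma~\ref{lem:type-of-value} for the variable case, Inversion for the atom case, and Inversion plus the induction hypothesis plus the \rlnm{Pair} typing and subtyping rules for the pair case. You are in fact slightly more careful than the paper in two places --- the preliminary reduction to $A \not\tyeq \topty$ so that Inversion is applicable, and the explicit dismissal of the right-summand Inversion alternative in the pair case via two threaded applications of cut admissibility for values --- both of which the paper's own proof passes over in silence.
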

\begin{proof}
    By induction on the shape of $p$.
    \begin{indproof}
        \indcase{$p$ has shape $y$}
        In this case, $p\sigma \eqsyntac y\sigma$, $\fv(p)=\makeset{y}$ and $\Gamma \types y\sigma:A,\,\Delta$. Since $y\sigma$ is a value, by \cref{lem:type-of-value}, $\exists B\,(B \subtyping A \land B \subtyping \okty)$ such that $\Gamma \types y\sigma:B,\,\Delta$. Define the substitution $\theta = [y \mapsto B]$, then $\dom(\theta) = \{y\} = \fv(p)$. We confirm that $\theta$ satisfies:
        \settowidth{\titleindent}{(1)}
        \begin{mylist}
            \item[(1)] $\forall x \in \dom(\theta). \theta(x) \subtyping \okty$:
            \par Immediate from the definition of $\theta$ and $B \subtyping \okty$
            \item[(2)] $\Gamma \types \sigma\tyfit\theta,\,\Delta$:
            \par Substituting $y\sigma \eqsyntac \sigma(y)$ and $\theta(y)=B$ into $\Gamma \types y\sigma:B,\,\Delta$ yields $\Gamma \types \sigma(y):\theta(y),\,\Delta$. Moreover, $\dom(\sigma)=\dom(\theta)=\makeset{y}$. Thus all conditions required by \cref{def:coherence-subst-ctx} (\textsc(Coherence)) are satisfied.
            \item[(3)] $\Gamma \types p\sigma:p\theta,\,\Delta$ where $p\theta \subtyping \okty \land p\theta \subtyping A$:
            \par Since $B\subtyping A \land B\subtyping \okty$, by substituting $p\sigma \eqsyntac y\sigma$ and $p\theta \eqsyntac B$ into $\Gamma \types y\sigma : B, \Delta$, the condition follows.
        \end{mylist}
        \indcase{$p$ has shape $\atomtm{a}$}
        In this case, $p\sigma \eqsyntac \atomtm{a}\sigma$, $\fv(p)=\varnothing$ and $\Gamma \types \atomtm{a}\sigma:A,\,\Delta$. Let $\theta$ be the empty substitution so that $\dom(\theta)=\fv(p)=\varnothing$. We check that $\theta$  satisfies:
        \begin{mylist}
            \item[(1)] $\forall x \in \dom(\theta).\,\theta(x)\subtyping\okty$:
            \par Immediate from $\dom(\theta)=\varnothing$.
            \item[(2)] $\Gamma \types \sigma\tyfit\theta,\,\Delta$:
            \par Immediate from $\dom(\sigma)=\dom(\theta)=\varnothing$, by \cref{def:coherence-subst-ctx}.
            \item[(3)] $\Gamma \types p\sigma:p\theta,\,\Delta$ where $p\theta \subtyping \okty \land p\theta \subtyping A$:
            \par By the typing rule \textsc{Atom}, we have $\Gamma \types \atomtm{a}:\atom{a},\,\Delta$. Substituting  $p\sigma \eqsyntac \atomtm{a}\sigma \eqsyntac \atomtm{a}$ and $p\theta \eqsyntac \atom{a}\theta \eqsyntac \atom{a}$ into $\Gamma \types \atomtm{a}:\atom{a},\,\Delta$ gives $\Gamma \types p\sigma:p\theta,\,\Delta$. Since $\Gamma \nvdash \Delta$, applying \cref{lem:inversion} (\textsc{Inversion}) to $\Gamma \types \atomtm{a}\sigma:A,\,\Delta$ yields $\atom{a} \subtyping A$. Finally, since $\atom{a}\subtyping\atomty$ (by the subtyping rule \textsc{Atom}), $\atomty \subtyping \intty \vee \pairty \vee \funty \vee \atomty$ (by the subtyping rule \textsc{UnionR}) and $\intty \vee \pairty \vee \funty \vee \atomty \subtyping \okty$ (by the subtyping rule \textsc{Atom}), we have $p\theta \subtyping \okty$ by transitivity.
        \end{mylist}
        \indcase{$p$ has shape $(p_1, p_2)$} In this case $\Gamma \types (p_1,p_2)\,\sigma:A,\,\Delta$ and $\dom(\sigma)=\fv(p)=\fv(p_1)\cup \fv(p_2)$. Let $\sigma_1 = \sigma |_{\fv(p_1)}$ and $\sigma_2 = \sigma |_{\fv(p_2)}$, then $\dom(\sigma_{1})=\fv(p_1)$ and $\dom(\sigma_{2})=\fv(p_2)$. Since $\Gamma \nvdash \Delta$ and $(p_1,p_2)\,\sigma \eqsyntac (p_1\sigma,p_2\sigma)\eqsyntac (p_1\sigma_1,p_2\sigma_2)$, applying \cref{lem:inversion} (\textsc{Inversion}) to $\Gamma \types (p_1,p_2)\,\sigma:A,\,\Delta$, we obtain:
        \begin{equation}
            \label{proof:pattern-type-substitution-pair-1}
            \exists A_1,\,A_2 \text{ such that } (A_1,A_2)\subtyping A \;\land\; \Gamma \types p_1\sigma_1:A_1,\,\Delta \;\land\; \Gamma \types p_2\sigma_2:A_2,\,\Delta
        \end{equation}
        Furthermore, since $(p_1, p_2)\,\sigma$ is a value, both $p_1\sigma_1$ and $p_2\sigma_2$ must be values. By applying the induction hypothesis to the type judgments in \cref{proof:pattern-type-substitution-pair-1}, there exist $\theta_1$ with $\dom(\theta_1)=\fv(p_1)$ and $\theta_2$ with $\dom(\theta_2)=\fv(p_2)$, such that:
        \settowidth{\titleindent}{(IH.1)}
        \begin{mylist}
            \item[(IH.1)] $\forall x \in \dom(\theta_1).\,\theta_1(x) \subtyping \okty$ and $\forall x \in \dom(\theta_2).\,\theta_2(x) \subtyping \okty$
            \item[(IH.2)] $\Gamma \types \sigma_1\tyfit\theta_1,\,\Delta$ and  $\Gamma \types \sigma_2\tyfit\theta_2,\,\Delta$
            \item[(IH.3)] $\Gamma \types p\sigma_1:p_1\theta,\,\Delta$ where $p_1\theta_1 \subtyping \okty \land p_1\theta_1 \subtyping A_1$ and $\Gamma \types p\sigma_2:p_2\theta,\,\Delta$ where $p_2\theta_2 \subtyping \okty \land p_2\theta_2 \subtyping A_2$
        \end{mylist}
        Define $\theta$ to be the substitution:
        \[
            \theta(x) =
            \begin{cases}
                \theta_1(x) & \text{if } x\in\dom(\theta_1), \\
                \theta_2(x) & \text{if } x\in\dom(\theta_2).
            \end{cases}
        \]
        where $\dom(\theta)=\dom(\theta_1)\cup\dom(\theta_2)$. Then $\dom(\theta)=\dom(\sigma)=\fv(p)$. Now we check that $\theta$ satisfies:
        \settowidth{\titleindent}{(1)}
        \begin{mylist}
            \item[(1)] $\forall x \in \dom(\theta).\,\theta(x) \subtyping \okty$:
            \par By definition of $\theta$ and (IH.1).
            \item[(2)] $\Gamma \types \sigma\tyfit\theta,\,\Delta$:
            \par By \cref{def:coherence-subst-ctx}, (IH.2) implies:
            \begin{equation}
                \label{proof:pattern-subst-value-pair-eq1}
                \forall x \in \dom(\theta_i).\,\Gamma \types \sigma_{i}(x):\theta_{i}(x)\quad i\in\{1,2\}
            \end{equation}
            By definition of $\sigma_1$, $\sigma_2$ and $\theta$, we have:
            \begin{equation}
                \label{proof:pattern-subst-value-pair-eq2}
                \forall x \in \dom(\theta_i).\,\sigma_i(x)\eqsyntac\sigma(x) \text{ and } \theta(x) \eqsyntac \theta_i(x) \quad i\in\{1,2\}
            \end{equation}
            Substituting \cref{proof:pattern-subst-value-pair-eq2} into \cref{proof:pattern-subst-value-pair-eq1} gives:
            \begin{equation}
                \label{proof:pattern-subst-value-pair-eq3}
                \forall x \in \dom(\theta_i).\,\Gamma \types \sigma(x):\theta(x)\quad i\in\{1,2\}
            \end{equation}
            Moreover, as $\dom(\theta)=\dom(\theta_1)\cup\dom(\theta_2)$, for any $x \in \dom(\theta)$, either $x \in \dom(\theta_1)$ or $x \in \dom(\theta_2)$. This combined with \cref{proof:pattern-subst-value-pair-eq3} implies:
            \[
                \forall x \in \dom(\theta).\,\Gamma \types \sigma(x):\theta(x)
            \]
            Since $\dom(\theta)=\dom(\sigma)$, by \cref{def:coherence-subst-ctx}, this shows $\Gamma \types \sigma\tyfit\theta,\,\Delta$ by \cref{def:coherence-subst-ctx}.
            \item[(3)]  $\Gamma \types (p_1,p_2)\,\sigma:(p_1,p_2)\,\theta$, where $(p_1,p_2)\,\theta \subtyping \okty \land (p_1,p_2)\,\theta \subtyping A$
            \par Applying the typing rule \textsc{Pair} to the type judgements in (IH.3), we have:
            \begin{equation}
                \label{proof:pattern-type-substitution-pair-2}
                \Gamma \types (p_1\sigma_1, p_2\sigma_2):(p_1\theta_1,p_2\theta_2),\,\Delta
            \end{equation}
            By definition of $\sigma_1$, $\sigma_2$ and $\theta$, $(p_1,p_2)\,\sigma \eqsyntac (p_1\sigma, p_2\sigma) \eqsyntac (p_1\sigma_1, p_2\sigma_2)$ and $(p_1,p_2)\,\theta \eqsyntac (p_1\theta, p_2\theta) \eqsyntac (p_1\theta_1, p_2\theta_2)$. Substituting $(p_1\sigma_1, p_2\sigma_2) \eqsyntac (p_1,p_2)\,\sigma$ and $(p_1\theta_1, p_2\theta_2) \eqsyntac (p_1,p_2)\,\theta$ into \cref{proof:pattern-type-substitution-pair-2}, we get:
            \[
                \Gamma \types (p_1,p_2)\,\sigma:(p_1,p_2)\,\theta,\,\Delta
            \]
            \par From (IH.3), we also have $p_1\theta_1 \subtyping \okty \land p_1\theta_1 \subtyping A_1$ and $p_2\theta_2 \subtyping \okty \land p_2\theta_2 \subtyping A_2$. Applying the subtyping rule \textsc{Pair} with $p_1\theta_1 \subtyping \okty$ and $p_2\theta_2 \subtyping \okty$ gives $(p_1\theta_1, p_2\theta_2) \subtyping \pairty$. Combining this with $\pairty \subtyping \okty$ via transitivity, we have $(p_1\theta_1, p_2\theta_2) \subtyping \okty$. Similarly, applying the subtyping rule \textsc{Pair} with $p_1\theta_1 \subtyping A_1$ and $p_2\theta_2 \subtyping A_2$ yields $(p_1\theta_1, p_2\theta_2) \subtyping (A_1,A_2)$. Since we also know $(A_1,A_2)\subtyping A$ from inversion, transitivity gives $(p_1\theta_1, p_2\theta_2) \subtyping A$. Finally, Substituting $(p_1\sigma_1, p_2\sigma_2) \eqsyntac (p_1,p_2)\,\sigma$ and $(p_1\theta_1, p_2\theta_2) \eqsyntac (p_1,p_2)\,\theta$ into $(p_1\theta_1, p_2\theta_2) \subtyping \okty$ and $(p_1\theta_1, p_2\theta_2) \subtyping A$, we conclude that $(p_1,p_2)\,\theta \subtyping \okty \land (p_1,p_2)\,\theta \subtyping A$.
        \end{mylist}
    \end{indproof}
\end{proof}

\begin{lemma}[Preservation for Redex Reduction]
    \label{lem:preservation-redex}
    Let $R$ be a redex. If $\Gamma \types R:A,\, \Delta$ and $R \ped R'$, then $\Gamma \types R':A,\, \Delta$.
\end{lemma}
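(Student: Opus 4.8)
The plan is to induct on nothing---the statement is a direct case analysis on the shape of the redex $R$, using the Inversion lemma (\cref{lem:inversion}) to dismantle the given derivation of $\Gamma \types R:A,\,\Delta$ regardless of where the principal formula actually sits. In every case I would first dispose of the degenerate possibility $A \tyeq \topty$ by \rlnm{Top} together with \rlnm{Sub}, so that \cref{lem:inversion} becomes applicable. Then, whenever \cref{lem:inversion} returns the alternative ``$\Gamma \types \Delta$ is provable'', the conclusion $\Gamma \types R':A,\,\Delta$ follows immediately by weakening (\cref{lem:weakening}). Hence in each case I may assume $\Gamma \nvdash \Delta$ and work with the syntax-directed consequences supplied by \cref{lem:inversion}.

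For $(\abs{x}{M})\,N \ped M[N/x]$ I would use case (5) of \cref{lem:inversion}. Under $\Gamma \nvdash \Delta$ the disjunct that forces $\Gamma \types \abs{x}{M}:\CompPure{\funty},\,\Delta$ is impossible, since $\types\abs{x}{M}:\funty$ holds by \cref{lem:all-values-typable} and cut for values (\cref{lem:cut}) would then give $\Gamma \types \Delta$. In the other disjunct I have $\Gamma \types \abs{x}{M}:B\to C,\,\Delta$, $\Gamma \types N:B,\,\Delta$ and $C\subtype A$; applying \cref{lem:inversion} again to the abstraction typing and \cref{lem:subtyping}(1) yields $\Gamma,x{:}B' \types M:C',\,\Delta$ with $B\subtype B'$ and $C'\subtype C$, and then \rlnm{Sub}, the substitution lemma (\cref{lem:substitution}) and a final \rlnm{Sub} deliver $\Gamma \types M[N/x]:A,\,\Delta$. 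The fixpoint redex is handled identically through case (7), \rlnm{Fix} and \cref{lem:substitution}. For $\binOpPrim{\pn{n}}{\pn{m}}\ped\pn{n\otimes m}$ and $\relOpPrim{\pn{n}}{\pn{m}}\ped\atom{b}$ I would use cases (8) and (9): the $\CompPure{\okty}$ disjunct again contradicts $\Gamma \nvdash \Delta$ via \cref{lem:all-values-typable} and \cref{lem:cut}, and otherwise the reduct is literally a numeral or an atom, typable at $\intty$, resp.\ at $\boolty$ by \rlnm{Num}/\rlnm{Atom} and \rlnm{UnionR}, and subsumed to $A$.

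The substantial case is the match redex $\matchtm{V}{\mid_{i=1}^k p_i\mapsto N_i}\ped N_j\sigma$ with $V=p_j\sigma$ (this also covers the projection redexes). Here I would use case (10) of \cref{lem:inversion}. The $\CompPure{\okty}$ disjunct is excluded: by \cref{lem:pattern-subst-value} applied to $V=p_j\sigma$ one gets a $\theta$ with $\Gamma \types V:p_j\theta,\,\Delta$ and every $\theta(x)\subtype\okty$, whence $p_j\theta\subtype p_j\thetaok{p_j}$ by \cref{lem:subst-subtyping-on-pattern} and $p_j\thetaok{p_j}\subtype\bigvee_i p_i\thetaok{p_i}$ by \rlnm{UnionR}, so $\Gamma \types V:\bigvee_i p_i\thetaok{p_i},\,\Delta$, contradicting that disjunct by \cref{lem:cut}. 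In the main disjunct I have $\Gamma \types V:\bigvee_i p_i\theta_i,\,\Delta$, the branch premises $\Gamma,\theta_i\types V:\CompPure{p_i\theta_i},\,N_i:A,\,\Delta$, and $\theta_i(x)\subtype\okty$; the target is $\Gamma \types N_j\sigma:A,\,\Delta$. First, disjointness of the $p_i$ means $V$ matches no $p_i$ with $i\neq j$, so \cref{lem:value-not-matching-pattern}, \cref{lem:subst-subtyping-on-pattern} and the order-reversal of complement (\cref{thm:ortholattice}) give $\Gamma \types V:\CompPure{p_i\theta_i},\,\Delta$ for $i\neq j$. Next, \cref{lem:pattern-subst-value} applied to $\Gamma \types V:\bigvee_i p_i\theta_i,\,\Delta$ yields $\theta$ with $\dom(\theta)=\fv(p_j)$, each $\theta(x)\subtype\okty$, $\Gamma \types\sigma\tyfit\theta,\,\Delta$, and $\Gamma \types V:p_j\theta,\,\Delta$ with $p_j\theta\subtype\bigvee_i p_i\theta_i$. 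Since $p_j$ cannot be a variable when $k>1$, $p_j\theta$ is neither a union nor a complement, so \cref{lem:subtyping}(2) gives $p_j\theta\subtype p_l\theta_l$ for some $l$; if $l\neq j$ then $\Gamma \types V:p_l\theta_l,\,\Delta$ and $\Gamma \types V:\CompPure{p_l\theta_l},\,\Delta$ would yield $\Gamma \types\Delta$ by \cref{lem:cut}, contradicting $\Gamma \nvdash \Delta$, so $l=j$, $p_j\theta\subtype p_j\theta_j$, and hence $\theta\subtype\theta_j$ by \cref{lem:subst-subtyping-on-pattern} (when $k=1$ this step is immediate). Finally I would apply simultaneous substitution (\cref{lem:simultaneous-substitution}) to the $j$-th branch premise, regarding $V:\CompPure{p_j\theta_j}$ as part of the right-hand context---legitimate because $V$ is closed, so its free variables avoid $\dom(\theta_j)=\fv(p_j)$ just as $\fv(\Delta)$ does---to obtain $\Gamma \types N_j\sigma:A,\,V:\CompPure{p_j\theta_j},\,\Delta$; then $\Gamma \types V:p_j\theta_j,\,\Delta$ (from $p_j\theta\subtype p_j\theta_j$ and \rlnm{Sub}) and cut for values (\cref{lem:cut}) eliminate the spurious $V:\CompPure{p_j\theta_j}$, giving $\Gamma \types N_j\sigma:A,\,\Delta$.

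The step I expect to be the real obstacle is exactly the middle part of the match case: making the informal ``pattern disjointness'' argument go through \emph{proof-theoretically}, i.e.\ showing that the type $p_j\theta$ extracted from \cref{lem:pattern-subst-value} must land under the $j$-th disjunct $p_j\theta_j$ rather than some other $p_i\theta_i$. This forces one to juggle union-inversion (\cref{lem:subtyping}(2)), the characterisation of non-matching values (\cref{lem:value-not-matching-pattern}), the pointwise behaviour of pattern substitution under subtyping (\cref{lem:subst-subtyping-on-pattern}), and cut for values (\cref{lem:cut}) simultaneously, all while tracking the three families of pattern-type substitutions $\theta$, $\theta_i$, $\thetaok{\cdot}$ and keeping $\fv(p_i)$ disjoint from $\fv(\Gamma,\Delta)$; everything else reduces to routine inversion plus the substitution lemmas.
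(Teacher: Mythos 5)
Your proposal is correct and follows essentially the same route as the paper's proof: case analysis on the redex, Inversion to dismantle the given derivation (with the $\Gamma \types \Delta$ alternative discharged by weakening), cut for values/normal forms to exclude the $\CompPure{\okty}$-style disjuncts, and the substitution lemmas (\cref{lem:substitution}, \cref{lem:simultaneous-substitution}) together with \cref{lem:pattern-subst-value} and \cref{lem:subst-subtyping-on-pattern} for the beta, fix and match cases. The only departures are mild local variants that both work --- you exclude the $\CompPure{\funty}$ disjunct via \cref{lem:all-values-typable} plus \cref{lem:cut} where the paper re-inverts and argues semantically, and you give an explicit cut-based justification (via \cref{lem:value-not-matching-pattern}) that the union-inversion lands on the $j$-th disjunct, where the paper appeals more tersely to pattern disjointness.
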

\begin{proof}
    We start with a simple case analysis on whether $\Gamma \types \Delta$.
    \begin{itemize}
        \item Suppose $\Gamma \types \Delta$. Then by \cref{lem:weakening} (\textsc{Weakening}), it follows immediately that we also have $\Gamma \types R':A,\, \Delta$.
        \item Otherwise, $\Gamma \nvdash \Delta$. In this case, we proceed by case analysis on the structure of the redex $R$.
              \begin{indproof}
                  \indcase{\textsc{Beta}}
                  Suppose $R \ped R'$ has shape $(\abs{x}{M})\,N \ped M[N/x]$. Then $R \eqsyntac (\abs{x}{M})N$, $R'\eqsyntac M[N/x]$ and $\Gamma \types (\abs{x}{M})N:A,\, \Delta$. As $\Gamma \nvdash \Delta$, by \cref{lem:inversion} (\textsc{Inversion}) on $\Gamma \types (\abs{x}{M})N:A,\, \Delta$, either:
                  \begin{circledlist}
                      \item $\exists B,\, C \text{ such that } C \subtyping A$ and $\Gamma \types N:B,\,\Delta$ and $\Gamma \types \abs{x}{M}:B \to C,\,\Delta$;
                      \item or, $\CompPure{\okty}\subtyping A$ and $\Gamma \types \abs{x}{M} : \CompPure{\funty},\,\Delta$.
                  \end{circledlist}
                  \par Assume \circled{2} holds. Since $\Gamma \nvdash \Delta$, applying \cref{lem:inversion} (\textsc{Inversion}) to $\Gamma \types \abs{x}{M} : \CompPure{\funty},\,\Delta$ yields:
                  \begin{equation}
                      \label{proof:preservation-beta-cond2-inv}
                      \exists B',\,C' \text{ such that } B' \to C' \subtyping \CompPure{\funty} \,\land\, \Gamma,\, x:B' \types M:C',\,\Delta
                  \end{equation}
                  Moreover, by the subtyping rule \textsc{Fun}, we have $B' \to C' \subtyping \funty$ (recall that $\funty \eqsyntac \botty \to \topty$ by definition). Therefore, we have both $B' \to C' \subtyping \CompPure{\funty}$ (from \cref{proof:preservation-beta-cond2-inv}) and $B' \to C' \subtyping \funty$. By \cref{thm:subtyping-soundness} (\textsc{Soundness of Subtyping}), this implies $\mng{B' \to C'} \subseteq \mng{\CompPure{\funty}}$ and $\mng{B' \to C'} \subseteq \mng{\funty}$. Consequently,
                  \begin{equation}
                      \label{proof:preservation-beta-cond2-eq1}
                      \mng{B' \to C'} \subseteq \mng{\funty} \cap \mng{\CompPure{\funty}}
                  \end{equation}
                  By \cref{def:type-meanings}, $\mng{\funty} \cap \mng{\CompPure{\funty}} = \mng{\funty} \cap \mng{\topty} \setminus \mng{\funty} = \varnothing$. Substituting this into \cref{proof:preservation-beta-cond2-eq1} gives $\mng{B' \to C'} \subseteq \varnothing$, and thus $\mng{B' \to C'} = \varnothing$.
                  However, \cref{lem:fun-to-div} states that $\abs{x}{\divtm} \in \mng{B' \to C'}$, which contradicts $\mng{B' \to C'} = \varnothing$. Therefore, our assumption that \circled{2} holds fails, leaving case \circled{1} as the only possibility.
                  \par From \circled{1}, $\Gamma \types \abs{x}{M}:B \to C,\,\Delta$. Since $\Gamma \nvdash \Delta$, applying \cref{lem:inversion} (\textsc{Inversion}), we have:
                  \begin{equation}
                      \label{proof:preservation-beta-1}
                      \exists B',\,C' \text{ such that } B' \to C' \subtyping B \to C \,\land\, \Gamma,\, x:B' \types M:C',\,\Delta
                  \end{equation}
                  By \cref{lem:subtyping}, $B' \to C' \subtyping B \to C$ implies $B \subtyping B'$ and $C' \subtyping C$. Thus we can apply \cref{lem:substitution} to $\Gamma \types N:B,\, \Delta$ (from \circled{1}) and  $\Gamma,\, x:B' \types M:C',\,\Delta$ (from \cref{proof:preservation-beta-1}) to obtain $\Gamma \types \Subst{M}{N}{x}:C',\, \Delta$. Moreover, combining $C' \subtyping C$ and $C \subtyping A$ via transitivity gives $C' \subtyping A$. By the typing rule \textsc{Sub}, this conbined with $\Gamma \types \Subst{M}{N}{x}:C',\, \Delta$ implies $\Gamma \types \Subst{M}{N}{x}:A,\, \Delta$. Given that $\Subst{M}{N}{x} \eqsyntac R'$, this completes the proof of the case.

                  \indcase{\textsc{Fix}}
                  Suppose $R \ped R'$ has shape $\fixtm{x}{M} \ped  M[\fixtm{x}{M}/x]$.
                  Then $R \eqsyntac \fixtm{x}{M}$, $R'\eqsyntac M[\fixtm{x}{M}/x]$ and $\Gamma \types \fixtm{x}{M}:A,\, \Delta$. Since $\Gamma \nvdash \Delta$,
                  by \cref{lem:inversion} (\textsc{Inversion}):
                  \begin{equation}
                      \label{proof:preservation-fix-1}
                      \exists A' \text{ such that } A' \subtyping A \text{ and } \Gamma,\,x:A' \types M:A',\,\Delta
                  \end{equation}
                  Applying the typing rule \textsc{Fix} back to \cref{proof:preservation-fix-1}, we derive:
                  \begin{equation}
                      \label{proof:preservation-fix-2} \Gamma \types \fixtm{x}{M}:A',\, \Delta
                  \end{equation}
                  By \cref{lem:substitution} (\textsc{Substitution}), \cref{proof:preservation-fix-1} and \cref{proof:preservation-fix-2} implies $\Gamma \types \Subst{M}{\fixtm{x}{M}}{x}:A',\, \Delta$. Since $A' \subtyping A$, applying the typing rule \textsc{Sub} gives $\Gamma \types \Subst{M}{\fixtm{x}{M}}{x}:A,\, \Delta$. Given that $R' \eqsyntac \Subst{M}{\fixtm{x}{M}}{x}:A'$, this proves $\Gamma \types R':A,\, \Delta$.
                  \indcase{\textsc{Op}}
                  Suppose $R \ped R'$ has shape $\opPrim{\pn{n}}{\pn{m}} \ped \pn{\opPrim{n}{m}}$.
                  Then $R \eqsyntac \opPrim{\pn{n}}{\pn{m}}$, $R'\eqsyntac \pn{\opPrim{n}{m}}$ and $\Gamma \types \opPrim{\pn{n}}{\pn{m}}:A,\,\Delta$. Since $\Gamma \nvdash \Delta$, by \cref{lem:inversion} (\textsc{Inversion}) on $\Gamma \types \opPrim{\pn{n}}{\pn{m}}:A,\,\Delta$, either:
                  \begin{circledlist}
                      \item $\intty \subtyping A$ and $\Gamma \types \pn{n} : \intty,\,\Delta$ and $\Gamma \types \pn{m} : \intty,\,\Delta$
                      \item or, $\CompPure{\okty}\subtyping A$ and $\Gamma \types \pn{m} : \CompPure{\intty},\,\pn{n}:\CompPure{\intty},\,\Delta$.
                  \end{circledlist}
                  \par Assume \circled{2} holds. Since $\Gamma \nvdash \Delta$, applying \cref{lem:inversion} (\textsc{Inversion}) to $\Gamma \types \pn{m} : \CompPure{\intty},\,\pn{n}:\CompPure{\intty},\,\Delta$ yields:
                  \begin{align*}
                      \text{Either } \intty \subtyping \CompPure{\intty} \text{ or } \Gamma \types \pn{n}:\CompPure{\intty},\, \Delta
                  \end{align*}
                  By \cref{thm:subtyping-soundness} (\textsc{Soundness of Subtyping}), $\intty \subtyping \CompPure{\intty}$ implies $\mng{\intty} \subseteq \mng{\CompPure{\intty}}$. But by \cref{def:type-meanings}, this is impossible. Consequently, $\Gamma \types \pn{n}:\CompPure{\intty},\,\Delta$ must hold. However, since we also have $\Gamma \types \pn{n}:\intty,\,\Delta$ (by the typing rule \textsc{Int}), it follows from \cref{lem:cut} that $\Gamma \types \Delta$. Since we are in the case where $\Gamma \nvdash \Delta$, \circled{2} has leaded to contradiction, leaving \circled{1} as the only possibility.
                  \par From \circled{1}, $\Gamma \types \pn{n} : \intty,\,\Delta$ and $\Gamma \types \pn{m} : \intty,\,\Delta$. Applying the {Op} rule then yields $\Gamma \types \opPrim{\pn{m}}{\pn{n}} : \intty,\,\Delta$. Since $R'\eqsyntac \opPrim{\pn{m}}{\pn{n}}$, this proves $\Gamma \types R':A,\, \Delta$.

                  \indcase{\textsc{Match-Hit}}
                  Suppose $R \ped R'$ has shape:
                  \[
                      \matchtm{V}{\mid_{i=1}^k p_i \mapsto N_i} \ped N_j\sigma \quad (V \eqsyntac N_j\sigma)
                  \]
                  Since $V\eqsyntac p_j\sigma$ for some $1\leq j \leq k$, $R \eqsyntac \matchtm{p_j\sigma}{\mid_{i=1}^k p_i \mapsto N_i}$ and $R'\eqsyntac N_j\sigma$. Moreover:
                  \begin{equation}
                      \label{proof:preservation-match-hit-1}
                      \Gamma \types \matchtm{p_j\sigma}{\mid_{i=1}^k p_i \mapsto N_i}:A,\, \Delta
                  \end{equation}
                  \par Applying \cref{lem:inversion} (\textsc{Inversion}) to \cref{proof:preservation-match-hit-1} with $\Gamma \nvdash \Delta$, we obtain that one of the following holds:
                  \begin{circledlist}
                      \item $\Gamma \types p_j\sigma:\textstyle\bigvee_{i=1}^k p_i\theta_i,\,\Delta$ and $\forall i\in[1,k].\,\Gamma,\,\theta_i \types p_j\sigma:\CompPure*{p_i\theta_i},\,N_i:A,\,\Delta$
                      \item or, $\CompPure{\okty}\subtyping A$ and $\Gamma \types p_j\sigma : \CompPure*{\textstyle\bigvee_{i=1}^k p_i\thetaok{p_i}},\,\Delta$.
                  \end{circledlist}
                  Since $\Gamma \nvdash \Delta$ and $p_j\sigma \eqsyntac V$ is a value, by \cref{lem:pattern-subst-value}, for any type $C$ where $\Gamma \types p_j\sigma :C,\,\Delta$, there exists a type substitution $\theta$ with $\dom(\theta)=\fv(p_j)$ satisfying:
                  \settowidth{\titleindent}{(1)}
                  \begin{mylist}
                      \item[(1)] 
                      $\forall x \in \dom(\theta). \theta(x) \subtyping \okty$, and
                      \item[(2)] $\Gamma \types \sigma\tyfit\theta,\,\Delta$
                      \item[(3)] $\Gamma \types p_j\sigma:p_j\theta,\,\Delta$ where $p_j\theta \subtyping \okty \land p_j\theta \subtyping C$
                  \end{mylist}
                  \par Assume \circled{2} holds, then:
                  \begin{equation}
                      \label{proof:preservation-match-hit-2-eq1}
                      \Gamma \types p_j\sigma: \CompPure*{\textstyle\bigvee_{i=1}^k p_i\thetaok{p_i}},\,\Delta
                  \end{equation}
                  Thus there exists $\theta$ satistifying (1)--(3) for $C \eqsyntac \CompPure*{\textstyle\bigvee_{i=1}^k p_i\thetaok{p_i}}$.
                  It is immediate from condition (1) that $\theta \subtyping \thetaok{p_j}$. Thus by the $\Leftarrow$ directon of \cref{lem:subst-subtyping-on-pattern}, $p_j\theta \subtyping p_j\thetaok{p_j}$. Since $\Gamma \types p_j\sigma:p_j\theta,\,\Delta$ (from condition (3)), applying the typing rule \textsc{Sub}, we derive:
                  \begin{equation}
                      \label{proof:preservation-match-hit-2-eq2}
                      \Gamma \types p_j\sigma:p_j\thetaok{p_j},\,\Delta
                  \end{equation}
                  By \rlnm{UnionR} and the order-reversing property of complement (\cref{thm:ortholattice}), $\CompPure*{\textstyle\bigvee_{i=1}^k p_i\thetaok{p_i}} \subtype \CompPure*{p_j\thetaok{p_j}}$.  Therefore, \cref{lem:cut}, \cref{proof:preservation-match-hit-2-eq1} and \cref{proof:preservation-match-hit-2-eq2} implies $\Gamma \types \Delta$. As we are in the case where $\Gamma \nvdash \Delta$, our assumption that \circled{2} holds leads to a contradiction. Thus \circled{1} must hold.
                  \par From \circled{1}, $\Gamma \types p_j\sigma : \textstyle\bigvee_{i=1}^k p_i\theta_i,\,\Delta$ (where $p_j\sigma \eqsyntac V$).
                  Thus there exists $\theta$ satistifying (1)--(3) for $C \eqsyntac \textstyle\bigvee_{i=1}^k p_i\theta_i$. In this case, condition (3) says:
                  \[
                      \Gamma \types p_j\sigma:p_j\theta,\,\Delta \quad \text{ where } p_j\theta \subtyping \okty \land p_j\theta \subtyping \textstyle\bigvee_{i=1}^k p_i\theta_i
                  \]
                  Applying \cref{lem:subtyping} to $p_j\theta \subtyping \textstyle\bigvee_{i=1}^k p_i\theta_i$. Then:
                  \settowidth{\titleindent}{(2)}
                  \begin{mylist}
                      \item[(1)] If $p_j\theta$ is a complement or a union type, it has to be $k=1$, where $\Gamma \types p_j\sigma:p_j\theta,\,\Delta$ and $j=1$.
                      \item[(2)] Otherwise, $p_j\theta$ is neither a complement nor a union. Thus we can apply \cref{lem:subtyping} to obtain $\exists i \in [1,k]. \, p_j\theta \subtyping p_i\theta_i$. Given that patterns are disjoint, it has to be $i=j$ and $\Gamma \types p_j\sigma:p_j\theta,\,\Delta$
                  \end{mylist}
                  In both cases, $\Gamma \types p_j\sigma:p_j\theta,\,\Delta$. Then by the $\Rightarrow$ direction of \cref{lem:subst-subtyping-on-pattern}, $\theta \subtyping \theta_j$. We also have $\forall i\in[1,k].\,\Gamma,\,\theta_i \types p_j\sigma:\CompPure*{p_i\theta_i},\,N_i:A,\,\Delta$ from \circled{1}. In particular, when $i=j$:
                  \begin{equation}
                      \label{proof:preservation-match-hit-1-eq1}
                      \Gamma,\,\theta_j \types N_j:A,\,p_j\sigma:\CompPure*{p_j\theta_j},\,\Delta
                  \end{equation}
                  \par Having established \cref{proof:preservation-match-hit-1-eq1}, $\theta \subtyping \theta_j$ and $\Gamma \types \sigma\tyfit\theta,\,\Delta$ (from condition (2)), we now verify $\dom(\theta) \cap \fv(\{p_j\sigma:\CompPure*{p_j\theta_j}\} \cup \Delta) = \varnothing$ to be able to apply \cref{lem:simultaneous-substitution} (\textsc{Simultaneous Substitution}). Given that $\fv(p_1,\ldots,p_k) \cap \fv(\Gamma \cup \Delta) = \varnothing$, $\fv(p_i) \cap \fv(\Delta) = \varnothing$ holds for all $i \in [1,k]$. In particular, setting $i=j$ gives $\fv(p_j) \cap \fv(\Delta) = \varnothing$. Since $\dom(\theta) = \fv(p_j)$, it follows that $\dom(\theta) \cap \fv(\Delta) = \varnothing$. Furthermore, as $p_j\sigma$ is a closed value, $\fv(p_j\sigma) = \varnothing$. Thus $\dom(\theta) \cap \fv(\{p_j\sigma:\CompPure*{p_j\theta_j}\} \cup \Delta) = (\dom(\theta) \cap \fv(p_j\sigma)) \cup (\dom(\theta) \cap \fv(\Delta) = \varnothing)$.
                  \par Now, applying \cref{lem:simultaneous-substitution}, we have:
                  \begin{equation}
                      \label{proof:preservation-match-hit-1-eq2}
                      \Gamma \types N_j\sigma:A,\,p_j\sigma:\CompPure*{p_j\theta_j},\,\Delta
                  \end{equation}
                  Meanwhile, since $\Gamma \types p_j\sigma:p_j\theta,\,\Delta$ (from condition (3)), by applying \cref{lem:weakening} (\textsc{Weakening}), we have:
                  \begin{equation}
                      \label{proof:preservation-match-hit-1-eq3}
                      \Gamma \types N_j\sigma:A,\,p_j\sigma:p_j\theta_j,\,\Delta
                  \end{equation}
                  By \cref{lem:cut}, since $\CompPure*{p_j\theta_j} \subtyping \CompPure*{p_j\theta_j}$, \cref{proof:preservation-match-hit-1-eq2} and \cref{proof:preservation-match-hit-1-eq3} implies $\Gamma \types N_j\sigma:A,\,\Delta$. Given that $R' \eqsyntac N_j\sigma:A$, this proves $\Gamma \types R':A,\,\Delta$.
              \end{indproof}

              \par Therefore, whenever $R$ is a redex, we have shown that $\Gamma \types R:A,\,\Delta$ implies $\Gamma \types R':A,\,\Delta$ for $R \ped R'$.
    \end{itemize}
\end{proof}
\begin{theorem}[Preservation]
    \label{thm-preservation}
    Suppose $Q \ped Q'$. If $\Gamma \types Q:A,\, \Delta$, then $\Gamma \types Q':A,\, \Delta$.
\end{theorem}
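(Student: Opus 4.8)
The plan is to reduce \cref{thm-preservation} to \cref{lem:preservation-redex} by induction on the evaluation context witnessing the step. Since $\ped$ is, by \cref{def:reduction}, the closure of the redex schema under evaluation contexts, from $Q \ped Q'$ we obtain an evaluation context $\Ctx$, a redex $R$, and a term $R'$ with $Q = \Ctx[R]$, $Q' = \Ctx[R']$, where $R$ contracts to $R'$ by one of the schema clauses. I would then show, by induction on the structure of $\Ctx$, that $\Gamma \types \Ctx[R]:A,\,\Delta$ implies $\Gamma \types \Ctx[R']:A,\,\Delta$. The base case $\Ctx = \Hole$ is exactly \cref{lem:preservation-redex}.

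For each inductive case --- $\Ctx = \Ctx'\,N$, $(\Ctx',N)$, $(V,\Ctx')$, $\opPrim{\Ctx'}{N}$, $\opPrim{\pn{n}}{\Ctx'}$, $\relOpPrim{\Ctx'}{N}$, $\relOpPrim{\pn{n}}{\Ctx'}$, and $\matchtm{\Ctx'}{\mid_{i=1}^k p_i \mapsto N_i}$ --- I would follow the same two-step strategy used in the redex lemma. First dispose of the trivial possibilities: if $A \tyeq \topty$ then $\Gamma \types \Ctx[R']:A,\,\Delta$ follows by \rlnm{Top} and \rlnm{SubR}; and if $\Gamma \types \Delta$ then the conclusion follows from \cref{lem:weakening}. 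So assume $A \tyneq \topty$ and $\Gamma \nvdash \Delta$, which licenses \cref{lem:inversion}. In each case inversion is syntax-directed on the outermost constructor of $\Ctx[R]$ and exposes a typing of the immediate subterm $\Ctx'[R]$ (together with typings of the inert pieces $N$, $\pn{n}$, or $V$, and subtype side-conditions). I apply the induction hypothesis to that typing of $\Ctx'[R]$ to obtain the corresponding typing of $\Ctx'[R']$, and reassemble using the matching rule --- \rlnm{App} (or \rlnm{AppE} in the $\CompPure{\okty}$ branch), \rlnm{Pair}/\rlnm{PairE}, \rlnm{BinOp}/\rlnm{OpE}, \rlnm{RelOp}/\rlnm{RelOpE}, or \rlnm{Match}/\rlnm{MatchE} --- followed by \rlnm{SubR} to recover the original type $A$ from the subtype supplied by inversion. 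The case $(V,\Ctx')$ additionally uses that $V$, being a value, cannot reduce, hence remains the same value on both sides and can be retyped unchanged (or by weakening where the inversion branch calls for $\CompPure{\okty}$).

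The main obstacle is the usual one for this multi-succedent system: a derivation of $\Gamma \types \Ctx[R]:A,\,\Delta$ need not be ``about'' $\Ctx[R]$ at all, so one cannot naively invert on the rule at the root. The key is precisely the dichotomy $\Gamma \types \Delta$ versus $\Gamma \nvdash \Delta$ combined with \cref{lem:inversion}, which has already absorbed this difficulty: when $\Gamma \nvdash \Delta$, inversion guarantees a genuine syntax-directed typing of $Q:A$ is recoverable, and when $\Gamma \types \Delta$ the statement is vacuous by weakening. Given these lemmas, the remaining work is routine bookkeeping --- checking that each evaluation-context former has a matching pair of right-rules (one ``honest'', one for the $\CompPure{\okty}$ case) and that the subtype side-conditions compose through \rlnm{SubR}.
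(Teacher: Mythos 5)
Your proposal matches the paper's proof essentially exactly: first dispatch the case $\Gamma \types \Delta$ by weakening, then factor $Q$ as $\Ctx[R]$ and induct on the structure of $\Ctx$, with the base case given by \cref{lem:preservation-redex} and each inductive case handled by inversion on the outermost constructor, the induction hypothesis applied to the typing of $\Ctx'[R]$ exposed by inversion, and reassembly via the matching right rule (or its $\CompPure{\okty}$ counterpart) followed by subsumption. Your extra explicit treatment of $A \tyeq \topty$ is a harmless refinement of the same argument.
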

\begin{proof}
    We start with a simple case analysis on whether $\Gamma \types \Delta$.
    \begin{itemize}
        \item Suppose $\Gamma \types \Delta$. Then by \cref{lem:weakening} (\textsc{Weakening}), it follows immediately that we also have $\Gamma \types Q':A,\, \Delta$.
        \item Otherwise, $\Gamma \nvdash \Delta$. By \cref{def:reduction}, for any $Q \ped Q'$, there exists an evaluation context $\Ctx$ and a redex $R$, such that $Q \syntaceq \Ctx[R]$ and $Q'\syntaceq \Ctx[R']$ for $R \ped R'$. Therefore, it suffices to show:
              \[
                  \forall R\,(R \text{ is a redex} \land R \ped R')\,\forall \Ctx\,(\Ctx \text{ is a context}).\; \Gamma \types \Ctx[R] :A,\, \Delta \Rightarrow \Gamma \types \Ctx[R']:A,\, \Delta
              \]
              We proceed by induction on the structure of $\Ctx$.
              \begin{indproof}

                  \indcase{$\Ctx$ has shape $\Hole$}
                  Here, given $\Gamma \types R:A,\,\Delta$, we must show $\Gamma \types R':A,\, \Delta$. Since $R$ is a redex where $R \ped R'$, this follows directly from \cref{lem:preservation-redex}.

                  \indcase{$\Ctx$ has shape $\CtxF\,N$}
                  Here, given $\Gamma \types \app*{\CtxF[R]}{N}:A,\,\Delta$, we must show $\Gamma \types \app*{\CtxF[R']}{N}:A,\,\Delta$. By \cref{lem:inversion} (\textsc{Inversion}) on $\Gamma \types \app*{\CtxF[R]}{N}:A,\,\Delta$ with $\Gamma \nvdash \Delta$, either:
                  \begin{circledlist}
                      \item $\exists B, C$ with $C \subtyping A$ such that $\Gamma \types \CtxF[R]:B \to C,\,\Delta$ and $\Gamma \types N:B,\,\Delta$
                      \item $\CompPure{\okty} \subtyping A$ and $\Gamma \types \CtxF[R] : \CompPure{\funty},\,\Delta$.
                  \end{circledlist}

                  \settowidth{\titleindent}{\textit{Case} \circled{2}.}
                  \begin{mypar}
                      \item[Case \circled{1}.] Since $R$ is a redex with $R \ped R'$ and $\CtxF$ is a context, the induction hypothesis applied to $\Gamma \types \CtxF[R]:B \to C,\,\Delta$ gives $\Gamma \types \CtxF[R']:B \to C,\,\Delta$. Applying the typing rule \text{App} to $\Gamma \types \CtxF[R']:B \to C,\,\Delta$ and $\Gamma \types N:B,\,\Delta$ gives $\Gamma \types \app*{\CtxF[R']}{N}:C,\,\Delta$. Since $C \subtyping A$, the typing rule \textsc{Sub} applied to $\Gamma \types \app*{\CtxF[R']}{N}:C,\,\Delta$ gives $\Gamma \types \app*{\CtxF[R']}{N}:A,\,\Delta$.
                      \item[Case \circled{2}.] By the induction hypothesis, $\Gamma \types \CtxF[R]:\CompPure{\funty},\,\Delta$ implies $\Gamma \types \CtxF[R']:\CompPure{\funty},\,\Delta$. Applying the typing rule \textsc{AppE} to $\Gamma \types \CtxF[R']:\CompPure{\funty},\,\Delta$ then yields $\Gamma \types \app*{\CtxF[R']}{N}:\CompPure{\okty},\,\Delta$. Since $\CompPure{\okty}\subtyping A$, the typing rule \textsc{Sub} applied to $\Gamma \types \app*{\CtxF[R']}{N}:\CompPure{\okty},\,\Delta$ gives $\Gamma \types \app*{\CtxF[R']}{N}:A,\,\Delta$.
                  \end{mypar}

                  \indcase{$\Ctx$ has shape $\prj*{i}{\CtxF}\,(i \in \makeset{1,2})$}
                  Here, given $\Gamma \types \prj*{i}{\CtxF[R]}:A,\,\Delta$, we must show $\Gamma \types \prj*{i}{\CtxF[R']}:A,\,\Delta$. By \cref{lem:inversion} (\textsc{Inversion}) on $\Gamma \types \prj*{i}{\CtxF[R]}:A,\,\Delta$ with $\Gamma \nvdash \Delta$, either:
                  \begin{circledlist}
                      \item $\exists A_1,\,A_2$ such that $A_i\subtyping A$ and $\Gamma \types \CtxF[R]:(A_1, A_2),\,\Delta$
                      \item $\CompPure{\okty}\subtyping A$ and $\Gamma \types \CtxF[R']: \CompPure{\mathsf{Pair}},\,\Delta$.
                  \end{circledlist}
                  \settowidth{\titleindent}{\textit{Case} \circled{2}.}
                  \begin{mypar}
                      \item[Case \circled{1}.] Since $R$ is a redex with $R \ped R'$ and $\CtxF$ is a context, the induction hypothesis applied to $\Gamma \types \CtxF[R]:(A_1, A_2),\,\Delta$ gives $\Gamma \types \CtxF[R']:(A_1, A_2),\,\Delta$. Applying the typing rule $\textsc{Prj}_{i}$ to $\Gamma \types R':(A_1, A_2),\,\Delta$ then yields $\Gamma \types \prj*{i}{\CtxF[R']}:A_i,\,\Delta$. Since $A_i \subtyping A$, the typing rule \textsc{Sub} applied to $\Gamma \types \prj*{i}{\CtxF[R']}:A_i,\,\Delta$ gives $\Gamma \types \prj*{i}{\CtxF[R']}:A,\,\Delta$.
                      \item[Case \circled{2}.] By the induction hypothesis, $\Gamma \types \CtxF[R]:\CompPure{\mathsf{Pair}},\,\Delta$ and $R \ped R'$ implies $\Gamma \types \CtxF[R']:\CompPure{\mathsf{Pair}},\,\Delta$. Then, by applying the typing rule $\textsc{PrjE}_i$ to $\Gamma \types \CtxF[R']:\CompPure{\mathsf{Pair}},\,\Delta$, we get $\Gamma \types \prj*{i}{\CtxF[R']}:\CompPure{\okty},\,\Delta$. Since $\CompPure{\okty}\subtyping A$, the typing rule \textsc{Sub} applied to $\Gamma \types \prj*{i}{\CtxF[R']}:\CompPure{\okty},\,\Delta$ gives $\Gamma \types \prj*{i}{\CtxF[R']}:A,\,\Delta$.
                  \end{mypar}

                  \indcase{$\Ctx$ has shape $(\CtxF, N)$}
                  Here, given $\Gamma \types (\CtxF[R],N):A,\,\Delta$, we must show $\Gamma \types (\CtxF[R'],N):A,\,\Delta$. By \cref{lem:inversion} (\textsc{Inversion}) on $\Gamma \types (\CtxF[R],N):A,\,\Delta$ with $\Gamma \nvdash \Delta$:
                  \[
                      \exists B, C \text{ such that } (B,\,C) \subtyping A \text{ and } \Gamma \types \CtxF[R] : B,\,\Delta \text{ and } \Gamma \types N : C,\,\Delta
                  \]
                  Since $R$ is a redex with $R \ped R'$ and $\CtxF$ is a context, the induction hypothesis applied to $\Gamma \types \CtxF[R] : B,\,\Delta$ gives $\Gamma \types \CtxF[R'] : B,\,\Delta$. Applying the typing rule $\textsc{Pair}$ to $\Gamma \types \CtxF[R'] : B,\,\Delta$ and $\Gamma \types N : C,\,\Delta$ then yields $\Gamma \types (\CtxF[R'],N):(B,C),\,\Delta$. Since $(B,C) \subtyping A$, the typing rule \textsc{Sub} applied to $\Gamma \types (\CtxF[R'],N):(B,C),\,\Delta$ gives $\Gamma \types (\CtxF[R'],N):A,\,\Delta$.


                  \indcase{$\Ctx$ has shape $\opPrim{\CtxF}{N}$}
                  Here, given $\Gamma \types \opPrim{\CtxF[R]}{N}:A,\,\Delta$, we must show $\Gamma \types \opPrim{\CtxF[R']}{N}:A,\,\Delta$. By \cref{lem:inversion} (\textsc{Inversion}) on $\Gamma \types \opPrim{\CtxF[R]}{N}:A,\,\Delta$ with $\Gamma \nvdash \Delta$, either:
                  \begin{circledlist}
                      \item $\intty \subtyping A$ and $\Gamma \types \CtxF[R] : \intty,\,\Delta$ and $\Gamma \types N: \intty,\,\Delta$
                      \item $\CompPure{\okty} \subtyping A$ and $\Gamma \types \CtxF[R]: \CompPure{\intty},\,N:\CompPure{\intty},\,\Delta$.
                  \end{circledlist}

                  \settowidth{\titleindent}{\textit{Case} \circled{2}.}
                  \begin{mypar}
                      \item[Case \circled{1}.] Since $R$ is a redex with $R \ped R'$ and $\CtxF$ is a context, the induction hypothesis applied to $\Gamma \types \CtxF[R]:\intty,\,\Delta$ gives $\Gamma \types \CtxF[R']:\intty,\,\Delta$. Applying the typing rule \text{Op} to $\Gamma \types \CtxF[R']:\intty,\,\Delta$ and $\Gamma \types N:\intty,\,\Delta$ gives $\Gamma \types \opPrim{\CtxF[R']}{N}:\intty,\,\Delta$. Since $C \subtyping A$, the typing rule \textsc{Sub} applied to $\Gamma \types \opPrim{\CtxF[R']}{N}:\intty,\,\Delta$ gives $\Gamma \types \opPrim{\CtxF[R']}{N}:A,\,\Delta$.
                      \item[Case \circled{2}.] By the induction hypothesis, $\Gamma \types \CtxF[R]: \CompPure{\intty},\,N:\CompPure{\intty},\,\Delta$ implies $\Gamma \types \CtxF[R']: \CompPure{\intty},\,N:\CompPure{\intty},\,\Delta$. Applying the typing rule \textsc{OpE} then yields $\Gamma \types \opPrim{\CtxF[R']}{N}:\CompPure{\okty},\,\Delta$. Since $\CompPure{\okty}\subtyping A$, the typing rule \textsc{Sub} applied to $\Gamma \types \opPrim{\CtxF[R']}{N}:\CompPure{\okty},\,\Delta$ gives $\Gamma \types \opPrim{\CtxF[R']}{N}:A,\,\Delta$.
                  \end{mypar}

                  \indcase{$\Ctx$ has shape $\opPrim{V}{\CtxF}$}
                  Here, given $\Gamma \types \opPrim{V}{\CtxF[R]}:A,\,\Delta$, we must show $\Gamma \types \opPrim{V}{\CtxF[R']}:A,\,\Delta$. By \cref{lem:inversion} (\textsc{Inversion}) on $\Gamma \types \opPrim{V}{\CtxF[R]}:A,\,\Delta$ with $\Gamma \nvdash \Delta$, either:
                  \begin{circledlist}
                      \item $\intty \subtyping A$ and $\Gamma \types V: \intty,\,\Delta$ and $\Gamma \types \CtxF[R] : \intty,\,\Delta$
                      \item $\CompPure{\okty} \subtyping A$ and $\Gamma \types V:\CompPure{\intty},\,\CtxF[R]: \CompPure{\intty},\,\Delta$.
                  \end{circledlist}

                  \settowidth{\titleindent}{\textit{Case} \circled{2}.}
                  \begin{mypar}
                      \item[Case \circled{1}.] Since $R$ is a redex with $R \ped R'$ and $\CtxF$ is a context, the induction hypothesis applied to $\Gamma \types \CtxF[R]:\intty,\,\Delta$ gives $\Gamma \types \CtxF[R']:\intty,\,\Delta$. Applying the typing rule \text{Op} to $\Gamma \types V:\intty,\,\Delta$ and $\Gamma \types \CtxF[R']:\intty,\,\Delta$ yields $\Gamma \types \opPrim{V}{\CtxF[R']}:\intty,\,\Delta$. Since $C \subtyping A$, the typing rule \textsc{Sub} applied to $\Gamma \types \opPrim{V}{\CtxF[R']}:\intty,\,\Delta$ gives $\Gamma \types \opPrim{V}{\CtxF[R']}:A,\,\Delta$.
                      \item[Case \circled{2}.] By the induction hypothesis, $\Gamma \types V:\CompPure{\intty},\,\CtxF[R]: \CompPure{\intty},\,\Delta$ implies $\Gamma \types V:\CompPure{\intty},\,\CtxF[R']: \CompPure{\intty},\,\Delta$. Applying the typing rule \textsc{OpE} then yields $\Gamma \types \opPrim{V}{\CtxF[R']}:\CompPure{\okty},\,\Delta$. Since $\CompPure{\okty}\subtyping A$, the typing rule \textsc{Sub} applied to $\Gamma \types \opPrim{V}{\CtxF[R']}:\CompPure{\okty},\,\Delta$ gives $\Gamma \types \opPrim{V}{\CtxF[R']}:A,\,\Delta$.
                  \end{mypar}
                  \indcase{$\Ctx$ has shape $\matchtm{\CtxF}{\mid_{i=1}^k p_i \mapsto N_i}$}
                  Here, given $\Gamma \types \matchtm{\CtxF}{\mid_{i=1}^k p_i \mapsto N_i}:A,\,\Delta$, we must show $\Gamma \types \matchtm{\CtxF[R']}{\mid_{i=1}^k p_i \mapsto N_i}:A,\,\Delta$. By \cref{lem:inversion} (\textsc{Inversion}) on $\Gamma \types \matchtm{\CtxF[R]}{\mid_{i=1}^k p_i \mapsto N_i}:A,\,\Delta$ with $\Gamma \nvdash \Delta$, either:
                  \begin{circledlist}
                      \item $\Gamma \types \CtxF[R]:\textstyle\bigvee_{i=1}^k p_i\theta_i,\,\Delta$ and $\forall i\in[1,k].\,\Gamma,\,\theta_i \types \CtxF[R]:\CompPure*{p_i\theta_i},\,N_i:A,\,\Delta$
                      \item $\CompPure{\okty}\subtyping A$ and $\Gamma \types \CtxF[R]:\CompPure*{\textstyle\bigvee_{i=1}^k p_i\thetaok{p_i}},\,\Delta$.
                  \end{circledlist}

                  \settowidth{\titleindent}{\textit{Case} \circled{2}.}
                  \begin{mypar}
                      \item[Case \circled{1}.] Since $R$ is a redex with $R \ped R'$ and $\CtxF$ is a context, the induction hypothesis applied to $\Gamma \types \CtxF[R]:\textstyle\bigvee_{i=1}^k p_i\theta_i,\,\Delta$ gives $\Gamma \types \CtxF[R']:\textstyle\bigvee_{i=1}^k p_i\theta_i,\,\Delta$. Applying the typing rule \text{Match} to $\Gamma \types \CtxF[R']:\textstyle\bigvee_{i=1}^k p_i\theta_i,\,\Delta$ and $\forall i\in[1,k].\,\Gamma,\,\theta_i \types \CtxF[R']:\CompPure*{p_i\theta_i},\,N_i:A,\,\Delta$, we get $\Gamma \types \matchtm{\CtxF[R']}{\mid_{i=1}^k p_i \mapsto N_i}:A,\,\Delta$.
                      \item[Case \circled{2}.] By the induction hypothesis, $\Gamma \types \CtxF[R]:\CompPure*{\textstyle\bigvee_{i=1}^k p_i\thetaok{p_i}},\,\Delta$ implies $\Gamma \types \CtxF[R']:\CompPure*{\textstyle\bigvee_{i=1}^k p_i\thetaok{p_i}},\,\Delta$. Applying the typing rule \textsc{MatchE} then yields $\Gamma \types \matchtm{\CtxF[R']}{\mid_{i=1}^k p_i \mapsto N_i}:\CompPure{\okty},\,\Delta$. Since $\CompPure{\okty}\subtyping A$, the typing rule \textsc{Sub} applied to $\Gamma \types \matchtm{\CtxF[R']}{\mid_{i=1}^k p_i \mapsto N_i}:\CompPure{\okty},\,\Delta$ gives $\Gamma \types \matchtm{\CtxF[R']}{\mid_{i=1}^k p_i \mapsto N_i}:A,\,\Delta$.
                  \end{mypar}
              \end{indproof}
    \end{itemize}
\end{proof}
\subsection{Progress}

\begin{theorem}[Progress]
    \label{thm:progress}
    Let $\Gamma \types Q:A,\, \Delta$, where $\Gamma \nvdash \Delta$ and $Q$ is a closed term.
    \begin{enumerate}
        \item If $A \subtyping \okty$, then either $Q$ can reduce one step, or $Q$ is a value.
        \item If $A \subtyping \CompPure{\okty}$, then either $Q$ can reduce one step, or $Q$ is stuck.
    \end{enumerate}
\end{theorem}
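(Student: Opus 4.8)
The plan is to reduce Progress entirely to the cut-admissibility result for normal forms (Lemma~\ref{lem:cut-normal-form}), rather than to run the usual induction-on-the-typing-derivation argument. First I would dispose of the easy case: if $Q$ can take a reduction step then both conclusions hold trivially, so we may assume $Q$ is a closed normal form. By the very definition of stuck terms, such a $Q$ is then either a value or a stuck term, and so each part of the theorem amounts to ruling out the ``wrong'' one of these two alternatives.

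For part~(1), assuming $A \subtyping \okty$, I would suppose for contradiction that $Q$ is not a value, hence stuck. Then Lemma~\ref{lem:type-of-stuck-term} gives $\types Q : \CompPure{\okty}$, and Weakening (Lemma~\ref{lem:weakening}) gives $\Gamma \types Q : \CompPure{\okty},\,\Delta$. Since $\okty \tyeq \CompPure{\CompPure{\okty}}$ by Lemma~\ref{thm:ortholattice}, the hypothesis $A \subtyping \okty$ rearranges (by transitivity) to $A \subtyping \CompPure{\CompPure{\okty}}$; so Lemma~\ref{lem:cut-normal-form}, applied to the normal form $Q$ with the types $A$ and $\CompPure{\okty}$ and the derivations $\Gamma \types Q : A,\,\Delta$ and $\Gamma \types Q : \CompPure{\okty},\,\Delta$, yields $\Gamma \types \Delta$ --- contradicting $\Gamma \nvdash \Delta$. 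For part~(2), assuming $A \subtyping \CompPure{\okty}$, the argument is symmetric: if $Q$ is not stuck it is a value, so $\types Q : \okty$ by Lemma~\ref{lem:all-values-typable} and $\Gamma \types Q : \okty,\,\Delta$ by Weakening; the hypothesis $A \subtyping \CompPure{\okty}$ is already in the shape required by Lemma~\ref{lem:cut-normal-form} (equivalently Lemma~\ref{lem:cut}, since $Q$ is then a value), which with the two derivations again produces $\Gamma \types \Delta$ and the contradiction.

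The honest answer about the main obstacle is that there is essentially none \emph{at this point}: all the real work --- in particular the canonical-forms reasoning that a non-value normal form of a value type cannot exist --- has already been carried out inside Lemma~\ref{lem:cut-normal-form} and the typing lemmas for values and stuck terms. The one point needing care in presentation is the status of the side condition $\Gamma \nvdash \Delta$, which is precisely what makes a derivation of $\Gamma \types Q : A,\,\Delta$ informative about $Q$; dropping it falsifies the theorem (for instance $\types \pn{2}\,(\abs{z}{z}) : \okty,\,\abs{x}{x} : \intty \to \intty$ is derivable even though the subject of the $\okty$-formula is stuck). If one preferred a more self-contained argument, the same conclusion could instead be obtained by a direct structural induction on $Q$ using the Inversion lemma and canonical forms of value types, but that route is strictly longer and duplicates the case analysis already performed for cut admissibility, so I would not take it.
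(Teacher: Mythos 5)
Your proof is correct and follows essentially the same route as the paper's: reduce to the case of a closed normal form, then use Lemma~\ref{lem:all-values-typable} / Lemma~\ref{lem:type-of-stuck-term} together with weakening and normal-form cut admissibility (Lemma~\ref{lem:cut-normal-form}) to rule out the wrong alternative. The only cosmetic difference is that the paper first applies subsumption to replace $A$ by $\okty$ (resp.\ $\CompPure{\okty}$) before invoking cut, whereas you keep $A$ and massage the subtyping hypothesis via $\okty \tyeq \CompPure{\CompPure{\okty}}$; these are interchangeable.
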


\begin{proof}[Proof of \cref{thm:progress}.(1)]
  If $Q$ can make a step, then the result is clear.  Otherwise $Q$ is a normal form and is, therefore, either a value $V$ or a stuck term $S$.  For (i), suppose $A \subtype \okty$.  Then also $\Gamma \types Q:\okty,\,\Delta$.  If $Q$ is a stuck term, then by Lemma~\ref{lem:type-of-stuck-term} (and weakening), $\Gamma \types Q : \CompPure{\okty},\,\Delta$, but then we obtain a contradiction by cut (Lemma~\ref{lem:cut-normal-form}), so $Q$ must be a value.  For (ii), suppose $A \subtype \CompPure{\okty}$.  Then also $\Gamma \types Q:\CompPure{\okty},\,\Delta$.  If $Q$ is a value, then by Lemma~\ref{lem:all-values-typable}, it follows that $\Gamma \types Q:\okty,\,\Delta$ and, again, we obtain a contradiction by cut (Lemma~\ref{lem:cut-normal-form}), so $Q$ must be stuck.
\end{proof}



\subsection{Completeness}

Theorem~\ref{thm:complete-classification}:
\begin{quote}
    For all normal forms $U$, both of the following are true:
    \begin{quote}
        (i) \ If $U$ is a value then $\ \types U : \okty$. \qquad (ii) \ If $U$ is stuck then $\ \types U : \CompPure{\okty}$
    \end{quote}
\end{quote}

\begin{proof}
    This follows immediately from \begin{added}Lemmas~\ref{lem:all-values-typable}\end{added} and \ref{lem:type-of-stuck-term}.
\end{proof}

}{%
}

\end{document}
\endinput